\documentclass[singlecolumn,10pt,aps,pra,superscriptaddress,floatfix,tightenlines,nofootinbib,nobibnotes]{revtex4-2}

\usepackage[colorlinks,linkcolor=blue,urlcolor=blue,citecolor=blue]{hyperref}
\usepackage{xurl}
\usepackage{booktabs}
\usepackage{comment}
\usepackage{xcolor}

\usepackage{dsfont}
\usepackage{physics}
\usepackage{nicefrac}
\usepackage{mathtools}
\usepackage{amsmath, amssymb, amsfonts, amsthm}
\usepackage{thmtools, thm-restate}

\usepackage{algpseudocodex}
\usepackage{algorithm}

\usepackage[whole]{bxcjkjatype}
\usepackage{accents}

\newtheorem{theorem}{Theorem}
\newtheorem*{theorem*}{Theorem}

\newtheorem{corollary}[theorem]{Corollary}
\newtheorem{lemma}[theorem]{Lemma}

\newtheorem{proposition}[theorem]{Proposition}

\newtheorem{remark}[theorem]{Remark}

\DeclareMathOperator{\id}{id}

\allowdisplaybreaks

\begin{document}

\title{Quantum Information Decoupling Beyond Finite Dimensions}

\author{Hayata Yamasaki}
\email{hayata.yamasaki@gmail.com}
\affiliation{
Department of Computer Science, Graduate School of Information Science and Technology, The University of Tokyo, 7--3--1 Hongo, Bunkyo-ku, Tokyo, 113--8656, Japan
}

\begin{abstract}
Quantum information decoupling is a pillar of quantum information theory, underlying quantum communication, error correction, and information recovery. However, its existing formulations rely on random unitary operations tied to finite-dimensional assumptions on quantum systems. Here we establish a decoupling framework for arbitrary separable, possibly infinite-dimensional systems. For finite-dimensional inputs and arbitrary separable reference/output systems, we derive error bounds on one-shot decoupling for completely positive maps in terms of sandwiched R\'enyi conditional entropies. For partial-trace decoupling with finite-dimensional references, the error exponent is optimal up to the known critical rate. To handle infinite-dimensional inputs, we assume finite entropy of the manipulated system. We construct finite-rank projections on independent and identically distributed (IID) states to restrict randomization to a projected finite-dimensional subspace, while ensuring high success probability by including auxiliary components in the discarded subsystem at asymptotically negligible cost. This leads to an infinite-dimensional IID partial-trace decoupling protocol achieving optimal asymptotic dimension rates. As an application, we construct an infinite-dimensional quantum-state-merging protocol, a mother protocol of quantum information theory. Under finite entropy of Alice's marginal, it achieves the same quantum-communication and total-cost rates as in finite dimensions, governed by mutual information and conditional entropy. These results show that the operational interpretation of entropic quantities through these achievable rates constitutes a universal principle beyond finite dimensions. More broadly, our framework provides foundational tools for exploring quantum information regardless of whether we model the physical world using finite- or infinite-dimensional spaces.
\end{abstract}

\maketitle

\tableofcontents

\section{Introduction}

\subsection{Background}

Many foundational techniques in quantum information theory are
formulated under the assumption that the Hilbert spaces representing
quantum systems are finite-dimensional
\cite{nielsen2010quantum,wilde2013quantum,watrous2018theory}.
This assumption is mathematically convenient, but it is not a
fundamental restriction imposed by quantum mechanics.
Several major platforms for quantum information processing, including
cavity modes, microwave resonators, and continuous-variable optical
systems, are naturally described by infinite-dimensional Hilbert
spaces.
Even systems used as qubits are often only approximately two-dimensional. For example, a superconducting qubit is typically encoded in the ground and first excited states of a superconducting electrical circuit that possesses infinitely many higher energy levels.
Finite-dimensional descriptions of such systems should therefore be
understood as effective truncations rather than exact descriptions of
the underlying physical systems.

Extending quantum information theory beyond finite dimensions is
important for both operational and foundational reasons.
From an operational perspective, restricting a physical system to a
prescribed finite-dimensional subspace may exclude protocols that
exploit additional degrees of freedom and may therefore alter the achievable
performance of information-processing tasks.
At a more fundamental level, quantum field theory and candidate
descriptions of quantum gravity naturally involve infinitely many
degrees of freedom that cannot, in
general, be reduced to finite-dimensional quantum systems
\cite{goto2026rethinkingquantuminformationgravity}.
Consequently, information-theoretic tools established only in finite
dimensions do not automatically apply to continuous-variable
platforms, field-theoretic settings, or other physical models with
infinite-dimensional state spaces.
A systematic infinite-dimensional formulation is therefore needed to
determine which principles and operational interpretations of quantum
information theory are genuinely independent of finite-dimensional assumptions, and which physically motivated conditions, such as
finite entropy or energy constraints, are required in place of finite
dimensionality.

Quantum information decoupling~\cite{horodecki2005partial,horodecki2007quantum,
abeyesinghe2009mother,berta2009singleshotquantumstatemerging,
dupuis2010decouplingapproachquantuminformation,dupuis2014one,
sharma2015randomcodingexponentsgalore,8811607,10232924,
wakakuwa2021one,colomer2024decoupling,
cheng2024jointstatechanneldecouplingoneshot,li2024operational,
he2026quantum,berta2026tightanyshotquantumdecoupling} is a pillar of
modern quantum information theory.
In its original formulation in Refs.~\cite{horodecki2005partial,horodecki2007quantum}, one considers a quantum state \(\rho^{AR}\), where the manipulatable system \(A\) is correlated with an inaccessible reference system \(R\).
After taking \(n\) independent and identically distributed (IID) copies,
one applies a random unitary transformation to \(A^n\), decomposes the
transformed system into two subsystems, and discards one of them.
The objective is to make the retained subsystem uncorrelated with the reference system up to a vanishingly small error. 
For finite-dimensional systems \(A\), the random unitary transformation is commonly drawn from a uniform probability distribution, that is, the normalized Haar measure on the unitary group~\cite{horodecki2005partial,horodecki2007quantum,
abeyesinghe2009mother,berta2009singleshotquantumstatemerging,
dupuis2010decouplingapproachquantuminformation,dupuis2014one,
sharma2015randomcodingexponentsgalore,8811607,10232924,
wakakuwa2021one,colomer2024decoupling,
cheng2024jointstatechanneldecouplingoneshot,li2024operational,
he2026quantum,berta2026tightanyshotquantumdecoupling}.
The Haar-random unitary transformations can be replaced by more structured
ensembles, including unitary designs and random quantum circuits~\cite{Szehr_2013,3179473.3179474,brown2015decoupling,
Nakata2017decouplingrandom,harrow2023approximate}.
If one additionally assumes the availability of a catalyst initially uncorrelated with $R$, constructions based on the convex split lemma~\cite{PhysRevLett.119.120506,PhysRevLett.118.080503,
anshu2022efficient,li2024reliability,11173697,
gour2025convexsplitlemmainequalities} can also realize decoupling using substantially more structured random operations such as random permutations.

Decoupling is fundamental because the elimination of correlations with an
inaccessible reference system is equivalent, through purification and information-recovery arguments, to the existence of a coherent reconstruction operation on the complementary system. This principle underlies a broad range of protocols in quantum information theory.
A canonical example is quantum state merging
\cite{horodecki2005partial,horodecki2007quantum,abeyesinghe2009mother,
berta2009singleshotquantumstatemerging,
dupuis2010decouplingapproachquantuminformation,dupuis2014one,
sharma2015randomcodingexponentsgalore,
berta2026tightanyshotquantumdecoupling,
10.1007/978-3-642-10698-9_8,10.1063/1.4795243,
sharma2014strongconversequantumstate,leditzky2016strong,8590809,
yamasaki2019spreadquantuminformationoneshot,
yamasaki2019entanglementtheorydistributedquantum,9040656,
yeh2025alphabitstatemerging}.
In this task, Alice and Bob initially hold the \(A\) and \(B\) parts,
respectively, of a pure state \(\ket{\psi}^{ABR}\) correlated with an inaccessible reference system \(R\).
The objective is to transfer Alice's share coherently to Bob while preserving the state.
Quantum state merging is itself a basic component of more general protocols in
distributed quantum information processing~\cite{yamasaki2019entanglementtheorydistributedquantum,
oppenheim2008uncommoninformationthecost,
oppenheim2008stateredistributionmergingintroducing,
PhysRevLett.103.220501,dutil2010oneshotmultipartystatemerging,
7935494,PhysRevLett.122.190502,8234697,
https://doi.org/10.1002/qute.201800066,
PhysRevLett.122.010502,PhysRevA.100.042306,
PhysRevA.103.062613,lee2025improved}.
A fully quantum version of state merging, known as the fully quantum Slepian--Wolf protocol, was identified as a mother protocol of quantum information theory, from which several fundamental communication primitives, including channel coding and channel simulation, can be derived
\cite{abeyesinghe2009mother,
dupuis2010decouplingapproachquantuminformation,
PhysRevLett.93.230504,4626055,hayden2008decoupling,
berta2011quantum,Datta_2011,6690133,9956855,
Wakakuwa2022oneshothybridstate,PhysRevA.104.012408}.
State merging also provides operational interpretations of basic entropic quantities; in the IID limit, its optimal resource rates are characterized by the quantum mutual information and conditional quantum entropy~\cite{horodecki2005partial,horodecki2007quantum}.
In particular, the conditional quantum entropy can take a negative value, and in such a case, its absolute value can be interpreted as the optimal rate at which state merging generates entanglement rather than consuming it.
Beyond quantum information theory, decoupling and the corresponding information-recovery formulations have become important in the study of various phenomena in quantum physics, such as quantum scrambling~\cite{4626055}, black-hole information recovery~\cite{hayden2007black}, thermalization~\cite{PhysRevE.94.022104}, and Landauer erasure~\cite{rio2011thermodynamic}.

Despite this broad scope, the decoupling framework underlying
these applications remains predominantly finite-dimensional.
The principal obstruction is not merely that infinite-dimensional
systems require additional technical analysis; rather, the standard random-unitary argument fundamentally loses one of its essential mathematical ingredients: a canonical probability distribution describing uniformly random unitary transformations.
The unitary group of an infinite-dimensional separable Hilbert space
is not locally compact and therefore does not admit a normalized Haar
probability measure analogous to that available in finite dimensions.
Although probability distributions over Gaussian unitary transformations or other
restricted classes of transformations can be introduced for particular continuous-variable settings, such ensembles are tailored to specific applications and do not provide a general substitute for Haar-random-unitary decoupling~\cite{Hahn2025classicalsimulation,PhysRevA.110.042402,
turner2014curious,PhysRevX.14.011013,dy4m-gq5c}.

Further difficulties arise in infinite dimensions because quantities that are automatically
finite and continuous on finite-dimensional state spaces may
become infinite or discontinuous; that is, dimension-dependent bounds used in conventional finite-dimensional proofs may cease to yield meaningful bounds in infinite dimensions.
This is a recurring obstacle in extending quantum-information results
beyond finite dimensions.
For example, in entanglement theory,
Ref.~\cite{hayden2001asymptotic} established, under
finite-dimensional assumptions, that the entanglement cost
\cite{PhysRevA.53.2046,PhysRevA.54.3824}
is given by the regularized entanglement of formation; however, as shown in Ref.
\cite{yamasaki2025entanglement}, proving such a characterization applicable to both finite- and infinite-dimensional systems requires alternative techniques for both the achievability and converse bounds.

These issues are particularly important in decoupling because the
manipulated system and the reference system play fundamentally
different operational roles.
The system \(A\) is accessible to the protocol designer, so one may
restrict its effective dimension or impose suitable approximation,
entropy, or energy conditions.
The reference system \(R\), by contrast, represents arbitrary quantum
information external to the protocol and is not available for
manipulation.
A satisfactory extension of the decoupling framework should therefore avoid
imposing finite-dimensionality, or comparably restrictive structural
assumptions, on \(R\).
This requirement already arises when the manipulated system \(A\) is finite-dimensional but the inaccessible reference system \(R\) is separable and possibly infinite-dimensional.

\subsection{Results}

In this work, we develop the framework for decoupling and quantum state merging without assuming finite-dimensionality.
Our main results are as follows.

\paragraph*{One-shot decoupling for finite-dimensional input and
separable reference and output systems}
We first study one-shot decoupling for a finite-dimensional input
system \(A\) and separable, possibly infinite-dimensional reference
and output systems \(R\) and \(E\), respectively.
Given a bipartite state \(\rho^{AR}\), a nonzero bounded completely
positive map \(\mathcal E^{A\to E}\), and a Haar-random unitary \(U^A\)
acting on \(A\), define the resulting, possibly unnormalized, output operator by
\(\widetilde\tau_{\mathcal E,U}^{ER}\coloneqq\left(\mathcal E^{A\to E}\otimes\id^{R}\right)\left(\left(U^A\otimes I^R\right)\rho^{AR}\left(U^{A\dagger}\otimes I^R\right)\right)\), where $I^R$ and $\id^{R}$ are the identity operator and channel, respectively, on $R$.
Its Haar average is
\(\omega_{\mathcal E}^{E}\otimes\rho^R\), where
\(\omega_{\mathcal E}^{E}\coloneqq\mathcal E^{A\to E}\left(\pi^A\right)/\operatorname{Tr}\left[\mathcal E^{A\to E}\left(\pi^A\right)\right]\) is the normalized output state of \(\mathcal E^{A\to E}\) on the maximally mixed state $\pi^A$ of $A$, and \(\rho^R\coloneqq\operatorname{Tr}_A\left[\rho^{AR}\right]\) is the reference marginal of the original state $\rho^{AR}$.
We quantify decoupling by the Haar-averaged relative entropy
\begin{align}
\mathbb E_U\left[
D\left(
\widetilde\tau_{\mathcal E,U}^{ER}
\middle\|
\omega_{\mathcal E}^{E}\otimes\rho^R
\right)\right].
\end{align}
When \(\mathcal E^{A\to E}\) is trace-preserving, the output is normalized for every
\(U\), and relative entropy provides a stronger error criterion than
the conventional trace-distance criterion
\cite{berta2009singleshotquantumstatemerging,
dupuis2010decouplingapproachquantuminformation,
dupuis2014one}.
It also admits an operational interpretation as an optimal error exponent in quantum hypothesis testing
\cite{hiai1991proper,
887855,
brandao2010generalization,
hayashi2025generalized,
10898013}.

Relative-entropy decoupling was previously analyzed in
Refs.~\cite{he2026quantum,berta2026tightanyshotquantumdecoupling},
but those analyses were formulated for finite-dimensional reference
and output systems.
To remove these dimensional assumptions, we extend the optimal
logarithmic trace inequality of
Ref.~\cite{gour2026optimaltraceinequalitiessingleshot}
from finite-dimensional operators to positive-semidefinite
trace-class operators on arbitrary separable Hilbert spaces
(Proposition~\ref{prop:infinite_dimensional_trace_inequality}).
This sharpens the earlier infinite-dimensional trace inequality of
Ref.~\cite{cheng2025sharpestimatesquantumcovering}, which served as a
key ingredient for the relative-entropy decoupling analysis in 
Ref.~\cite{berta2026tightanyshotquantumdecoupling}.

We also show that, in the operator-concavity step of the analysis in
Ref.~\cite{berta2026tightanyshotquantumdecoupling}, operator concavity
is asserted and used after extending the logarithm to the kernel of a singular
operator by setting \(\log 0=0\), but indeed, the resulting extension
is not an operator concave function
(see Remark~\ref{rem:tight_any_shot_invalid_log_concavity_step}).
We resolve this issue by adding strictly positive perturbations to the
arguments of the logarithms and representing the Haar average as an
isometric compression of a strictly positive operator; then, Jensen's trace inequality~\cite{hansen2003jensen} applies on its standard domain, yielding an alternative trace inequality needed to derive the relative-entropy decoupling bounds
(Proposition~\ref{prop:decoupling_operator_Jensen_reduction}).

Combining these ingredients, we obtain error bounds on one-shot decoupling (Theorem~\ref{thm:finite_input_infinite_reference_decoupling} and Corollary~\ref{cor:finite_input_infinite_reference_decoupling_exponent}).
Informally, these results can be stated as follows:
\begin{align}
&\mathbb E_U
\left[
D\left(
\widetilde\tau_{\mathcal E,U}^{ER}
\middle\|
\omega_{\mathcal E}^{E}\otimes\rho^R
\right)
\right]
\nonumber\\
&\lesssim
\inf_{0<s\le1}
\exp\left[
-s\left(
\widetilde H_{1+s}(A|R)_\rho
+
\widetilde H_{1+s}(A'|E)_{\omega_{\mathcal E}}
\right)
\right],
\label{eq:informal_one_shot_decoupling_bound}
\end{align}
where \(A'\) is a copy of \(A\),
\(\omega_{\mathcal E}^{A'E}\) is the normalized Choi state associated
with \(\mathcal E^{A\to E}\), \(\widetilde H_{1+s}\) denotes the sandwiched R\'enyi conditional entropy, and the symbol \(\lesssim\) suppresses explicit prefactors depending
only on \(s\) and the input dimension \(|A|\).
Compared with Ref.~\cite{berta2026tightanyshotquantumdecoupling}, replacing the operator-concavity step used therein with the argument above introduces an additional dimension-dependent factor \(C_{|A|}\). For \(n\) IID copies, however, the corresponding factor \(C_{|A|^n}\) for \(|A|\ge 2\) converges to \(2\) as \(n\to\infty\) and therefore remains $O(1)$. Consequently, this factor does not affect the achievable asymptotic error exponent.

Thus, we provide a complete proof of the achievable relative-entropy decoupling bound and the corresponding asymptotic error exponent without requiring either \(R\) or \(E\) to be finite-dimensional. For finite-dimensional reference systems and partial-trace maps, this achievable exponent matches the converse bound shown in Ref.~\cite{berta2026tightanyshotquantumdecoupling} up to the critical rate identified therein and is therefore optimal in this regime.
The relative-entropy one-shot decoupling theorem also supplies the finite-dimensional random-unitary
building block used below to establish decoupling for
infinite-dimensional input systems.

\paragraph*{Infinite-dimensional IID decoupling}

Building on the one-shot theorem above, we formulate infinite-dimensional IID decoupling for a state
\(\left(\rho^{AR}\right)^{\otimes n}\) on separable, possibly infinite-dimensional systems,
where \(A\) is the system under our control and \(R\) is an
inaccessible reference system.
For every \(n\), we introduce a finite-rank projection \(\Pi_n^{A^n}\) on \(A^n\) to define a finite-dimensional subspace
\(\mathcal H^{A_{\Pi_n}^n}\coloneqq\Pi_n^{A^n}\mathcal H^{A^n}\).
Its orthogonal complement \(\mathcal H^{A_{\Pi_n^\perp}^n}\) may remain infinite-dimensional, and its weight with respect to
\(\left(\rho^A\right)^{\otimes n}\) is
\(\delta_{\Pi_n}
\coloneqq
\operatorname{Tr}\left[
(I^{A^n}-\Pi_n^{A^n})(\rho^A)^{\otimes n}
\right]\), where \(I^{A^n}\) is the identity operator on \(A^n\).
For the normalized projected state on \(\mathcal H^{A_{\Pi_n}^n}\), we apply a unitary \(U_n\), identify
the projected space as
\(\mathcal H^{A_{\Pi_n}^n}
\simeq
\mathcal H^{E_n}\otimes\mathcal H^{M_n}\), and discard \(M_n\).
Writing \(\tau_n^{E_nR^n}\) for the resulting state and
\(\pi^{E_n}\) for the maximally mixed state on \(E_n\), we require
\begin{align}
\delta_{\Pi_n}\longrightarrow0,
\qquad
D\!\left(
\tau_n^{E_nR^n}
\middle\|
\pi^{E_n}\otimes\left(\rho^R\right)^{\otimes n}
\right)
\longrightarrow0.
\label{eq:criteria}
\end{align}
In particular, decoupling is required from the reference marginal of
the original state, rather than merely from that of the projected state.
Relative entropy provides a strictly stronger decoupling criterion than trace distance: convergence in relative entropy implies convergence in trace distance, but in infinite dimensions, states can converge in trace distance while their relative entropy remains bounded away from zero or even diverges.
We assume that the manipulated system has finite quantum entropy \(H(A)_\rho<\infty\), which may be viewed as an energy constraint (see also Eq.~\eqref{eq:finite_entropy_finite_energy_condition});
notably, no finite-dimensionality, entropy, or energy assumption is
imposed on \(R\).

The main technical challenge is to reconcile the finite-dimensional
subspaces required for Haar-random-unitary decoupling with an
asymptotically unit projection success probability in infinite
dimensions.
Under the assumption \(H(A)_\rho<\infty\), we first construct, for
each block length \(m\), a finite-rank spectral cutoff
\(\widehat{\Pi}_m^{A^m}\) on \(A^m\).
This cutoff selects a finite-dimensional subspace whose asymptotic
size is governed by \(H(A)_\rho\).
To obtain the growing number of IID blocks, we fix \(m\)
and write \(n=mb+c\), where \(b\) and \(c\) are respectively the
quotient and remainder obtained by dividing \(n\) by \(m\).
However, a naive projection requiring all \(b\) complete blocks to lie
in \(\mathcal H^{A_{\widehat{\Pi}_m}^m}\) succeeds only with
probability
\(
\left(1-\delta_{\widehat{\Pi}_m}\right)^b
\),
which converges to zero unless \(\delta_{\widehat{\Pi}_m}=0\).

To address this issue, we instead retain the direct sum of all weakly
typical common--rare patterns.
Such a pattern contains approximately
\((1-\delta_{\widehat{\Pi}_m})b\) common blocks lying in
\(\mathcal H^{A_{\widehat{\Pi}_m}^m}\) and approximately
\(\delta_{\widehat{\Pi}_m}b\) rare blocks lying in
\(\mathcal H^{A_{\widehat{\Pi}_m^\perp}^m}\).
From each retained pattern, we extract a fixed number of common blocks, chosen slightly below the typical common-block count so that this many blocks are available in every retained pattern.
These extracted blocks define a common finite-dimensional subsystem across all retained patterns, allowing the same Haar-random unitary transformation to act on that subsystem in every pattern sector.
Moreover, the fraction of common blocks left unextracted can be made arbitrarily small.
On the other hand, the rare-block space may be infinite-dimensional, and unless \(\delta_{\widehat{\Pi}_m}=0\), the number of rare blocks grows with \(n\).
Nevertheless, the normalized rare-block input marginal has finite entropy, allowing weak typicality to be applied to the repeated rare blocks.
This yields a finite-rank projection whose success probability converges to one.
The residual \(c<m\) copies are treated separately by a finite-rank
spectral cutoff whose rank is subexponential in \(n\) and whose success
probability also converges to one.

Consequently, all these components can be incorporated into a single
finite-rank projection \(\Pi_n\) on the full \(n\)-copy input space while
retaining asymptotically unit success probability.
All auxiliary components other than the extracted common subsystem are included
in the discarded system \(M_n\).
For a fixed block length \(m\), including these auxiliary components may increase the size of the discarded system.
However, the resulting additional dimension cost vanishes at first order when the limit \(n\to\infty\) is taken first and the block length \(m\) is subsequently sent to infinity
(Theorem~\ref{thm:IID_decoupling_high_probability_global_cutoffs}).

Combining this high-probability projection with the preceding
finite-dimensional-input one-shot decoupling theorem, we construct
infinite-dimensional decoupling protocols satisfying
\eqref{eq:criteria} and achieving the following rates
(Theorem~\ref{thm:IID_decoupling_original_direct}):
\begin{align}
\lim_{n\to\infty}
\frac1n\log|A_{\Pi_n}^n|
=
H(A)_\rho,
\qquad
\lim_{n\to\infty}
\frac1n\log|M_n|
=
\frac12 I(A:R)_\rho.
\end{align}
These rates coincide with the optimal finite-dimensional
rates~\cite{horodecki2005partial,horodecki2007quantum}.
Moreover, we prove matching converse bounds for an arbitrary sequence
of finite-rank projections with \(\delta_{\Pi_n}\to0\) and for
arbitrary unitaries and partial-trace factorizations on the
projected spaces (Theorem~\ref{thm:IID_decoupling_converse}).
Thus, the rates \(H(A)_\rho\) and
\(\frac12 I(A:R)_\rho\) are optimal within the present
infinite-dimensional decoupling formulation.

\paragraph*{Infinite-dimensional IID quantum state merging}
Finally, we formulate infinite-dimensional IID quantum state merging
on separable, possibly infinite-dimensional systems in the style of the fully quantum Slepian--Wolf protocol
\cite{horodecki2005partial,horodecki2007quantum,
abeyesinghe2009mother}.
Starting from the pure IID input state
\(\left(\ket{\psi}^{ABR}\right)^{\otimes n}\),
Alice coherently encodes her input \(A^n\) into a finite-dimensional quantum
message \(M_n\), which she transmits to Bob, and a finite-dimensional
system \(E_n\), which she retains.
Bob then applies a decoding operation to \(M_nB^n\).
The protocol succeeds when its final state converges in purified
distance to the ideal state in which Alice's share of
\(\left(\ket{\psi}^{ABR}\right)^{\otimes n}\) has been coherently transferred to Bob, and \(E_n\) is maximally entangled with a corresponding system held by Bob.
Purified-distance convergence is the conventional criterion in the
decoupling approach to quantum state merging: once \(E_n\) is
approximately decoupled from the inaccessible reference \(R^n\),
Uhlmann's theorem
\cite{UHLMANN1976273}
provides a decoding isometry on Bob's side that reconstructs both the
transferred input and Bob's share of the generated entanglement
\cite{horodecki2005partial,horodecki2007quantum}.
Because relative entropy controls purified distance, the stronger relative-entropy decoupling theorem established above directly provides the approximation required for the construction of the decoding operation from Uhlmann's theorem.

To construct the protocol, we apply the infinite-dimensional IID
decoupling theorem to the \(A^nR^n\)-marginal of the input state,
assuming only \(H(A)_\psi<\infty\), without imposing
finite-dimensionality or entropy assumptions on \(B\) or \(R\).
The resulting decoupling protocol decomposes the finite-dimensional
projected part of Alice's input into two subsystems: \(M_n\), which is
sent to Bob, and \(E_n\), which Alice retains.
We extend the finite-rank decoupling operation to a trace-preserving
encoder on the full infinite-dimensional input space by routing the complementary projection branch into a local auxiliary system
discarded by Alice, without communicating an additional success flag.
The auxiliary components used to construct the high-probability projected subspace are included in \(M_n\), and their dimensions are controlled so that they do not affect the first-order
quantum-communication rate.
The asymptotically unit projection success probability and the
vanishing relative-entropy decoupling error together imply a vanishing
purified-distance error for quantum state merging, while Uhlmann's
theorem supplies the required decoding operation.

With \(q\) denoting the asymptotic quantum-communication rate and
\(e\) denoting the asymptotic entanglement-generation rate, the
protocol achieves the following rates
(Theorem~\ref{thm:infinite_dimensional_IID_state_merging_direct}):
\begin{align}
q
&=
\frac{1}{2}I(A:R)_\psi,
&
q-e
&=
H(A|B)_\psi.
\end{align}
The quantity \(q-e\) represents the total cost of the protocol.
Indeed, if quantum communication at rate \(q\) is implemented by
quantum teleportation
\cite{PhysRevLett.70.1895},
with classical communication treated as free, it consumes entanglement
at rate \(q\), while the protocol generates entanglement at rate \(e\).
Thus, \(q-e\) is the net entanglement-consumption rate, with a negative
value corresponding to a net entanglement gain.

These rates coincide with the optimal rates of the fully quantum
Slepian--Wolf protocol in finite dimensions
\cite{horodecki2005partial,horodecki2007quantum,
abeyesinghe2009mother,Datta_2011}.
Thus, under the sole finite-entropy assumption on Alice's system, the
same entropic rates remain operationally achievable when \(A\), \(B\),
and \(R\) are all allowed to be infinite-dimensional.
In particular, the result provides direct operational interpretations
of \(I(A:R)_\psi\) as twice an achievable quantum-communication rate
and of \(H(A|B)_\psi\) as an achievable total cost.
It thereby motivates the use of these information-theoretic quantities
in the analysis of quantum information processing and quantum physics
beyond finite dimensions.

\subsection{Outlook}

This work shows the universality of the core operational principles underlying quantum decoupling and quantum state merging across physical systems modeled by finite- or infinite-dimensional separable Hilbert spaces, provided that the system being manipulated has finite entropy.

When the manipulated system is finite-dimensional, our relative-entropy error bounds for one-shot decoupling remain valid even when the inaccessible reference and output systems are infinite-dimensional.
For partial-trace decoupling with a finite-dimensional reference
system, the resulting achievable asymptotic error exponent matches the
converse bound in
Ref.~\cite{berta2026tightanyshotquantumdecoupling}
up to the critical rate identified therein and is therefore optimal in
this regime.

When the manipulated system is infinite-dimensional, its unitary group
does not admit a normalized Haar probability measure.
Nevertheless, our construction circumvents this obstruction by
introducing finite-rank projections whose success probabilities
converge to one and applying Haar-random-unitary decoupling only to
suitably chosen finite-dimensional subsystems.
Under the sole assumption \(H(A)<\infty\), we recover the optimal
system-dimension rates of IID decoupling known from finite-dimensional
settings, without imposing finite-dimensionality, entropy, or energy
assumptions on the inaccessible reference system.

Even when all input systems are infinite-dimensional, the resulting
fully quantum Slepian--Wolf protocol likewise achieves the same
quantum-communication and total-cost rates as the optimal
finite-dimensional protocol.
These achievability results provide direct operational interpretations of the mutual information as twice an achievable quantum-communication rate and of the conditional entropy as an achievable total cost, thereby highlighting their operational significance in quantum information processing and quantum physics beyond finite dimensions.

An interesting open problem is to establish matching
infinite-dimensional converse bounds for the decoupling error exponent
and quantum-state-merging cost rates under assumptions comparable to those
used for achievability.
For relative-entropy decoupling, the known converse bound on the error
exponent relies on finite-dimensional spectral pinching and does not
directly extend to an infinite-dimensional reference system
(see Remark~\ref{remark:converse_one_shot_decoupling}).
For quantum state merging, conventional converse arguments may give rise to indeterminate differences of infinite entropies (see Remark~\ref{rem:converse_state_merging}).
Moreover, even under the assumption \(H(A)<\infty\), convergence in
purified distance does not by itself impose a finite-entropy or energy
constraint on the actual output of a protocol.
One-sided semicontinuity methods
\cite{shirokov2023close,shirokov2025alicki},
similar to those used in the study of entanglement cost
\cite{yamasaki2025entanglement},
may lead to converse bounds under additional assumptions such as
\(H(AB)<\infty\).
However, determining whether matching converse bounds hold under the sole
condition \(H(A)<\infty\) may require fundamentally
different techniques.

Another important direction is to extend the infinite-dimensional IID
decoupling protocol from partial-trace channels to general bounded
completely positive maps with infinite-dimensional inputs.
Our one-shot theorem already applies to arbitrary bounded completely
positive maps when \(A\) is finite-dimensional, while allowing the
reference and output systems to be arbitrary separable systems.
This setting is directly relevant to decoupling-based analyses of
channel coding
\cite{dupuis2010decouplingapproachquantuminformation,
dupuis2014one,sharma2015randomcodingexponentsgalore,
wakakuwa2021one,9956855,
cheng2024jointstatechanneldecouplingoneshot,
berta2026tightanyshotquantumdecoupling}.
For infinite-dimensional \(A\), partial trace has a special feature:
all auxiliary components introduced by the finite-rank projection can
be absorbed into the subsystem discarded by the partial trace.
For a general completely positive map, there is no analogous canonical
prescription for treating these auxiliary components or for specifying
how their outputs should enter the decoupling criterion.
Thus, even the appropriate formulation of decoupling for general
completely positive maps with infinite-dimensional inputs remains
nontrivial.

Nevertheless, the mother-protocol interpretation of the fully quantum
Slepian--Wolf protocol suggests a complementary route
\cite{abeyesinghe2009mother,PhysRevLett.93.230504,4626055}.
In finite dimensions, resource reductions based on the mother protocol
provide a unified way to derive protocols for channel coding, channel
simulation, entanglement-assisted communication, and several
distributed tasks.
Establishing analogous reductions beyond finite dimensions could
therefore yield a systematic framework for deriving and analyzing a
broader family of protocols under finite-entropy or energy constraints.
The infinite-dimensional quantum-state-merging protocol developed here
initiates this line of exploration.

Finally, extending the present framework from separable Hilbert spaces
to operator-algebraic settings is a fundamental longer-term objective,
with potential applications to quantum field theory and quantum gravity
\cite{goto2026rethinkingquantuminformationgravity}.
Taken together, pursuing these directions would help clarify which operational principles of quantum information theory are universal across different mathematical frameworks used to model the underlying physical world.

\subsection{Organization of the paper}
The remainder of the paper is organized as follows.
Section~\ref{sec:preliminaries} introduces the notation and
information-theoretic quantities used throughout the paper for
separable, possibly infinite-dimensional quantum systems.
Section~\ref{sec:finite_input_infinite_reference_one_shot_decoupling}
establishes one-shot decoupling bounds for finite-dimensional input
systems and separable reference and output systems, including the
trace inequalities underlying the relative-entropy error bounds.
Section~\ref{sec:infinite_dimensional_IID_decoupling} develops
infinite-dimensional IID decoupling through high-probability
finite-rank projections and establishes the corresponding achievable
rates and converse bounds.
Finally,
Section~\ref{sec:infinite_dimensional_IID_quantum_state_merging}
applies the IID decoupling result to infinite-dimensional quantum state merging and derives
the achievable quantum-communication and total-cost rates.

\begin{acknowledgments}
H.Y. acknowledges Oliver Hahn for discussions.
H.Y. was supported by JST PRESTO Grant No. JPMJPR23FC, JST CREST Grant No. JPMJCR25I5, JST Moonshot R\&D Grant No. JPMJMS256J, and Faculty Research Funding from Google Quantum AI\@.
\end{acknowledgments}

\section{Preliminaries}
\label{sec:preliminaries}

In this section, we introduce the notation used throughout this work and summarize the entropy quantities required in the subsequent analyses of decoupling and quantum state merging.
Subsection~\ref{subsec:quantum_mechanics_in_infinite_dimensions}
fixes notation for Hilbert spaces, operators, states, maps, Haar-random unitaries, spectral calculus, relative entropy, and distance measures.
Subsection~\ref{subsec:quantum_entropies} introduces the quantum
entropy, conditional quantum entropy, and mutual information under a
finite-marginal-entropy assumption, and explains its relation to finite-energy conditions.
Subsection~\ref{subsec:sandwiched_renyi_relative_entropy} defines the
sandwiched R\'enyi relative entropy in infinite dimensions, explains the domain condition required for its well-definedness, and provides the corresponding conditional sandwiched R\'enyi entropy.

\subsection{Quantum mechanics in infinite dimensions}
\label{subsec:quantum_mechanics_in_infinite_dimensions}

We fix notation for quantum systems, operators, states, maps,
spectral calculus, and distance measures, while referring to
Ref.~\cite{conway2025course} for details of the operator theory.
All Hilbert spaces in this paper are complex and separable, and may be
infinite-dimensional unless stated otherwise.
A quantum system labeled \(A\) is represented by a Hilbert space
\(\mathcal H^A\), whose inner product is denoted in braket notation by
\begin{align}
\bra{\psi}\ket{\psi'}
\end{align}
for
\(\ket{\psi},\ket{\psi'}\in\mathcal H^A\).

For a Hilbert space \(\mathcal H^A\), a linear operator \(X\) on
\(\mathcal H^A\) is bounded if its operator norm is finite, i.e.,
\begin{align}
\|X\|
\coloneqq
\sup
\left\{
\|X\ket{\psi}\|:
\ket{\psi}\in\mathcal H^A,\ 
\|\ket{\psi}\|\le1
\right\}
<
\infty.
\label{eq:operator_norm_definition}
\end{align}
We write
\(\operatorname{B}(\mathcal H^A)\)
for the Banach space of bounded operators on
\(\mathcal H^A\), equipped with the operator norm
\(\|\cdot\|\).
For
\(X\in\operatorname{B}(\mathcal H^A)\),
its adjoint is the unique bounded operator
\(X^\dagger\in\operatorname{B}(\mathcal H^A)\)
satisfying
\begin{align}
\bra{\psi}X\ket{\psi'}
=
\bra{X^\dagger\psi}\ket{\psi'}
\end{align}
for all
\(\ket{\psi},\ket{\psi'}\in\mathcal H^A\).

A bounded operator
\(X\in\operatorname{B}(\mathcal H^A)\)
is positive semidefinite, written \(X\ge0\), if
\begin{align}
\bra{\psi}X\ket{\psi}
\ge
0
\end{align}
for every
\(\ket{\psi}\in\mathcal H^A\).
We write
\begin{align}
\operatorname{B}_+(\mathcal H^A)
\coloneqq
\left\{
X\in\operatorname{B}(\mathcal H^A):
X\ge0
\right\}
\end{align}
for the cone of positive-semidefinite bounded operators.

Let
\(\{\ket{\psi_i}\}_{i\in\mathcal I}\)
be an orthonormal basis of
\(\mathcal H^A\), where
\(\mathcal I\subseteq\{0,1,\ldots\}\)
is finite or countably infinite.
For
\(X\in\operatorname{B}_+(\mathcal H^A)\),
the trace is defined by
\begin{align}
\operatorname{Tr}[X]
\coloneqq
\sum_{i\in\mathcal I}
\bra{\psi_i}X\ket{\psi_i}
\in
[0,+\infty].
\label{eq:extended_trace_positive_operator}
\end{align}
The value of the nonnegative series in
\eqref{eq:extended_trace_positive_operator}, including whether it is
finite, is independent of the choice of orthonormal basis.

A bounded operator
\(X\in\operatorname{B}(\mathcal H^A)\)
is trace class if its trace norm is finite, i.e.,
\begin{align}
\|X\|_1
\coloneqq
\operatorname{Tr}[|X|]
<
\infty,
\qquad
|X|
\coloneqq
\sqrt{X^\dagger X}.
\label{eq:trace_norm_definition}
\end{align}
We write
\(\operatorname{T}(\mathcal H^A)\)
for the Banach space of trace-class operators on
\(\mathcal H^A\), equipped with the trace norm
\(\|\cdot\|_1\).
For
\(X\in\operatorname{T}(\mathcal H^A)\),
its trace is defined by
\begin{align}
\operatorname{Tr}[X]
\coloneqq
\sum_{i\in\mathcal I}
\bra{\psi_i}X\ket{\psi_i}
\in
\mathbb C,
\label{eq:trace_trace_class_operator}
\end{align}
where the series is absolutely convergent and independent of the
choice of orthonormal basis.
We write
\begin{align}
\operatorname{T}_+(\mathcal H^A)
\coloneqq
\operatorname{T}(\mathcal H^A)
\cap
\operatorname{B}_+(\mathcal H^A)
\end{align}
for the cone of positive-semidefinite trace-class operators.

The set of quantum states on
\(\mathcal H^A\)
is
\begin{align}
\operatorname{D}(\mathcal H^A)
\coloneqq
\left\{
\rho^A\in\operatorname{T}_+(\mathcal H^A):
\operatorname{Tr}[\rho^A]=1
\right\}.
\label{eq:density_operators_definition}
\end{align}
When useful, possibly unnormalized output operators are denoted with
a tilde, as in
\(\widetilde\tau\).
This output-operator notation is distinct from the standard tilde used
below for sandwiched R\'enyi quantities.
A normalized vector
\(\ket{\psi}^A\in\mathcal H^A\)
is identified with the corresponding pure state
\(\psi^A=\ket{\psi}\bra{\psi}^A\)
when no ambiguity arises.

For \(X,Y\in\operatorname{T}(\mathcal H^A)\), the trace-norm distance is
\begin{align}
\frac12
\|X-Y\|_1.
\label{eq:trace_distance_definition}
\end{align}
For quantum states, the trace-norm distance is also called the trace
distance.
Unless explicitly stated otherwise, convergence of trace-class
operators is understood with respect to the trace norm.

A bounded operator
\(U\in\operatorname{B}(\mathcal H^A)\)
is unitary if
\begin{align}
U^\dagger U
=
UU^\dagger
=
I^A.
\end{align}
The group of unitary operators on
\(\mathcal H^A\)
is denoted by
\begin{align}
\operatorname{U}(\mathcal H^A).
\label{eq:unitary_group_HA}
\end{align}

When
\(\mathcal H^A\)
is finite-dimensional,
\(\operatorname{U}(\mathcal H^A)\)
is a compact group and admits a unique normalized left- and
right-invariant Borel probability measure, called the normalized Haar
measure and denoted by
\begin{align}
\mathrm dU.
\end{align}
For every scalar- or Banach-space-valued Haar-integrable function
\(f\) on
\(\operatorname{U}(\mathcal H^A)\),
we write
\begin{align}
&\mathbb E_{
U\sim
\mathrm{Haar}
\left(
\operatorname{U}
\left(
\mathcal H^A
\right)
\right)
}
\left[
f(U)
\right]
\nonumber\\
&\coloneqq
\int_{
\operatorname{U}
\left(
\mathcal H^A
\right)
}
f(U)
\,\mathrm dU.
\label{eq:normalized_Haar_expectation}
\end{align}
Operator-valued Haar integrals are understood as Bochner integrals in
the relevant Banach space whenever this is the topology used in the
given context.

A bounded linear operator
\begin{align}
V:
\mathcal H^A
\longrightarrow
\mathcal H^E
\end{align}
is an isometry if
\begin{align}
V^\dagger V
=
I^A.
\end{align}
A surjective isometry is a unitary isomorphism.

The identity operator on
\(\mathcal H^A\)
is denoted by \(I^A\).
The identity map on
\(\operatorname{T}(\mathcal H^A)\)
is denoted by
\(\id^A\).
For
\(k\in\{1,2,\ldots\}\),
we let
\(\id_k\)
denote the identity map on
\(\operatorname{T}(\mathbb C^k)\).

If
\(\mathcal H^A\)
is finite-dimensional, we write
\begin{align}
|A|
\coloneqq
\dim\mathcal H^A.
\label{eq:dimension_shorthand_A}
\end{align}
The maximally mixed state on
\(\mathcal H^A\)
is denoted by
\begin{align}
\pi^A
\coloneqq
\frac{I^A}{|A|}.
\label{eq:maximally_mixed_state_A}
\end{align}
If
\(\mathcal H^{A'}\simeq\mathcal H^A\)
is a fixed copy of
\(\mathcal H^A\),
the normalized maximally entangled state is denoted by
\begin{align}
\Phi^{AA'}
\coloneqq
\ket{\Phi}\bra{\Phi}^{AA'},
\qquad
\ket{\Phi}^{AA'}
\coloneqq
\frac{1}{\sqrt{|A|}}
\sum_{i=0}^{|A|-1}
\ket{i}^{A}
\otimes
\ket{i}^{A'},
\label{eq:normalized_maximally_entangled_state}
\end{align}
where the orthonormal bases are fixed by the identification between
\(A\) and \(A'\).

For isomorphic systems \(A\) and \(\widetilde A\), the swap operator
is denoted by
\begin{align}
F^{A\widetilde A}
:
\mathcal H^A
\otimes
\mathcal H^{\widetilde A}
\longrightarrow
\mathcal H^A
\otimes
\mathcal H^{\widetilde A}
\end{align}
and is defined by
\begin{align}
F^{A\widetilde A}
\left(
\ket{\varphi}^A
\otimes
\ket{\psi}^{\widetilde A}
\right)
=
\ket{\psi}^A
\otimes
\ket{\varphi}^{\widetilde A},
\label{eq:swap_operator_definition}
\end{align}
where the vectors on \(A\) and \(\widetilde A\) are identified using
the fixed identification between the two copies.

For composite systems, we write, for example,
\begin{align}
\rho^{AR}
\in
\operatorname{D}
\left(
\mathcal H^A
\otimes
\mathcal H^R
\right).
\end{align}
We use superscripts to denote reduced states:
\begin{align}
\rho^A
&\coloneqq
\operatorname{Tr}_R[\rho^{AR}],
\\
\rho^R
&\coloneqq
\operatorname{Tr}_A[\rho^{AR}].
\label{eq:reduced_states_notation_prelim}
\end{align}
The \(n\)-fold independent and identically distributed state is
written as
\begin{align}
\left(
\rho^{AR}
\right)^{\otimes n}
\in
\operatorname{D}
\left(
\mathcal H^{A^n}
\otimes
\mathcal H^{R^n}
\right),
\label{eq:n_copy_state_notation_prelim}
\end{align}
where
\begin{align}
\mathcal H^{A^n}
&\coloneqq
\left(
\mathcal H^A
\right)^{\otimes n},
&
\mathcal H^{R^n}
&\coloneqq
\left(
\mathcal H^R
\right)^{\otimes n}.
\end{align}
Tensor products that differ only by the canonical ordering of their
tensor factors are identified throughout this work.

A linear map
\begin{align}
\mathcal E:
\operatorname{T}(\mathcal H^A)
\longrightarrow
\operatorname{T}(\mathcal H^E)
\end{align}
is bounded if
\begin{align}
\|\mathcal E\|_{1\to1}
\coloneqq
\sup
\left\{
\|\mathcal E(X)\|_1:
X\in\operatorname{T}(\mathcal H^A),\
\|X\|_1\le1
\right\}
<
\infty.
\label{eq:trace_class_map_norm}
\end{align}
It is positive if
\begin{align}
X\in\operatorname{T}_+(\mathcal H^A)
\quad\Longrightarrow\quad
\mathcal E(X)\in\operatorname{T}_+(\mathcal H^E).
\end{align}
It is completely positive if, for every
\(k\in\{1,2,\ldots\}\), the map
\begin{align}
\id_k\otimes\mathcal E:
\operatorname{T}
\left(
\mathbb C^k\otimes\mathcal H^A
\right)
\longrightarrow
\operatorname{T}
\left(
\mathbb C^k\otimes\mathcal H^E
\right)
\end{align}
is positive.
It is trace-preserving if
\begin{align}
\operatorname{Tr}[\mathcal E(X)]
=
\operatorname{Tr}[X]
\end{align}
for every
\(X\in\operatorname{T}(\mathcal H^A)\).
A bounded completely positive trace-preserving linear map is called a
quantum channel.

For
\(X\in\operatorname{B}_+(\mathcal H^A)\),
let \(E_X\) denote its projection-valued spectral measure, so that
\begin{align}
X
=
\int_{[0,\|X\|]}
\lambda\,
\mathrm{d}E_X(\lambda).
\end{align}
For every Borel set
\(\Delta\subseteq[0,\|X\|]\),
\(E_X(\Delta)\)
is an orthogonal projection on
\(\mathcal H^A\).
We define the support projection of \(X\) by
\begin{align}
\Pi_X
\coloneqq
E_X((0,\|X\|]),
\end{align}
and its support as the closed subspace
\begin{align}
\operatorname{supp}(X)
\coloneqq
\Pi_X\mathcal H^A.
\end{align}

For a real-valued Borel function \(f\) on
\((0,\|X\|]\),
we define the generalized functional calculus by
\begin{align}
f(X)
\coloneqq
\int_{(0,\|X\|]}
f(\lambda)\,
\mathrm{d}E_X(\lambda),
\label{eq:generalized_functional_calculus}
\end{align}
with natural domain
\begin{align}
\operatorname{dom}(f(X))
\coloneqq
\left\{
\ket{\psi}\in\mathcal H^A:
\int_{(0,\|X\|]}
|f(\lambda)|^2
\,
\mathrm{d}
\bra{\psi}E_X(\lambda)\ket{\psi}
<
\infty
\right\}.
\label{eq:generalized_functional_calculus_domain}
\end{align}
By convention,
\(f(X)\)
acts as zero on
\(\ker(X)\).
If \(f\) is bounded on
\((0,\|X\|]\),
then \(f(X)\) is a bounded operator on all of
\(\mathcal H^A\).
For a general real-valued Borel function \(f\),
\(f(X)\) is the corresponding densely defined, possibly unbounded
self-adjoint operator with the domain in
\eqref{eq:generalized_functional_calculus_domain}.

For nonzero
\(X\in\operatorname{T}_+(\mathcal H^A)\),
there exist
\begin{align}
r_X
\in
\{1,2,\ldots\}
\cup
\{\infty\}
\end{align}
and an index set
\begin{align}
\mathcal I_X
\coloneqq
\begin{cases}
\{0,1,\ldots,r_X-1\},
&
r_X<\infty,
\\
\{0,1,\ldots\},
&
r_X=\infty,
\end{cases}
\end{align}
together with positive eigenvalues
\begin{align}
\lambda_i>0,
\qquad
i\in\mathcal I_X,
\end{align}
counted with multiplicity, and a corresponding orthonormal basis
\begin{align}
\left\{
\ket{\psi_i}
\right\}_{i\in\mathcal I_X}
\end{align}
of
\(\operatorname{supp}(X)\),
such that
\begin{align}
X
=
\sum_{i\in\mathcal I_X}
\lambda_i
\ket{\psi_i}\bra{\psi_i}.
\label{eq:spectral_decomposition_trace_class_positive}
\end{align}
The series in
\eqref{eq:spectral_decomposition_trace_class_positive}
converges in trace norm, and
\begin{align}
\sum_{i\in\mathcal I_X}
\lambda_i
=
\operatorname{Tr}[X]
<
\infty.
\end{align}
For \(X=0\), the corresponding spectral decomposition is understood
as the empty sum.
For the real-valued Borel function \(f\) above, the generalized
functional calculus takes the form
\begin{align}
f(X)
=
\sum_{i\in\mathcal I_X}
f(\lambda_i)
\ket{\psi_i}\bra{\psi_i}
\label{eq:generalized_functional_calculus_discrete}
\end{align}
on its natural domain
\begin{align}
\operatorname{dom}(f(X))
=
\left\{
\ket{\psi}\in\mathcal H^A:
\sum_{i\in\mathcal I_X}
|f(\lambda_i)|^2
\left|
\bra{\psi_i}\ket{\psi}
\right|^2
<
\infty
\right\},
\end{align}
and \(f(X)\) acts as zero on
\(\ker(X)\).

For
\(\alpha\in\mathbb R\),
we define the generalized power by
\begin{align}
X^\alpha
\coloneqq
\int_{(0,\|X\|]}
\lambda^\alpha\,
\mathrm{d}E_X(\lambda).
\label{eq:generalized_real_power}
\end{align}
In particular,
\begin{align}
X^0
=
\Pi_X,
\end{align}
and, by the convention above, negative powers act as zero on
\(\ker(X)\).

Let \(\log\) denote the natural logarithm.
For
\(X\in\operatorname{B}_+(\mathcal H^A)\),
we define the generalized logarithm by
\begin{align}
\log X
\coloneqq
\int_{(0,\|X\|]}
\log\lambda\,
\mathrm{d}E_X(\lambda),
\label{eq:generalized_operator_logarithm}
\end{align}
where, by the convention above,
\(\log X\)
acts as zero on
\(\ker(X)\).
It is a densely defined self-adjoint operator and need not be bounded,
even when \(X\) is trace class.

For positive-semidefinite trace-class operators
\(X,Y\in\operatorname{T}_+(\mathcal H^A)\),
we use the Umegaki relative entropy~\cite{umegaki1962conditional},
formally written as
\begin{align}
D(X\|Y)
=
\operatorname{Tr}
\left[
X
\left(
\log X-\log Y
\right)
\right].
\label{eq:formal_umegaki_trace_expression}
\end{align}
While
Ref.~\cite{umegaki1962conditional}
introduced this trace expression under suitable finiteness
conditions,
Refs.~\cite{araki1976relative,araki1977relative}
extended the definition without requiring the two logarithmic
contributions to be separately finite.
In infinite dimensions,
\eqref{eq:formal_umegaki_trace_expression}
cannot in general be interpreted as the difference of the separately
evaluated traces
\(\operatorname{Tr}[X\log X]\)
and
\(\operatorname{Tr}[X\log Y]\).
Indeed,
\(\operatorname{Tr}[X\log X]\)
may be equal to
\(-\infty\),
whereas the product
\(X\log Y\),
and hence its trace, need not be well defined.
We therefore specify the meaning of
\eqref{eq:formal_umegaki_trace_expression}
through a single extended spectral expression.

If \(X=0\), we set
\begin{align}
D(0\|Y)
\coloneqq
0.
\label{eq:umegaki_zero_first_argument}
\end{align}
If \(X\ne0\) and
\begin{align}
\operatorname{supp}(X)
\not\subseteq
\operatorname{supp}(Y),
\end{align}
we set
\begin{align}
D(X\|Y)
\coloneqq
+\infty.
\label{eq:umegaki_support_failure}
\end{align}

Suppose henceforth that
\begin{align}
X\ne0,
\qquad
\operatorname{supp}(X)
\subseteq
\operatorname{supp}(Y).
\label{eq:umegaki_support_condition}
\end{align}
For each
\(i\in\mathcal I_X\),
define the finite positive measure
\(\mu_i^Y\)
on
\([0,\|Y\|]\)
by
\begin{align}
\mu_i^Y(\Delta)
\coloneqq
\bra{\psi_i}
E_Y(\Delta)
\ket{\psi_i}
\label{eq:umegaki_Y_spectral_measure}
\end{align}
for every Borel set
\(\Delta\subseteq[0,\|Y\|]\).
In fact,
\(\mu_i^Y\)
is a probability measure, since
\begin{align}
\mu_i^Y([0,\|Y\|])
=
1.
\end{align}
Moreover,
\eqref{eq:umegaki_support_condition}
implies
\begin{align}
\mu_i^Y(\{0\})
&=
\bra{\psi_i}
E_Y(\{0\})
\ket{\psi_i}
\nonumber\\
&=
0,
\label{eq:umegaki_zero_spectral_mass}
\end{align}
because
\(E_Y(\{0\})=I^A-\Pi_Y\)
and
\(\ket{\psi_i}\in\operatorname{supp}(X)
\subseteq\operatorname{supp}(Y)\).

The rigorous meaning of
\eqref{eq:formal_umegaki_trace_expression}
is the extended spectral expression
\begin{align}
D(X\|Y)
\coloneqq
\sum_{i\in\mathcal I_X}
\lambda_i
\int_{(0,\|Y\|]}
\log
\frac{\lambda_i}{\lambda}
\,
\mathrm{d}\mu_i^Y(\lambda).
\label{eq:umegaki_relative_entropy_positive_operators}
\end{align}
This expression is well defined in
\(\mathbb R\cup\{+\infty\}\); to see this, defining
\begin{align}
[a]_-
\coloneqq
\max\{-a,0\},
\end{align}
one has, for every
\(i\in\mathcal I_X\),
\begin{align}
&
\left[
\lambda_i
\int_{(0,\|Y\|]}
\log
\frac{\lambda_i}{\lambda}
\,
\mathrm{d}\mu_i^Y(\lambda)
\right]_-
\nonumber\\
&\le
\lambda_i
\int_{(0,\|Y\|]}
\left[
\log
\frac{\lambda_i}{\lambda}
\right]_-
\,
\mathrm{d}\mu_i^Y(\lambda)
\nonumber\\
&=
\int_{(\lambda_i,\|Y\|]}
\lambda_i
\log
\frac{\lambda}{\lambda_i}
\,
\mathrm{d}\mu_i^Y(\lambda)
\nonumber\\
&\le
\int_{(\lambda_i,\|Y\|]}
\lambda
\,
\mathrm{d}\mu_i^Y(\lambda)
\nonumber\\
&\le
\int_{(0,\|Y\|]}
\lambda
\,
\mathrm{d}\mu_i^Y(\lambda),
\label{eq:umegaki_single_eigenvector_negative_part_bound}
\end{align}
where we used
\begin{align}
\lambda_i
\log
\frac{\lambda}{\lambda_i}
\le
\lambda-\lambda_i
\le
\lambda
\qquad
\text{for \(\lambda>\lambda_i\)}.
\end{align}
Consequently,
\begin{align}
&
\sum_{i\in\mathcal I_X}
\left[
\lambda_i
\int_{(0,\|Y\|]}
\log
\frac{\lambda_i}{\lambda}
\,
\mathrm{d}\mu_i^Y(\lambda)
\right]_-
\nonumber\\
&\le
\sum_{i\in\mathcal I_X}
\int_{(0,\|Y\|]}
\lambda
\,
\mathrm{d}\mu_i^Y(\lambda)
\nonumber\\
&=
\sum_{i\in\mathcal I_X}
\bra{\psi_i}
Y
\ket{\psi_i}
\nonumber\\
&=
\operatorname{Tr}
\left[
\Pi_XY\Pi_X
\right]
\nonumber\\
&\le
\operatorname{Tr}[Y]
<
\infty.
\label{eq:umegaki_negative_part_summable}
\end{align}
Thus, the total negative part is finite, whereas the positive part may
diverge to
\(+\infty\).
It follows that
\eqref{eq:umegaki_relative_entropy_positive_operators}
is unambiguously defined.
Although the individual measures
\(\mu_i^Y\)
may depend on the choice of eigenbasis within a degenerate eigenspace,
their total contribution to
\eqref{eq:umegaki_relative_entropy_positive_operators}
does not.
Hence, the expression is independent of the chosen eigenbasis and of
the enumeration of the eigenvalues.

For general positive-semidefinite trace-class operators,
\(D(X\|Y)\)
need not be nonnegative.
For every \(c>0\),
\begin{align}
D(X\|cY)
=
D(X\|Y)
-
\operatorname{Tr}[X]\log c.
\label{eq:homogeneous_relative_entropy_scaling}
\end{align}
Hence, when \(X\ne0\), the relative entropy can be made arbitrarily
negative by increasing \(c\).

For states
\(\rho,\sigma\in\operatorname{D}(\mathcal H^A)\),
the fidelity~\cite{UHLMANN1976273,Jozsa01121994}
is defined by
\begin{align}
F(\rho,\sigma)
\coloneqq
\left\|
\sqrt{\rho}\sqrt{\sigma}
\right\|_1^2,
\label{eq:fidelity_definition_prelim}
\end{align}
and the purified
distance~\cite{PhysRevA.66.042304,PhysRevA.71.062310,
rastegin2006sinedistancequantumstates,5550419}
is defined by
\begin{align}
P(\rho,\sigma)
\coloneqq
\sqrt{
1-F(\rho,\sigma)
}.
\label{eq:purified_distance_definition_prelim}
\end{align}
For normalized states,
Ref.~\cite[Theorems~4 and~6]{2600498.2600500}
shows that
\begin{align}
D(\rho\|\sigma)
&\ge
-\log
\left[
\left(
\operatorname{Tr}
\left[
\sqrt{\rho}\sqrt{\sigma}
\right]
\right)^2
\right]
\nonumber\\
&\ge
-\log
F(\rho,\sigma).
\label{eq:relative_entropy_fidelity_bound_prelim}
\end{align}
Since
\begin{align}
-\log x
\ge
1-x
\qquad
\text{for every \(x\in(0,1]\)},
\end{align}
it follows that
\begin{align}
P(\rho,\sigma)^2
\le
D(\rho\|\sigma).
\label{eq:decoupling_relative_entropy_controls_PD}
\end{align}
Moreover, the Fuchs--van de Graaf
inequality~\cite[Theorem~1]{761271}
gives
\begin{align}
\frac12
\left\|
\rho-\sigma
\right\|_1
\le
P(\rho,\sigma).
\label{eq:trace_distance_purified_distance_bound_prelim}
\end{align}
Consequently,
\begin{align}
\frac14
\left\|
\rho-\sigma
\right\|_1^2
\le
D(\rho\|\sigma).
\label{eq:decoupling_relative_entropy_controls_trace_distance}
\end{align}

\subsection{Quantum entropy}
\label{subsec:quantum_entropies}

We define the quantum entropy, binary entropy, conditional quantum
entropy, and mutual information used in the IID analysis.
We also recall that the finite-entropy assumption imposed below can be
interpreted as a finite-energy condition with respect to a suitable
Hamiltonian satisfying the Gibbs hypothesis.

For a state
\begin{align}
\rho^A
\in
\operatorname{D}(\mathcal H^A),
\end{align}
define
\begin{align}
\eta(x)
&\coloneqq
-x\log x,
\qquad
x\in(0,1],
&
\eta(0)
&\coloneqq
0.
\end{align}
Since
\(\eta\)
is nonnegative and bounded on
\([0,1]\),
the generalized functional calculus defined in
\eqref{eq:generalized_functional_calculus}
gives a bounded positive-semidefinite operator
\(\eta(\rho^A)\).
The quantum entropy of
\(\rho^A\)
is defined by
\begin{align}
H(A)_\rho
\coloneqq
\operatorname{Tr}
\left[
\eta(\rho^A)
\right]
\in
[0,+\infty],
\label{eq:von_neumann_entropy_definition}
\end{align}
where the trace is understood in the extended sense of
\eqref{eq:extended_trace_positive_operator}.
Although
\(\log\rho^A\)
may be unbounded,
the operator
\(-\rho^A\log\rho^A=\eta(\rho^A)\),
understood through functional calculus, is bounded.
It may not, however, be trace class.
If
\begin{align}
H(A)_\rho
<
+\infty,
\end{align}
then
\(\eta(\rho^A)\)
is trace class and
\begin{align}
H(A)_\rho
=
-
\operatorname{Tr}
\left[
\rho^A\log\rho^A
\right].
\label{eq:von_neumann_entropy_trace_expression}
\end{align}

For the spectral decomposition
\begin{align}
\rho^A
=
\sum_{i\in\mathcal I_{\rho^A}}
\lambda_i^A
\ket{\psi_i}\bra{\psi_i}^A
\label{eq:rho_A_spectral_decomposition_entropy}
\end{align}
introduced in
\eqref{eq:spectral_decomposition_trace_class_positive},
one has
\begin{align}
H(A)_\rho
=
-
\sum_{i\in\mathcal I_{\rho^A}}
\lambda_i^A
\log\lambda_i^A.
\label{eq:entropy_eigenvalue_formula}
\end{align}
The series in
\eqref{eq:entropy_eigenvalue_formula}
takes values in
\([0,+\infty]\).
For
\(n\in\{1,2,\ldots\}\),
if
\(H(A)_\rho<+\infty\),
then
\begin{align}
H
\left(
A^n
\right)_{\rho^{\otimes n}}
=
nH(A)_\rho
<
+\infty.
\label{eq:entropy_tensor_power_additivity}
\end{align}

For
\(t\in[0,1]\),
the binary entropy is defined by
\begin{align}
h_2(t)
\coloneqq
-t\log t
-
(1-t)\log(1-t),
\label{eq:binary_entropy_definition}
\end{align}
with the convention
\begin{align}
0\log0
\coloneqq
0.
\end{align}
In particular,
\begin{align}
h_2(0)
=
h_2(1)
=
0.
\end{align}

We next formulate the finite-energy characterization of finite
entropy.
A possibly unbounded self-adjoint operator \(O\) on
\(\mathcal H^A\)
is called positive semidefinite if
\begin{align}
\operatorname{spec}(O)
\subseteq
[0,+\infty).
\end{align}
Let
\(E_O\)
denote its projection-valued spectral measure and define its bounded
spectral truncations by
\begin{align}
O_m
&\coloneqq
O E_O([0,m])
\nonumber\\
&=
\int_{[0,m]}
\lambda\,
\mathrm{d}E_O(\lambda),
\qquad
m\in\{1,2,\ldots\}.
\label{eq:positive_operator_spectral_truncation}
\end{align}
The extended trace of \(O\) is defined by
\begin{align}
\operatorname{Tr}[O]
\coloneqq
\sup_{m\in\{1,2,\ldots\}}
\operatorname{Tr}[O_m]
\in
[0,+\infty].
\label{eq:extended_trace_unbounded_positive_operator}
\end{align}
If \(O\) is bounded, this definition agrees with
\eqref{eq:extended_trace_positive_operator}.

Let \(H\) be a possibly unbounded positive-semidefinite self-adjoint
operator on
\(\mathcal H^A\),
and let
\begin{align}
H_m
\coloneqq
H E_H([0,m])
\end{align}
denote its bounded spectral truncations.
For
\(\rho^A\in\operatorname{T}_+(\mathcal H^A)\),
the energy of
\(\rho^A\)
with respect to \(H\) is defined by
\begin{align}
\operatorname{Tr}[H\rho^A]
\coloneqq
\sup_{m\in\{1,2,\ldots\}}
\operatorname{Tr}
\left[
(\rho^A)^{1/2}
H_m
(\rho^A)^{1/2}
\right]
\in
[0,+\infty].
\label{eq:extended_energy_expectation}
\end{align}

For
\(\beta>0\),
the bounded positive-semidefinite operator
\(e^{-\beta H}\)
is defined by
\begin{align}
e^{-\beta H}
\coloneqq
\int_{[0,+\infty)}
e^{-\beta\lambda}
\,
\mathrm{d}E_H(\lambda).
\label{eq:exponential_unbounded_hamiltonian}
\end{align}
The partition function of \(H\) at inverse temperature
\(\beta\)
is
\begin{align}
Z_H(\beta)
\coloneqq
\operatorname{Tr}
\left[
e^{-\beta H}
\right]
\in
(0,+\infty].
\label{eq:partition_function_definition}
\end{align}
If
\(Z_H(\beta)<+\infty\),
then
\begin{align}
\gamma_H(\beta)
\coloneqq
\frac{
e^{-\beta H}
}{
Z_H(\beta)
}
\label{eq:gibbs_state_definition}
\end{align}
is the Gibbs state of \(H\) at inverse temperature
\(\beta\).

Reference~\cite[Proposition~1]{shirokov2006entropy}
shows that
\begin{align}
H(A)_\rho
<
+\infty
\end{align}
if and only if there exists a positive-semidefinite self-adjoint
operator \(H\) on
\(\mathcal H^A\)
such that
\begin{align}
\operatorname{Tr}[H\rho^A]
<
+\infty
\label{eq:finite_entropy_finite_energy_condition}
\end{align}
and
\begin{align}
Z_H(\beta)
=
\operatorname{Tr}
\left[
e^{-\beta H}
\right]
<
+\infty
\qquad
\text{for every \(\beta>0\)}.
\label{eq:finite_entropy_partition_condition}
\end{align}
Condition
\eqref{eq:finite_entropy_partition_condition}
is also known as the Gibbs
hypothesis~\cite{winter2016tight}.
It ensures that the Gibbs state in
\eqref{eq:gibbs_state_definition}
is well defined at every positive inverse temperature.

The Hamiltonian in this characterization may be chosen diagonal in a
complete orthonormal eigenbasis of
\(\rho^A\).
More precisely, let
\(\mathcal J_{\rho^A}\)
be a finite or countably infinite index set and write
\begin{align}
\rho^A
=
\sum_{i\in\mathcal J_{\rho^A}}
\lambda_i^A
\ket{\psi_i}\bra{\psi_i}^A,
\qquad
\lambda_i^A\ge0,
\qquad
\sum_{i\in\mathcal J_{\rho^A}}
\lambda_i^A
=
1,
\label{eq:rho_complete_eigenbasis_energy}
\end{align}
where
\(\{\ket{\psi_i}^A\}_{i\in\mathcal J_{\rho^A}}\)
is an orthonormal basis of all of
\(\mathcal H^A\).
Unlike
\(\mathcal I_{\rho^A}\),
the set
\(\mathcal J_{\rho^A}\)
also includes basis vectors in
\(\ker(\rho^A)\).

For a sequence
\begin{align}
\{g_i\}_{i\in\mathcal J_{\rho^A}}
\subseteq
[0,+\infty),
\end{align}
define
\begin{align}
H
=
\sum_{i\in\mathcal J_{\rho^A}}
g_i
\ket{\psi_i}\bra{\psi_i}^A
\label{eq:diagonal_effective_hamiltonian}
\end{align}
on the natural domain
\begin{align}
\operatorname{dom}(H)
=
\left\{
\ket{\varphi}\in\mathcal H^A:
\sum_{i\in\mathcal J_{\rho^A}}
g_i^2
\left|
\bra{\psi_i}\ket{\varphi}
\right|^2
<
+\infty
\right\}.
\end{align}
Then
\begin{align}
\operatorname{Tr}[H\rho^A]
=
\sum_{i\in\mathcal J_{\rho^A}}
\lambda_i^A g_i
\label{eq:diagonal_hamiltonian_energy}
\end{align}
and
\begin{align}
Z_H(\beta)
=
\sum_{i\in\mathcal J_{\rho^A}}
e^{-\beta g_i}.
\label{eq:diagonal_hamiltonian_partition_function}
\end{align}
Consequently, the finite-entropy characterization above is equivalent
to the existence of a sequence
\(\{g_i\}_{i\in\mathcal J_{\rho^A}}\)
such that
\begin{align}
\sum_{i\in\mathcal J_{\rho^A}}
\lambda_i^A g_i
&<
+\infty,
\\
\sum_{i\in\mathcal J_{\rho^A}}
e^{-\beta g_i}
&<
+\infty
\qquad
\text{for every \(\beta>0\)}.
\label{eq:finite_entropy_diagonal_hamiltonian_conditions}
\end{align}

Representative continuous-variable systems possess natural
Hamiltonians satisfying the Gibbs hypothesis, justifying the finite-entropy assumptions.
For example, on a single-mode bosonic Fock space, the number operator
is
\begin{align}
N
\coloneqq
\sum_{n=0}^{\infty}
n
\ket{n}\bra{n},
\label{eq:number_operator_definition}
\end{align}
where the sum defines \(N\) on its natural domain.
Its partition function is
\begin{align}
Z_N(\beta)
&=
\operatorname{Tr}
\left[
e^{-\beta N}
\right]
\nonumber\\
&=
\sum_{n=0}^{\infty}
e^{-\beta n}
\nonumber\\
&=
\frac{1}{
1-e^{-\beta}
}
<
+\infty
\qquad
(\beta>0).
\label{eq:number_operator_partition_function}
\end{align}
Consequently, every state satisfying
\begin{align}
\operatorname{Tr}[N\rho^A]
<
+\infty,
\end{align}
where the energy is understood in the sense of
\eqref{eq:extended_energy_expectation},
has finite entropy.
More generally, the standard Hamiltonian of any finite collection of
harmonic oscillators, as well as the corresponding total number
operator, satisfies the Gibbs hypothesis.
In particular, every finite-mode bosonic Gaussian state has finite
total mean photon number and hence finite entropy.

For a state
\begin{align}
\rho^{AR}
\in
\operatorname{D}
\left(
\mathcal H^A
\otimes
\mathcal H^R
\right)
\end{align}
satisfying
\begin{align}
H(A)_\rho
<
+\infty,
\end{align}
the conditional quantum entropy is defined by
Ref.~\cite[Definition~3]{doi:10.1137/S0040585X97985121}
as
\begin{align}
H(A|R)_\rho
\coloneqq
H(A)_\rho
-
D
\left(
\rho^{AR}
\middle\|
\rho^A\otimes\rho^R
\right).
\label{eq:extended_conditional_entropy}
\end{align}
Under this assumption,
\begin{align}
D
\left(
\rho^{AR}
\middle\|
\rho^A\otimes\rho^R
\right)
<
+\infty,
\end{align}
and
\(H(A|R)_\rho\)
is finite and satisfies
\begin{align}
-H(A)_\rho
\le
H(A|R)_\rho
\le
H(A)_\rho.
\label{eq:conditional_entropy_bounds_finite_marginal}
\end{align}

The quantum mutual information is defined by
\begin{align}
I(A:R)_\rho
&\coloneqq
D
\left(
\rho^{AR}
\middle\|
\rho^A\otimes\rho^R
\right)
\nonumber\\
&=
H(A)_\rho
-
H(A|R)_\rho.
\label{eq:I_AR_definition_finite_A_general}
\end{align}
It is finite under the assumption
\(H(A)_\rho<+\infty\)
and satisfies
\begin{align}
0
\le
I(A:R)_\rho
\le
2H(A)_\rho.
\label{eq:mutual_information_bounds_finite_marginal}
\end{align}

For a normalized pure state
\(\ket{\psi}^{ABR}\)
satisfying
\begin{align}
H(A)_\psi
<
+\infty,
\end{align}
pure-state duality
\cite{doi:10.1137/S0040585X97985121,holevo2010mutual}
gives
\begin{align}
H(A|B)_\psi
=
-H(A|R)_\psi.
\label{eq:conditional_entropy_duality_pure_state_finite_entropy_A}
\end{align}
Accordingly,
\begin{align}
I(A:R)_\psi
&=
H(A)_\psi
-
H(A|R)_\psi,
\label{eq:I_AR_definition_finite_A}
\\
I(A:B)_\psi
&\coloneqq
H(A)_\psi
-
H(A|B)_\psi
\label{eq:I_AB_definition_finite_A}
\end{align}
are finite.
These definitions remain meaningful even when the alternative
expressions
\begin{align}
H(A)_\psi
+
H(R)_\psi
-
H(AR)_\psi
\end{align}
and
\begin{align}
H(A)_\psi
+
H(B)_\psi
-
H(AB)_\psi
\end{align}
are not well defined because they may contain an indeterminate
combination of the form
\(+\infty-\infty\).

More generally, let
\begin{align}
\rho^{ABR}
\in
\operatorname{D}
\left(
\mathcal H^A
\otimes
\mathcal H^B
\otimes
\mathcal H^R
\right)
\end{align}
satisfy
\begin{align}
H(A)_\rho
<
+\infty.
\end{align}
The quantum conditional mutual information is defined by
\begin{align}
I(A:R|B)_\rho
&\coloneqq
H(A|B)_\rho
-
H(A|BR)_\rho
\label{eq:conditional_mutual_information_definition_finite_A}
\\
&=
I(A:BR)_\rho
-
I(A:B)_\rho.
\label{eq:conditional_mutual_information_mutual_information_difference}
\end{align}
Both conditional entropies in
\eqref{eq:conditional_mutual_information_definition_finite_A}
are finite under the assumption
\(H(A)_\rho<+\infty\).
Thus, neither expression contains an indeterminate subtraction of
infinite quantities.
By strong subadditivity and the finite conditional-entropy
bounds, the conditional mutual information satisfies
\begin{align}
0
\le
I(A:R|B)_\rho
\le
2H(A)_\rho
<
+\infty.
\label{eq:conditional_mutual_information_bounds_finite_A}
\end{align}

If, in addition,
\begin{align}
H(B)_\rho
<
+\infty,
\end{align}
then
\(H(AB)_\rho<+\infty\),
and the chain rules take the form
\begin{align}
\label{eq:mutual_information_chain_rule}
I(AB:R)_\rho
&=
I(B:R)_\rho
+
I(A:R|B)_\rho,
&
I(AB:R)_\rho
&=
I(A:R)_\rho
+
I(B:R|A)_\rho.
\end{align}

\subsection{Sandwiched R\'enyi relative entropy}
\label{subsec:sandwiched_renyi_relative_entropy}

We define the sandwiched R\'enyi relative entropy for
positive-semidefinite operators on separable Hilbert spaces.
Because negative powers of a positive-semidefinite operator may be
unbounded in infinite dimensions, we first state the domain condition
under which the sandwiched product is well defined.
We then verify this condition for the reference operator
\(I^A\otimes\rho^R\)
appearing in the conditional sandwiched R\'enyi entropy.

Let
\(\mathcal H\)
be a separable Hilbert space, and let
\(\alpha>1\).
For states
\(X,Y\)
on a finite-dimensional Hilbert space, the sandwiched R\'enyi
relative entropy is commonly written as~\cite{muller2013quantum,wilde2014strong}
\begin{align}
\widetilde D_\alpha(X\|Y)
\coloneqq
\frac{1}{\alpha-1}
\log
\operatorname{Tr}
\left[
\left(
Y^{\frac{1-\alpha}{2\alpha}}
X
Y^{\frac{1-\alpha}{2\alpha}}
\right)^\alpha
\right],
\label{eq:sandwiched_renyi_divergence_finite_dimensional_form}
\end{align}
with the convention that
\begin{align}
\widetilde D_\alpha(X\|Y)
=
+\infty
\end{align}
unless
\begin{align}
\operatorname{supp}(X)
\subseteq
\operatorname{supp}(Y).
\label{eq:finite_dimensional_sandwiched_support_condition}
\end{align}
The trace expression in
\eqref{eq:sandwiched_renyi_divergence_finite_dimensional_form}
is, up to the logarithmic prefactor and normalization convention, the
case \(p=2\alpha\) of the two-parameter family discussed in
Ref.~\cite[Sec.~4.4.3]{10.1093/acprof:oso/9780199652495.003.0004}.
For nonnormalized positive-semidefinite operators, different
normalization conventions are used in the literature.
Reference~\cite[Definition~2]{muller2013quantum}
includes a factor
\((\operatorname{Tr}[X])^{-1}\)
inside the logarithm, whereas
Ref.~\cite[Definition~4]{wilde2014strong}
uses the expression in
\eqref{eq:sandwiched_renyi_divergence_finite_dimensional_form}
without this factor.
The two conventions coincide when \(X\) is a state.
In this work, we adopt the latter, unnormalized convention.

In finite dimensions, the support condition
\eqref{eq:finite_dimensional_sandwiched_support_condition}
is sufficient to make
\eqref{eq:sandwiched_renyi_divergence_finite_dimensional_form}
well defined, because the negative power of \(Y\) is bounded on
\(\operatorname{supp}(Y)\).
In infinite dimensions, this is no longer automatic.
Even for
\begin{align}
X,Y
\in
\operatorname{T}_+(\mathcal H),
\end{align}
the generalized power
\begin{align}
Y^{\frac{1-\alpha}{2\alpha}}
\end{align}
may be unbounded because the nonzero spectrum of \(Y\) may accumulate
at \(0\).
Consequently, the formal product
\begin{align}
Y^{\frac{1-\alpha}{2\alpha}}
X
Y^{\frac{1-\alpha}{2\alpha}}
\label{eq:formal_sandwiched_product}
\end{align}
need not be a bounded or closed operator.
The product in
\eqref{eq:formal_sandwiched_product}
is initially defined on the natural domain
\begin{align}
\Biggl\{
\ket{\psi}
\in
\operatorname{dom}
\left(
Y^{\frac{1-\alpha}{2\alpha}}
\right):
X
Y^{\frac{1-\alpha}{2\alpha}}
\ket{\psi}
\in
\operatorname{dom}
\left(
Y^{\frac{1-\alpha}{2\alpha}}
\right)
\Biggr\}.
\label{eq:natural_domain_formal_sandwiched_product}
\end{align}
This domain may not be all of
\(\mathcal H\),
and the resulting operator may not be closed or self-adjoint.
Its \(\alpha\)-th power therefore cannot in general be defined
directly by functional calculus.
The extended trace of a positive-semidefinite self-adjoint operator
is well defined as an element of
\([0,+\infty]\)
by
\eqref{eq:extended_trace_unbounded_positive_operator}; however, this fact applies only after the relevant positive-semidefinite self-adjoint operator has been constructed.
It does not by itself establish that
\eqref{eq:formal_sandwiched_product}
has such an interpretation.
Infinite-dimensional formulations addressing this issue were
developed in, e.g.,
Refs.~\cite{berta2018renyi,jencova2018renyi,jenvcova2021renyi,
gu2019interpolationquasinoncommutativelpspaces,mosonyi2023strong}.

The following lemma is the specialization of Ref.~\cite[Lemma~3.1]{mosonyi2023strong} to the case of the sandwiched R\'enyi relative entropy \(z=\alpha\).

\begin{lemma}[Domain condition for the sandwiched product]
\label{lem:mosonyi_domain_condition_sandwiched}
Let
\(\mathcal H\)
be a separable Hilbert space, let
\(\alpha>1\), and let
\begin{align}
0
\ne
X,Y
\in
\operatorname{B}_+(\mathcal H).
\end{align}
The following conditions are equivalent:
\begin{enumerate}
\item
There exists an operator
\(K\in\operatorname{B}(\mathcal H)\)
such that
\begin{align}
X
=
Y^{\frac{\alpha-1}{2\alpha}}
K
Y^{\frac{\alpha-1}{2\alpha}}.
\label{eq:sandwiched_domain_factorization_condition}
\end{align}

\item
The range inclusion
\begin{align}
\operatorname{ran}
\left(
X^{1/2}
\right)
\subseteq
\operatorname{ran}
\left(
Y^{\frac{\alpha-1}{2\alpha}}
\right)
\label{eq:sandwiched_domain_range_condition}
\end{align}
holds.

\item
The composition
\begin{align}
Y^{\frac{1-\alpha}{2\alpha}}
X^{1/2},
\label{eq:sandwiched_domain_bounded_product_condition}
\end{align}
initially defined on its natural domain, is defined on all of
\(\mathcal H\)
and belongs to
\(\operatorname{B}(\mathcal H)\).

\item
There exists a constant
\(\lambda\ge0\)
such that
\begin{align}
X
\le
\lambda
Y^{\frac{\alpha-1}{\alpha}}.
\label{eq:sandwiched_domain_domination_condition}
\end{align}
\end{enumerate}
All powers of \(Y\), including its negative powers, are understood in
the sense of
\eqref{eq:generalized_real_power}.
\end{lemma}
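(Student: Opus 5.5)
The plan is to prove the cycle of implications using the Douglas factorization (range-inclusion) lemma. Throughout, write $\gamma\coloneqq\frac{\alpha-1}{2\alpha}\in(0,\tfrac12)$ and $T\coloneqq Y^{\gamma}\in\operatorname{B}_+(\mathcal H)$. By the generalized power convention \eqref{eq:generalized_real_power}, $T^{-1}\coloneqq Y^{-\gamma}$ is the densely defined, self-adjoint, possibly unbounded operator supported on $\operatorname{supp}(Y)$ and vanishing on $\ker(Y)$. Three facts will be used repeatedly: $T^{-1}T=\Pi_Y$ on all of $\mathcal H$; $\operatorname{ran}(T)\subseteq\operatorname{dom}(T^{-1})$; and $TT^{-1}=\Pi_Y$ on $\operatorname{dom}(T^{-1})$. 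Each follows from spectral calculus, since $\lambda^{\gamma}\lambda^{-\gamma}=1$ on $(0,\|Y\|]$. Douglas' lemma states that for bounded $A,B$, the conditions $\operatorname{ran}A\subseteq\operatorname{ran}B$, $AA^\dagger\le\lambda BB^\dagger$ for some $\lambda\ge0$, and $A=BC$ for some bounded $C$ are equivalent.

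\emph{(2)$\Leftrightarrow$(4).} Apply Douglas' lemma with $A=X^{1/2}$ and $B=T$. Then $AA^\dagger=X$ and $BB^\dagger=Y^{2\gamma}=Y^{\frac{\alpha-1}{\alpha}}$, so the range inclusion in (2) is equivalent to the domination in (4).

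\emph{(2)$\Rightarrow$(3)$\Rightarrow$(1).} Assume (2). Douglas' lemma gives a bounded $C$ with $X^{1/2}=TC$. Since $\operatorname{ran}(X^{1/2})\subseteq\operatorname{ran}(T)\subseteq\operatorname{dom}(T^{-1})$, the composition $T^{-1}X^{1/2}$ is defined on all of $\mathcal H$, and
\begin{align}
T^{-1}X^{1/2}=T^{-1}TC=\Pi_YC,
\end{align}
which is bounded. This proves (3). Now set $L\coloneqq T^{-1}X^{1/2}\in\operatorname{B}(\mathcal H)$. Because $\operatorname{ran}(X^{1/2})\subseteq\operatorname{ran}(T)\subseteq\operatorname{supp}(Y)$, we have $TL=\Pi_YX^{1/2}=X^{1/2}$. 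Hence $X=X^{1/2}(X^{1/2})^\dagger=TLL^\dagger T$, which is (1) with $K\coloneqq LL^\dagger$.

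\emph{(3)$\Rightarrow$(2).} Let $L$ be the bounded operator in (3). Then $TL\psi=\Pi_YX^{1/2}\psi$ for every $\psi\in\mathcal H$. To conclude $X^{1/2}=TL$, and hence (2), we need $\operatorname{ran}(X^{1/2})\subseteq\operatorname{supp}(Y)$. This is where care is required, because $T^{-1}$ acts as zero on $\ker Y$ and so by itself gives no information about the $\ker Y$ component. Under the generalized-power convention, (3) does not automatically exclude such a component. I would handle this by reading condition (3), as in Ref.~\cite{mosonyi2023strong}, together with the support condition $\operatorname{supp}(X)\subseteq\operatorname{supp}(Y)$. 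Since $\operatorname{ran}(X^{1/2})$ is dense in $\operatorname{supp}(X)$, this condition gives $\Pi_YX^{1/2}=X^{1/2}$. I expect stating this point precisely to be the main obstacle.

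\emph{(1)$\Rightarrow$(4).} Assume $X=TKT$. Since $X\ge0$ and the expression is symmetric, we may replace $K$ by $(K+K^\dagger)/2$. For every $\psi\in\mathcal H$,
\begin{align}
\langle\psi|X|\psi\rangle=\langle T\psi|K|T\psi\rangle\le\|K\|\,\|T\psi\|^2=\|K\|\,\langle\psi|Y^{2\gamma}|\psi\rangle.
\end{align}
This is (4) with $\lambda=\|K\|$, which closes the cycle (1)$\Rightarrow$(4)$\Rightarrow$(2)$\Rightarrow$(3)$\Rightarrow$(2)$\Rightarrow$(1).

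The only delicate points are the domain bookkeeping for the unbounded $T^{-1}$, handled by the spectral identities above, and the treatment of the kernel of $Y$ in (3)$\Rightarrow$(2).
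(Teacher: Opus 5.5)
Your implications (1)$\Rightarrow$(4), (2)$\Leftrightarrow$(4), (2)$\Rightarrow$(3), and (2)$\Rightarrow$(1) are correct. For (2)$\Leftrightarrow$(4) you use Douglas' lemma with $A=X^{1/2}$, $B=Y^{\gamma}$, and for (2)$\Rightarrow$(1) you take $K=LL^\dagger$. The spectral identities you use to handle the unbounded $T^{-1}$ are also right.

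The step you flagged, (3)$\Rightarrow$(2), is not merely delicate. Under the paper's convention that negative powers act as zero on $\ker Y$, it is false, so no argument can close it as the lemma is stated. Take $\mathcal H=\mathbb C^2$, $Y=\ket{0}\bra{0}$, $X=\ket{1}\bra{1}$. Then $Y^{\frac{1-\alpha}{2\alpha}}=\ket{0}\bra{0}$, so $Y^{\frac{1-\alpha}{2\alpha}}X^{1/2}=0$ is everywhere defined and bounded. Yet $\bra{1}X\ket{1}=1>0=\lambda\bra{1}Y^{\frac{\alpha-1}{\alpha}}\ket{1}$ for every $\lambda\ge0$, so (4) fails, and with it (1) and (2).

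Your amendment is therefore a necessary correction of the statement, not an interpretive choice. Condition (3) must be supplemented by $\operatorname{supp}(X)\subseteq\operatorname{supp}(Y)$. Alternatively, one can read $Y^{\frac{1-\alpha}{2\alpha}}$ as the inverse of $Y^{\gamma}$ restricted to $\operatorname{supp}(Y)$, with domain $\operatorname{ran}(Y^{\gamma})$. With the support hypothesis added, your identity $TL=\Pi_YX^{1/2}=X^{1/2}$ closes the cycle. No density argument is needed, since $\operatorname{ran}(X^{1/2})\subseteq\operatorname{supp}(X)$ directly. Say explicitly that you are correcting the statement, rather than presenting this as a reading of (3).

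As for the route, the paper gives no proof of its own. It imports the lemma as the $z=\alpha$ specialization of Lemma~3.1 of \cite{mosonyi2023strong}. Your self-contained Douglas-lemma argument buys transparency. It is also exactly what exposes the problem: transcribing (3) together with the generalized-power convention \eqref{eq:generalized_real_power} silently drops the support hypothesis. The citation buys brevity but inherits that mismatch.

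The defect does not propagate within the paper. Conditions (1), (2), and (4) each force $\operatorname{supp}(X)\subseteq\operatorname{supp}(Y)$. The paper only ever verifies the domain condition through (4), in Proposition~\ref{prop:conditional_sandwiched_domain_condition}.
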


Suppose that the equivalent conditions in
Lemma~\ref{lem:mosonyi_domain_condition_sandwiched}
hold.
Among all operators \(K\) satisfying
\eqref{eq:sandwiched_domain_factorization_condition},
there exists a unique positive-semidefinite operator
\begin{align}
X_{Y,\alpha}
\in
\operatorname{B}_+(\mathcal H)
\end{align}
satisfying
\begin{align}
\operatorname{supp}(X_{Y,\alpha})
\subseteq
\operatorname{supp}(Y).
\end{align}
It is given by
\begin{align}
X_{Y,\alpha}
\coloneqq
\left(
Y^{\frac{1-\alpha}{2\alpha}}
X^{1/2}
\right)
\left(
Y^{\frac{1-\alpha}{2\alpha}}
X^{1/2}
\right)^\dagger
\label{eq:sandwiched_operator_definition}
\end{align}
and satisfies
\begin{align}
X
=
Y^{\frac{\alpha-1}{2\alpha}}
X_{Y,\alpha}
Y^{\frac{\alpha-1}{2\alpha}}.
\label{eq:sandwiched_operator_factorization}
\end{align}
Moreover,
\begin{align}
X_{Y,\alpha}
=
\overline{
Y^{\frac{1-\alpha}{2\alpha}}
X
Y^{\frac{1-\alpha}{2\alpha}}
},
\label{eq:sandwiched_operator_closure}
\end{align}
where the product on the right-hand side is initially defined on
\eqref{eq:natural_domain_formal_sandwiched_product}, and under the equivalent conditions of Lemma~\ref{lem:mosonyi_domain_condition_sandwiched}, this product is closable and its closure is the bounded operator \(X_{Y,\alpha}\).
We use
\eqref{eq:sandwiched_operator_definition},
rather than the formal product
\eqref{eq:formal_sandwiched_product},
as the primary definition.

Following
Ref.~\cite[Definition~3.9]{mosonyi2023strong},
for
\begin{align}
0
\ne
X,Y
\in
\operatorname{B}_+(\mathcal H),
\end{align}
we define
\begin{align}
\widetilde Q_\alpha(X\|Y)
\coloneqq
\begin{cases}
\displaystyle
\operatorname{Tr}
\left[
\left(
X_{Y,\alpha}
\right)^\alpha
\right],
&
\begin{array}{l}
\text{if the equivalent conditions in}\\[-0.5ex]
\text{Lemma~\ref{lem:mosonyi_domain_condition_sandwiched}
hold},
\end{array}
\\[3ex]
+\infty,
&
\text{otherwise},
\end{cases}
\label{eq:sandwiched_Q_alpha_infinite_dimensional}
\end{align}
and
\begin{align}
\widetilde D_\alpha(X\|Y)
\coloneqq
\frac{1}{\alpha-1}
\log
\widetilde Q_\alpha(X\|Y),
\label{eq:sandwiched_renyi_divergence_infinite_dimensional}
\end{align}
where
\begin{align}
\log(+\infty)
\coloneqq
+\infty.
\end{align}
In the notation of
Ref.~\cite[Definition~3.9]{mosonyi2023strong},
the quantities
\(\widetilde Q_\alpha\)
and
\(\widetilde D_\alpha\)
defined here correspond to
\(Q_\alpha^*\)
and
\(D_\alpha^*\),
respectively.
Note that we do not include the additional normalization term for trace-class operators introduced in Ref.~\cite[Definition~3.10]{mosonyi2023strong}; thus, the tilde in our notation indicates the sandwiched R\'enyi quantity, rather than the trace-normalized quantity used in Ref.~\cite{mosonyi2023strong}.

If the equivalent domain conditions in
Lemma~\ref{lem:mosonyi_domain_condition_sandwiched} hold, then
\begin{align}
X_{Y,\alpha}
\in
\operatorname{B}_+(\mathcal H)
\end{align}
and hence
\begin{align}
\left(
X_{Y,\alpha}
\right)^\alpha
\in
\operatorname{B}_+(\mathcal H).
\end{align}
Its trace is therefore well defined in the extended sense:
\begin{align}
\operatorname{Tr}
\left[
\left(
X_{Y,\alpha}
\right)^\alpha
\right]
\in
[0,+\infty].
\end{align}
Since \(X\ne0\), one has
\begin{align}
\widetilde Q_\alpha(X\|Y)
>
0,
\end{align}
and consequently
\begin{align}
\widetilde D_\alpha(X\|Y)
\in
(-\infty,+\infty].
\end{align}

We next verify that the domain condition is automatically satisfied
for the reference operator appearing in conditional sandwiched
R\'enyi entropies.

\begin{proposition}[Domain condition for conditional sandwiched
R\'enyi entropies]
\label{prop:conditional_sandwiched_domain_condition}
Let
\(\mathcal H^A\)
be finite-dimensional, let
\(\mathcal H^R\)
be separable, let
\(\alpha>1\), and let
\begin{align}
0
\ne
 Z^{AR}
\in
\operatorname{T}_+
\left(
\mathcal H^A
\otimes
\mathcal H^R
\right).
\end{align}
Define
\begin{align}
 Z^R
\coloneqq
\operatorname{Tr}_A[ Z^{AR}].
\end{align}
Then
\begin{align}
0
\ne
I^A\otimes Z^R
\in
\operatorname{T}_+
\left(
\mathcal H^A
\otimes
\mathcal H^R
\right),
\end{align}
and the states
\(
 Z^{AR}
\)
and
\(
I^A\otimes Z^R
\)
satisfy the equivalent conditions in
Lemma~\ref{lem:mosonyi_domain_condition_sandwiched}.
\end{proposition}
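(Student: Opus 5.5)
The plan is to verify condition 4 of Lemma~\ref{lem:mosonyi_domain_condition_sandwiched} with $X=Z^{AR}$ and $Y=I^A\otimes Z^R$, by combining two elementary operator inequalities. First I would settle the preliminary claims. Since $\mathcal H^A$ is finite-dimensional, $\operatorname{Tr}_A$ maps $\operatorname{T}_+(\mathcal H^A\otimes\mathcal H^R)$ into $\operatorname{T}_+(\mathcal H^R)$, so $Z^R\ge0$ is trace class. Moreover $\operatorname{Tr}[Z^R]=\operatorname{Tr}[Z^{AR}]>0$, because a nonzero positive trace-class operator has positive trace. Hence $I^A\otimes Z^R$ is a nonzero positive trace-class operator with $\operatorname{Tr}[I^A\otimes Z^R]=|A|\operatorname{Tr}[Z^R]<\infty$. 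Here the finite dimension of $A$ is essential.

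The first inequality is $Z^{AR}\le |A|\,I^A\otimes Z^R$, which I would prove with a unitary twirl. Fix an orthonormal basis $\{\ket{i}\}$ of $\mathcal H^A$ and let $\{W_g\}_{g=1}^{|A|^2}$ be the Heisenberg--Weyl (clock-and-shift) unitaries. Write $Z^{AR}=\sum_{i,j}\ket{i}\bra{j}\otimes Z_{ij}$ with trace-class blocks $Z_{ij}$ on $\mathcal H^R$; this is a finite sum, so no convergence issue arises even when $R$ is infinite-dimensional. Applying the finite-dimensional identity $\frac{1}{|A|^2}\sum_g W_g\ket{i}\bra{j}W_g^\dagger=\delta_{ij}\pi^A$ blockwise gives
\begin{align}
\frac{1}{|A|^2}\sum_{g}(W_g\otimes I^R)Z^{AR}(W_g\otimes I^R)^\dagger=\pi^A\otimes Z^R .
\end{align}
Every term on the left is positive semidefinite, and the $g=1$ term (with $W_1=I^A$) is $Z^{AR}$ itself. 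Therefore $Z^{AR}\le |A|^2\pi^A\otimes Z^R=|A|\,I^A\otimes Z^R$.

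The second inequality converts $Y$ into the power $Y^{\beta}$ with $\beta\coloneqq\frac{\alpha-1}{\alpha}\in(0,1)$. For $\lambda\in[0,\|Y\|]$ one has $\lambda\le\|Y\|^{1-\beta}\lambda^{\beta}$, and the spectral calculus for the bounded operator $Y$ then gives $Y\le\|Y\|^{1/\alpha}Y^{\beta}$. Here $\|Y\|=\|Z^R\|$, and the generalized power factorizes as $(I^A\otimes Z^R)^\beta=I^A\otimes(Z^R)^\beta$, since the spectral measure of $I^A\otimes Z^R$ is $I^A\otimes E_{Z^R}$. Combining the two inequalities yields
\begin{align}
Z^{AR}\le |A|\,\|Z^R\|^{1/\alpha}\,(I^A\otimes Z^R)^{\frac{\alpha-1}{\alpha}},
\end{align}
which is condition 4 of Lemma~\ref{lem:mosonyi_domain_condition_sandwiched} with $\lambda=|A|\,\|Z^R\|^{1/\alpha}$. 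The equivalence stated in that lemma then delivers all the other conditions.

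I do not expect a serious obstacle. The only point needing care is to justify the twirling identity when $R$ is infinite-dimensional. This is handled by the finite block decomposition over $A$ and by the linearity and trace-norm continuity of $\operatorname{Tr}_A$, so no compactness of the unitary group of $R$ is needed.
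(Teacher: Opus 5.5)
Your proof is correct. Its second half, namely the spectral-calculus bound $I^A\otimes Z^R\le\|Z^R\|^{1/\alpha}\bigl(I^A\otimes Z^R\bigr)^{(\alpha-1)/\alpha}$ and the resulting constant $\lambda=|A|\,\|Z^R\|^{1/\alpha}$, is the same as the paper's.

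Where you differ is the key domination $Z^{AR}\le|A|\,I^A\otimes Z^R$. The paper first treats a rank-one $Z^{AR}=\ket{\psi}\bra{\psi}$. It uses the Schmidt decomposition and the Cauchy--Schwarz inequality, which gives the constant $r\le|A|$, where $r$ is the Schmidt rank. It then sums over the possibly infinite spectral decomposition of $Z^{AR}$. That step needs a limiting argument: trace-norm convergence of the partial sums and of their partial traces, together with closedness of the positive cone.

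Your Heisenberg--Weyl twirl treats an arbitrary $Z^{AR}$ in a single step. It uses only a finite block decomposition over $A$, so no approximation argument is needed at all. The price is that you need a finitely supported unitary 1-design in which the identity has positive weight, here $1/|A|^2$. A continuous Haar twirl, like the one used elsewhere in the paper, would not give the bound so directly, because the identity has Haar measure zero.

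The paper's route, by contrast, uses nothing beyond Cauchy--Schwarz. It also exposes the sharper Schmidt-rank constant for pure inputs, although that refinement is not used later.
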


\begin{proof}
Since \(A\) is finite-dimensional,
\(I^A\otimes Z^R\)
is trace class and
\begin{align}
\operatorname{Tr}
\left[
I^A\otimes Z^R
\right]
&=
|A|
\operatorname{Tr}[ Z^R]
\nonumber\\
&=
|A|
\operatorname{Tr}[ Z^{AR}]
<
+\infty.
\label{eq:IA_tensor_rhoR_trace_class}
\end{align}
Moreover,
\( Z^{AR}\ne0\)
implies
\( Z^R\ne0\),
and hence
\(I^A\otimes Z^R\ne0\).

We first show that every
\begin{align}
Z^{AR}
\in
\operatorname{T}_+
\left(
\mathcal H^A
\otimes
\mathcal H^R
\right)
\end{align}
satisfies
\begin{align}
Z^{AR}
\le
|A|
I^A
\otimes
Z^R.
\label{eq:general_domination_by_marginal}
\end{align}
Consider first
\begin{align}
Z^{AR}
=
\ket{\psi}\bra{\psi}^{AR}.
\end{align}
Let
\begin{align}
\ket{\psi}^{AR}
=
\sum_{j=0}^{r-1}
\sqrt{p_j}
\ket{a_j}^{A}
\otimes
\ket{r_j}^{R}
\end{align}
be a Schmidt decomposition, where
\(r\le|A|\).
Then
\begin{align}
Z^R
=
\sum_{j=0}^{r-1}
p_j
\ket{r_j}\bra{r_j}^{R}.
\end{align}
For every
\(\ket{\varphi}^{AR}\),
the Cauchy--Schwarz inequality gives
\begin{align}
\left|
\bra{\psi}\ket{\varphi}
\right|^2
&=
\left|
\sum_{j=0}^{r-1}
\sqrt{p_j}
\left(\bra{a_j}^{A}\otimes \bra{r_j}^{R}\right)
\ket{\varphi}^{AR}
\right|^2
\nonumber\\
&\le
r
\sum_{j=0}^{r-1}
p_j
\left|
\left(\bra{a_j}^{A}\otimes \bra{r_j}^{R}\right)
\ket{\varphi}^{AR}
\right|^2
\nonumber\\
&\le
r
\bra{\varphi}
\left(
I^A\otimes Z^R
\right)
\ket{\varphi}
\nonumber\\
&\le
|A|
\bra{\varphi}
\left(
I^A\otimes Z^R
\right)
\ket{\varphi}.
\end{align}
Therefore,
\begin{align}
\ket{\psi}\bra{\psi}^{AR}
\le
|A|
I^A\otimes Z^R.
\label{eq:rank_one_domination_by_marginal}
\end{align}
Applying
\eqref{eq:rank_one_domination_by_marginal}
termwise to the spectral decomposition of \(Z^{AR}\) and summing
gives
\eqref{eq:general_domination_by_marginal}; specifically, the positive partial sums converge to \(Z^{AR}\) in trace norm, their partial traces converge to \(Z^R\)
in trace norm, and the positive cone is closed in the operator norm.

For every
\begin{align}
t
\in
[0,\| Z^R\|],
\end{align}
one has
\begin{align}
t
\le
\| Z^R\|^{1/\alpha}
t^{\frac{\alpha-1}{\alpha}}.
\end{align}
Functional calculus therefore gives
\begin{align}
I^A\otimes Z^R
\le
\| Z^R\|^{1/\alpha}
\left(
I^A\otimes Z^R
\right)^{\frac{\alpha-1}{\alpha}}.
\label{eq:marginal_power_domination}
\end{align}
Combining
\eqref{eq:general_domination_by_marginal}
and
\eqref{eq:marginal_power_domination},
we obtain
\begin{align}
 Z^{AR}
\le
|A|
\| Z^R\|^{1/\alpha}
\left(
I^A\otimes Z^R
\right)^{\frac{\alpha-1}{\alpha}}.
\label{eq:conditional_renyi_mosonyi_domination}
\end{align}
Thus,
\eqref{eq:sandwiched_domain_domination_condition}
holds with
\begin{align}
X
&=
 Z^{AR},
&
Y
&=
I^A\otimes Z^R,
&
\lambda
&=
|A|
\| Z^R\|^{1/\alpha}.
\end{align}
The conclusion follows from
Lemma~\ref{lem:mosonyi_domain_condition_sandwiched}.
\end{proof}

When
\(\mathcal H^A\)
is finite-dimensional and
\(\mathcal H^R\)
is separable, for every
\(\alpha>1\)
and every nonzero
\begin{align}
\rho^{AR}
\in
\operatorname{T}_+
\left(
\mathcal H^A
\otimes
\mathcal H^R
\right),
\end{align}
we define the sandwiched conditional R\'enyi entropy of \(A\)
conditioned on \(R\) by
\begin{align}
\widetilde H_\alpha(A|R)_\rho
\coloneqq
-
\widetilde D_\alpha
\left(
\rho^{AR}
\middle\|
I^A\otimes\rho^R
\right),
\label{eq:sandwiched_conditional_entropy_downarrow}
\end{align}
where
\begin{align}
\rho^R
\coloneqq
\operatorname{Tr}_A[\rho^{AR}].
\end{align}
Proposition~\ref{prop:conditional_sandwiched_domain_condition}
shows that the domain condition in the definition of the sandwiched
R\'enyi relative entropy is automatically satisfied.
Thus,
\eqref{eq:sandwiched_conditional_entropy_downarrow}
is well defined as an extended real number even when
\(\mathcal H^R\)
is infinite-dimensional.

For normalized states, the following lemma establishes the
order-one limit.
The right-sided limit under bounded domination is given in
Ref.~\cite[Theorem~13]{berta2018renyi}, and the identification with the sandwiched R\'enyi relative entropy used here is given in Ref.~\cite[Theorem~3.3]{jencova2018renyi}.

\begin{lemma}[Order-one limit of conditional sandwiched R\'enyi
entropy]
\label{lem:conditional_sandwiched_order_one_limit}
Let
\(\mathcal H^A\)
be finite-dimensional, let
\(\mathcal H^R\)
be separable, and let
\begin{align}
\rho^{AR}
\in
\operatorname{D}
\left(
\mathcal H^A
\otimes
\mathcal H^R
\right).
\end{align}
Then
\begin{align}
\lim_{\alpha\downarrow1}
\widetilde H_\alpha(A|R)_\rho
=
H(A|R)_\rho.
\label{eq:conditional_sandwiched_entropy_order_one_limit}
\end{align}
\end{lemma}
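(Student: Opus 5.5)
We obtain the limit by applying the cited order-one results to the pair \(X=\rho^{AR}\), \(Y=I^A\otimes\rho^R\), and then rewriting the resulting relative entropy in terms of the extended conditional entropy \eqref{eq:extended_conditional_entropy}. Since \(\widetilde H_\alpha(A|R)_\rho=-\widetilde D_\alpha(\rho^{AR}\|I^A\otimes\rho^R)\), the claim is equivalent to
\begin{align}
\lim_{\alpha\downarrow1}\widetilde D_\alpha\left(\rho^{AR}\middle\|I^A\otimes\rho^R\right)
=
D\left(\rho^{AR}\middle\|I^A\otimes\rho^R\right)
=
-H(A|R)_\rho .
\end{align}

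\textbf{Step 1: the limit of the R\'enyi divergences.} By Proposition~\ref{prop:conditional_sandwiched_domain_condition}, \(Y=I^A\otimes\rho^R\) is a nonzero trace-class operator and \(X\le|A|\,Y\). This bounded domination is the hypothesis of Ref.~\cite[Theorem~13]{berta2018renyi}. It gives \(\widetilde D_\alpha(X\|Y)<\infty\) for \(\alpha\in(1,\infty)\), monotonicity in \(\alpha\), and convergence as \(\alpha\downarrow1\) to the Araki--Umegaki relative entropy. Ref.~\cite[Theorem~3.3]{jencova2018renyi} identifies the quantity used there with the Mosonyi--Hiai \(Q_\alpha^*\) of \eqref{eq:sandwiched_Q_alpha_infinite_dimensional}. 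This matters because \(Y\) is trace class but not normalized, so we must check that the unnormalized conventions agree. Because \(\operatorname{Tr}X=1\), no \(\operatorname{Tr}[X]\) normalization term appears, and scaling \(Y\) only shifts both sides by the same \(\log\) of the scaling constant, as in \eqref{eq:homogeneous_relative_entropy_scaling}. If needed, we therefore normalize \(Y\) to the state \(\pi^A\otimes\rho^R\) and track the shift \(\log|A|\) on both sides.

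\textbf{Step 2: identification with the extended conditional entropy.} We must show
\begin{align}
D\left(\rho^{AR}\middle\|I^A\otimes\rho^R\right)=D\left(\rho^{AR}\middle\|\rho^A\otimes\rho^R\right)-H(A)_\rho .
\end{align}
Here \(H(A)_\rho\le\log|A|<\infty\), so the right-hand side is finite and equals \(-H(A|R)_\rho\). This is a chain rule for the Araki relative entropy with a tensor-product second argument. The tool is the Lindblad--Araki identity
\begin{align}
D(\omega\|\sigma_1\otimes\sigma_2)=D(\omega\|\omega_1\otimes\omega_2)+D(\omega_1\|\sigma_1)+D(\omega_2\|\sigma_2),
\end{align}
valid whenever the terms are finite. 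With \(\sigma_1=I^A\) and \(\sigma_2=\rho^R=\omega_2\), it gives \(D(\rho^A\|I^A)=-H(A)_\rho\) and \(D(\rho^R\|\rho^R)=0\), which is exactly what we need.

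To justify this in infinite \(R\), we will work with the spectral expression \eqref{eq:umegaki_relative_entropy_positive_operators}. Because \(A\) is finite-dimensional, \(\log(\rho^A\otimes\rho^R)=\log\rho^A\otimes\Pi_{\rho^R}+\Pi_{\rho^A}\otimes\log\rho^R\) on \(\operatorname{supp}(\rho^A)\otimes\operatorname{supp}(\rho^R)\). The \(\log\rho^A\) part is bounded on \(\operatorname{supp}(\rho^A)\), which contains the support of \(\rho^{AR}\) after restriction. Subtracting a bounded operator term from the spectral integral is legitimate, and its expectation in \(\rho^{AR}\) is \(\operatorname{Tr}[\rho^A\log\rho^A]=-H(A)_\rho\). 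If a direct verification is delicate, the fallback is to approximate \(\rho^R\) by finite-rank spectral truncations and use lower semicontinuity together with monotone convergence of the Araki relative entropy.

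\textbf{Main obstacle.} The hardest part is confirming that the convergence theorem applies to the unnormalized, infinite-rank reference \(I^A\otimes\rho^R\). This requires matching the normalization conventions in Step 1 and getting the additive bounded-perturbation argument right in Step 2. If the limit theorem is stated only for states, we will apply it to \(\pi^A\otimes\rho^R\) and add the constant \(\log|A|\) on both sides.
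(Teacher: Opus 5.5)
Your proposal is correct and follows essentially the same route as the paper: the bounded domination \(\rho^{AR}\le|A|\,I^A\otimes\rho^R\), the Berta--Scholz--Tomamichel order-one limit combined with Jen\v{c}ov\'a's identification and a rescaling of the unnormalized reference, and then the chain rule \(D(\rho^{AR}\|\rho^A\otimes\rho^R)=D(\rho^{AR}\|I^A\otimes\rho^R)+H(A)_\rho\). Your argument for that chain rule uses the commuting logarithms and the bounded operator \(\log\rho^A\otimes\Pi_{\rho^R}\) on \(\operatorname{supp}(\rho^A)\otimes\operatorname{supp}(\rho^R)\); this justifies, for infinite-dimensional \(R\), a step the paper only asserts.
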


\begin{proof}
By
\eqref{eq:general_domination_by_marginal},
\begin{align}
\rho^{AR}
\le
|A|
I^A\otimes\rho^R.
\label{eq:conditional_sandwiched_order_one_domination}
\end{align}
Thus,
\(\rho^{AR}\)
is boundedly dominated by
\(I^A\otimes\rho^R\).
The right-sided order-one limit under bounded domination in
Ref.~\cite[Theorem~13, Eq.~(66)]{berta2018renyi},
together with the identification in
Ref.~\cite[Theorem~3.3]{jencova2018renyi}
and the homogeneity relation in
Ref.~\cite[Eq.~(14)]{jencova2018renyi},
therefore yields
\begin{align}
\lim_{\alpha\downarrow1}
\widetilde D_\alpha
\left(
\rho^{AR}
\middle\|
I^A\otimes\rho^R
\right)
=
D
\left(
\rho^{AR}
\middle\|
I^A\otimes\rho^R
\right),
\label{eq:conditional_sandwiched_divergence_order_one_limit}
\end{align}
where the homogeneity relation accounts for the nonnormalized second argument
\(I^A\otimes\rho^R\);
see also the discussion preceding
Ref.~\cite[Definition~3.9]{mosonyi2023strong}
for the agreement with the definition used here.

Since \(A\) is finite-dimensional,
\begin{align}
H(A)_\rho
\le
\log|A|
<
+\infty.
\end{align}
The relative-entropy chain rule
\begin{align}
D
\left(
\rho^{AR}
\middle\|
\rho^A\otimes\rho^R
\right)
=
D
\left(
\rho^{AR}
\middle\|
I^A\otimes\rho^R
\right)
+
H(A)_\rho
\label{eq:relative_entropy_chain_rule_finite_A}
\end{align}
therefore implies
\begin{align}
D
\left(
\rho^{AR}
\middle\|
I^A\otimes\rho^R
\right)
&=
D
\left(
\rho^{AR}
\middle\|
\rho^A\otimes\rho^R
\right)
-
H(A)_\rho
\nonumber\\
&=
-
H(A|R)_\rho.
\label{eq:conditional_entropy_as_relative_entropy_to_marginal}
\end{align}
Combining
\eqref{eq:sandwiched_conditional_entropy_downarrow},
\eqref{eq:conditional_sandwiched_divergence_order_one_limit}, and
\eqref{eq:conditional_entropy_as_relative_entropy_to_marginal}
proves
\eqref{eq:conditional_sandwiched_entropy_order_one_limit}.
\end{proof}

\section{One-shot decoupling for finite-dimensional input and
separable reference and output systems}
\label{sec:finite_input_infinite_reference_one_shot_decoupling}

In this section, we formulate and analyze one-shot decoupling for a
general completely positive map with finite-dimensional input and
separable, possibly infinite-dimensional reference and output
systems.
In Subsection~\ref{subsec:finite_input_decoupling_formulation},
we formulate the one-shot decoupling task for a nonzero bounded
completely positive map, using the Haar-averaged
relative-entropy error.
In Subsection~\ref{subsec:optimal_trace_inequality_separable}, we improve the key trace inequality for positive-semidefinite trace-class operators on separable Hilbert spaces introduced in Ref.~\cite{cheng2025sharpestimatesquantumcovering}, by extending the optimal bound in Ref.~\cite{gour2026optimaltraceinequalitiessingleshot} to infinite dimensions.
In Subsection~\ref{subsec:haar_averaged_Jensen_trace_inequality}, we present a Haar-averaged Jensen's trace inequality to avoid a problematic operator-concavity step in the existing analysis of the error exponent of one-shot decoupling in Ref.~\cite{berta2026tightanyshotquantumdecoupling} (see also Remark~\ref{rem:tight_any_shot_invalid_log_concavity_step} for the details of this point).
In Subsection~\ref{subsec:haar_averaged_relative_entropy_decoupling_bound}, we combine these inequalities with the Haar-random-unitary second-moment identity to bound the decoupling error.
Finally, in Subsection~\ref{subsec:finite_input_one_shot_decoupling_results}, we derive the one-shot decoupling bounds and their asymptotic error exponent.

\subsection{Formulation of one-shot decoupling}
\label{subsec:finite_input_decoupling_formulation}

We formulate the one-shot decoupling task analyzed in this section.

Let
\(\mathcal H^A\)
be finite-dimensional, and let
\(\mathcal H^R\) and \(\mathcal H^E\)
be separable and possibly infinite-dimensional.
Let
\begin{align}
\rho^{AR}
\in
\operatorname{D}
\left(
\mathcal H^A
\otimes
\mathcal H^R
\right),
\end{align}
and let
\begin{align}
\mathcal E^{A\to E}
:
\operatorname{T}
\left(
\mathcal H^A
\right)
\longrightarrow
\operatorname{T}
\left(
\mathcal H^E
\right)
\end{align}
be a nonzero bounded completely positive map.
Define
\begin{align}
|A|
&\coloneqq
\dim\mathcal H^A,
&
\pi^A
&\coloneqq
\frac{
I^A
}{
|A|
}.
\end{align}

The normalization factor and the corresponding normalized completely
positive map are
\begin{align}
\kappa_{\mathcal E}
&\coloneqq
\operatorname{Tr}
\left[
\mathcal E^{A\to E}
\left(
\pi^A
\right)
\right]
>
0,
\\
\widehat{\mathcal E}^{A\to E}
&\coloneqq
\frac{1}{
\kappa_{\mathcal E}
}
\mathcal E^{A\to E}.
\label{eq:finite_input_normalized_map}
\end{align}
By construction,
\begin{align}
\operatorname{Tr}
\left[
\widehat{\mathcal E}^{A\to E}
\left(
\pi^A
\right)
\right]
=
1.
\end{align}
This normalization fixes only the overall scale of
\(\mathcal E^{A\to E}\).
In particular,
\(\widehat{\mathcal E}^{A\to E}\)
need not be trace-preserving.

Let
\begin{align}
\mathcal H^{A'}
\simeq
\mathcal H^A,
\end{align}
and let
\(\Phi^{AA'}\)
denote the normalized maximally entangled state.
Define the normalized Choi state and its \(E\)-marginal by
\begin{align}
\omega_{\mathcal E}^{A'E}
&\coloneqq
\left(
\widehat{\mathcal E}^{A\to E}
\otimes
\id^{A'}
\right)
\left(
\Phi^{AA'}
\right),
&
\omega_{\mathcal E}^{E}
&\coloneqq
\operatorname{Tr}_{A'}
\left[
\omega_{\mathcal E}^{A'E}
\right]
=
\widehat{\mathcal E}^{A\to E}
\left(
\pi^A
\right).
\label{eq:finite_input_normalized_choi_state_and_marginal}
\end{align}
Then
\begin{align}
\operatorname{Tr}
\left[
\omega_{\mathcal E}^{A'E}
\right]
=
\operatorname{Tr}
\left[
\omega_{\mathcal E}^{E}
\right]
=
1.
\end{align}

For every
\begin{align}
U
\in
\operatorname{U}
\left(
\mathcal H^A
\right),
\end{align}
define the possibly unnormalized output operator by
\begin{align}
\widetilde\tau_{\mathcal E,U}^{ER}
\coloneqq
\left(
\widehat{\mathcal E}^{A\to E}
\otimes
\id^R
\right)
\left(
\left(
U^A
\otimes
I^R
\right)
\rho^{AR}
\left(
U^{A\dagger}
\otimes
I^R
\right)
\right).
\label{eq:finite_input_randomized_output}
\end{align}
Haar-random-unitary twirling gives
\begin{align}
&\mathbb E_{
U\sim
\mathrm{Haar}
\left(
\operatorname{U}
\left(
\mathcal H^A
\right)
\right)
}
\left[
\left(
U^A
\otimes
I^R
\right)
\rho^{AR}
\left(
U^{A\dagger}
\otimes
I^R
\right)
\right]
=
\pi^A
\otimes
\rho^R,
\end{align}
and hence
\begin{align}
\mathbb E_{
U\sim
\mathrm{Haar}
\left(
\operatorname{U}
\left(
\mathcal H^A
\right)
\right)
}
\left[
\widetilde\tau_{\mathcal E,U}^{ER}
\right]
=
\omega_{\mathcal E}^{E}
\otimes
\rho^R.
\label{eq:finite_input_average_output}
\end{align}
In particular,
\begin{align}
\mathbb E_{
U\sim
\mathrm{Haar}
\left(
\operatorname{U}
\left(
\mathcal H^A
\right)
\right)
}
\left[
\operatorname{Tr}
\left[
\widetilde\tau_{\mathcal E,U}^{ER}
\right]
\right]
=
1.
\end{align}

Following the formulation in
Refs.~\cite{he2026quantum,berta2026tightanyshotquantumdecoupling},
we quantify the one-shot decoupling error by
\begin{align}
\mathbb E_{
U\sim
\mathrm{Haar}
\left(
\operatorname{U}
\left(
\mathcal H^A
\right)
\right)
}
\left[
D
\left(
\widetilde\tau_{\mathcal E,U}^{ER}
\middle\|
\omega_{\mathcal E}^{E}
\otimes
\rho^R
\right)
\right].
\label{eq:finite_input_average_relative_entropy_error}
\end{align}
For a general completely positive map, the two arguments of the
relative entropy in
\eqref{eq:finite_input_average_relative_entropy_error}
need not have the same trace for a fixed \(U\).
Consequently, the fixed-unitary quantity may be negative.
Nevertheless, the Haar-averaged quantity is nonnegative; indeed, joint convexity of the relative entropy and
\eqref{eq:finite_input_average_output} give
\begin{align}
&\mathbb E_{
U\sim
\mathrm{Haar}
\left(
\operatorname{U}
\left(
\mathcal H^A
\right)
\right)
}
\left[
D
\left(
\widetilde\tau_{\mathcal E,U}^{ER}
\middle\|
\omega_{\mathcal E}^{E}
\otimes
\rho^R
\right)
\right]
\nonumber\\
&\ge
D
\left(
\mathbb E_{
U\sim
\mathrm{Haar}
\left(
\operatorname{U}
\left(
\mathcal H^A
\right)
\right)
}
\left[
\widetilde\tau_{\mathcal E,U}^{ER}
\right]
\middle\|
\omega_{\mathcal E}^{E}
\otimes
\rho^R
\right)
\nonumber\\
&=
D
\left(
\omega_{\mathcal E}^{E}
\otimes
\rho^R
\middle\|
\omega_{\mathcal E}^{E}
\otimes
\rho^R
\right)
\nonumber\\
&=
0.
\label{eq:finite_input_average_relative_entropy_nonnegative}
\end{align}
We identify tensor products that differ only by the canonical ordering
of their tensor factors throughout this section.

\subsection{Optimal trace inequality for separable Hilbert spaces}
\label{subsec:optimal_trace_inequality_separable}

We present the key trace inequality used below to obtain a bound
in terms of the sandwiched R\'enyi relative entropy in the decoupling
analysis.

For nonzero
\(X,Y\in\operatorname{T}_+(\mathcal H)\),
define
\begin{align}
\Xi(X\|Y)
\coloneqq
D(X+Y\|Y)
+
D(Y\|X+Y)
\in
\mathbb R
\cup
\{+\infty\},
\label{eq:decoupling_logarithmic_increment_definition}
\end{align}
where the second term is finite due to
\(Y\le X+Y\).
Thus,
\(\Xi(X\|Y)\)
is well defined as an extended-real-valued quantity.
It can be regarded as the quantum Jeffreys divergence between
\(X+Y\) and \(Y\)~\cite{audenaert2013asymmetry}.
The definition in
\eqref{eq:decoupling_logarithmic_increment_definition}
provides a rigorous interpretation of the formal expression
\begin{align}
\operatorname{Tr}
\left[
X
\left(
\log[X+Y]
-
\log[Y]
\right)
\right],
\label{eq:decoupling_formal_logarithmic_increment}
\end{align}
which was used in
Ref.~\cite{cheng2025sharpestimatesquantumcovering}.

Reference~\cite[Theorem~1]{cheng2025sharpestimatesquantumcovering}
established an upper bound on this quantity for positive-semidefinite
trace-class operators on a separable Hilbert space, which was subsequently used as a key trace inequality in the existing decoupling analysis in Ref.~\cite{berta2026tightanyshotquantumdecoupling}.
In finite dimensions,
Ref.~\cite[Theorem~1 and Eq.~(29)]{
gour2026optimaltraceinequalitiessingleshot} improved the prefactor of this trace inequality to the optimal constant, denoted below by \(G_s\).
We show that the finite-rank approximation argument of
Ref.~\cite[Theorem~1]{cheng2025sharpestimatesquantumcovering}
extends this sharper bound to separable Hilbert spaces.

\begin{proposition}[Optimal trace inequality for separable Hilbert
spaces]
\label{prop:infinite_dimensional_trace_inequality}
Let
\(\mathcal H\)
be a separable Hilbert space, and let
\(X,Y\in\operatorname{T}_+(\mathcal H)\)
be nonzero.
Suppose that
\begin{align}
\operatorname{supp}(X)
\subseteq
\operatorname{supp}(Y)
\end{align}
and
\begin{align}
\Xi(X\|Y)
<
\infty,
\end{align}
where \(\Xi(X\|Y)\) is defined in~\eqref{eq:decoupling_logarithmic_increment_definition}.
Then, for every
\(s\in(0,1]\),
\begin{align}
\Xi(X\|Y)
&\le
G_s
\exp
\left[
s
\widetilde D_{1+s}(X\|Y)
\right],
\label{eq:infinite_dimensional_trace_inequality}
\\
G_s
&\coloneqq
\sup_{r>0}
\frac{
\log[1+r]
}{
r^s
},
\label{eq:decoupling_Gs_definition}
\end{align}
where the sandwiched R\'enyi relative entropy
\(\widetilde D_{1+s}(X\|Y)\)
is defined in
\eqref{eq:sandwiched_renyi_divergence_infinite_dimensional}.
Moreover,
\(G_s\)
is the smallest constant for which
\eqref{eq:infinite_dimensional_trace_inequality}
holds uniformly over all nonzero positive-semidefinite trace-class
operators \(X\) and \(Y\) satisfying the stated assumptions on
arbitrary separable Hilbert spaces.
\end{proposition}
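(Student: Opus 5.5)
The plan is to reduce to the finite-dimensional optimal inequality of Ref.~\cite[Theorem~1 and Eq.~(29)]{gour2026optimaltraceinequalitiessingleshot} by finite-rank approximation, following the scheme of Ref.~\cite[Theorem~1]{cheng2025sharpestimatesquantumcovering}.

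\emph{Reduction.} If $\widetilde D_{1+s}(X\|Y)=+\infty$ there is nothing to prove, so assume the domain conditions of Lemma~\ref{lem:mosonyi_domain_condition_sandwiched} hold and $\widetilde Q_{1+s}(X\|Y)<\infty$. Let $Y=\sum_i\mu_i\ket{\phi_i}\bra{\phi_i}$ be the spectral decomposition. Let $P_k$ be the projection onto the span of the first $k$ eigenvectors of $Y$, and set
\begin{align}
Y_k\coloneqq P_kYP_k,
\qquad
X_k\coloneqq P_kXP_k .
\end{align}
These act on the finite-dimensional space $P_k\mathcal H$, where $Y_k$ is invertible. The finite-dimensional theorem then gives
\begin{align}
\Xi(X_k\|Y_k)\le G_s\,\widetilde Q_{1+s}(X_k\|Y_k)^{s/s}=G_s\exp\left[s\widetilde D_{1+s}(X_k\|Y_k)\right].
\end{align}

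\emph{Right-hand side.} Since $P_k$ commutes with $Y$, we have $Y_k^{-s/(2(1+s))}X_kY_k^{-s/(2(1+s))}=P_kX_{Y,1+s}P_k$. Here $X_{Y,1+s}$ is the bounded operator from~\eqref{eq:sandwiched_operator_definition}. Because $t\mapsto t^{1+s}$ is convex, a pinching/compression inequality gives
\begin{align}
\operatorname{Tr}\left[(P_kX_{Y,1+s}P_k)^{1+s}\right]\le\operatorname{Tr}\left[X_{Y,1+s}^{1+s}\right].
\end{align}
Hence $\widetilde Q_{1+s}(X_k\|Y_k)\le\widetilde Q_{1+s}(X\|Y)$ for every $k$. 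This is exactly data processing under the compression, and it matches the finite-dimensional definition.

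\emph{Left-hand side: the main obstacle.} I must show
\begin{align}
\Xi(X\|Y)\le\liminf_{k\to\infty}\Xi(X_k\|Y_k).
\end{align}
Write $\Xi(X_k\|Y_k)=D(X_k+Y_k\|Y_k)+D(Y_k\|X_k+Y_k)$. Both pairs converge in trace norm to $(X+Y,Y)$ and $(Y,X+Y)$. Lower semicontinuity of the Umegaki relative entropy under trace-norm convergence (Lindblad/Araki) then gives the desired liminf inequality for each term separately. The normalization and homogeneity~\eqref{eq:homogeneous_relative_entropy_scaling} let me apply the result stated for states, since traces converge. One subtlety is that the support condition holds for each $k$: $\operatorname{supp}(X_k)\subseteq P_k\mathcal H=\operatorname{supp}(Y_k)$.

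Because each term is separately lower semicontinuous and bounded below by the negative-part estimate~\eqref{eq:umegaki_negative_part_summable}, the liminf of the sum dominates the sum of the liminfs, and the inequality follows. If lower semicontinuity for unnormalized trace-class operators with varying traces needs care, I would first normalize by $\operatorname{Tr}[X+Y]$ and use the scaling identity, whose correction terms converge. Combining the three displays yields~\eqref{eq:infinite_dimensional_trace_inequality}. The hypothesis $\Xi(X\|Y)<\infty$ is needed only to ensure the left-hand side is a genuine real number, which is consistent with the statement.

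\emph{Optimality.} Optimality of $G_s$ follows because finite-dimensional Hilbert spaces (indeed commuting, even one-dimensional, examples $X=r$, $Y=1$) are special cases of separable ones. In the one-dimensional case,
\begin{align}
\Xi=r\log(1+r)\quad\text{and}\quad \exp\left[s\widetilde D_{1+s}\right]=r^{1+s}.
\end{align}
The ratio is therefore $\log(1+r)/r^s$, whose supremum is $G_s$. This is the tightness already shown in Ref.~\cite{gour2026optimaltraceinequalitiessingleshot}, so no smaller constant can work.
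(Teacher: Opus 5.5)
Your proof is correct, but it handles the finite-rank approximation differently from the paper.

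\textbf{The paper's route.} The paper compresses by an \emph{arbitrary} increasing sequence of finite-rank projections and proves that both sides actually converge:
\begin{itemize}
\item the relative entropies converge by Lindblad's lemma~\cite{lindblad1974expectations}, with trace corrections;
\item the sandwiched quasi-divergences converge by the lemma of Ref.~\cite[Lemma~3.39]{mosonyi2023strong};
\item the hypothesis $\Xi(X\|Y)<\infty$ is used to pass the limit through the sum of the two relative-entropy terms.
\end{itemize}

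\textbf{Your route.} You use only one-sided estimates. On the left you use lower semicontinuity of the relative entropy. On the right you use monotonicity of $\operatorname{Tr}[Z^{1+s}]$ under compression. That second step is available only because your $P_k$ are spectral projections of $Y$, which gives $Y_k^{-s/(2(1+s))}X_kY_k^{-s/(2(1+s))}=P_kX_{Y,1+s}P_k$. Justify this identity through the factorization $X_{Y,1+s}=\bigl(Y^{-s/(2(1+s))}X^{1/2}\bigr)\bigl(Y^{-s/(2(1+s))}X^{1/2}\bigr)^\dagger$ and the boundedness of $P_kY^{-s/(2(1+s))}$, not via the formal product.

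\textbf{What each route buys.} Yours is more elementary, since no convergence lemma for sandwiched quasi-divergences is needed. It also makes the hypothesis $\Xi(X\|Y)<\infty$ superfluous, because the chain $\Xi(X\|Y)\le\liminf_k\Xi(X_k\|Y_k)\le G_s\,\widetilde Q_{1+s}(X\|Y)$ holds in the extended reals. The paper's route shows that both sides are genuinely approximated along any exhausting sequence of projections, with no need for the projections to commute with $Y$.

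\textbf{Optimality.} Your one-dimensional computation settles every $s\in(0,1]$ at once. The paper instead cites Ref.~\cite{gour2026optimaltraceinequalitiessingleshot} for $s<1$ and uses the same one-dimensional example only at $s=1$.

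\textbf{Minor points.} The paper invokes Ref.~\cite{gour2026optimaltraceinequalitiessingleshot} only for $s\in(0,1)$. It takes the finite-dimensional endpoint $s=1$, with $G_1=1$, from Ref.~\cite{cheng2025sharpestimatesquantumcovering}. Make sure your finite-dimensional input also covers $s=1$. Also drop the stray exponent $s/s$.
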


\begin{proof}
It suffices to consider the case in which
\(\operatorname{supp}(Y)\)
is infinite-dimensional, since otherwise both \(X\) and \(Y\) are
supported on a finite-dimensional Hilbert space and the assertion
follows from
Ref.~\cite[Theorem~1 and Eq.~(29)]{
gour2026optimaltraceinequalitiessingleshot}.
We restrict the underlying Hilbert space to
\(\operatorname{supp}(Y)\), so that
\begin{align}
\operatorname{supp}(Y)
=
\mathcal H.
\label{eq:trace_inequality_Y_faithful_restriction}
\end{align}

\medskip
\noindent
\textbf{Finite-rank approximation.}
Let
\(\{\Pi_n\}_{n\in\{1,2,\ldots\}}\)
be an increasing sequence of finite-rank projections satisfying
\begin{align}
\operatorname{rank}\Pi_n
&=
n,
&
\Pi_n
&\le
\Pi_{n+1},
&
\Pi_n
&\longrightarrow
I
\end{align}
in the strong operator topology.
By
\eqref{eq:trace_inequality_Y_faithful_restriction},
for every nonzero
\(\ket{\varphi}\in\Pi_n\mathcal H\),
\begin{align}
\bra{\varphi}
\Pi_nY\Pi_n
\ket{\varphi}
=
\bra{\varphi}
Y
\ket{\varphi}
>
0.
\end{align}
Thus,
\(\Pi_nY\Pi_n\)
is strictly positive on
\(\Pi_n\mathcal H\), and hence
\begin{align}
\operatorname{supp}
\left(
\Pi_nX\Pi_n
\right)
\subseteq
\operatorname{supp}
\left(
\Pi_nY\Pi_n
\right)
=
\Pi_n\mathcal H.
\label{eq:trace_inequality_compressed_support_condition}
\end{align}

The finite-rank approximation results used below are collected in
Ref.~\cite[Lemma~A.1]{cheng2025sharpestimatesquantumcovering}.
More precisely, for nonzero
\(S,T\in\operatorname{T}_+(\mathcal H)\),
one has
\begin{align}
&\lim_{n\to\infty}
\Biggl\{
D
\left(
\Pi_nS\Pi_n
\middle\|
\Pi_nT\Pi_n
\right)
+
\operatorname{Tr}
\left[
\Pi_n(T-S)\Pi_n
\right]
\Biggr\}
\nonumber\\
&=
D(S\|T)
+
\operatorname{Tr}[T-S],
\label{eq:corrected_relative_entropy_finite_rank_convergence}
\end{align}
and, for every
\(s\in(0,1]\),
\begin{align}
\lim_{n\to\infty}
\widetilde Q_{1+s}
\left(
\Pi_nX\Pi_n
\middle\|
\Pi_nY\Pi_n
\right)
=
\widetilde Q_{1+s}(X\|Y).
\label{eq:sandwiched_quasi_divergence_finite_rank_convergence}
\end{align}
The relative-entropy approximation goes back to
Ref.~\cite[Lemma~4]{lindblad1974expectations}, whereas the
sandwiched quasi-divergence approximation for an arbitrary increasing
sequence of finite-rank projections converging strongly to the
identity is given in
Ref.~\cite[Lemma~3.39]{mosonyi2023strong}.

Since \(S\) is positive semidefinite and trace class,
\begin{align}
\operatorname{Tr}
\left[
\Pi_nS\Pi_n
\right]
=
\operatorname{Tr}
\left[
S^{1/2}
\Pi_n
S^{1/2}
\right].
\end{align}
Moreover,
\(\{S^{1/2}\Pi_nS^{1/2}\}_{n\in\{1,2,\ldots\}}\)
is increasing and converges strongly to \(S\).
The monotone convergence property of the trace therefore gives
\begin{align}
\lim_{n\to\infty}
\operatorname{Tr}
\left[
\Pi_nS\Pi_n
\right]
=
\operatorname{Tr}[S].
\label{eq:trace_finite_rank_convergence_S}
\end{align}
Similarly,
\begin{align}
\lim_{n\to\infty}
\operatorname{Tr}
\left[
\Pi_nT\Pi_n
\right]
=
\operatorname{Tr}[T].
\label{eq:trace_finite_rank_convergence_T}
\end{align}
Consequently,
\begin{align}
\lim_{n\to\infty}
\operatorname{Tr}
\left[
\Pi_n(T-S)\Pi_n
\right]
=
\operatorname{Tr}[T-S].
\label{eq:trace_correction_finite_rank_convergence}
\end{align}
Subtracting
\eqref{eq:trace_correction_finite_rank_convergence}
from
\eqref{eq:corrected_relative_entropy_finite_rank_convergence}
yields
\begin{align}
\lim_{n\to\infty}
D
\left(
\Pi_nS\Pi_n
\middle\|
\Pi_nT\Pi_n
\right)
=
D(S\|T),
\label{eq:homogeneous_relative_entropy_finite_rank_convergence}
\end{align}
where convergence to \(+\infty\) is understood in
terms of extended real numbers.

\medskip
\noindent
\textbf{Proof of the inequality.}
Fix
\(s\in(0,1)\).
Equation~\eqref{eq:trace_finite_rank_convergence_S},
applied with \(S=X\), gives
\begin{align}
\lim_{n\to\infty}
\operatorname{Tr}
\left[
\Pi_nX\Pi_n
\right]
=
\operatorname{Tr}[X]
>
0,
\end{align}
and hence
\(\Pi_nX\Pi_n\)
is nonzero for all sufficiently large \(n\).
Moreover,
\(\Pi_nY\Pi_n\)
is nonzero for every \(n\) by
\eqref{eq:trace_inequality_Y_faithful_restriction}.
Thus, for every sufficiently large \(n\), the finite-dimensional
result of
Ref.~\cite[Theorem~1 and Eq.~(29)]{
gour2026optimaltraceinequalitiessingleshot},
expressed using the natural-logarithm convention adopted in this
work, gives
\begin{align}
\Xi
\left(
\Pi_nX\Pi_n
\middle\|
\Pi_nY\Pi_n
\right)
\le
G_s
\widetilde Q_{1+s}
\left(
\Pi_nX\Pi_n
\middle\|
\Pi_nY\Pi_n
\right).
\label{eq:finite_rank_optimal_trace_inequality}
\end{align}
For \(s=1\), the same inequality holds with
\begin{align}
G_1
=
\sup_{r>0}
\frac{
\log[1+r]
}{
r
}
=
1
\end{align}
by the \(s=1\) case of
Ref.~\cite[Theorem~1]{cheng2025sharpestimatesquantumcovering}.

Applying
\eqref{eq:homogeneous_relative_entropy_finite_rank_convergence}
first with
\begin{align}
S
&=
X+Y,
&
T
&=
Y,
\end{align}
and then with
\begin{align}
S
&=
Y,
&
T
&=
X+Y,
\end{align}
gives
\begin{align}
&\lim_{n\to\infty}
D
\left(
\Pi_nX\Pi_n
+
\Pi_nY\Pi_n
\middle\|
\Pi_nY\Pi_n
\right)
=
D(X+Y\|Y),
\label{eq:first_relative_entropy_Xi_finite_rank_convergence}
\\
&\lim_{n\to\infty}
D
\left(
\Pi_nY\Pi_n
\middle\|
\Pi_nX\Pi_n
+
\Pi_nY\Pi_n
\right)
=
D(Y\|X+Y).
\label{eq:second_relative_entropy_Xi_finite_rank_convergence}
\end{align}
Since
\(\Xi(X\|Y)<\infty\),
both quantities on the right-hand sides are finite.
Therefore,
\begin{align}
\lim_{n\to\infty}
\Xi
\left(
\Pi_nX\Pi_n
\middle\|
\Pi_nY\Pi_n
\right)
=
\Xi(X\|Y).
\label{eq:Xi_finite_rank_convergence}
\end{align}
On the other hand,
\eqref{eq:sandwiched_quasi_divergence_finite_rank_convergence}
together with
\eqref{eq:sandwiched_renyi_divergence_infinite_dimensional}
gives
\begin{align}
&\lim_{n\to\infty}
\widetilde Q_{1+s}
\left(
\Pi_nX\Pi_n
\middle\|
\Pi_nY\Pi_n
\right)
\nonumber\\
&=
\widetilde Q_{1+s}(X\|Y)
=
\exp
\left[
s
\widetilde D_{1+s}(X\|Y)
\right].
\label{eq:sandwiched_quasi_divergence_finite_rank_convergence_D}
\end{align}
Taking the limit in
\eqref{eq:finite_rank_optimal_trace_inequality}
and using
\eqref{eq:Xi_finite_rank_convergence}
and
\eqref{eq:sandwiched_quasi_divergence_finite_rank_convergence_D}
proves
\eqref{eq:infinite_dimensional_trace_inequality}.

\medskip
\noindent
\textbf{Optimality.}
For
\(s\in(0,1)\),
the optimality of \(G_s\) for the corresponding inequality over
finite-dimensional positive-semidefinite operators is proved in
Ref.~\cite[Theorem~1 and Remark~1]{
gour2026optimaltraceinequalitiessingleshot}.
Since finite-dimensional Hilbert spaces form a subclass of separable
Hilbert spaces, no smaller constant can satisfy
\eqref{eq:infinite_dimensional_trace_inequality}
uniformly over the class considered here.

For \(s=1\), consider the one-dimensional Hilbert space with
\begin{align}
Y
&=
y,
&
X
&=
ry,
\end{align}
where
\(y>0\)
and
\(r>0\).
Then
\begin{align}
\frac{
\Xi(X\|Y)
}{
\widetilde Q_2(X\|Y)
}
=
\frac{
\log[1+r]
}{
r
}.
\end{align}
Letting
\(r\downarrow0\)
shows that any uniform constant must be at least
\begin{align}
\sup_{r>0}
\frac{
\log[1+r]
}{
r
}
=
1
=
G_1.
\end{align}
\end{proof}

\subsection{Haar-averaged Jensen's trace inequality}
\label{subsec:haar_averaged_Jensen_trace_inequality}

We prove the Jensen's trace inequalities needed to control the logarithmic terms in the Haar-averaged relative entropy to complete the proof of bounds on the achievable relative-entropy error exponent of one-shot decoupling.

We first show that the relative entropy can be recovered by adding
small identity-operator perturbations to the two logarithms.

\begin{proposition}[Perturbation formula for relative entropy]
\label{prop:relative_entropy_strictly_positive_perturbation}
Let
\(\mathcal H\)
be separable, and let
\(X,Y\in\operatorname{T}_+(\mathcal H)\)
be positive-semidefinite trace-class operators satisfying
\begin{align}
D(X\|Y)
<
+\infty.
\end{align}
Then
\begin{align}
D(X\|Y)
=
\lim_{\delta\downarrow0}
\operatorname{Tr}
\left[
X
\left(
\log[X+\delta I]
-
\log[Y+\delta I]
\right)
\right].
\label{eq:relative_entropy_strictly_positive_perturbation}
\end{align}
\end{proposition}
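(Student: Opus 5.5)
The plan is to reduce both traces in \eqref{eq:relative_entropy_strictly_positive_perturbation} to the single spectral expression \eqref{eq:umegaki_relative_entropy_positive_operators}, and then pass to the limit \(\delta\downarrow0\) term by term. The positive part of the integrand is handled by monotone convergence and the negative part by dominated convergence, using the summability bound \eqref{eq:umegaki_negative_part_summable}.

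\textbf{Reduction.} If \(X=0\), both sides vanish, so assume \(X\neq0\). Since \(D(X\|Y)<+\infty\), \eqref{eq:umegaki_support_failure} forces \(\operatorname{supp}(X)\subseteq\operatorname{supp}(Y)\), so \eqref{eq:umegaki_zero_spectral_mass} applies.

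For \(\delta>0\), the operators \(\log[X+\delta I]\) and \(\log[Y+\delta I]\) are bounded, with spectra in \([\log\delta,\log(\|X\|+\delta)]\) and \([\log\delta,\log(\|Y\|+\delta)]\) respectively. Hence \(X\log[X+\delta I]\) and \(X\log[Y+\delta I]\) are trace class, and the trace of their difference is the difference of the traces. Computing both traces in the eigenbasis \(\{\ket{\psi_i}\}\) of \(X\) from \eqref{eq:spectral_decomposition_trace_class_positive}, and using the probability measures \(\mu_i^Y\) of \eqref{eq:umegaki_Y_spectral_measure}, which give no mass to \(\{0\}\), I will show that
\begin{align}
\operatorname{Tr}\left[X\left(\log[X+\delta I]-\log[Y+\delta I]\right)\right]
=
\sum_{i\in\mathcal I_X}\lambda_i\int_{(0,\|Y\|]}\log\frac{\lambda_i+\delta}{\lambda+\delta}\,\mathrm d\mu_i^Y(\lambda).
\end{align}
The sums converge absolutely because \(\sum_i\lambda_i<\infty\) and the integrands are bounded. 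The right-hand side is an integral of \(f_\delta(i,\lambda)\coloneqq\log\frac{\lambda_i+\delta}{\lambda+\delta}\) against the finite measure \(\nu\coloneqq\sum_i\lambda_i\,\delta_i\otimes\mu_i^Y\) on \(\mathcal I_X\times(0,\|Y\|]\). Thus the claim becomes \(\int f_\delta\,\mathrm d\nu\to\int f_0\,\mathrm d\nu\), and the latter is exactly \eqref{eq:umegaki_relative_entropy_positive_operators}.

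\textbf{Limit.} I will split the domain by comparing \(\lambda\) with \(\lambda_i\).
\begin{itemize}
\item On \(\{\lambda<\lambda_i\}\), the ratio \(\frac{\lambda_i+\delta}{\lambda+\delta}\) increases as \(\delta\) decreases. So \(f_\delta\ge0\) increases monotonically to \(\log(\lambda_i/\lambda)\), and monotone convergence gives the convergence of this part. Its limit is finite, since \(D(X\|Y)<\infty\) and the negative part is finite.
\item On \(\{\lambda\ge\lambda_i\}\), we have \(0\le -f_\delta=\log\frac{\lambda+\delta}{\lambda_i+\delta}\le\log\frac{\lambda}{\lambda_i}\). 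The dominating function has \(\nu\)-integral bounded by \(\operatorname{Tr}[Y]<\infty\), by the computation in \eqref{eq:umegaki_single_eigenvector_negative_part_bound}--\eqref{eq:umegaki_negative_part_summable}. Dominated convergence then gives the convergence of this part.
\end{itemize}
Adding the two finite limits yields \eqref{eq:relative_entropy_strictly_positive_perturbation}.

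\textbf{Main obstacle.} The only delicate point is that \(\operatorname{Tr}[X\log X]\) and \(\operatorname{Tr}[X\log Y]\) may be separately infinite, so the two logarithmic terms cannot be taken to the limit independently. The argument therefore has to keep them coupled inside the single integrand \(f_\delta\), exactly as in the definition of \(D\). The step I will check most carefully is the monotonicity in \(\delta\) of \(f_\delta\) on each region, together with the dominating bound on the negative part.
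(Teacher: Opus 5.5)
Your proposal is correct and follows essentially the same route as the paper. Both arguments use the same eigenbasis reduction to the finite measure $\nu$ on $\mathcal I_X\times(0,\|Y\|]$, and both control the negative part by $\operatorname{Tr}[Y]$ via $\lambda_i\log(\lambda/\lambda_i)\le\lambda$. The only difference is cosmetic: the paper applies a single dominated convergence argument with $|f_\delta|\le|f_0|$, obtained by writing $(\lambda_i+\delta)/(\lambda+\delta)$ as a convex combination of $\lambda_i/\lambda$ and $1$. You instead split into $\{\lambda<\lambda_i\}$ and $\{\lambda\ge\lambda_i\}$ and use monotone and dominated convergence respectively, which rests on the same pointwise domination.
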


\begin{proof}
The statement is immediate if \(X=0\), so suppose that \(X\ne0\).
Since
\begin{align}
D(X\|Y)
<
+\infty,
\end{align}
the support condition
\begin{align}
\operatorname{supp}(X)
\subseteq
\operatorname{supp}(Y)
\end{align}
holds.

Let
\begin{align}
X
=
\sum_{i\in\mathcal I_X}
\lambda_i
\ket{\psi_i}\bra{\psi_i},
\qquad
\lambda_i>0,
\end{align}
be the spectral decomposition in
\eqref{eq:spectral_decomposition_trace_class_positive},
and let
\(\mu_i^Y\)
be the probability measure defined in
\eqref{eq:umegaki_Y_spectral_measure}.
The support condition implies
\begin{align}
\mu_i^Y
\left(
(0,\|Y\|]
\right)
=
1
\end{align}
for every
\(i\in\mathcal I_X\).

For
\(x,\lambda,\delta>0\),
define
\begin{align}
f_\delta(x,\lambda)
&\coloneqq
\log
\left(
\frac{x+\delta}{\lambda+\delta}
\right),
&
f(x,\lambda)
&\coloneqq
\log
\left(
\frac{x}{\lambda}
\right).
\label{eq:relative_entropy_regularization_integrands}
\end{align}
For every
\(\delta>0\),
the functions
\begin{align}
x
&\longmapsto
\log(x+\delta),
&
\lambda
&\longmapsto
\log(\lambda+\delta)
\end{align}
are bounded and continuous on
\([0,\|X\|]\)
and
\([0,\|Y\|]\),
respectively.
Therefore, the spectral theorem and the fact that
\(\mu_i^Y((0,\|Y\|])=1\)
give
\begin{align}
&\operatorname{Tr}
\left[
X
\left(
\log[X+\delta I]
-
\log[Y+\delta I]
\right)
\right]
\nonumber\\
&=
\sum_{i\in\mathcal I_X}
\lambda_i
\left[
\log(\lambda_i+\delta)
-
\int_{(0,\|Y\|]}
\log(\lambda+\delta)
\,
\mathrm{d}\mu_i^Y(\lambda)
\right]
\nonumber\\
&=
\sum_{i\in\mathcal I_X}
\lambda_i
\int_{(0,\|Y\|]}
f_\delta(\lambda_i,\lambda)
\,
\mathrm{d}\mu_i^Y(\lambda).
\label{eq:regularized_relative_entropy_spectral_expression}
\end{align}

For every
\(x,\lambda>0\),
\begin{align}
\lim_{\delta\downarrow0}
f_\delta(x,\lambda)
=
f(x,\lambda).
\label{eq:regularized_logarithm_pointwise_limit}
\end{align}
Moreover,
\begin{align}
\frac{x+\delta}{\lambda+\delta}
=
\frac{\lambda}{\lambda+\delta}
\frac{x}{\lambda}
+
\frac{\delta}{\lambda+\delta}.
\end{align}
Hence,
\((x+\delta)/(\lambda+\delta)\)
is a convex combination of
\(x/\lambda\)
and \(1\).
Since the logarithm is increasing,
\begin{align}
\min
\left\{
0,
f(x,\lambda)
\right\}
\le
f_\delta(x,\lambda)
\le
\max
\left\{
0,
f(x,\lambda)
\right\},
\end{align}
and consequently
\begin{align}
\left|
f_\delta(x,\lambda)
\right|
\le
\left|
f(x,\lambda)
\right|.
\label{eq:regularized_logarithm_domination}
\end{align}

We next verify the integrability of the right-hand side of
\eqref{eq:regularized_logarithm_domination}
with respect to the weighted family of measures
\(\{\lambda_i\mu_i^Y\}_{i\in\mathcal I_X}\).
Its negative part satisfies
\begin{align}
&
\sum_{i\in\mathcal I_X}
\lambda_i
\int_{(0,\|Y\|]}
\left[
f(\lambda_i,\lambda)
\right]_-
\,
\mathrm{d}\mu_i^Y(\lambda)
\nonumber\\
&=
\sum_{i\in\mathcal I_X}
\int_{(\lambda_i,\|Y\|]}
\lambda_i
\log
\left(
\frac{\lambda}{\lambda_i}
\right)
\,
\mathrm{d}\mu_i^Y(\lambda)
\nonumber\\
&\le
\sum_{i\in\mathcal I_X}
\int_{(0,\|Y\|]}
\lambda
\,
\mathrm{d}\mu_i^Y(\lambda)
\nonumber\\
&=
\sum_{i\in\mathcal I_X}
\bra{\psi_i}
Y
\ket{\psi_i}
\nonumber\\
&=
\operatorname{Tr}
\left[
\Pi_XY\Pi_X
\right]
\nonumber\\
&\le
\operatorname{Tr}[Y]
<
+\infty,
\label{eq:relative_entropy_regularization_negative_part_bound}
\end{align}
where
\begin{align}
[a]_-
\coloneqq
\max\{-a,0\}
\end{align}
and we used
\begin{align}
\lambda_i
\log
\left(
\frac{\lambda}{\lambda_i}
\right)
\le
\lambda-\lambda_i
\le
\lambda
\qquad
\text{for \(\lambda>\lambda_i\)}.
\end{align}

By
\eqref{eq:umegaki_relative_entropy_positive_operators},
\begin{align}
D(X\|Y)
=
\sum_{i\in\mathcal I_X}
\lambda_i
\int_{(0,\|Y\|]}
f(\lambda_i,\lambda)
\,
\mathrm{d}\mu_i^Y(\lambda).
\end{align}
Since
\(D(X\|Y)<+\infty\)
and the total negative part is finite by
\eqref{eq:relative_entropy_regularization_negative_part_bound},
the total positive part is also finite.
It follows that
\begin{align}
&
\sum_{i\in\mathcal I_X}
\lambda_i
\int_{(0,\|Y\|]}
\left|
f(\lambda_i,\lambda)
\right|
\,
\mathrm{d}\mu_i^Y(\lambda)
\nonumber\\
&<
+\infty.
\label{eq:relative_entropy_regularization_dominating_function}
\end{align}

To apply the dominated convergence theorem, consider the disjoint
union
\begin{align}
\Omega
\coloneqq
\bigsqcup_{i\in\mathcal I_X}
\left(
\{i\}
\times
(0,\|Y\|]
\right)
\end{align}
and define a finite positive measure \(\nu\) on \(\Omega\) by
\begin{align}
\nu
\left(
\{i\}\times\Delta
\right)
\coloneqq
\lambda_i
\mu_i^Y(\Delta).
\end{align}
Indeed,
\begin{align}
\nu(\Omega)
=
\sum_{i\in\mathcal I_X}
\lambda_i
=
\operatorname{Tr}[X]
<
+\infty.
\end{align}
Define
\begin{align}
g_\delta(i,\lambda)
&\coloneqq
f_\delta(\lambda_i,\lambda),
&
g(i,\lambda)
&\coloneqq
f(\lambda_i,\lambda).
\end{align}
Equations
\eqref{eq:regularized_logarithm_pointwise_limit}
and
\eqref{eq:regularized_logarithm_domination}
give
\begin{align}
g_\delta(i,\lambda)
&\longrightarrow
g(i,\lambda),
\\
|g_\delta(i,\lambda)|
&\le
|g(i,\lambda)|
\end{align}
for every
\((i,\lambda)\in\Omega\),
while
\eqref{eq:relative_entropy_regularization_dominating_function}
shows that
\(|g|\)
is \(\nu\)-integrable.
The dominated convergence theorem
\cite[Theorem~8.3.4]{tao2023analysis}
therefore yields
\begin{align}
&\lim_{\delta\downarrow0}
\sum_{i\in\mathcal I_X}
\lambda_i
\int_{(0,\|Y\|]}
f_\delta(\lambda_i,\lambda)
\,
\mathrm{d}\mu_i^Y(\lambda)
\nonumber\\
&=\sum_{i\in\mathcal I_X}
\lambda_i
\int_{(0,\|Y\|]}
\lim_{\delta\downarrow0}
f_\delta(\lambda_i,\lambda)
\,
\mathrm{d}\mu_i^Y(\lambda)
\nonumber\\
&=
\sum_{i\in\mathcal I_X}
\lambda_i
\int_{(0,\|Y\|]}
f(\lambda_i,\lambda)
\,
\mathrm{d}\mu_i^Y(\lambda)
\nonumber\\
&=
D(X\|Y).
\end{align}
Combining this identity with
\eqref{eq:regularized_relative_entropy_spectral_expression}
proves
\eqref{eq:relative_entropy_strictly_positive_perturbation}.
\end{proof}

Using the perturbation formula, we now derive the Haar-averaged trace
inequality used in the decoupling argument.

\begin{proposition}[Haar-averaged Jensen's trace inequality]
\label{prop:decoupling_operator_Jensen_reduction}
Let
\(\mathcal H^A\)
be finite-dimensional, and let
\(\mathcal H^R\) and \(\mathcal H^E\)
be separable.
Let
\begin{align}
\rho^{AR}
\in
\operatorname{D}
\left(
\mathcal H^A
\otimes
\mathcal H^R
\right),
\end{align}
and let
\begin{align}
\mathcal E^{A\to E}
:
\operatorname{T}
\left(
\mathcal H^A
\right)
\longrightarrow
\operatorname{T}
\left(
\mathcal H^E
\right)
\end{align}
be a nonzero bounded completely positive map.
Let
\(\widehat{\mathcal E}^{A\to E}\),
\(\omega_{\mathcal E}^{A'E}\),
\(\omega_{\mathcal E}^{E}\), and
\(\widetilde\tau_{\mathcal E,U}^{ER}\)
be defined in
\eqref{eq:finite_input_normalized_map},
\eqref{eq:finite_input_normalized_choi_state_and_marginal}, and
\eqref{eq:finite_input_randomized_output},
respectively.
Then, for every
\(\delta>0\),
\begin{align}
&\mathbb E_{
U\sim
\mathrm{Haar}
\left(
\operatorname{U}
\left(
\mathcal H^A
\right)
\right)
}
\left[
\operatorname{Tr}
\left[
\widetilde\tau_{\mathcal E,U}^{ER}
\log
\left[
\widetilde\tau_{\mathcal E,U}^{ER}
+
\delta I^{ER}
\right]
\right]
\right]
\nonumber\\
&\le
\operatorname{Tr}
\Biggl[
\left(
\rho^{AR}
\otimes
\omega_{\mathcal E}^{A'E}
\right)
\log
\Biggl[
|A|^2
\mathbb E_{
U\sim
\mathrm{Haar}
\left(
\operatorname{U}
\left(
\mathcal H^A
\right)
\right)
}
\Biggl[
\left(
U^{A\dagger}
\otimes
I^{A'}
\right)
\Phi^{AA'}
\left(
U^A
\otimes
I^{A'}
\right)
\otimes
\widetilde\tau_{\mathcal E,U}^{ER}
\Biggr]
+
\delta I^{AA'ER}
\Biggr]
\Biggr].
\label{eq:decoupling_trace_Jensen_first_term_bound}
\end{align}
\end{proposition}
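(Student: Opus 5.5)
The plan is to rewrite the left-hand side as a single trace against one fixed operator, and then apply Jensen's trace inequality for the operator concave function \(\log\) on \((0,\infty)\).

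\textbf{Step 1: a Choi-type identity.} Introduce the positive operator
\begin{align}
W_U
\coloneqq
|A|^2
\left(U^{A\dagger}\otimes I^{A'}\right)\Phi^{AA'}\left(U^A\otimes I^{A'}\right)
\otimes
\widetilde\tau_{\mathcal E,U}^{ER}.
\end{align}
The claim to prove is that, for every bounded Borel \(g\) on \([0,\infty)\),
\begin{align}
\operatorname{Tr}\left[\widetilde\tau_{\mathcal E,U}^{ER}\,g\!\left(\widetilde\tau_{\mathcal E,U}^{ER}\right)\right]
=
\operatorname{Tr}\left[\left(\rho^{AR}\otimes\omega_{\mathcal E}^{A'E}\right) V_U\, g\!\left(\widetilde\tau_{\mathcal E,U}^{ER}\right) V_U^\dagger\right],
\end{align}
for a suitable isometry \(V_U\) with \(V_U\,\cdot\,V_U^\dagger\) related to \(W_U\). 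The route to this is the transpose trick. Write \(\widetilde\tau_{\mathcal E,U}=|A|\operatorname{Tr}_{AA'}[(U^\dagger\Phi U\otimes I^{ER})(\rho^{AR}\otimes\omega_{\mathcal E}^{A'E})]\), which holds up to transposition conventions because \(\widehat{\mathcal E}(Z)=|A|\operatorname{Tr}_{A'}[(Z^{T}\otimes I)\omega^{A'E}]\). From this, the trace becomes \(\operatorname{Tr}[(\rho\otimes\omega)\,(P_U\otimes g(\widetilde\tau_U))]\) with \(P_U=|A|\,U^\dagger\Phi U\). Since \(\Phi\) is a rank-one projection, \(P_U\otimes g(\widetilde\tau_U)\) equals \(|A|\) times the functional calculus on the support of \(W_U\). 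Concretely, on \(\operatorname{ran}(\Phi_U)\otimes\mathcal H^{ER}\) the operator \(W_U\) acts as \(|A|^2\widetilde\tau_U\). Hence the left-hand side is \(\operatorname{Tr}[(\rho\otimes\omega)\,\Phi_U\otimes \log(\widetilde\tau_U+\delta)]\cdot|A|\).

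\textbf{Step 2: handling the normalization.} I will track the factors \(|A|\) carefully, aiming to express the left-hand side as \(\operatorname{Tr}[(\rho\otimes\omega)\,K_U^\dagger \log(W_U+\delta I)\, K_U]\). Here \(K_U\) is built from the projections \(\Phi_U\otimes I\), and the identity \(\log(W_U+\delta I)\Phi_U = \log(|A|^2\widetilde\tau_U+\delta)\Phi_U\) holds. The mismatch between \(\log(\widetilde\tau+\delta)\) and \(\log(|A|^2\widetilde\tau+\delta)\) is controlled by monotonicity. Note that \(|A|^2\widetilde\tau+\delta\ge\widetilde\tau+\delta\) as long as the relevant weight is nonnegative, and \(\rho\otimes\omega\ge0\). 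So the left-hand side is at most \(\mathbb E_U\operatorname{Tr}[(\rho\otimes\omega)\,|A|\,(\Phi_U\otimes I)\log(W_U+\delta I)(\Phi_U\otimes I)]\), where \(\Phi_U\coloneqq(U^{\dagger}\otimes I)\Phi(U\otimes I)\). Here one needs \(\log(W+\delta)\) to be positive or sign-controlled on these vectors. If that fails, I would instead use \(\log(W+\delta)\) on the complement \(\ge\log\delta\) and absorb the excess via the averaging below.

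\textbf{Step 3: Jensen via compression.} Note that \(\mathbb E_U[|A|\,\Phi_U\otimes I]\) is a Haar twirl: \(\mathbb E_U|A|(U^\dagger\otimes I)\Phi(U\otimes I)=I^{AA'}/|A|\). This fails to give an identity resolution by a factor \(1/|A|\). So I define the isometry \(\mathcal V:\mathcal H^{AA'ER}\to L^2(\mathrm dU)\otimes\mathcal H^{AA'ER}\), \(\ket{v}\mapsto (U\mapsto c\,(\Phi_U\otimes I)\ket v)\), and fix \(c\) so that \(\mathcal V^\dagger\mathcal V=I\), with the leftover weight placed on the complementary branch \(\Phi_U^\perp\). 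Let \(\mathbf W=\int^\oplus (W_U+\delta I)\,\mathrm dU\), which is strictly positive and bounded. Then \(\mathcal V^\dagger\mathbf W\mathcal V=\mathbb E_U[W_U]+\delta I\), since \(W_U\) is supported on \(\operatorname{ran}\Phi_U\). Hansen--Pedersen's Jensen inequality \(\mathcal V^\dagger\log(\mathbf W)\mathcal V\le\log(\mathcal V^\dagger\mathbf W\mathcal V)\) applies in the bounded, strictly positive setting. Pairing with \(\rho\otimes\omega\), which is trace class and positive, gives the right-hand side.

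\textbf{Main obstacle.} The main obstacle is the bookkeeping in Steps 2--3, namely the constants \(|A|\) versus \(|A|^2\) and the contribution of the complementary branch \(\Phi_U^\perp\). On that branch \(\log(W_U+\delta)=\log\delta\), which is negative. I expect the identity to close exactly once the isometry is chosen as \(\ket v\mapsto(|A|\Phi_U\otimes I)\ket v\) jointly over \(U\), i.e. using \(\mathbb E_U|A|^2\Phi_U\cdots\) together with the twirl identity. I would verify this by an explicit trace computation with the maximally entangled vector before invoking Jensen.
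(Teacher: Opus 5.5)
Your architecture is the paper's: an isometric embedding of $\mathcal H^{AA'ER}$ into $L^2(\operatorname{U}(\mathcal H^A),\mathrm{Haar};\mathcal H^{AA'ER})$, a direct-integral multiplication operator, Hansen--Pedersen's Jensen inequality, and the Choi representation. However, the bookkeeping you defer is a genuine gap, and the specific choices you make do not close it.

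Write $\widetilde\tau_U\coloneqq\widetilde\tau_{\mathcal E,U}^{ER}$ and $\Phi_U\coloneqq(U^{A\dagger}\otimes I^{A'})\Phi^{AA'}(U^A\otimes I^{A'})$. The Choi identity is $\widetilde\tau_U=|A|^2\operatorname{Tr}_{AA'}[(\Phi_U\otimes I^{ER})(\rho^{AR}\otimes\omega_{\mathcal E}^{A'E})]$, with $|A|^2$ rather than $|A|$. Since $\mathbb E_U[\Phi_U]=I^{AA'}/|A|^2$, the map $\mathcal V\ket v\coloneqq(U\mapsto|A|(\Phi_U\otimes I^{ER})\ket v)$ is already an isometry, with no complementary branch. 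For this $\mathcal V$ and your $\mathbf W=\int^\oplus(W_U+\delta I)\,\mathrm dU$, one gets $\mathcal V^\dagger\mathbf W\mathcal V=|A|^2\mathbb E_U[W_U]+\delta I=|A|^4\mathbb E_U[\Phi_U\otimes\widetilde\tau_U]+\delta I$, not $\mathbb E_U[W_U]+\delta I$.

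The identity you state holds only for the constant embedding $\ket v\mapsto(U\mapsto\ket v)$, i.e.\ plain operator concavity. In that case the $\Phi_U^\perp$ branch contributes $(1-|A|^{-2})\log\delta\to-\infty$, and the useful term appears with prefactor $|A|^{-2}$. This is precisely the failure discussed in Remark~\ref{rem:tight_any_shot_invalid_log_concavity_step}.

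With the correct $\mathcal V$, Jensen yields $\mathbb E_U\operatorname{Tr}[\widetilde\tau_U\log(|A|^2\widetilde\tau_U+\delta I)]\le\operatorname{Tr}[(\rho^{AR}\otimes\omega_{\mathcal E}^{A'E})\log(|A|^4\mathbb E_U[\Phi_U\otimes\widetilde\tau_U]+\delta I)]$. Combining this with your Step~2 bound $\log(\widetilde\tau_U+\delta I)\le\log(|A|^2\widetilde\tau_U+\delta I)$ discards a $\log|A|^2$ on the left that has no counterpart on the right. You end with $|A|^4$ in place of $|A|^2$ inside the logarithm, which is weaker than the statement and does not imply it. It would also spoil the $O(1)$ constant $C_{|A|}$ needed later.

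The missing idea is the paper's choice of multiplication operator, $M_\delta\coloneqq\int^\oplus I^{AA'}\otimes(\widetilde\tau_U+\delta I^{ER})\,\mathrm dU$, instead of $W_U+\delta I$. Since $\Phi_U\otimes I^{ER}$ is a projection commuting with $I^{AA'}\otimes X^{ER}$, one gets exactly $\mathcal V^\dagger\log[M_\delta]\mathcal V=|A|^2\mathbb E_U[\Phi_U\otimes\log(\widetilde\tau_U+\delta I^{ER})]$ and $\mathcal V^\dagger M_\delta\mathcal V=|A|^2\mathbb E_U[\Phi_U\otimes\widetilde\tau_U]+\delta I^{AA'ER}$. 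Neither a monotonicity step nor sign control on a complementary branch is then needed. Pairing with $\rho^{AR}\otimes\omega_{\mathcal E}^{A'E}$ and using the Choi identity gives the claim directly.

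Alternatively, your $\mathbf W$ can be rescued by dropping Step~2 and instead writing $\log(|A|^2X+\delta I)=\log(|A|^2)I+\log(X+\delta|A|^{-2}I)$ on both sides. The constants cancel because $\mathbb E_U\operatorname{Tr}[\widetilde\tau_U]=1=\operatorname{Tr}[\rho^{AR}\otimes\omega_{\mathcal E}^{A'E}]$. This yields the statement with $\delta|A|^{-2}$ in place of $\delta$, and hence for every $\delta>0$.
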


\begin{proof}
Let
\begin{align}
\mathcal K
\coloneqq
L^2
\left(
\operatorname{U}
\left(
\mathcal H^A
\right),
\mathrm{Haar};
\mathcal H^{AA'ER}
\right)
\end{align}
be the Hilbert space of equivalence classes of strongly measurable
functions
\begin{align}
\varphi
:
\operatorname{U}
\left(
\mathcal H^A
\right)
\longrightarrow
\mathcal H^{AA'ER}
\end{align}
satisfying
\begin{align}
\int_{
\operatorname{U}
\left(
\mathcal H^A
\right)
}
\|
\varphi[U]
\|^2
\,\mathrm dU
<
\infty,
\end{align}
where
\(\mathrm dU\)
denotes the normalized Haar measure.
Its inner product is
\begin{align}
\langle
\varphi_1,
\varphi_2
\rangle_{\mathcal K}
\coloneqq
\int_{
\operatorname{U}
\left(
\mathcal H^A
\right)
}
\bra{\varphi_1[U]}
\ket{\varphi_2[U]}
\,\mathrm dU.
\end{align}

Define a linear map
\begin{align}
V
:
\mathcal H^{AA'ER}
\longrightarrow
\mathcal K
\end{align}
by
\begin{align}
\left(
V
\ket{\psi}
\right)[U]
\coloneqq
|A|
\Biggl(
\left(
U^{A\dagger}
\otimes
I^{A'}
\right)
\Phi^{AA'}
\left(
U^A
\otimes
I^{A'}
\right)
\otimes
I^{ER}
\Biggr)
\ket{\psi}.
\label{eq:decoupling_trace_Jensen_isometry}
\end{align}
Haar-random-unitary twirling gives
\begin{align}
&\mathbb E_{
U\sim
\mathrm{Haar}
\left(
\operatorname{U}
\left(
\mathcal H^A
\right)
\right)
}
\left[
\left(
U^{A\dagger}
\otimes
I^{A'}
\right)
\Phi^{AA'}
\left(
U^A
\otimes
I^{A'}
\right)
\right]
=
\frac{
I^{AA'}
}{
|A|^2
}.
\label{eq:decoupling_trace_Jensen_first_moment}
\end{align}
Since
\(\Phi^{AA'}\)
is a projection, for every
\(\ket{\psi}\in\mathcal H^{AA'ER}\),
\begin{align}
\|
V
\ket{\psi}
\|_{\mathcal K}^2
&=
|A|^2
\mathbb E_{
U\sim
\mathrm{Haar}
\left(
\operatorname{U}
\left(
\mathcal H^A
\right)
\right)
}
\Biggl[
\bra{\psi}
\Biggl(
\left(
U^{A\dagger}
\otimes
I^{A'}
\right)
\Phi^{AA'}
\left(
U^A
\otimes
I^{A'}
\right)
\otimes
I^{ER}
\Biggr)
\ket{\psi}
\Biggr]
\nonumber\\
&=
\|
\ket{\psi}
\|^2,
\end{align}
where
\eqref{eq:decoupling_trace_Jensen_first_moment}
was used in the last equality.
Therefore,
\(V\)
is an isometry.

For every
\(\delta>0\),
define the multiplication operator
\(M_\delta\)
on
\(\mathcal K\)
by
\begin{align}
\ket{\left(
M_\delta
\varphi
\right)[U]}
\coloneqq
\left(
I^{AA'}
\otimes
\left[
\widetilde\tau_{\mathcal E,U}^{ER}
+
\delta I^{ER}
\right]
\right)
\ket{\varphi[U]}.
\label{eq:decoupling_trace_Jensen_multiplication_operator}
\end{align}
The map
\begin{align}
U
\longmapsto
\widetilde\tau_{\mathcal E,U}^{ER}
\end{align}
is continuous in the trace norm and hence bounded in the operator
norm, since
\(\operatorname{U}(\mathcal H^A)\)
is compact.
Therefore,
\(M_\delta\)
is bounded and satisfies
\begin{align}
M_\delta
\ge
\delta I_{\mathcal K}.
\label{eq:decoupling_trace_Jensen_multiplication_lower_bound}
\end{align}
The continuous functional calculus for multiplication operators acts
pointwise:
\begin{align}
\ket{\left(
\log[M_\delta]
\varphi
\right)[U]}
=
\left(
I^{AA'}
\otimes
\log
\left[
\widetilde\tau_{\mathcal E,U}^{ER}
+
\delta I^{ER}
\right]
\right)
\ket{\varphi[U]}.
\label{eq:decoupling_trace_Jensen_log_multiplication_operator}
\end{align}

Since \(V\) is an isometry, Jensen's operator
inequality~\cite[Theorem~2.1]{hansen2003jensen}
for the operator-concave function \(\log\) on the spectrum in
\eqref{eq:decoupling_trace_Jensen_multiplication_lower_bound}
gives
\begin{align}
V^\dagger
\log[M_\delta]
V
\le
\log
\left[
V^\dagger
M_\delta
V
\right].
\label{eq:decoupling_trace_Jensen_compression}
\end{align}
Due to
\eqref{eq:decoupling_trace_Jensen_isometry}
and
\eqref{eq:decoupling_trace_Jensen_log_multiplication_operator},
for arbitrary
\(\ket{\psi_1},\ket{\psi_2}\in\mathcal H^{AA'ER}\),
the definitions of \(V\), \(M_\delta\), and the inner product on
\(\mathcal K\) give
\begin{align}
& \left\langle
\left(
V\ket{\psi_1}
\right),
\log[M_\delta]
\left(
V
\ket{\psi_2}
\right)
\right\rangle_\mathcal{K}
\nonumber\\
&=
|A|^2
\mathbb E_{
U\sim
\mathrm{Haar}
\left(
\operatorname{U}
\left(
\mathcal H^A
\right)
\right)
}
\Biggl[
\bra{\psi_1}
\Biggl(
\left(
U^{A\dagger}
\otimes
I^{A'}
\right)
\Phi^{AA'}
\left(
U^A
\otimes
I^{A'}
\right)
\otimes
\log
\left[
\widetilde\tau_{\mathcal E,U}^{ER}
+
\delta I^{ER}
\right]
\Biggr)
\ket{\psi_2}
\Biggr].
\end{align}
Thus, the left-hand side of
\eqref{eq:decoupling_trace_Jensen_compression}
is
\begin{align}
&V^\dagger
\log[M_\delta]
V
\nonumber\\
&=
|A|^2
\mathbb E_{
U\sim
\mathrm{Haar}
\left(
\operatorname{U}
\left(
\mathcal H^A
\right)
\right)
}
\Biggl[
\left(
U^{A\dagger}
\otimes
I^{A'}
\right)
\Phi^{AA'}
\left(
U^A
\otimes
I^{A'}
\right)
\otimes
\log
\left[
\widetilde\tau_{\mathcal E,U}^{ER}
+
\delta I^{ER}
\right]
\Biggr];
\label{eq:decoupling_trace_Jensen_compressed_log}
\end{align}
Similarly,
\begin{align}
&V^\dagger
M_\delta
V
\nonumber\\
&=
|A|^2
\mathbb E_{
U\sim
\mathrm{Haar}
\left(
\operatorname{U}
\left(
\mathcal H^A
\right)
\right)
}
\Biggl[
\left(
U^{A\dagger}
\otimes
I^{A'}
\right)
\Phi^{AA'}
\left(
U^A
\otimes
I^{A'}
\right)
\otimes
\left(
\widetilde\tau_{\mathcal E,U}^{ER}
+
\delta I^{ER}
\right)
\Biggr]
\nonumber\\
&=
|A|^2
\mathbb E_{
U\sim
\mathrm{Haar}
\left(
\operatorname{U}
\left(
\mathcal H^A
\right)
\right)
}
\Biggl[
\left(
U^{A\dagger}
\otimes
I^{A'}
\right)
\Phi^{AA'}
\left(
U^A
\otimes
I^{A'}
\right)
\otimes
\widetilde\tau_{\mathcal E,U}^{ER}
\Biggr]
+
\delta I^{AA'ER},
\label{eq:decoupling_trace_Jensen_compressed_operator}
\end{align}
where
\eqref{eq:decoupling_trace_Jensen_first_moment}
was used in the last equality.
Substituting
\eqref{eq:decoupling_trace_Jensen_compressed_log}
and
\eqref{eq:decoupling_trace_Jensen_compressed_operator}
into
\eqref{eq:decoupling_trace_Jensen_compression}
yields
\begin{align}
&|A|^2
\mathbb E_{
U\sim
\mathrm{Haar}
\left(
\operatorname{U}
\left(
\mathcal H^A
\right)
\right)
}
\Biggl[
\left(
U^{A\dagger}
\otimes
I^{A'}
\right)
\Phi^{AA'}
\left(
U^A
\otimes
I^{A'}
\right)
\otimes
\log
\left[
\widetilde\tau_{\mathcal E,U}^{ER}
+
\delta I^{ER}
\right]
\Biggr]
\nonumber\\
&\le
\log
\Biggl[
|A|^2
\mathbb E_{
U\sim
\mathrm{Haar}
\left(
\operatorname{U}
\left(
\mathcal H^A
\right)
\right)
}
\Biggl[
\left(
U^{A\dagger}
\otimes
I^{A'}
\right)
\Phi^{AA'}
\left(
U^A
\otimes
I^{A'}
\right)
\otimes
\widetilde\tau_{\mathcal E,U}^{ER}
\Biggr]
+
\delta I^{AA'ER}
\Biggr].
\label{eq:decoupling_trace_Jensen_regularized_operator_inequality}
\end{align}
The normalized Choi representation gives
\begin{align}
\widetilde\tau_{\mathcal E,U}^{ER}
=
|A|^2
\operatorname{Tr}_{AA'}
\Biggl[
\Biggl(
\left(
U^{A\dagger}
\otimes
I^{A'}
\right)
\Phi^{AA'}
\left(
U^A
\otimes
I^{A'}
\right)
\otimes
I^{ER}
\Biggr)
\left(
\rho^{AR}
\otimes
\omega_{\mathcal E}^{A'E}
\right)
\Biggr].
\label{eq:decoupling_trace_Jensen_choi_representation}
\end{align}
Multiplying
\eqref{eq:decoupling_trace_Jensen_regularized_operator_inequality}
by
\(\rho^{AR}\otimes\omega_{\mathcal E}^{A'E}\)
and taking the trace, the left-hand side becomes
\begin{align}
&|A|^2
\mathbb E_{
U\sim
\mathrm{Haar}
\left(
\operatorname{U}
\left(
\mathcal H^A
\right)
\right)
}
\Biggl[
\operatorname{Tr}
\Biggl[
\left(
\rho^{AR}
\otimes
\omega_{\mathcal E}^{A'E}
\right)
\Biggl(
\left(
U^{A\dagger}
\otimes
I^{A'}
\right)
\Phi^{AA'}
\left(
U^A
\otimes
I^{A'}
\right)
\otimes
\log
\left[
\widetilde\tau_{\mathcal E,U}^{ER}
+
\delta I^{ER}
\right]
\Biggr)
\Biggr]
\Biggr]
\nonumber\\
&=
\mathbb E_{
U\sim
\mathrm{Haar}
\left(
\operatorname{U}
\left(
\mathcal H^A
\right)
\right)
}
\left[
\operatorname{Tr}
\left[
\widetilde\tau_{\mathcal E,U}^{ER}
\log
\left[
\widetilde\tau_{\mathcal E,U}^{ER}
+
\delta I^{ER}
\right]
\right]
\right],
\end{align}
where
\eqref{eq:decoupling_trace_Jensen_choi_representation}
was used.
Therefore,
\eqref{eq:decoupling_trace_Jensen_regularized_operator_inequality}
implies
\eqref{eq:decoupling_trace_Jensen_first_term_bound}, i.e.,
\begin{align}
&\mathbb E_{
U\sim
\mathrm{Haar}
\left(
\operatorname{U}
\left(
\mathcal H^A
\right)
\right)
}
\left[
\operatorname{Tr}
\left[
\widetilde\tau_{\mathcal E,U}^{ER}
\log
\left[
\widetilde\tau_{\mathcal E,U}^{ER}
+
\delta I^{ER}
\right]
\right]
\right]
\nonumber\\
&\le
\operatorname{Tr}
\Biggl[
\left(
\rho^{AR}
\otimes
\omega_{\mathcal E}^{A'E}
\right)
\log
\Biggl[
|A|^2
\mathbb E_{
U\sim
\mathrm{Haar}
\left(
\operatorname{U}
\left(
\mathcal H^A
\right)
\right)
}
\Biggl[
\left(
U^{A\dagger}
\otimes
I^{A'}
\right)
\Phi^{AA'}
\left(
U^A
\otimes
I^{A'}
\right)
\otimes
\widetilde\tau_{\mathcal E,U}^{ER}
\Biggr]
+
\delta I^{AA'ER}
\Biggr]
\Biggr].
\label{eq:decoupling_trace_Jensen_first_term_bound_2}
\end{align}
\end{proof}

\begin{remark}[Comparison with the existing analysis of tight
any-shot decoupling]
\label{rem:tight_any_shot_invalid_log_concavity_step}
In the analysis of
Ref.~\cite[Theorem~3 and Eqs.~(29), (30), and~(31)]{
berta2026tightanyshotquantumdecoupling},
the following inequality is asserted, in the notation of this work:
\begin{align}
&\mathbb E_{
U\sim
\mathrm{Haar}
\left(
\operatorname{U}
\left(
\mathcal H^A
\right)
\right)
}
\left[
\operatorname{Tr}
\left[
\widetilde\tau_{\mathcal E,U}^{ER}
\log
\widetilde\tau_{\mathcal E,U}^{ER}
\right]
\right]
\nonumber\\
&\le
|A|^2
\operatorname{Tr}
\Biggl[
\left(
\rho^{AR}
\otimes
\omega_{\mathcal E}^{A'E}
\right)
\log
\mathbb E_{
U\sim
\mathrm{Haar}
\left(
\operatorname{U}
\left(
\mathcal H^A
\right)
\right)
}
\Biggl[
\left(
U^{A\dagger}
\otimes
I^{A'}
\right)
\Phi^{AA'}
\left(
U^A
\otimes
I^{A'}
\right)
\otimes
\widetilde\tau_{\mathcal E,U}^{ER}
\Biggr]
\Biggr].
\label{eq:tight_any_shot_claimed_log_concavity_step}
\end{align}
In contrast to
\eqref{eq:decoupling_trace_Jensen_regularized_operator_inequality},
the factor
\(|A|^2\)
in
\eqref{eq:tight_any_shot_claimed_log_concavity_step}
is outside the trace rather than inside the logarithm.

The inequality
\eqref{eq:tight_any_shot_claimed_log_concavity_step}
does not hold in general, even when
\(\mathcal H^R\) and \(\mathcal H^E\)
are finite-dimensional and
\(\mathcal E^{A\to E}\)
is trace-preserving.
To see this, assume that
\begin{align}
|A|
\ge
2,
\end{align}
let
\begin{align}
\mathcal H^R
&=
\mathbb C,
&
\mathcal H^E
&=
\mathbb C
=
\operatorname{span}
\left\{
\ket{0}
\right\},
\end{align}
and define
\begin{align}
\mathcal E^{A\to E}
\left(
X^A
\right)
\coloneqq
\operatorname{Tr}
\left[
X^A
\right]
\ket{0}\bra{0}^{E}.
\label{eq:tight_any_shot_counterexample_channel}
\end{align}
This map is completely positive and trace-preserving.
Moreover,
\begin{align}
\kappa_{\mathcal E}
&=
1,
&
\widehat{\mathcal E}^{A\to E}
&=
\mathcal E^{A\to E},
\\
\omega_{\mathcal E}^{A'E}
&=
\pi^{A'}
\otimes
\ket{0}\bra{0}^{E},
&
\omega_{\mathcal E}^{E}
&=
\ket{0}\bra{0}^{E}.
\end{align}
For every state
\(\rho^A\)
and every
\(U\in\operatorname{U}(\mathcal H^A)\),
\begin{align}
\widetilde\tau_{\mathcal E,U}^{E}
=
\ket{0}\bra{0}^{E}.
\end{align}
Hence, the left-hand side of
\eqref{eq:tight_any_shot_claimed_log_concavity_step}
is zero.
On the other hand,
\eqref{eq:decoupling_trace_Jensen_first_moment}
gives
\begin{align}
&\mathbb E_{
U\sim
\mathrm{Haar}
\left(
\operatorname{U}
\left(
\mathcal H^A
\right)
\right)
}
\Biggl[
\left(
U^{A\dagger}
\otimes
I^{A'}
\right)
\Phi^{AA'}
\left(
U^A
\otimes
I^{A'}
\right)
\otimes
\widetilde\tau_{\mathcal E,U}^{E}
\Biggr]
\nonumber\\
&=
\frac{
I^{AA'}
}{
|A|^2
}
\otimes
\ket{0}\bra{0}^{E}.
\end{align}
Consequently, the right-hand side of
\eqref{eq:tight_any_shot_claimed_log_concavity_step}
is
\begin{align}
&|A|^2
\operatorname{Tr}
\Biggl[
\left(
\rho^A
\otimes
\pi^{A'}
\otimes
\ket{0}\bra{0}^{E}
\right)
\log
\Biggl[
\frac{
I^{AA'}
}{
|A|^2
}
\otimes
\ket{0}\bra{0}^{E}
\Biggr]
\Biggr]
\nonumber\\
&=
-|A|^2
\log
\left[
|A|^2
\right]
<
0,
\end{align}
contradicting
\eqref{eq:tight_any_shot_claimed_log_concavity_step}.

The issue is that the derivation in
Ref.~\cite[Eqs.~(29), (30), and~(31)]{
berta2026tightanyshotquantumdecoupling}
uses the identity
\begin{align}
\phi
\otimes
\log T
=
\log
\left[
\phi
\otimes
T
\right]
\end{align}
for a rank-one projection \(\phi\).
With the conventional logarithmic functional calculus, the
right-hand side is not a bounded operator on the full Hilbert space
because
\(\phi\otimes T\)
has a nontrivial kernel.
Under the generalized logarithm adopted in
\eqref{eq:generalized_operator_logarithm},
the identity holds algebraically.
However, the scalar function obtained by extending
\(\log t\)
with the value zero at \(t=0\) is not concave; indeed, for
\begin{align}
\ell(0)
&\coloneqq
0,
&
\ell(t)
&\coloneqq
\log[t],
\qquad
t>0,
\end{align}
one has
\begin{align}
\ell
\left(
\frac{
0+1
}{
2
}
\right)
=
-\log2
<
0
=
\frac{
\ell(0)
+
\ell(1)
}{
2
},
\end{align}
which is not concave.
Consequently, this generalized logarithm cannot be used in the
operator-concavity step.

Proposition~\ref{prop:decoupling_operator_Jensen_reduction}
avoids this issue by applying Jensen's operator inequality to the
isometry \(V\) and the strictly positive multiplication operator
\(M_\delta\).
The normalization required to make \(V\) isometric introduces the
factor
\(|A|^2\)
inside the logarithm.
Consequently,
Theorem~\ref{thm:finite_input_infinite_reference_decoupling}
below yields the corresponding one-shot bound with the extra
dimension-dependent factor
\(C_{|A|}\)
compared to
Ref.~\cite[Theorem~3]{
berta2026tightanyshotquantumdecoupling}.
However, since
\begin{align}
\lim_{|A|\to\infty}C_{|A|}
=
2,
\end{align}
the factor
\(C_{|A|}\)
remains bounded under tensor powers and hence does not affect the
leading asymptotic error exponent.
\end{remark}

\subsection{Decoupling error from the trace inequality}
\label{subsec:haar_averaged_relative_entropy_decoupling_bound}

We now combine the Haar-averaged Jensen inequality with the Haar-random-unitary second-moment identity to bound the averaged
relative-entropy decoupling error in terms of \(\Xi\).
The following Haar-random-unitary second-moment identity was used in the original analysis of decoupling in
Ref.~\cite{horodecki2007quantum}.
A derivation can be found, for example, in
Ref.~\cite[Eq.~(124)]{Mele2024introductiontohaar}.
The dimension-dependent prefactors in
\eqref{eq:decoupling_haar_second_moment}
reflect the convention that the maximally entangled states are
normalized.

\begin{lemma}[Haar-random-unitary second moment]
\label{lem:decoupling_haar_second_moment}
Let
\(\mathcal H^A\)
be finite-dimensional, with
\(|A|\)
defined in
\eqref{eq:dimension_shorthand_A}, and suppose that
\begin{align}
|A|
\ge
2.
\end{align}
Let
\(\mathcal H^{A'}\),
\(\mathcal H^{\widetilde A}\), and
\(\mathcal H^{\widetilde A'}\)
be fixed copies of
\(\mathcal H^A\).
Let
\(\Phi^{AA'}\)
and
\(\Phi^{\widetilde A\widetilde A'}\)
be the normalized maximally entangled states defined as in
\eqref{eq:normalized_maximally_entangled_state}
on
\(\mathcal H^A\otimes\mathcal H^{A'}\)
and
\(\mathcal H^{\widetilde A}\otimes
\mathcal H^{\widetilde A'}\),
respectively.
Let
\(F^{A\widetilde A}\)
and
\(F^{A'\widetilde A'}\)
be the corresponding swap operators defined as in
\eqref{eq:swap_operator_definition}.
For every
\(U\in\operatorname{U}(\mathcal H^A)\),
let
\(U^{\widetilde A}\)
denote the corresponding copy of \(U\) acting on
\(\mathcal H^{\widetilde A}\).
Then,
\begin{align}
&\mathbb E_{
U\sim
\mathrm{Haar}
\left(
\operatorname{U}
\left(
\mathcal H^A
\right)
\right)
}
\Biggl[
\left(
U^{A\dagger}
\otimes
I^{A'}
\right)
\Phi^{AA'}
\left(
U^A
\otimes
I^{A'}
\right)
\nonumber\\
&\hspace{28mm}\otimes
\left(
U^{\widetilde A\dagger}
\otimes
I^{\widetilde A'}
\right)
\Phi^{\widetilde A\widetilde A'}
\left(
U^{\widetilde A}
\otimes
I^{\widetilde A'}
\right)
\Biggr]
\nonumber\\
&=
\frac{
I^{A\widetilde A}
\otimes
I^{A'\widetilde A'}
}{
|A|^2
\left(
|A|^2-1
\right)
}
-
\frac{
F^{A\widetilde A}
\otimes
I^{A'\widetilde A'}
}{
|A|^3
\left(
|A|^2-1
\right)
}
\nonumber\\
&\quad-
\frac{
I^{A\widetilde A}
\otimes
F^{A'\widetilde A'}
}{
|A|^3
\left(
|A|^2-1
\right)
}
+
\frac{
F^{A\widetilde A}
\otimes
F^{A'\widetilde A'}
}{
|A|^2
\left(
|A|^2-1
\right)
}.
\label{eq:decoupling_haar_second_moment}
\end{align}
\end{lemma}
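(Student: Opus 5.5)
The plan is to reduce the identity to the standard Weingarten formula for the second moment of a Haar-random unitary. First rewrite each rotated maximally entangled state using the transpose trick. Since \(\ket{\Phi}^{AA'}\) satisfies \((U^{A\dagger}\otimes I^{A'})\ket{\Phi}=(I^A\otimes \overline{U}^{A'})\ket{\Phi}\), with \(\overline U\) the entrywise complex conjugate in the fixed basis, the integrand becomes
\begin{align}
\left(I^{A\widetilde A}\otimes \overline{U}^{A'}\otimes\overline{U}^{\widetilde A'}\right)
\left(\Phi^{AA'}\otimes\Phi^{\widetilde A\widetilde A'}\right)
\left(I^{A\widetilde A}\otimes \overline{U}^{A'}\otimes\overline{U}^{\widetilde A'}\right)^\dagger .
\end{align}
The map \(U\mapsto\overline U\) preserves the Haar measure, so the average is the two-fold twirl \(\mathcal T(Z)\coloneqq\mathbb E_U[(U^{\otimes2})Z(U^{\otimes2})^\dagger]\) acting on the \(A'\widetilde A'\) factor of \(Z=\Phi^{AA'}\otimes\Phi^{\widetilde A\widetilde A'}\), with the \(A\widetilde A\) factor left untouched.

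Next apply the well-known Weingarten formula. For \(d=|A|\ge2\),
\begin{align}
\mathcal T(Z)=\frac{\operatorname{Tr}[Z]-d^{-1}\operatorname{Tr}[FZ]}{d^2-1}\,I+\frac{\operatorname{Tr}[FZ]-d^{-1}\operatorname{Tr}[Z]}{d^2-1}\,F .
\end{align}
Applied with partial traces over \(A'\widetilde A'\), this gives
\begin{align}
\frac{\operatorname{Tr}_{A'\widetilde A'}[Z]-d^{-1}\operatorname{Tr}_{A'\widetilde A'}[F^{A'\widetilde A'}Z]}{d^2-1}\otimes I^{A'\widetilde A'}
+\frac{\operatorname{Tr}_{A'\widetilde A'}[F^{A'\widetilde A'}Z]-d^{-1}\operatorname{Tr}_{A'\widetilde A'}[Z]}{d^2-1}\otimes F^{A'\widetilde A'} .
\end{align}
This formula can be cited from Ref.~\cite{Mele2024introductiontohaar} or derived from Schur--Weyl duality: the twirl maps onto \(\operatorname{span}\{I,F\}\) and preserves \(\operatorname{Tr}[Z]\) and \(\operatorname{Tr}[FZ]\), which gives a \(2\times2\) linear system.

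It remains to evaluate the two partial traces.
\begin{itemize}
\item The first is \(\operatorname{Tr}_{A'\widetilde A'}[\Phi^{AA'}\otimes\Phi^{\widetilde A\widetilde A'}]=\pi^A\otimes\pi^{\widetilde A}=I^{A\widetilde A}/d^2\).
\item For the swap term, write \(\Phi=d^{-1}\sum_{ij}\ket{i}\bra{j}\otimes\ket{i}\bra{j}\). Then
\begin{align}
\operatorname{Tr}_{A'\widetilde A'}\!\left[F^{A'\widetilde A'}(\Phi\otimes\Phi)\right]=d^{-2}\sum_{i,j,k,l}\ket{i}\bra{j}\otimes\ket{k}\bra{l}\,\bra{j}\ket{k}\bra{l}\ket{i},
\end{align}
and the surviving terms have \(k=j\) and \(l=i\), giving \(d^{-2}F^{A\widetilde A}\).
\end{itemize}
Substituting these yields exactly the four coefficients \(1/(d^2(d^2-1))\), \(-1/(d^3(d^2-1))\), \(-1/(d^3(d^2-1))\), \(1/(d^2(d^2-1))\) in \eqref{eq:decoupling_haar_second_moment}.

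The only step needing care is the transpose/conjugation bookkeeping, so that the unitary lands on \(A'\widetilde A'\) with the correct Haar invariance. All the remaining steps are routine index computations. As a consistency check, taking the full trace of both sides should give \(1\).
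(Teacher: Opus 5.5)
Your proof is correct: the transpose trick $(U^\dagger\otimes I)\ket{\Phi}=(I\otimes\overline U)\ket{\Phi}$, the invariance of the Haar measure under $U\mapsto\overline U$, the two-copy twirl formula (whose $2\times2$ system is nonsingular because $I$ and $F$ are linearly independent when $|A|\ge2$), and the partial traces $I^{A\widetilde A}/|A|^2$ and $F^{A\widetilde A}/|A|^2$ together reproduce all four coefficients. This is essentially the paper's route: the paper does not re-derive the identity but treats it as the standard second-moment Weingarten formula and cites Ref.~\cite{Mele2024introductiontohaar} for a derivation.
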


We then derive the following bound on Haar-averaged relative-entropy decoupling error via \(\Xi\)

\begin{proposition}[Bound on Haar-averaged relative-entropy
decoupling error via \(\Xi\)]
\label{prop:decoupling_Xi_reduction}
Let
\(\mathcal H^A\)
be finite-dimensional, and let
\(\mathcal H^R\) and \(\mathcal H^E\)
be separable.
Let
\begin{align}
\rho^{AR}
\in
\operatorname{D}
\left(
\mathcal H^A
\otimes
\mathcal H^R
\right),
\end{align}
and let
\begin{align}
\mathcal E^{A\to E}
:
\operatorname{T}
\left(
\mathcal H^A
\right)
\longrightarrow
\operatorname{T}
\left(
\mathcal H^E
\right)
\end{align}
be a nonzero bounded completely positive map.
Let
\(\widehat{\mathcal E}^{A\to E}\),
\(\omega_{\mathcal E}^{A'E}\),
\(\omega_{\mathcal E}^{E}\), and
\(\widetilde\tau_{\mathcal E,U}^{ER}\)
be defined in
\eqref{eq:finite_input_normalized_map},
\eqref{eq:finite_input_normalized_choi_state_and_marginal}, and
\eqref{eq:finite_input_randomized_output},
respectively.
Then
\begin{align}
&\Xi
\Biggl(
\rho^{AR}
\otimes
\omega_{\mathcal E}^{A'E}
\Biggm\|
\left(
I^A
\otimes
\rho^R
\right)
\otimes
\left(
I^{A'}
\otimes
\omega_{\mathcal E}^{E}
\right)
\Biggr)
<
\infty,
\label{eq:decoupling_Xi_reduction_finiteness}
\end{align}
and
\begin{align}
&\mathbb E_{
U\sim
\mathrm{Haar}
\left(
\operatorname{U}
\left(
\mathcal H^A
\right)
\right)
}
\left[
D
\left(
\widetilde\tau_{\mathcal E,U}^{ER}
\middle\|
\omega_{\mathcal E}^{E}
\otimes
\rho^R
\right)
\right]
\nonumber\\
&\le
C_{|A|}
\Xi
\Biggl(
\rho^{AR}
\otimes
\omega_{\mathcal E}^{A'E}
\Biggm\|
\left(
I^A
\otimes
\rho^R
\right)
\otimes
\left(
I^{A'}
\otimes
\omega_{\mathcal E}^{E}
\right)
\Biggr),
\label{eq:decoupling_Xi_reduction}
\end{align}
where \(\Xi\) is defined in
\eqref{eq:decoupling_logarithmic_increment_definition},
and
\begin{align}
C_{|A|}
\coloneqq
\begin{cases}
0,
&
\text{if \(|A|=1\),}
\\[1mm]
1
+
\displaystyle
\frac{
\log
\left[
\frac{
|A|^2
}{
|A|^2-1
}
\right]
}{
\log
\left[
1+\frac{1}{|A|^2}
\right]
},
&
\text{if \(|A|\ge2\).}
\end{cases}
\label{eq:decoupling_dimension_dependent_constant}
\end{align}
\end{proposition}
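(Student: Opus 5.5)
The plan is to handle \(|A|=1\) directly and, for \(|A|\ge2\), to write the averaged relative entropy as a \(\delta\downarrow0\) limit of regularized traces. Throughout, abbreviate
\begin{align}
X\coloneqq\rho^{AR}\otimes\omega_{\mathcal E}^{A'E},
\qquad
Y\coloneqq\left(I^A\otimes\rho^R\right)\otimes\left(I^{A'}\otimes\omega_{\mathcal E}^{E}\right),
\end{align}
so that \(\operatorname{Tr}[X]=1\) and \(\operatorname{Tr}[Y]=|A|^2\).

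\emph{Case \(|A|=1\) and finiteness.} If \(|A|=1\), every unitary is a phase, so \(\widetilde\tau_{\mathcal E,U}^{ER}=\omega_{\mathcal E}^E\otimes\rho^R\) for every \(U\). The error is then zero, consistent with \(C_1=0\). For \eqref{eq:decoupling_Xi_reduction_finiteness}, apply \eqref{eq:general_domination_by_marginal} to both \(\rho^{AR}\) and the Choi state. This gives \(X\le|A|^2Y\), hence \(Y\le X+Y\le(1+|A|^2)Y\). Operator monotonicity of \(\log\), together with the perturbation formula of Proposition~\ref{prop:relative_entropy_strictly_positive_perturbation}, then bounds both \(D(X+Y\|Y)\) and \(D(Y\|X+Y)\) by finite quantities. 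By the same formula one also gets the identity
\begin{align}
\Xi(X\|Y)=\lim_{\delta\downarrow0}\operatorname{Tr}\left[X\left(\log[X+Y+\delta I]-\log[Y+\delta I]\right)\right],
\end{align}
obtained by writing \(\Xi\) as \(\operatorname{Tr}[(X+Y)(\cdots)]-\operatorname{Tr}[Y(\cdots)]\) at the regularized level.

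\emph{Main estimate (\(|A|\ge2\)).} Fix \(U\). Since each \(D(\widetilde\tau_{\mathcal E,U}^{ER}\|\omega^E_{\mathcal E}\otimes\rho^R)\) is finite (again by bounded domination), Proposition~\ref{prop:relative_entropy_strictly_positive_perturbation} expresses it as a \(\delta\downarrow0\) limit of the corresponding regularized traces. I would interchange the Haar average with this limit by dominated convergence. The needed uniform bound comes from the monotone bounds \(|f_\delta|\le|f|\) in that proof and from the domination \(\widetilde\tau_{\mathcal E,U}\le|A|^2\,\omega_{\mathcal E}^E\otimes\rho^R\), which is uniform in \(U\). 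The regularized expression splits into two terms.
\begin{enumerate}
\item \emph{Cross term.} The term \(\mathbb E_U\operatorname{Tr}[\widetilde\tau_{\mathcal E,U}\log(\omega_{\mathcal E}^E\otimes\rho^R+\delta I)]\) is linear in \(\widetilde\tau_{\mathcal E,U}\). By \eqref{eq:finite_input_average_output} and \(\operatorname{Tr}_{AA'}X=\rho^R\otimes\omega_{\mathcal E}^E\), it equals \(\operatorname{Tr}[X\log(Y+\delta I)]\).
\item \emph{Entropy term.} The term \(\mathbb E_U\operatorname{Tr}[\widetilde\tau_{\mathcal E,U}\log(\widetilde\tau_{\mathcal E,U}+\delta I)]\) is bounded by Proposition~\ref{prop:decoupling_operator_Jensen_reduction}.
\end{enumerate}

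\emph{Evaluating the Haar average inside the logarithm.} Write \(\widetilde\tau_{\mathcal E,U}\) via the Choi representation \eqref{eq:decoupling_trace_Jensen_choi_representation} on the copies \(\widetilde A\widetilde A'\). Then apply Lemma~\ref{lem:decoupling_haar_second_moment} and partial-trace the swap operators, using \(\operatorname{Tr}_{\widetilde A}[F^{A\widetilde A}(I\otimes\rho^{\widetilde AR})]=\rho^{AR}\). The four terms give
\begin{align}
|A|^2\mathbb E_U\left[\Phi_U^{AA'}\otimes\widetilde\tau_{\mathcal E,U}^{ER}\right]
=\frac{|A|^2}{|A|^2-1}\left[X+Y-\frac{1}{|A|}\left(\rho^{AR}\otimes I^{A'}\otimes\omega_{\mathcal E}^E+I^A\otimes\rho^R\otimes\omega_{\mathcal E}^{A'E}\right)\right].
\end{align}
Dropping the subtracted positive operators and using operator monotonicity of \(\log\), the argument of the logarithm is bounded as follows. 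With \(c\coloneqq|A|^2/(|A|^2-1)>1\), it is at most \(c(X+Y)+\delta I\le c(X+Y+\delta I)\). Hence the right-hand side of \eqref{eq:decoupling_trace_Jensen_first_term_bound} is at most \(\operatorname{Tr}[X\log(X+Y+\delta I)]+\log c\), using \(\operatorname{Tr}X=1\). Subtracting the cross term and letting \(\delta\downarrow0\) gives
\begin{align}
\mathbb E_U\left[D\left(\widetilde\tau_{\mathcal E,U}^{ER}\middle\|\omega_{\mathcal E}^{E}\otimes\rho^R\right)\right]\le\Xi(X\|Y)+\log c.
\end{align}

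\emph{Absorbing \(\log c\).} It remains to show \(\log c\le(C_{|A|}-1)\,\Xi(X\|Y)\), which holds if \(\Xi(X\|Y)\ge\log(1+1/|A|^2)\). I would obtain this from the trace (log-sum) lower bound \(D(S\|T)\ge\operatorname{Tr}S\,\log(\operatorname{Tr}S/\operatorname{Tr}T)\), valid by data processing under the trace map for these trace-class operators. Applied to both terms of \(\Xi\), it gives
\begin{align}
\Xi(X\|Y)\ge(1+|A|^2)\log\frac{1+|A|^2}{|A|^2}+|A|^2\log\frac{|A|^2}{1+|A|^2}=\log\left(1+\frac{1}{|A|^2}\right).
\end{align}
This yields exactly the constant \(C_{|A|}\) in \eqref{eq:decoupling_dimension_dependent_constant}.

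I expect the main obstacle to be the rigorous bookkeeping in infinite dimensions. This means justifying the exchange of \(\mathbb E_U\) with \(\lim_{\delta\downarrow0}\) and the splitting of regularized traces into finite pieces. The second-moment contraction and the monotonicity step are routine.
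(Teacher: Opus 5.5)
Your proposal is correct and follows the paper's proof essentially step for step. The shared steps are: finiteness of \(\Xi\) from \(X\le|A|^2Y\); the perturbation formula; Proposition~\ref{prop:decoupling_operator_Jensen_reduction} for the entropy term; the second-moment identity with the two subtracted positive terms dropped and \(c(X+Y)+\delta I\le c(X+Y+\delta I)\); identification of the regularized limit with \(\Xi\); and absorption of \(\log c\) via \(\Xi(X\|Y)\ge\log(1+1/|A|^2)\), obtained from data processing under the trace.

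There are two minor deviations. First, you exchange \(\mathbb E_U\) and \(\delta\downarrow0\) by dominated convergence. This is valid, since \(\widetilde\tau_{\mathcal E,U}^{ER}\le|A|^2\omega_{\mathcal E}^E\otimes\rho^R\) bounds \(|f_\delta(U)|\) uniformly in \(U\) and \(\delta\). The paper needs only the lower bound \(f_\delta\ge-1\) and Fatou's lemma. Second, for finiteness of \(D(X+Y\|Y)\) and \(D(Y\|X+Y)\), do not cite Proposition~\ref{prop:relative_entropy_strictly_positive_perturbation}, since it presupposes finiteness and the argument would be circular. Use antimonotonicity of \(D\) in its second argument instead, or the Fatou half of that proposition's proof.
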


\begin{proof}
\textbf{The case \(\lvert A\rvert=1\).}
If
\(|A|=1\),
then, for every \(U\),
\begin{align}
\widetilde\tau_{\mathcal E,U}^{ER}
=
\omega_{\mathcal E}^{E}
\otimes
\rho^R.
\end{align}
Hence, the Haar-averaged relative entropy vanishes.
Moreover, the two arguments of \(\Xi\) coincide, so that
\begin{align}
\Xi
\left(
\omega_{\mathcal E}^{E}
\otimes
\rho^R
\middle\|
\omega_{\mathcal E}^{E}
\otimes
\rho^R
\right)
=
\log2
<
\infty.
\end{align}
Thus,
\begin{align}
&\mathbb E_{
U\sim
\mathrm{Haar}
\left(
\operatorname{U}
\left(
\mathcal H^A
\right)
\right)
}
\left[
D
\left(
\widetilde\tau_{\mathcal E,U}^{ER}
\middle\|
\omega_{\mathcal E}^{E}
\otimes
\rho^R
\right)
\right]
\nonumber\\
&=
0
=
C_1
\Xi
\left(
\omega_{\mathcal E}^{E}
\otimes
\rho^R
\middle\|
\omega_{\mathcal E}^{E}
\otimes
\rho^R
\right).
\label{eq:decoupling_Xi_reduction_dimension_one}
\end{align}

In the following, suppose that
\(|A|\ge2\).

\textbf{Finiteness of \(\Xi\).}
We first show the finiteness of \(\Xi\) in
\eqref{eq:decoupling_Xi_reduction_finiteness}.
It holds that
\begin{align}
\rho^{AR}
&\le
|A|
\left(
I^A
\otimes
\rho^R
\right),
\\
\omega_{\mathcal E}^{A'E}
&\le
|A|
\left(
I^{A'}
\otimes
\omega_{\mathcal E}^{E}
\right).
\end{align}
Taking the tensor product gives
\begin{align}
\rho^{AR}
\otimes
\omega_{\mathcal E}^{A'E}
\le
|A|^2
\left(
I^A
\otimes
\rho^R
\right)
\otimes
\left(
I^{A'}
\otimes
\omega_{\mathcal E}^{E}
\right).
\label{eq:decoupling_Xi_reduction_first_domination}
\end{align}
Consequently,
\begin{align}
&\left(
I^A
\otimes
\rho^R
\right)
\otimes
\left(
I^{A'}
\otimes
\omega_{\mathcal E}^{E}
\right)
+
\rho^{AR}
\otimes
\omega_{\mathcal E}^{A'E}
\nonumber\\
&\le
\left(
1+|A|^2
\right)
\left(
I^A
\otimes
\rho^R
\right)
\otimes
\left(
I^{A'}
\otimes
\omega_{\mathcal E}^{E}
\right).
\end{align}
This operator inequality implies
\begin{align}
&D
\Biggl(
\left(
I^A
\otimes
\rho^R
\right)
\otimes
\left(
I^{A'}
\otimes
\omega_{\mathcal E}^{E}
\right)
+
\rho^{AR}
\otimes
\omega_{\mathcal E}^{A'E}
\Biggm\|
\left(
I^A
\otimes
\rho^R
\right)
\otimes
\left(
I^{A'}
\otimes
\omega_{\mathcal E}^{E}
\right)
\Biggr)
\nonumber\\
&\le
\operatorname{Tr}
\Biggl[
\left(
I^A
\otimes
\rho^R
\right)
\otimes
\left(
I^{A'}
\otimes
\omega_{\mathcal E}^{E}
\right)
+
\rho^{AR}
\otimes
\omega_{\mathcal E}^{A'E}
\Biggr]
\log
\left[
1+|A|^2
\right]
<
\infty.
\end{align}
Similarly,
\begin{align}
&\left(
I^A
\otimes
\rho^R
\right)
\otimes
\left(
I^{A'}
\otimes
\omega_{\mathcal E}^{E}
\right)
\nonumber\\
&\le
\left(
I^A
\otimes
\rho^R
\right)
\otimes
\left(
I^{A'}
\otimes
\omega_{\mathcal E}^{E}
\right)
+
\rho^{AR}
\otimes
\omega_{\mathcal E}^{A'E}
\end{align}
implies
\begin{align}
&D
\Biggl(
\left(
I^A
\otimes
\rho^R
\right)
\otimes
\left(
I^{A'}
\otimes
\omega_{\mathcal E}^{E}
\right)
\Biggm\|
\left(
I^A
\otimes
\rho^R
\right)
\otimes
\left(
I^{A'}
\otimes
\omega_{\mathcal E}^{E}
\right)
+
\rho^{AR}
\otimes
\omega_{\mathcal E}^{A'E}
\Biggr)
\le
0
<
\infty.
\end{align}
Consequently,
\begin{align}
&\Xi
\Biggl(
\rho^{AR}
\otimes
\omega_{\mathcal E}^{A'E}
\Biggm\|
\left(
I^A
\otimes
\rho^R
\right)
\otimes
\left(
I^{A'}
\otimes
\omega_{\mathcal E}^{E}
\right)
\Biggr)
<
\infty.
\label{eq:decoupling_Xi_reduction_finiteness_proof}
\end{align}
This proves
\eqref{eq:decoupling_Xi_reduction_finiteness}.

\medskip
\noindent
\textbf{Perturbation and Fatou's lemma.}
We first represent the Haar-averaged relative entropy using the
perturbations in
Proposition~\ref{prop:relative_entropy_strictly_positive_perturbation}.
It holds that
\begin{align}
\rho^{AR}
\le
|A|
\left(
I^A
\otimes
\rho^R
\right).
\label{eq:finite_input_rho_dominated_by_marginal}
\end{align}
Consequently,
\begin{align}
\left(
U^A
\otimes
I^R
\right)
\rho^{AR}
\left(
U^{A\dagger}
\otimes
I^R
\right)
\le
|A|
I^A
\otimes
\rho^R.
\end{align}
Applying the positive map
\(\widehat{\mathcal E}^{A\to E}\otimes\id^R\)
and using
\begin{align}
\widehat{\mathcal E}^{A\to E}
\left(
I^A
\right)
=
|A|
\omega_{\mathcal E}^{E}
\end{align}
gives
\begin{align}
\widetilde\tau_{\mathcal E,U}^{ER}
\le
|A|^2
\omega_{\mathcal E}^{E}
\otimes
\rho^R.
\label{eq:decoupling_Xi_reduction_output_domination}
\end{align}
This operator inequality implies, for every \(U\),
\begin{align}
D
\left(
\widetilde\tau_{\mathcal E,U}^{ER}
\middle\|
\omega_{\mathcal E}^{E}
\otimes
\rho^R
\right)
\le
\operatorname{Tr}
\left[
\widetilde\tau_{\mathcal E,U}^{ER}
\right]
\log
\left[
|A|^2
\right]
<
\infty.
\end{align}
Proposition~\ref{prop:relative_entropy_strictly_positive_perturbation}
thus gives
\begin{align}
&D
\left(
\widetilde\tau_{\mathcal E,U}^{ER}
\middle\|
\omega_{\mathcal E}^{E}
\otimes
\rho^R
\right)
\nonumber\\
&=
\lim_{\delta\downarrow0}
\operatorname{Tr}
\Biggl[
\widetilde\tau_{\mathcal E,U}^{ER}
\Biggl(
\log
\left[
\widetilde\tau_{\mathcal E,U}^{ER}
+
\delta I^{ER}
\right]
-
\log
\left[
\omega_{\mathcal E}^{E}
\otimes
\rho^R
+
\delta I^{ER}
\right]
\Biggr)
\Biggr].
\label{eq:decoupling_Xi_reduction_pointwise_limit}
\end{align}

For
\(\delta>0\),
define
\begin{align}
f_\delta(U)
\coloneqq
\operatorname{Tr}
\Biggl[
\widetilde\tau_{\mathcal E,U}^{ER}
\Biggl(
\log
\left[
\widetilde\tau_{\mathcal E,U}^{ER}
+
\delta I^{ER}
\right]
-
\log
\left[
\omega_{\mathcal E}^{E}
\otimes
\rho^R
+
\delta I^{ER}
\right]
\Biggr)
\Biggr].
\end{align}
Equations
\eqref{eq:regularized_logarithm_domination}
and
\eqref{eq:relative_entropy_regularization_negative_part_bound},
together with
\begin{align}
\operatorname{Tr}
\left[
\omega_{\mathcal E}^{E}
\otimes
\rho^R
\right]
=
1,
\end{align}
give
\begin{align}
f_\delta(U)
\ge
-1
\label{eq:decoupling_Xi_reduction_uniform_lower_bound}
\end{align}
uniformly in \(U\) and \(\delta\).

By the sequential characterization of the lower limit, there exists
a sequence
\(\{\delta_n\}_{n=1}^{\infty}\)
satisfying
\(\delta_n\downarrow0\)
such that
\begin{align}
&\lim_{n\to\infty}
\mathbb E_{
U\sim
\mathrm{Haar}
\left(
\operatorname{U}
\left(
\mathcal H^A
\right)
\right)
}
\left[
f_{\delta_n}(U)
\right]
\nonumber\\
&=
\liminf_{\delta\downarrow0}
\mathbb E_{
U\sim
\mathrm{Haar}
\left(
\operatorname{U}
\left(
\mathcal H^A
\right)
\right)
}
\left[
f_\delta(U)
\right].
\end{align}
Fatou's lemma~\cite[Lemma~8.2.13]{tao2023analysis},
applied to the nonnegative functions
\(f_{\delta_n}+1\),
together with
\eqref{eq:decoupling_Xi_reduction_pointwise_limit},
gives
\begin{align}
&\mathbb E_{
U\sim
\mathrm{Haar}
\left(
\operatorname{U}
\left(
\mathcal H^A
\right)
\right)
}
\left[
D
\left(
\widetilde\tau_{\mathcal E,U}^{ER}
\middle\|
\omega_{\mathcal E}^{E}
\otimes
\rho^R
\right)
\right]
\nonumber\\
&=
\mathbb E_{
U\sim
\mathrm{Haar}
\left(
\operatorname{U}
\left(
\mathcal H^A
\right)
\right)
}
\left[
\lim_{n\to\infty}
f_{\delta_n}(U)
\right]
\nonumber\\
&\le
\liminf_{n\to\infty}
\mathbb E_{
U\sim
\mathrm{Haar}
\left(
\operatorname{U}
\left(
\mathcal H^A
\right)
\right)
}
\left[
f_{\delta_n}(U)
\right]
\nonumber\\
&=
\liminf_{\delta\downarrow0}
\mathbb E_{
U\sim
\mathrm{Haar}
\left(
\operatorname{U}
\left(
\mathcal H^A
\right)
\right)
}
\left[
f_\delta(U)
\right],
\label{eq:decoupling_Xi_reduction_Fatou}
\end{align}
where the additive constant arising from applying Fatou's lemma to
\(f_{\delta_n}+1\)
cancels from both sides.

\medskip
\noindent
\textbf{Application of Jensen's operator inequality.}
For every
\(\delta>0\),
Proposition~\ref{prop:decoupling_operator_Jensen_reduction}
gives
\begin{align}
&\mathbb E_{
U\sim
\mathrm{Haar}
\left(
\operatorname{U}
\left(
\mathcal H^A
\right)
\right)
}
\left[
\operatorname{Tr}
\left[
\widetilde\tau_{\mathcal E,U}^{ER}
\log
\left[
\widetilde\tau_{\mathcal E,U}^{ER}
+
\delta I^{ER}
\right]
\right]
\right]
\nonumber\\
&\le
\operatorname{Tr}
\Biggl[
\left(
\rho^{AR}
\otimes
\omega_{\mathcal E}^{A'E}
\right)
\log
\Biggl[
|A|^2
\mathbb E_{
U\sim
\mathrm{Haar}
\left(
\operatorname{U}
\left(
\mathcal H^A
\right)
\right)
}
\Biggl[
\left(
U^{A\dagger}
\otimes
I^{A'}
\right)
\Phi^{AA'}
\left(
U^A
\otimes
I^{A'}
\right)
\otimes
\widetilde\tau_{\mathcal E,U}^{ER}
\Biggr]
+
\delta I^{AA'ER}
\Biggr]
\Biggr].
\label{eq:decoupling_Xi_reduction_first_term_bound}
\end{align}
Moreover,
\begin{align}
&\left(
I^A
\otimes
\rho^R
\right)
\otimes
\left(
I^{A'}
\otimes
\omega_{\mathcal E}^{E}
\right)
=
I^{AA'}
\otimes
\left(
\omega_{\mathcal E}^{E}
\otimes
\rho^R
\right),
\end{align}
and hence
\begin{align}
&\log
\Biggl[
\left(
I^A
\otimes
\rho^R
\right)
\otimes
\left(
I^{A'}
\otimes
\omega_{\mathcal E}^{E}
\right)
+
\delta I^{AA'ER}
\Biggr]
\nonumber\\
&=
I^{AA'}
\otimes
\log
\left[
\omega_{\mathcal E}^{E}
\otimes
\rho^R
+
\delta I^{ER}
\right].
\label{eq:decoupling_Xi_reduction_reference_logarithm}
\end{align}
Since
\begin{align}
\operatorname{Tr}_{AA'}
\left[
\rho^{AR}
\otimes
\omega_{\mathcal E}^{A'E}
\right]
=
\omega_{\mathcal E}^{E}
\otimes
\rho^R,
\end{align}
it follows from
\eqref{eq:finite_input_average_output}
that
\begin{align}
&\operatorname{Tr}
\Biggl[
\left(
\rho^{AR}
\otimes
\omega_{\mathcal E}^{A'E}
\right)
\log
\Biggl[
\left(
I^A
\otimes
\rho^R
\right)
\otimes
\left(
I^{A'}
\otimes
\omega_{\mathcal E}^{E}
\right)
+
\delta I^{AA'ER}
\Biggr]
\Biggr]
\nonumber\\
&=
\mathbb E_{
U\sim
\mathrm{Haar}
\left(
\operatorname{U}
\left(
\mathcal H^A
\right)
\right)
}
\left[
\operatorname{Tr}
\left[
\widetilde\tau_{\mathcal E,U}^{ER}
\log
\left[
\omega_{\mathcal E}^{E}
\otimes
\rho^R
+
\delta I^{ER}
\right]
\right]
\right].
\label{eq:decoupling_Xi_reduction_reference_term}
\end{align}
Subtracting
\eqref{eq:decoupling_Xi_reduction_reference_term}
from
\eqref{eq:decoupling_Xi_reduction_first_term_bound}
gives
\begin{align}
&\mathbb E_{
U\sim
\mathrm{Haar}
\left(
\operatorname{U}
\left(
\mathcal H^A
\right)
\right)
}
\Biggl[
\operatorname{Tr}
\Biggl[
\widetilde\tau_{\mathcal E,U}^{ER}
\Biggl(
\log
\left[
\widetilde\tau_{\mathcal E,U}^{ER}
+
\delta I^{ER}
\right]
-
\log
\left[
\omega_{\mathcal E}^{E}
\otimes
\rho^R
+
\delta I^{ER}
\right]
\Biggr)
\Biggr]
\Biggr]
\nonumber\\
&\le
\operatorname{Tr}
\Biggl[
\left(
\rho^{AR}
\otimes
\omega_{\mathcal E}^{A'E}
\right)
\Biggl(
\log
\Biggl[
|A|^2
\mathbb E_{
U\sim
\mathrm{Haar}
\left(
\operatorname{U}
\left(
\mathcal H^A
\right)
\right)
}
\Biggl[
\left(
U^{A\dagger}
\otimes
I^{A'}
\right)
\Phi^{AA'}
\left(
U^A
\otimes
I^{A'}
\right)
\otimes
\widetilde\tau_{\mathcal E,U}^{ER}
\Biggr]
+
\delta I^{AA'ER}
\Biggr]
\nonumber\\
&\hspace{34mm}
-
\log
\Biggl[
\left(
I^A
\otimes
\rho^R
\right)
\otimes
\left(
I^{A'}
\otimes
\omega_{\mathcal E}^{E}
\right)
+
\delta I^{AA'ER}
\Biggr]
\Biggr)
\Biggr].
\label{eq:decoupling_Xi_reduction_regularized_bound}
\end{align}

\medskip
\noindent
\textbf{Second moment of Haar-random unitaries.}
Then, using the Haar second-moment identity
\eqref{eq:decoupling_haar_second_moment} in Lemma~\ref{lem:decoupling_haar_second_moment}, together with the Choi representation
\eqref{eq:decoupling_trace_Jensen_choi_representation},
one obtains
\begin{align}
&|A|^2
\mathbb E_{
U\sim
\mathrm{Haar}
\left(
\operatorname{U}
\left(
\mathcal H^A
\right)
\right)
}
\Biggl[
\left(
U^{A\dagger}
\otimes
I^{A'}
\right)
\Phi^{AA'}
\left(
U^A
\otimes
I^{A'}
\right)
\otimes
\widetilde\tau_{\mathcal E,U}^{ER}
\Biggr]
\nonumber\\
&=
\frac{
|A|^2
}{
|A|^2-1
}
\Biggl[
\left(
I^A
\otimes
\rho^R
\right)
\otimes
\left(
I^{A'}
\otimes
\omega_{\mathcal E}^{E}
\right)
+
\rho^{AR}
\otimes
\omega_{\mathcal E}^{A'E}
\Biggr]
\nonumber\\
&\quad-
\frac{
|A|
}{
|A|^2-1
}
\Biggl[
\rho^{AR}
\otimes
\left(
I^{A'}
\otimes
\omega_{\mathcal E}^{E}
\right)
+
\left(
I^A
\otimes
\rho^R
\right)
\otimes
\omega_{\mathcal E}^{A'E}
\Biggr].
\label{eq:decoupling_relative_entropy_Jensen_exact_second_moment}
\end{align}
Both terms in the last bracket are positive semidefinite.
Thus, dropping the corresponding subtracted term gives
\begin{align}
&|A|^2
\mathbb E_{
U\sim
\mathrm{Haar}
\left(
\operatorname{U}
\left(
\mathcal H^A
\right)
\right)
}
\Biggl[
\left(
U^{A\dagger}
\otimes
I^{A'}
\right)
\Phi^{AA'}
\left(
U^A
\otimes
I^{A'}
\right)
\otimes
\widetilde\tau_{\mathcal E,U}^{ER}
\Biggr]
\nonumber\\
&\le
\frac{
|A|^2
}{
|A|^2-1
}
\Biggl[
\left(
I^A
\otimes
\rho^R
\right)
\otimes
\left(
I^{A'}
\otimes
\omega_{\mathcal E}^{E}
\right)
+
\rho^{AR}
\otimes
\omega_{\mathcal E}^{A'E}
\Biggr].
\label{eq:decoupling_Xi_reduction_second_moment_upper_bound}
\end{align}
Since
\(|A|^2/(|A|^2-1)>1\),
it follows that, for every
\(\delta>0\),
\begin{align}
&|A|^2
\mathbb E_{
U\sim
\mathrm{Haar}
\left(
\operatorname{U}
\left(
\mathcal H^A
\right)
\right)
}
\Biggl[
\left(
U^{A\dagger}
\otimes
I^{A'}
\right)
\Phi^{AA'}
\left(
U^A
\otimes
I^{A'}
\right)
\otimes
\widetilde\tau_{\mathcal E,U}^{ER}
\Biggr]
+
\delta I^{AA'ER}
\nonumber\\
&\le
\frac{
|A|^2
}{
|A|^2-1
}
\Biggl[
\left(
I^A
\otimes
\rho^R
\right)
\otimes
\left(
I^{A'}
\otimes
\omega_{\mathcal E}^{E}
\right)
+
\rho^{AR}
\otimes
\omega_{\mathcal E}^{A'E}
+
\delta I^{AA'ER}
\Biggr].
\end{align}
Operator monotonicity of the logarithm therefore gives
\begin{align}
&\log
\Biggl[
|A|^2
\mathbb E_{
U\sim
\mathrm{Haar}
\left(
\operatorname{U}
\left(
\mathcal H^A
\right)
\right)
}
\Biggl[
\left(
U^{A\dagger}
\otimes
I^{A'}
\right)
\Phi^{AA'}
\left(
U^A
\otimes
I^{A'}
\right)
\otimes
\widetilde\tau_{\mathcal E,U}^{ER}
\Biggr]
+
\delta I^{AA'ER}
\Biggr]
\nonumber\\
&\le
\log
\left[
\frac{
|A|^2
}{
|A|^2-1
}
\right]
I^{AA'ER}
\nonumber\\
&\quad+
\log
\Biggl[
\left(
I^A
\otimes
\rho^R
\right)
\otimes
\left(
I^{A'}
\otimes
\omega_{\mathcal E}^{E}
\right)
+
\rho^{AR}
\otimes
\omega_{\mathcal E}^{A'E}
+
\delta I^{AA'ER}
\Biggr].
\end{align}
Therefore, the right-hand side of
\eqref{eq:decoupling_Xi_reduction_regularized_bound}
is bounded by
\begin{align}
&\log
\left[
\frac{
|A|^2
}{
|A|^2-1
}
\right]
\nonumber\\
&\quad+
\operatorname{Tr}
\Biggl[
\left(
\rho^{AR}
\otimes
\omega_{\mathcal E}^{A'E}
\right)
\Biggl(
\log
\Biggl[
\left(
I^A
\otimes
\rho^R
\right)
\otimes
\left(
I^{A'}
\otimes
\omega_{\mathcal E}^{E}
\right)
+
\rho^{AR}
\otimes
\omega_{\mathcal E}^{A'E}
+
\delta I^{AA'ER}
\Biggr]
\nonumber\\
&\hspace{34mm}
-
\log
\Biggl[
\left(
I^A
\otimes
\rho^R
\right)
\otimes
\left(
I^{A'}
\otimes
\omega_{\mathcal E}^{E}
\right)
+
\delta I^{AA'ER}
\Biggr]
\Biggr)
\Biggr].
\label{eq:decoupling_Xi_reduction_log_upper_bound}
\end{align}

\medskip
\noindent
\textbf{Bound in \(\Xi\) with an additive constant.}
Combining
\eqref{eq:decoupling_Xi_reduction_Fatou}
and
\eqref{eq:decoupling_Xi_reduction_log_upper_bound}
gives
\begin{align}
&\mathbb E_{
U\sim
\mathrm{Haar}
\left(
\operatorname{U}
\left(
\mathcal H^A
\right)
\right)
}
\left[
D
\left(
\widetilde\tau_{\mathcal E,U}^{ER}
\middle\|
\omega_{\mathcal E}^{E}
\otimes
\rho^R
\right)
\right]
\nonumber\\
&\le
\log
\left[
\frac{
|A|^2
}{
|A|^2-1
}
\right]
\nonumber\\
&\quad+
\liminf_{\delta\downarrow0}
\operatorname{Tr}
\Biggl[
\left(
\rho^{AR}
\otimes
\omega_{\mathcal E}^{A'E}
\right)
\Biggl(
\log
\Biggl[
\left(
I^A
\otimes
\rho^R
\right)
\otimes
\left(
I^{A'}
\otimes
\omega_{\mathcal E}^{E}
\right)
+
\rho^{AR}
\otimes
\omega_{\mathcal E}^{A'E}
+
\delta I^{AA'ER}
\Biggr]
\nonumber\\
&\hspace{34mm}
-
\log
\Biggl[
\left(
I^A
\otimes
\rho^R
\right)
\otimes
\left(
I^{A'}
\otimes
\omega_{\mathcal E}^{E}
\right)
+
\delta I^{AA'ER}
\Biggr]
\Biggr)
\Biggr].
\label{eq:decoupling_Xi_reduction_additive_bound_liminf}
\end{align}

The finiteness established in
\eqref{eq:decoupling_Xi_reduction_finiteness_proof}
allows us to identify this lower limit.
Proposition~\ref{prop:relative_entropy_strictly_positive_perturbation},
applied to the two relative entropies in the definition of \(\Xi\),
gives
\begin{align}
&\lim_{\delta\downarrow0}
\operatorname{Tr}
\Biggl[
\left(
\rho^{AR}
\otimes
\omega_{\mathcal E}^{A'E}
\right)
\Biggl(
\log
\Biggl[
\left(
I^A
\otimes
\rho^R
\right)
\otimes
\left(
I^{A'}
\otimes
\omega_{\mathcal E}^{E}
\right)
+
\rho^{AR}
\otimes
\omega_{\mathcal E}^{A'E}
+
\delta I^{AA'ER}
\Biggr]
\nonumber\\
&\hspace{34mm}
-
\log
\Biggl[
\left(
I^A
\otimes
\rho^R
\right)
\otimes
\left(
I^{A'}
\otimes
\omega_{\mathcal E}^{E}
\right)
+
\delta I^{AA'ER}
\Biggr]
\Biggr)
\Biggr]
\nonumber\\
&=
\Xi
\Biggl(
\rho^{AR}
\otimes
\omega_{\mathcal E}^{A'E}
\Biggm\|
\left(
I^A
\otimes
\rho^R
\right)
\otimes
\left(
I^{A'}
\otimes
\omega_{\mathcal E}^{E}
\right)
\Biggr).
\label{eq:decoupling_Xi_reduction_regularized_limit}
\end{align}
Therefore,
\begin{align}
&\mathbb E_{
U\sim
\mathrm{Haar}
\left(
\operatorname{U}
\left(
\mathcal H^A
\right)
\right)
}
\left[
D
\left(
\widetilde\tau_{\mathcal E,U}^{ER}
\middle\|
\omega_{\mathcal E}^{E}
\otimes
\rho^R
\right)
\right]
\nonumber\\
&\le
\log
\left[
\frac{
|A|^2
}{
|A|^2-1
}
\right]
+
\Xi
\Biggl(
\rho^{AR}
\otimes
\omega_{\mathcal E}^{A'E}
\Biggm\|
\left(
I^A
\otimes
\rho^R
\right)
\otimes
\left(
I^{A'}
\otimes
\omega_{\mathcal E}^{E}
\right)
\Biggr).
\label{eq:decoupling_Xi_reduction_additive_bound}
\end{align}

\medskip
\noindent
\textbf{Multiplicative bound in \(\Xi\).}
The data processing inequality under the trace map gives
\begin{align}
&\Xi
\Biggl(
\rho^{AR}
\otimes
\omega_{\mathcal E}^{A'E}
\Biggm\|
\left(
I^A
\otimes
\rho^R
\right)
\otimes
\left(
I^{A'}
\otimes
\omega_{\mathcal E}^{E}
\right)
\Biggr)
\nonumber\\
&\ge
D
\left(
|A|^2+1
\middle\|
|A|^2
\right)
+
D
\left(
|A|^2
\middle\|
|A|^2+1
\right)
\nonumber\\
&=
\log
\left[
1+\frac{1}{|A|^2}
\right],
\label{eq:decoupling_Xi_reduction_universal_lower_bound}
\end{align}
where
\begin{align}
\operatorname{Tr}
\left[
\rho^{AR}
\otimes
\omega_{\mathcal E}^{A'E}
\right]
&=
1,
\\
\operatorname{Tr}
\Biggl[
\left(
I^A
\otimes
\rho^R
\right)
\otimes
\left(
I^{A'}
\otimes
\omega_{\mathcal E}^{E}
\right)
\Biggr]
&=
|A|^2.
\end{align}
Therefore,
\begin{align}
&\log
\left[
\frac{
|A|^2
}{
|A|^2-1
}
\right]
\nonumber\\
&\le
\frac{
\log
\left[
\frac{
|A|^2
}{
|A|^2-1
}
\right]
}{
\log
\left[
1+\frac{1}{|A|^2}
\right]
}
\Xi
\Biggl(
\rho^{AR}
\otimes
\omega_{\mathcal E}^{A'E}
\Biggm\|
\left(
I^A
\otimes
\rho^R
\right)
\otimes
\left(
I^{A'}
\otimes
\omega_{\mathcal E}^{E}
\right)
\Biggr).
\label{eq:decoupling_Xi_reduction_additive_constant_absorption}
\end{align}
Combining
\eqref{eq:decoupling_Xi_reduction_additive_constant_absorption}
with
\eqref{eq:decoupling_Xi_reduction_additive_bound}
proves
\eqref{eq:decoupling_Xi_reduction}.
\end{proof}

\subsection{Bounds on one-shot decoupling}
\label{subsec:finite_input_one_shot_decoupling_results}

We conclude by combining the Haar-averaged relative-entropy decoupling bound in terms of \(\Xi\) with the optimal
trace inequality for \(\Xi\).

\begin{theorem}[One-shot decoupling for finite-dimensional input and
separable reference and output systems]
\label{thm:finite_input_infinite_reference_decoupling}
Let
\(\mathcal H^A\)
be finite-dimensional, and let
\(\mathcal H^R\) and \(\mathcal H^E\)
be separable and possibly infinite-dimensional.
Let
\begin{align}
\rho^{AR}
\in
\operatorname{D}
\left(
\mathcal H^A
\otimes
\mathcal H^R
\right),
\end{align}
and let
\begin{align}
\mathcal E^{A\to E}
:
\operatorname{T}
\left(
\mathcal H^A
\right)
\longrightarrow
\operatorname{T}
\left(
\mathcal H^E
\right)
\end{align}
be a nonzero bounded completely positive map.
Let
\(\widehat{\mathcal E}^{A\to E}\),
\(\omega_{\mathcal E}^{A'E}\),
\(\omega_{\mathcal E}^{E}\), and
\(\widetilde\tau_{\mathcal E,U}^{ER}\)
be defined in
\eqref{eq:finite_input_normalized_map},
\eqref{eq:finite_input_normalized_choi_state_and_marginal}, and
\eqref{eq:finite_input_randomized_output},
respectively.
Then
\begin{align}
&\mathbb E_{
U\sim
\mathrm{Haar}
\left(
\operatorname{U}
\left(
\mathcal H^A
\right)
\right)
}
\left[
D
\left(
\widetilde\tau_{\mathcal E,U}^{ER}
\middle\|
\omega_{\mathcal E}^{E}
\otimes
\rho^R
\right)
\right]
\nonumber\\
&\le
\inf_{0<s\le1}
C_{|A|}
G_s
\exp
\Biggl[
-s
\Biggl(
\widetilde H_{1+s}(A|R)_\rho
+
\widetilde H_{1+s}(A'|E)_{\omega_{\mathcal E}}
\Biggr)
\Biggr],
\label{eq:finite_input_infinite_reference_decoupling_bound}
\end{align}
where
\(C_{|A|}\)
and
\(G_s\)
are defined in
\eqref{eq:decoupling_dimension_dependent_constant}
and
\eqref{eq:decoupling_Gs_definition},
respectively, and the sandwiched conditional R\'enyi entropies are
defined in
\eqref{eq:sandwiched_conditional_entropy_downarrow}.
In particular, there exists
\begin{align}
U
\in
\operatorname{U}
\left(
\mathcal H^A
\right)
\end{align}
such that
\begin{align}
&D
\left(
\widetilde\tau_{\mathcal E,U}^{ER}
\middle\|
\omega_{\mathcal E}^{E}
\otimes
\rho^R
\right)
\nonumber\\
&\le
\inf_{0<s\le1}
C_{|A|}
G_s
\exp
\Biggl[
-s
\Biggl(
\widetilde H_{1+s}(A|R)_\rho
+
\widetilde H_{1+s}(A'|E)_{\omega_{\mathcal E}}
\Biggr)
\Biggr].
\label{eq:finite_input_infinite_reference_decoupling_existential_bound}
\end{align}
\end{theorem}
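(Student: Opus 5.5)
The plan is to chain Proposition~\ref{prop:decoupling_Xi_reduction} with Proposition~\ref{prop:infinite_dimensional_trace_inequality}, and then to factor the resulting sandwiched R\'enyi quantity over the tensor product.

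\textbf{Trivial case.} If \(|A|=1\), the bound is immediate: \(C_1=0\) and the left-hand side vanishes by \eqref{eq:decoupling_Xi_reduction_dimension_one}.

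\textbf{Main chain.} For \(|A|\ge2\), fix \(s\in(0,1]\) and set
\begin{align}
X\coloneqq\rho^{AR}\otimes\omega_{\mathcal E}^{A'E},
\qquad
Y\coloneqq\left(I^A\otimes\rho^R\right)\otimes\left(I^{A'}\otimes\omega_{\mathcal E}^{E}\right).
\end{align}
Both are nonzero and trace class, since \(A\) is finite-dimensional. The domination \eqref{eq:decoupling_Xi_reduction_first_domination} gives \(\operatorname{supp}(X)\subseteq\operatorname{supp}(Y)\). Finiteness of \(\Xi(X\|Y)\) is \eqref{eq:decoupling_Xi_reduction_finiteness}. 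Proposition~\ref{prop:infinite_dimensional_trace_inequality} therefore applies and yields
\begin{align}
\Xi(X\|Y)\le G_s\,\widetilde Q_{1+s}(X\|Y).
\end{align}
Multiplying by \(C_{|A|}\) and inserting this into \eqref{eq:decoupling_Xi_reduction} bounds the Haar-averaged error by \(C_{|A|}G_s\widetilde Q_{1+s}(X\|Y)\).

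\textbf{Factorization (main obstacle).} The remaining step is to show
\begin{align}
\widetilde Q_{1+s}(X\|Y)
=
\widetilde Q_{1+s}\left(\rho^{AR}\middle\|I^A\otimes\rho^R\right)
\widetilde Q_{1+s}\left(\omega_{\mathcal E}^{A'E}\middle\|I^{A'}\otimes\omega_{\mathcal E}^{E}\right).
\end{align}
By \eqref{eq:sandwiched_conditional_entropy_downarrow} and \eqref{eq:sandwiched_renyi_divergence_infinite_dimensional}, the right-hand side equals
\begin{align}
\exp\left[-s\left(\widetilde H_{1+s}(A|R)_\rho+\widetilde H_{1+s}(A'|E)_{\omega_{\mathcal E}}\right)\right].
\end{align}
I expect this multiplicativity to be the only genuinely infinite-dimensional point, because the definition goes through the operator \(X_{Y,\alpha}\) of \eqref{eq:sandwiched_operator_definition} rather than through a formal product.

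I would argue directly from that definition. Write \(Y=Y_1\otimes Y_2\), where each factor satisfies the domain conditions by Proposition~\ref{prop:conditional_sandwiched_domain_condition}. The generalized power satisfies \((Y_1\otimes Y_2)^{\beta}=Y_1^{\beta}\otimes Y_2^{\beta}\) on the natural domains, via the product spectral measure. Hence the bounded operator
\begin{align}
Y^{\frac{1-\alpha}{2\alpha}}X^{1/2}
=
\left(Y_1^{\frac{1-\alpha}{2\alpha}}X_1^{1/2}\right)\otimes\left(Y_2^{\frac{1-\alpha}{2\alpha}}X_2^{1/2}\right)
\end{align}
holds on the algebraic tensor product. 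By boundedness and closedness it extends to all of \(\mathcal H\). Consequently \(X_{Y,\alpha}=(X_1)_{Y_1,\alpha}\otimes(X_2)_{Y_2,\alpha}\), and the trace of the \(\alpha\)-th power factorizes as a product of nonnegative series.

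If this operator-level argument becomes delicate, the fallback is finite-rank approximation. Use projections of product form \(\Pi_n\otimes\Pi'_n\), apply \eqref{eq:sandwiched_quasi_divergence_finite_rank_convergence} to the joint operator and to each factor, and invoke multiplicativity in finite dimensions.

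\textbf{Conclusion.} Taking the infimum over \(s\in(0,1]\) gives \eqref{eq:finite_input_infinite_reference_decoupling_bound}. For the existential statement \eqref{eq:finite_input_infinite_reference_decoupling_existential_bound}, note that the integrand is measurable and bounded below (as in the proof of Proposition~\ref{prop:decoupling_Xi_reduction}), and its Haar mean is at most the bound. Therefore the set of unitaries \(U\) at which the integrand is at most its mean has positive Haar measure, and any such \(U\) satisfies \eqref{eq:finite_input_infinite_reference_decoupling_existential_bound}.
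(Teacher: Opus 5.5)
Your proposal is correct and follows the paper's proof essentially step for step: the \(|A|=1\) case via \(C_1=0\), chaining Proposition~\ref{prop:decoupling_Xi_reduction} with Proposition~\ref{prop:infinite_dimensional_trace_inequality}, multiplicativity of \(\widetilde Q_{1+s}\) across the tensor product, and averaging over \(U\) for the existential claim. The only difference is that the paper simply invokes multiplicativity, whereas you justify it at the level of the operator \(X_{Y,\alpha}\) in infinite dimensions, which is a valid and welcome addition; in the \(|A|=1\) case, also note that both conditional entropies vanish, so the right-hand side is genuinely \(0\) rather than \(0\cdot\infty\).
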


\begin{proof}
\textbf{The case \(\lvert A\rvert=1\).}
Suppose first that
\(|A|=1\).
Then
\begin{align}
\rho^{AR}
=
\ket{0}\bra{0}^{A}
\otimes
\rho^R,
\end{align}
and, for every \(U\),
\begin{align}
\widetilde\tau_{\mathcal E,U}^{ER}
=
\omega_{\mathcal E}^{E}
\otimes
\rho^R.
\end{align}
Consequently,
\begin{align}
\mathbb E_{
U\sim
\mathrm{Haar}
\left(
\operatorname{U}
\left(
\mathcal H^A
\right)
\right)
}
\left[
D
\left(
\widetilde\tau_{\mathcal E,U}^{ER}
\middle\|
\omega_{\mathcal E}^{E}
\otimes
\rho^R
\right)
\right]
=
0.
\end{align}
Moreover, for every
\(s\in(0,1]\),
\begin{align}
\widetilde Q_{1+s}
\left(
\rho^{AR}
\middle\|
I^A
\otimes
\rho^R
\right)
&=
1,
\\
\widetilde Q_{1+s}
\left(
\omega_{\mathcal E}^{A'E}
\middle\|
I^{A'}
\otimes
\omega_{\mathcal E}^{E}
\right)
&=
1,
\end{align}
and hence
\begin{align}
\widetilde H_{1+s}(A|R)_\rho
&=
0,
\\
\widetilde H_{1+s}(A'|E)_{\omega_{\mathcal E}}
&=
0.
\end{align}
Since
\(C_1=0\),
the expression inside the infimum on the right-hand side of
\eqref{eq:finite_input_infinite_reference_decoupling_bound}
is equal to zero for every
\(s\in(0,1]\).
Thus, both sides of
\eqref{eq:finite_input_infinite_reference_decoupling_bound}
are equal to zero.

\medskip
\noindent
\textbf{The case \(\lvert A\rvert\ge2\).}
Suppose henceforth that
\(|A|\ge2\),
and fix
\(s\in(0,1]\).
The operator inequality
\eqref{eq:decoupling_Xi_reduction_first_domination}
implies
\begin{align}
\operatorname{supp}
\left(
\rho^{AR}
\otimes
\omega_{\mathcal E}^{A'E}
\right)
\subseteq
\operatorname{supp}
\Biggl[
\left(
I^A
\otimes
\rho^R
\right)
\otimes
\left(
I^{A'}
\otimes
\omega_{\mathcal E}^{E}
\right)
\Biggr],
\end{align}
and
Proposition~\ref{prop:decoupling_Xi_reduction}
gives
\eqref{eq:decoupling_Xi_reduction_finiteness}.
Thus, the assumptions of
Proposition~\ref{prop:infinite_dimensional_trace_inequality}
are satisfied.
Combining
Propositions~\ref{prop:decoupling_Xi_reduction}
and
\ref{prop:infinite_dimensional_trace_inequality}
gives
\begin{align}
&\mathbb E_{
U\sim
\mathrm{Haar}
\left(
\operatorname{U}
\left(
\mathcal H^A
\right)
\right)
}
\left[
D
\left(
\widetilde\tau_{\mathcal E,U}^{ER}
\middle\|
\omega_{\mathcal E}^{E}
\otimes
\rho^R
\right)
\right]
\nonumber\\
&\le
C_{|A|}
G_s
\widetilde Q_{1+s}
\Biggl(
\rho^{AR}
\otimes
\omega_{\mathcal E}^{A'E}
\Biggm\|
\left(
I^A
\otimes
\rho^R
\right)
\otimes
\left(
I^{A'}
\otimes
\omega_{\mathcal E}^{E}
\right)
\Biggr).
\label{eq:finite_input_decoupling_before_product_factorization}
\end{align}
Multiplicativity of the sandwiched quasi-divergence gives
\begin{align}
&\widetilde Q_{1+s}
\Biggl(
\rho^{AR}
\otimes
\omega_{\mathcal E}^{A'E}
\Biggm\|
\left(
I^A
\otimes
\rho^R
\right)
\otimes
\left(
I^{A'}
\otimes
\omega_{\mathcal E}^{E}
\right)
\Biggr)
\nonumber\\
&=
\widetilde Q_{1+s}
\left(
\rho^{AR}
\middle\|
I^A
\otimes
\rho^R
\right)
\widetilde Q_{1+s}
\left(
\omega_{\mathcal E}^{A'E}
\middle\|
I^{A'}
\otimes
\omega_{\mathcal E}^{E}
\right).
\label{eq:finite_input_decoupling_sandwiched_product}
\end{align}
By the definition in
\eqref{eq:sandwiched_conditional_entropy_downarrow},
\begin{align}
\widetilde Q_{1+s}
\left(
\rho^{AR}
\middle\|
I^A
\otimes
\rho^R
\right)
&=
\exp
\left[
-s
\widetilde H_{1+s}(A|R)_\rho
\right],
\\
\widetilde Q_{1+s}
\left(
\omega_{\mathcal E}^{A'E}
\middle\|
I^{A'}
\otimes
\omega_{\mathcal E}^{E}
\right)
&=
\exp
\left[
-s
\widetilde H_{1+s}(A'|E)_{\omega_{\mathcal E}}
\right].
\end{align}
Substituting these identities and
\eqref{eq:finite_input_decoupling_sandwiched_product}
into
\eqref{eq:finite_input_decoupling_before_product_factorization}
gives
\begin{align}
&\mathbb E_{
U\sim
\mathrm{Haar}
\left(
\operatorname{U}
\left(
\mathcal H^A
\right)
\right)
}
\left[
D
\left(
\widetilde\tau_{\mathcal E,U}^{ER}
\middle\|
\omega_{\mathcal E}^{E}
\otimes
\rho^R
\right)
\right]
\nonumber\\
&\le
C_{|A|}
G_s
\exp
\Biggl[
-s
\Biggl(
\widetilde H_{1+s}(A|R)_\rho
+
\widetilde H_{1+s}(A'|E)_{\omega_{\mathcal E}}
\Biggr)
\Biggr].
\end{align}
Since this inequality holds for every
\(s\in(0,1]\),
taking the infimum over \(s\) proves
\eqref{eq:finite_input_infinite_reference_decoupling_bound}.

The existence statement
\eqref{eq:finite_input_infinite_reference_decoupling_existential_bound}
follows because there exists a unitary whose fixed-unitary relative
entropy is no larger than its Haar average.
\end{proof}

\begin{remark}[One-shot decoupling in purified distance and
trace distance]
Suppose, in addition to the assumptions of
Theorem~\ref{thm:finite_input_infinite_reference_decoupling},
that the normalized completely positive map
\(\widehat{\mathcal E}^{A\to E}\)
is trace-preserving.
Then
\(\widetilde\tau_{\mathcal E,U}^{ER}\)
is a normalized state for every
\(U\in\operatorname{U}(\mathcal H^A)\).
Therefore,
\eqref{eq:decoupling_relative_entropy_controls_PD}
and
\eqref{eq:decoupling_relative_entropy_controls_trace_distance},
together with
Theorem~\ref{thm:finite_input_infinite_reference_decoupling},
imply
\begin{align}
&\mathbb E_{
U\sim
\mathrm{Haar}
\left(
\operatorname{U}
\left(
\mathcal H^A
\right)
\right)
}
\left[
P
\left(
\widetilde\tau_{\mathcal E,U}^{ER},
\omega_{\mathcal E}^{E}
\otimes
\rho^R
\right)^2
\right]
\nonumber\\
&\le
\inf_{0<s\le1}
C_{|A|}
G_s
\exp
\Biggl[
-s
\Biggl(
\widetilde H_{1+s}(A|R)_\rho
+
\widetilde H_{1+s}(A'|E)_{\omega_{\mathcal E}}
\Biggr)
\Biggr],
\label{eq:finite_input_decoupling_purified_distance_bound}
\\
&\mathbb E_{
U\sim
\mathrm{Haar}
\left(
\operatorname{U}
\left(
\mathcal H^A
\right)
\right)
}
\left[
\frac{1}{4}
\left\|
\widetilde\tau_{\mathcal E,U}^{ER}
-
\omega_{\mathcal E}^{E}
\otimes
\rho^R
\right\|_1^2
\right]
\nonumber\\
&\le
\inf_{0<s\le1}
C_{|A|}
G_s
\exp
\Biggl[
-s
\Biggl(
\widetilde H_{1+s}(A|R)_\rho
+
\widetilde H_{1+s}(A'|E)_{\omega_{\mathcal E}}
\Biggr)
\Biggr].
\label{eq:finite_input_decoupling_trace_distance_bound}
\end{align}
\end{remark}

As a direct consequence of
Theorem~\ref{thm:finite_input_infinite_reference_decoupling},
we obtain the following asymptotic error-exponent bound.

\begin{corollary}[Asymptotic error exponent of one-shot decoupling for
finite-dimensional input and separable reference and output systems]
\label{cor:finite_input_infinite_reference_decoupling_exponent}
Let
\(\mathcal H^A\)
be finite-dimensional, and let
\(\mathcal H^R\) and \(\mathcal H^E\)
be separable and possibly infinite-dimensional.
Let
\begin{align}
\rho^{AR}
\in
\operatorname{D}
\left(
\mathcal H^A
\otimes
\mathcal H^R
\right),
\end{align}
and let
\begin{align}
\mathcal E^{A\to E}
:
\operatorname{T}
\left(
\mathcal H^A
\right)
\longrightarrow
\operatorname{T}
\left(
\mathcal H^E
\right)
\end{align}
be a nonzero bounded completely positive map.
Let
\(\widehat{\mathcal E}^{A\to E}\),
\(\omega_{\mathcal E}^{A'E}\),
\(\omega_{\mathcal E}^{E}\), and
\(\widetilde\tau_{\mathcal E,U}^{ER}\)
be defined in
\eqref{eq:finite_input_normalized_map},
\eqref{eq:finite_input_normalized_choi_state_and_marginal}, and
\eqref{eq:finite_input_randomized_output},
respectively.
Then
\begin{align}
&\liminf_{n\to\infty}
-\frac{1}{n}
\log
\mathbb E_{
U_n\sim
\mathrm{Haar}
\left(
\operatorname{U}
\left(
(\mathcal H^A)^{\otimes n}
\right)
\right)
}
\left[
D
\left(
\widetilde\tau_{
\mathcal E^{\otimes n},U_n
}^{E^nR^n}
\middle\|
\left(
\omega_{\mathcal E}^{E}
\right)^{\otimes n}
\otimes
\left(
\rho^R
\right)^{\otimes n}
\right)
\right]
\nonumber\\
&\ge
\sup_{0<s\le1}
s
\Biggl(
\widetilde H_{1+s}(A|R)_\rho
+
\widetilde H_{1+s}(A'|E)_{\omega_{\mathcal E}}
\Biggr).
\label{eq:finite_input_infinite_reference_decoupling_exponent}
\end{align}
\end{corollary}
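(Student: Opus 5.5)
The plan is to apply Theorem~\ref{thm:finite_input_infinite_reference_decoupling} to the $n$-fold problem. The input system is $A^n$, with $|A^n|=|A|^n$ finite. The reference $R^n$ and output $E^n$ are separable. The state is $(\rho^{AR})^{\otimes n}$ and the map is the nonzero bounded completely positive map $\mathcal E^{\otimes n}$.

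First I will check that the normalized objects factorize. Since $\pi^{A^n}=(\pi^A)^{\otimes n}$, we have $\kappa_{\mathcal E^{\otimes n}}=\kappa_{\mathcal E}^n$ and $\widehat{\mathcal E^{\otimes n}}=\widehat{\mathcal E}^{\otimes n}$. The normalized Choi state is $\omega_{\mathcal E^{\otimes n}}^{A'^nE^n}=(\omega_{\mathcal E}^{A'E})^{\otimes n}$, because $\Phi^{A^nA'^n}=(\Phi^{AA'})^{\otimes n}$. Its marginal is $(\omega_{\mathcal E}^E)^{\otimes n}$, and the reference marginal is $(\rho^R)^{\otimes n}$. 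So the reference operator in the corollary is exactly the one appearing in the theorem. Fixing $s\in(0,1]$, the theorem then gives
\begin{align}
\mathbb E_{U_n}\left[D\left(\widetilde\tau_{\mathcal E^{\otimes n},U_n}^{E^nR^n}\middle\|(\omega_{\mathcal E}^E)^{\otimes n}\otimes(\rho^R)^{\otimes n}\right)\right]
\le C_{|A|^n}G_s\exp\left[-s\left(\widetilde H_{1+s}(A^n|R^n)_{\rho^{\otimes n}}+\widetilde H_{1+s}(A'^n|E^n)_{\omega_{\mathcal E}^{\otimes n}}\right)\right].
\end{align}

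Next comes additivity. By the multiplicativity of $\widetilde Q_{1+s}$ under tensor products, already used in~\eqref{eq:finite_input_decoupling_sandwiched_product}, we get $\widetilde Q_{1+s}(\rho^{\otimes n}\|(I^A\otimes\rho^R)^{\otimes n})=\widetilde Q_{1+s}(\rho^{AR}\|I^A\otimes\rho^R)^n$. The analogous identity holds for $\omega_{\mathcal E}$. Hence both conditional entropies scale by $n$, and the exponent becomes $-ns(\cdot)$.

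Then I control the prefactor. If $|A|=1$, the left side is $0$ (since $C_1=0$) and $-\log 0=+\infty$, so the inequality is trivial. If $|A|\ge2$, I use the expansions $\log(|A|^{2n}/(|A|^{2n}-1))\sim |A|^{-2n}$ and $\log(1+|A|^{-2n})\sim|A|^{-2n}$. These give $C_{|A|^n}\to 2$. So $C_{|A|^n}$ is bounded, say by $3$ for large $n$. Since $G_s<\infty$ for $s\in(0,1]$, taking $-\frac1n\log$ yields
\begin{align}
-\frac1n\log\mathbb E[\cdots]\ge s\left(\widetilde H_{1+s}(A|R)_\rho+\widetilde H_{1+s}(A'|E)_{\omega_{\mathcal E}}\right)-\frac1n\log\left(C_{|A|^n}G_s\right).
\end{align}
Taking $\liminf_{n\to\infty}$ removes the last term, and taking the supremum over $s$ gives~\eqref{eq:finite_input_infinite_reference_decoupling_exponent}.

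One subtlety remains. The Haar-averaged relative entropy is nonnegative by~\eqref{eq:finite_input_average_relative_entropy_nonnegative}, and could be $0$. In that case the $\log$ is read as $-\infty$, and the bound holds trivially.

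The only nonroutine point is the multiplicativity of $\widetilde Q_{1+s}$ for infinite-dimensional $R$ and $E$. I will take it from the paper's earlier use, justifying it through the factorization~\eqref{eq:sandwiched_operator_definition}: $X_{Y,\alpha}$ for a tensor product is the tensor product of the individual $X_{Y,\alpha}$'s. The trace of a tensor power of positive operators then factorizes. The other possible obstacle is the limit of $C_{|A|^n}$, but it is a direct computation.
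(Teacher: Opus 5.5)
Your proposal is correct and follows the paper's proof essentially step by step. You apply Theorem~\ref{thm:finite_input_infinite_reference_decoupling} to $(\rho^{AR})^{\otimes n}$ and $\mathcal E^{\otimes n}$ using the tensor-product factorization of the normalized Choi state, obtain additivity of $\widetilde H_{1+s}$ from multiplicativity of $\widetilde Q_{1+s}$, dispose of the degenerate case $C_{|A|^n}=0$ (i.e., $|A|=1$) via nonnegativity of the Haar-averaged relative entropy, and use $C_{|A|^n}\to 2$ to make the prefactor subexponential before taking the $\liminf$ and the supremum over $s$. Your justification of multiplicativity through the factorization of $X_{Y,\alpha}$ is a detail the paper only asserts.
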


\begin{proof}
Fix
\(s\in(0,1]\).
The normalization in
\eqref{eq:finite_input_normalized_map}
is multiplicative under tensor products.
Hence, for every
\(n\in\{1,2,\ldots\}\),
the normalized Choi state and its marginal associated with
\(\mathcal E^{\otimes n}\)
are
\begin{align}
\omega_{\mathcal E^{\otimes n}}^{(A')^nE^n}
&=
\left(
\omega_{\mathcal E}^{A'E}
\right)^{\otimes n},
\\
\omega_{\mathcal E^{\otimes n}}^{E^n}
&=
\left(
\omega_{\mathcal E}^{E}
\right)^{\otimes n}.
\end{align}
Applying
Theorem~\ref{thm:finite_input_infinite_reference_decoupling}
to
\(\rho^{\otimes n}\)
and
\(\mathcal E^{\otimes n}\)
gives
\begin{align}
&\mathbb E_{
U_n\sim
\mathrm{Haar}
\left(
\operatorname{U}
\left(
(\mathcal H^A)^{\otimes n}
\right)
\right)
}
\left[
D
\left(
\widetilde\tau_{
\mathcal E^{\otimes n},U_n
}^{E^nR^n}
\middle\|
\left(
\omega_{\mathcal E}^{E}
\right)^{\otimes n}
\otimes
\left(
\rho^R
\right)^{\otimes n}
\right)
\right]
\nonumber\\
&\le
\inf_{0<t\le1}
C_{|A|^n}
G_t
\exp
\Biggl[
-t
\Biggl(
\widetilde H_{1+t}
\left(
A^n
\middle|
R^n
\right)_{\rho^{\otimes n}}
+
\widetilde H_{1+t}
\left(
(A')^n
\middle|
E^n
\right)_{\omega_{\mathcal E}^{\otimes n}}
\Biggr)
\Biggr]
\nonumber\\
&\le
C_{|A|^n}
G_s
\exp
\Biggl[
-s
\Biggl(
\widetilde H_{1+s}
\left(
A^n
\middle|
R^n
\right)_{\rho^{\otimes n}}
+
\widetilde H_{1+s}
\left(
(A')^n
\middle|
E^n
\right)_{\omega_{\mathcal E}^{\otimes n}}
\Biggr)
\Biggr].
\label{eq:finite_input_infinite_reference_decoupling_iid_bound}
\end{align}
By multiplicativity of the sandwiched quasi-divergence, the
sandwiched conditional R\'enyi entropy is additive under tensor
products.
Therefore,
\begin{align}
\widetilde H_{1+s}
\left(
A^n
\middle|
R^n
\right)_{\rho^{\otimes n}}
&=
n
\widetilde H_{1+s}(A|R)_\rho,
\\
\widetilde H_{1+s}
\left(
(A')^n
\middle|
E^n
\right)_{\omega_{\mathcal E}^{\otimes n}}
&=
n
\widetilde H_{1+s}(A'|E)_{\omega_{\mathcal E}}.
\end{align}
Substituting these identities into
\eqref{eq:finite_input_infinite_reference_decoupling_iid_bound}
gives
\begin{align}
&\mathbb E_{
U_n\sim
\mathrm{Haar}
\left(
\operatorname{U}
\left(
(\mathcal H^A)^{\otimes n}
\right)
\right)
}
\left[
D
\left(
\widetilde\tau_{
\mathcal E^{\otimes n},U_n
}^{E^nR^n}
\middle\|
\left(
\omega_{\mathcal E}^{E}
\right)^{\otimes n}
\otimes
\left(
\rho^R
\right)^{\otimes n}
\right)
\right]
\nonumber\\
&\le
C_{|A|^n}
G_s
\exp
\Biggl[
-ns
\Biggl(
\widetilde H_{1+s}(A|R)_\rho
+
\widetilde H_{1+s}(A'|E)_{\omega_{\mathcal E}}
\Biggr)
\Biggr].
\label{eq:finite_input_infinite_reference_decoupling_iid_additive_bound}
\end{align}

Using the conventions
\begin{align}
\log0
&=
-\infty,
&
-\log0
&=
+\infty,
\end{align}
we may apply the extended-real logarithm to
\eqref{eq:finite_input_infinite_reference_decoupling_iid_additive_bound}.
If
\(C_{|A|^n}=0\),
then
\eqref{eq:finite_input_average_relative_entropy_nonnegative}
and
\eqref{eq:finite_input_infinite_reference_decoupling_iid_additive_bound}
imply that the Haar-averaged relative entropy is zero.
Consequently, the left-hand side of
\eqref{eq:finite_input_infinite_reference_decoupling_exponent}
is \(+\infty\), and the claim follows immediately.

It remains to consider the case in which
\begin{align}
C_{|A|^n}
>
0
\qquad
\text{for every
\(n\in\{1,2,\ldots\}\)}.
\end{align}
Taking the logarithm of
\eqref{eq:finite_input_infinite_reference_decoupling_iid_additive_bound}
and dividing by \(-n\) gives
\begin{align}
&-\frac{1}{n}
\log
\mathbb E_{
U_n\sim
\mathrm{Haar}
\left(
\operatorname{U}
\left(
(\mathcal H^A)^{\otimes n}
\right)
\right)
}
\left[
D
\left(
\widetilde\tau_{
\mathcal E^{\otimes n},U_n
}^{E^nR^n}
\middle\|
\left(
\omega_{\mathcal E}^{E}
\right)^{\otimes n}
\otimes
\left(
\rho^R
\right)^{\otimes n}
\right)
\right]
\nonumber\\
&\ge
s
\Biggl(
\widetilde H_{1+s}(A|R)_\rho
+
\widetilde H_{1+s}(A'|E)_{\omega_{\mathcal E}}
\Biggr)
-
\frac{1}{n}
\log
\left[
C_{|A|^n}
G_s
\right].
\label{eq:finite_input_infinite_reference_decoupling_exponent_fixed_s}
\end{align}
By
\eqref{eq:decoupling_dimension_dependent_constant},
\begin{align}
\lim_{n\to\infty}
C_{|A|^n}
=
2.
\end{align}
Since
\(G_s\)
is positive, finite, and independent of \(n\),
\begin{align}
\lim_{n\to\infty}
\frac{1}{n}
\log
\left[
C_{|A|^n}
G_s
\right]
=
0.
\label{eq:finite_input_infinite_reference_decoupling_prefactor_rate}
\end{align}
Taking the lower limit in
\eqref{eq:finite_input_infinite_reference_decoupling_exponent_fixed_s}
and using
\eqref{eq:finite_input_infinite_reference_decoupling_prefactor_rate}
gives
\begin{align}
&\liminf_{n\to\infty}
-\frac{1}{n}
\log
\mathbb E_{
U_n\sim
\mathrm{Haar}
\left(
\operatorname{U}
\left(
(\mathcal H^A)^{\otimes n}
\right)
\right)
}
\left[
D
\left(
\widetilde\tau_{
\mathcal E^{\otimes n},U_n
}^{E^nR^n}
\middle\|
\left(
\omega_{\mathcal E}^{E}
\right)^{\otimes n}
\otimes
\left(
\rho^R
\right)^{\otimes n}
\right)
\right]
\nonumber\\
&\ge
s
\Biggl(
\widetilde H_{1+s}(A|R)_\rho
+
\widetilde H_{1+s}(A'|E)_{\omega_{\mathcal E}}
\Biggr).
\end{align}
Since this inequality holds for every
\(s\in(0,1]\),
taking the supremum over \(s\) proves
\eqref{eq:finite_input_infinite_reference_decoupling_exponent}.
\end{proof}

\begin{remark}[Converse for one-shot decoupling]
\label{remark:converse_one_shot_decoupling}
Due to the converse bounds shown in Ref.~\cite[Theorem~5, Corollary~6, and Theorem~7]{berta2026tightanyshotquantumdecoupling}, the asymptotic decoupling-exponent bound achieved in
Corollary~\ref{cor:finite_input_infinite_reference_decoupling_exponent}
is ensemble-tight up to a critical rate when the reference system \(\mathcal H^R\)
is finite-dimensional and
\(\mathcal E^{A\to E}\)
is a partial-trace map.
The converse bound on the decoupling exponent in
Ref.~\cite[Theorem~5]{
berta2026tightanyshotquantumdecoupling}
relies on a pinching inequality
\cite[Lemma~9 and its proof]{hayashi2002optimal}
with respect to the spectral decomposition of the reference marginal.
Extending this converse to an infinite-dimensional reference system may therefore require different techniques or additional assumptions, and we leave the analysis of such extensions for future work.
Nevertheless, the contribution of the present work is to provide the decoupling protocol that applies regardless of whether the reference system is finite- or infinite-dimensional and provably achieves the ensemble-tight exponent when the reference system is finite-dimensional.
\end{remark}

\section{Infinite-dimensional IID decoupling}
\label{sec:infinite_dimensional_IID_decoupling}

In this section, we formulate and analyze IID decoupling by partial
trace for separable and possibly infinite-dimensional input and
reference systems.
In
Subsection~\ref{subsec:formulation_infinite_dimensional_IID_decoupling},
we define the infinite-dimensional IID decoupling task, its error
criteria, and the asymptotic dimension rates considered below.
In
Subsection~\ref{subsec:finite_rank_spectral_cutoff},
for each fixed block length, we construct finite-rank spectral cutoffs
for the corresponding block of IID states.
In
Subsection~\ref{subsec:high_probability_finite_rank_IID_decoupling},
we use these blockwise cutoffs to construct high-probability
finite-rank projections on the entire \(n\)-copy input space.
In
Subsection~\ref{subsec:infinite_dimensional_IID_decoupling_achievability},
we combine these approximation tools with random-unitary decoupling
arguments for finite-dimensional input systems and separable,
possibly infinite-dimensional reference systems to prove the
achievability bounds on infinite-dimensional IID decoupling.
Finally, in
Subsection~\ref{subsec:converse_infinite_dimensional_IID_decoupling},
we prove the corresponding converse bounds and thereby identify the
optimal asymptotic rates.

\subsection{Formulation of infinite-dimensional IID decoupling}
\label{subsec:formulation_infinite_dimensional_IID_decoupling}

We formulate the infinite-dimensional IID decoupling task via partial
trace.

Let
\(\mathcal H^A\)
and
\(\mathcal H^R\)
be separable and possibly infinite-dimensional, and let
\begin{align}
\rho^{AR}
\in
\operatorname{D}
\left(
\mathcal H^A
\otimes
\mathcal H^R
\right).
\end{align}
Throughout this section, we assume that
\begin{align}
H(A)_\rho
<
\infty,
\label{eq:IID_decoupling_finite_entropy_assumption}
\end{align}
which may be viewed as an energy constraint, as discussed in
\eqref{eq:finite_entropy_finite_energy_condition}.

For every
\(n\in\{1,2,\ldots\}\),
let
\(\Pi_n^{A^n}\)
be a nonzero finite-rank projection on
\(\mathcal H^{A^n}\), and define
\begin{align}
\Pi_n^{\perp A^n}
&\coloneqq
I^{A^n}
-
\Pi_n^{A^n},
\label{eq:IID_decoupling_projection_complement}
\\
\delta_{\Pi_n}
&\coloneqq
\operatorname{Tr}
\left[
\Pi_n^{\perp A^n}
\left(
\rho^A
\right)^{\otimes n}
\right].
\label{eq:IID_decoupling_projection_error}
\end{align}
Thus,
\(\delta_{\Pi_n}\)
is the probability of obtaining the failure outcome when performing
the binary projective measurement
\(\{\Pi_n^{A^n},\Pi_n^{\perp A^n}\}\)
on
\((\rho^A)^{\otimes n}\).
We assume that
\begin{align}
\delta_{\Pi_n}
<
1.
\end{align}

Define the projected finite-dimensional input space and its dimension
by
\begin{align}
\mathcal H^{A_{\Pi_n}^n}
&\coloneqq
\Pi_n^{A^n}
\mathcal H^{A^n},
\label{eq:IID_decoupling_projected_input_space}
\\
\left|
A_{\Pi_n}^n
\right|
&\coloneqq
\operatorname{rank}
\Pi_n^{A^n}.
\label{eq:IID_decoupling_projected_input_dimension}
\end{align}
Conditioned on successful projection, the normalized projected state
and its input and reference marginals are
\begin{align}
\rho_{\Pi_n}^{A_{\Pi_n}^nR^n}
&\coloneqq
\frac{
\left(
\Pi_n^{A^n}
\otimes
I^{R^n}
\right)
\left(
\rho^{AR}
\right)^{\otimes n}
\left(
\Pi_n^{A^n}
\otimes
I^{R^n}
\right)
}{
1-\delta_{\Pi_n}
},
\label{eq:IID_decoupling_projected_state}
\\
\rho_{\Pi_n}^{A_{\Pi_n}^n}
&\coloneqq
\operatorname{Tr}_{R^n}
\left[
\rho_{\Pi_n}^{A_{\Pi_n}^nR^n}
\right],
\label{eq:IID_decoupling_projected_input_marginal}
\\
\rho_{\Pi_n}^{R^n}
&\coloneqq
\operatorname{Tr}_{A_{\Pi_n}^n}
\left[
\rho_{\Pi_n}^{A_{\Pi_n}^nR^n}
\right].
\label{eq:IID_decoupling_projected_reference}
\end{align}

Define the complementary input space by
\begin{align}
\mathcal H^{A_{\Pi_n^\perp}^n}
\coloneqq
\Pi_n^{\perp A^n}
\mathcal H^{A^n}.
\label{eq:IID_decoupling_complementary_input_space}
\end{align}
Whenever
\(\delta_{\Pi_n}>0\),
the normalized state conditioned on the complementary outcome and its
input and reference marginals are defined by
\begin{align}
\rho_{\Pi_n^\perp}^{A_{\Pi_n^\perp}^nR^n}
&\coloneqq
\frac{
\left(
\Pi_n^{\perp A^n}
\otimes
I^{R^n}
\right)
\left(
\rho^{AR}
\right)^{\otimes n}
\left(
\Pi_n^{\perp A^n}
\otimes
I^{R^n}
\right)
}{
\delta_{\Pi_n}
},
\label{eq:IID_decoupling_complementary_state}
\\
\rho_{\Pi_n^\perp}^{A_{\Pi_n^\perp}^n}
&\coloneqq
\operatorname{Tr}_{R^n}
\left[
\rho_{\Pi_n^\perp}^{A_{\Pi_n^\perp}^nR^n}
\right],
\label{eq:IID_decoupling_complementary_input_marginal}
\\
\rho_{\Pi_n^\perp}^{R^n}
&\coloneqq
\operatorname{Tr}_{A_{\Pi_n^\perp}^n}
\left[
\rho_{\Pi_n^\perp}^{A_{\Pi_n^\perp}^nR^n}
\right].
\label{eq:IID_decoupling_complementary_reference_marginal}
\end{align}
When
\(\delta_{\Pi_n}=0\),
the complementary branch is absent, and every expression of the form
\begin{align}
\delta_{\Pi_n}
f
\left(
\rho_{\Pi_n^\perp}
\right)
\label{eq:IID_decoupling_zero_complementary_convention}
\end{align}
appearing below is defined to be zero.

Let
\(\mathcal H^{E_n}\)
and
\(\mathcal H^{M_n}\)
be finite-dimensional, and suppose that there exists a unitary
isomorphism
\begin{align}
V_n^{A_{\Pi_n}^n\to E_nM_n}
:
\mathcal H^{A_{\Pi_n}^n}
\longrightarrow
\mathcal H^{E_n}
\otimes
\mathcal H^{M_n}.
\label{eq:IID_decoupling_partial_trace_factorization}
\end{align}
The corresponding partial-trace channel is defined by
\begin{align}
\mathcal T_n^{A_{\Pi_n}^n\to E_n}
\left(
X
\right)
\coloneqq
\operatorname{Tr}_{M_n}
\left[
V_n
X
V_n^\dagger
\right].
\label{eq:IID_decoupling_partial_trace_channel}
\end{align}
The unitary isomorphism in
\eqref{eq:IID_decoupling_partial_trace_factorization}
implies
\begin{align}
\left|
A_{\Pi_n}^n
\right|
=
\left|
E_n
\right|
\left|
M_n
\right|.
\label{eq:IID_decoupling_dimension_identity}
\end{align}

Let
\begin{align}
\mathcal H^{A_{\Pi_n}^{n\prime}}
\simeq
\mathcal H^{A_{\Pi_n}^n}
\end{align}
be a copy of the projected input space.
The normalized Choi state of
\(\mathcal T_n^{A_{\Pi_n}^n\to E_n}\)
and its
\(E_n\)-marginal are defined by
\begin{align}
\omega_n^{A_{\Pi_n}^{n\prime}E_n}
&\coloneqq
\left(
\mathcal T_n^{A_{\Pi_n}^n\to E_n}
\otimes
\id^{A_{\Pi_n}^{n\prime}}
\right)
\left(
\Phi^{A_{\Pi_n}^nA_{\Pi_n}^{n\prime}}
\right),
\label{eq:IID_decoupling_partial_trace_Choi_state}
\\
\omega_n^{E_n}
&\coloneqq
\operatorname{Tr}_{A_{\Pi_n}^{n\prime}}
\left[
\omega_n^{A_{\Pi_n}^{n\prime}E_n}
\right].
\label{eq:IID_decoupling_partial_trace_Choi_marginal}
\end{align}
Under the identification
\begin{align}
\mathcal H^{A_{\Pi_n}^{n\prime}}
\simeq
\mathcal H^{E_n'}
\otimes
\mathcal H^{M_n'},
\end{align}
where
\(\mathcal H^{E_n'}\simeq\mathcal H^{E_n}\)
and
\(\mathcal H^{M_n'}\simeq\mathcal H^{M_n}\),
the normalized maximally entangled state factorizes, up to the
canonical ordering of tensor factors, as
\begin{align}
\Phi^{A_{\Pi_n}^nA_{\Pi_n}^{n\prime}}
=
\Phi^{E_nE_n'}
\otimes
\Phi^{M_nM_n'}.
\end{align}
Consequently, up to the canonical ordering of tensor factors,
\begin{align}
\omega_n^{A_{\Pi_n}^{n\prime}E_n}
&=
\pi^{M_n'}
\otimes
\Phi^{E_n'E_n},
\\
\omega_n^{E_n}
&=
\pi^{E_n}.
\label{eq:IID_decoupling_partial_trace_output_state}
\end{align}

Let
\begin{align}
U_n
\in
\operatorname{U}
\left(
\mathcal H^{A_{\Pi_n}^n}
\right).
\end{align}
Conditioned on successful projection, the corresponding output state
is
\begin{align}
&\tau_{n,U_n}^{E_nR^n}
\nonumber\\
&\coloneqq
\left(
\mathcal T_n^{A_{\Pi_n}^n\to E_n}
\otimes
\id^{R^n}
\right)
\left[
\left(
U_n
\otimes
I^{R^n}
\right)
\rho_{\Pi_n}^{A_{\Pi_n}^nR^n}
\left(
U_n^\dagger
\otimes
I^{R^n}
\right)
\right].
\label{eq:IID_decoupling_output_state}
\end{align}
Since
\(\mathcal T_n^{A_{\Pi_n}^n\to E_n}\)
is trace-preserving,
\(\tau_{n,U_n}^{E_nR^n}\)
is normalized, and
\begin{align}
\tau_{n,U_n}^{R^n}
=
\rho_{\Pi_n}^{R^n}.
\label{eq:IID_decoupling_output_reference}
\end{align}

The operational goal of the decoupling task is to decouple the output
system
\(E_n\)
from the reference while preserving the original reference marginal
\((\rho^R)^{\otimes n}\).
Accordingly, the decoupling error is quantified by
\begin{align}
D
\left(
\tau_{n,U_n}^{E_nR^n}
\middle\|
\pi^{E_n}
\otimes
\left(
\rho^R
\right)^{\otimes n}
\right).
\label{eq:IID_decoupling_original_reference_error}
\end{align}

A sequence of protocols specified by
\begin{align}
\left\{
\left(
\Pi_n^{A^n},
V_n^{A_{\Pi_n}^n\to E_nM_n},
U_n
\right)
\right\}_{n=1}^{\infty}
\end{align}
achieves IID decoupling if
\begin{align}
\lim_{n\to\infty}
\delta_{\Pi_n}
&=
0,
\label{eq:IID_decoupling_definition_projection_limit}
\\
\lim_{n\to\infty}
D
\left(
\tau_{n,U_n}^{E_nR^n}
\middle\|
\pi^{E_n}
\otimes
\left(
\rho^R
\right)^{\otimes n}
\right)
&=
0.
\label{eq:IID_decoupling_definition_original_error_limit}
\end{align}
The first condition ensures that the projection succeeds with
asymptotically unit probability, whereas the second condition is the
operational decoupling criterion.

For the mathematical analysis of the random-unitary decoupling step,
we also consider the auxiliary projected-reference quantity
\begin{align}
D
\left(
\tau_{n,U_n}^{E_nR^n}
\middle\|
\pi^{E_n}
\otimes
\rho_{\Pi_n}^{R^n}
\right).
\label{eq:IID_decoupling_projected_reference_error}
\end{align}
This quantity measures decoupling from the reference marginal of the
normalized projected state and is used only as an intermediate
quantity in the analysis.

Indeed, using
\eqref{eq:IID_decoupling_output_reference},
the relative-entropy chain-rule identity gives
\begin{align}
&D
\left(
\tau_{n,U_n}^{E_nR^n}
\middle\|
\pi^{E_n}
\otimes
\left(
\rho^R
\right)^{\otimes n}
\right)
\nonumber\\
&=
D
\left(
\tau_{n,U_n}^{E_nR^n}
\middle\|
\pi^{E_n}
\otimes
\rho_{\Pi_n}^{R^n}
\right)
+
D
\left(
\rho_{\Pi_n}^{R^n}
\middle\|
\left(
\rho^R
\right)^{\otimes n}
\right).
\label{eq:IID_decoupling_original_projected_error_decomposition}
\end{align}

Moreover, expanding
\begin{align}
I^{A^n}
=
\Pi_n^{A^n}
+
\Pi_n^{\perp A^n}
\end{align}
on both sides of
\((\rho^{AR})^{\otimes n}\),
the two cross terms vanish under the partial trace over
\(A^n\).
Indeed, cyclicity of the partial trace with respect to bounded
operators acting on
\(\mathcal H^{A^n}\)
gives, for example,
\begin{align}
&\operatorname{Tr}_{A^n}
\left[
\left(
\Pi_n^{A^n}
\otimes
I^{R^n}
\right)
\left(
\rho^{AR}
\right)^{\otimes n}
\left(
\Pi_n^{\perp A^n}
\otimes
I^{R^n}
\right)
\right]
\nonumber\\
&=
\operatorname{Tr}_{A^n}
\left[
\left(
\Pi_n^{\perp A^n}
\Pi_n^{A^n}
\otimes
I^{R^n}
\right)
\left(
\rho^{AR}
\right)^{\otimes n}
\right]
\nonumber\\
&=
0.
\end{align}
Consequently,
\begin{align}
\left(
\rho^R
\right)^{\otimes n}
&=
\operatorname{Tr}_{A^n}
\left[
\left(
\Pi_n^{A^n}
\otimes
I^{R^n}
\right)
\left(
\rho^{AR}
\right)^{\otimes n}
\left(
\Pi_n^{A^n}
\otimes
I^{R^n}
\right)
\right]
\nonumber\\
&\quad+
\operatorname{Tr}_{A^n}
\left[
\left(
\Pi_n^{\perp A^n}
\otimes
I^{R^n}
\right)
\left(
\rho^{AR}
\right)^{\otimes n}
\left(
\Pi_n^{\perp A^n}
\otimes
I^{R^n}
\right)
\right]
\nonumber\\
&=
\left(
1-\delta_{\Pi_n}
\right)
\rho_{\Pi_n}^{R^n}
+
\delta_{\Pi_n}
\rho_{\Pi_n^\perp}^{R^n},
\label{eq:IID_decoupling_reference_branch_decomposition}
\end{align}
where the final term is understood to be zero when
\(\delta_{\Pi_n}=0\)
according to
\eqref{eq:IID_decoupling_zero_complementary_convention}.
In particular,
\begin{align}
\left(
\rho^R
\right)^{\otimes n}
\ge
\left(
1-\delta_{\Pi_n}
\right)
\rho_{\Pi_n}^{R^n}.
\label{eq:IID_decoupling_projected_reference_domination}
\end{align}
By
\eqref{eq:IID_decoupling_projected_reference_domination},
operator monotonicity of the logarithm, and the scaling relation
\eqref{eq:homogeneous_relative_entropy_scaling},
\begin{align}
D
\left(
\rho_{\Pi_n}^{R^n}
\middle\|
\left(
\rho^R
\right)^{\otimes n}
\right)
&\le
D
\left(
\rho_{\Pi_n}^{R^n}
\middle\|
\left(
1-\delta_{\Pi_n}
\right)
\rho_{\Pi_n}^{R^n}
\right)
\nonumber\\
&=
-\log
\left(
1-\delta_{\Pi_n}
\right).
\label{eq:IID_decoupling_projected_reference_relative_entropy_bound}
\end{align}
Combining
\eqref{eq:IID_decoupling_original_projected_error_decomposition}
and
\eqref{eq:IID_decoupling_projected_reference_relative_entropy_bound}
gives
\begin{align}
&D
\left(
\tau_{n,U_n}^{E_nR^n}
\middle\|
\pi^{E_n}
\otimes
\left(
\rho^R
\right)^{\otimes n}
\right)
\nonumber\\
&\le
D
\left(
\tau_{n,U_n}^{E_nR^n}
\middle\|
\pi^{E_n}
\otimes
\rho_{\Pi_n}^{R^n}
\right)
-
\log
\left(
1-\delta_{\Pi_n}
\right).
\label{eq:IID_decoupling_original_error_from_projected_error}
\end{align}
Hence, if
\begin{align}
\lim_{n\to\infty}
\delta_{\Pi_n}
&=
0,
\\
\lim_{n\to\infty}
D
\left(
\tau_{n,U_n}^{E_nR^n}
\middle\|
\pi^{E_n}
\otimes
\rho_{\Pi_n}^{R^n}
\right)
&=
0,
\end{align}
then the operational decoupling conditions
\eqref{eq:IID_decoupling_definition_projection_limit}
and
\eqref{eq:IID_decoupling_definition_original_error_limit}
hold.

The asymptotic rates analyzed below are the projected-input dimension
rate
\begin{align}
\limsup_{n\to\infty}
\frac{1}{n}
\log
\left|
A_{\Pi_n}^n
\right|
\label{eq:IID_decoupling_projected_input_dimension_rate}
\end{align}
and the discarded-system dimension rate
\begin{align}
\limsup_{n\to\infty}
\frac{1}{n}
\log
\left|
M_n
\right|.
\label{eq:IID_decoupling_discarded_dimension_rate}
\end{align}

\subsection{Finite-rank spectral cutoffs}
\label{subsec:finite_rank_spectral_cutoff}

For each block length \(m\), we construct an auxiliary finite-rank cutoff projection \(\widehat{\Pi}_m\) on the \(m\)-copy input space from an eigenbasis of \((\rho^A)^{\otimes m}\). The logarithms of the cutoff ranks have the finite asymptotic rate \(H(A)_\rho<\infty\), while the projected states preserve the entropy, conditional entropy, and mutual information to first order. In Subsection~\ref{subsec:high_probability_finite_rank_IID_decoupling}, for each fixed block length \(m\), we use the blockwise cutoff projection \(\widehat{\Pi}_m\) to construct a high-probability finite-rank projection \(\Pi_n\) on the entire \(n\)-copy input space.

For every \(m\in\{1,2,\ldots\}\), let
\(\mathcal I_m=\{0,\ldots,|A|^m-1\}\) if
\(\dim\mathcal H^{A}<\infty\), and let
\(\mathcal I_m=\{0,1,\ldots\}\) otherwise.
Write
\begin{align}
(\rho^A)^{\otimes m}
=
\sum_{i\in\mathcal I_m}
\lambda_i^{A^m}
\ket{\psi_i}\bra{\psi_i}^{A^m},
\qquad
\lambda_0^{A^m}
\ge
\lambda_1^{A^m}
\ge
\cdots
\ge
0,
\label{eq:decoupling_IID_m_copy_spectral_decomposition}
\end{align}
where
\(\{\ket{\psi_i}^{A^m}\}_{i\in\mathcal I_m}\)
is an orthonormal basis of
\(\mathcal H^{A^m}\).
If
\((\rho^A)^{\otimes m}\)
has finite rank, the positive-eigenvalue eigenvectors are completed
by an orthonormal basis of its kernel.
When the eigenvalue sequence is used, we may formally extend it by zeros.

Let
\begin{align}
\{N_m\}_{m=1}^{\infty}
\label{eq:decoupling_IID_cutoff_rank_sequence}
\end{align}
be a sequence of positive integers.
If
\(\mathcal H^{A^m}\)
is finite-dimensional, assume additionally that
\begin{align}
N_m
\le
|A|^m.
\label{eq:decoupling_IID_cutoff_rank_range}
\end{align}
Define the rank-\(N_m\) eigenbasis cutoff projection by
\begin{align}
\widehat{\Pi}_m^{A^m}
\coloneqq
\sum_{i=0}^{N_m-1}
\ket{\psi_i}\bra{\psi_i}^{A^m}.
\label{eq:decoupling_IID_spectral_cutoff_projection}
\end{align}

All projection-conditioned spaces, errors, states, marginals, and
complementary-branch quantities introduced below are the
specializations of
\eqref{eq:IID_decoupling_projection_complement}--%
\eqref{eq:IID_decoupling_zero_complementary_convention}
to the block projection
\(\widehat{\Pi}_m^{A^m}\), with \(n\) replaced by \(m\).
The separate equation labels below are retained for convenient
reference to the blockwise spectral-cutoff construction.

Define the corresponding cutoff error by
\begin{align}
\delta_{\widehat{\Pi}_m}
\coloneqq
\operatorname{Tr}
\left[
\left(
I^{A^m}
-
\widehat{\Pi}_m^{A^m}
\right)
(\rho^A)^{\otimes m}
\right].
\label{eq:decoupling_IID_spectral_cutoff_error}
\end{align}
Since
\(N_m\ge1\)
and
\(\lambda_0^{A^m}>0\),
one has
\begin{align}
\delta_{\widehat{\Pi}_m}
<
1.
\label{eq:decoupling_IID_spectral_cutoff_error_strict_bound}
\end{align}

Define
\begin{align}
\mathcal H^{A_{\widehat{\Pi}_m}^m}
&\coloneqq
\widehat{\Pi}_m^{A^m}
\mathcal H^{A^m},
\label{eq:decoupling_IID_projected_input_space}
\\
\left|
A_{\widehat{\Pi}_m}^m
\right|
&\coloneqq
\operatorname{rank}
\widehat{\Pi}_m^{A^m}
=
N_m.
\label{eq:decoupling_IID_projected_input_dimension}
\end{align}
Define the normalized projected block state and its input and
reference marginals by
\begin{align}
\rho_{\widehat{\Pi}_m}^{A_{\widehat{\Pi}_m}^mR^m}
&\coloneqq
\frac{
\left(
\widehat{\Pi}_m^{A^m}
\otimes
I^{R^m}
\right)
(\rho^{AR})^{\otimes m}
\left(
\widehat{\Pi}_m^{A^m}
\otimes
I^{R^m}
\right)
}{
1-\delta_{\widehat{\Pi}_m}
},
\label{eq:decoupling_IID_normalized_projected_block_state}
\\
\rho_{\widehat{\Pi}_m}^{A_{\widehat{\Pi}_m}^m}
&\coloneqq
\operatorname{Tr}_{R^m}
\left[
\rho_{\widehat{\Pi}_m}^{A_{\widehat{\Pi}_m}^mR^m}
\right],
\label{eq:decoupling_IID_projected_block_input_marginal}
\\
\rho_{\widehat{\Pi}_m}^{R^m}
&\coloneqq
\operatorname{Tr}_{A_{\widehat{\Pi}_m}^m}
\left[
\rho_{\widehat{\Pi}_m}^{A_{\widehat{\Pi}_m}^mR^m}
\right].
\label{eq:decoupling_IID_projected_block_reference_marginal}
\end{align}

Define the complementary projection and its range by
\begin{align}
\widehat{\Pi}_m^{\perp A^m}
&\coloneqq
I^{A^m}
-
\widehat{\Pi}_m^{A^m},
\label{eq:decoupling_IID_complementary_projection}
\\
\mathcal H^{A_{\widehat{\Pi}_m^\perp}^m}
&\coloneqq
\widehat{\Pi}_m^{\perp A^m}
\mathcal H^{A^m}.
\label{eq:decoupling_IID_complementary_input_space}
\end{align}
Whenever
\(\delta_{\widehat{\Pi}_m}>0\),
define the normalized complementary block state and its input marginal
by
\begin{align}
\rho_{\widehat{\Pi}_m^\perp}^{A_{\widehat{\Pi}_m^\perp}^mR^m}
&\coloneqq
\frac{
\left(
\widehat{\Pi}_m^{\perp A^m}
\otimes
I^{R^m}
\right)
(\rho^{AR})^{\otimes m}
\left(
\widehat{\Pi}_m^{\perp A^m}
\otimes
I^{R^m}
\right)
}{
\delta_{\widehat{\Pi}_m}
},
\label{eq:decoupling_IID_normalized_complementary_block_state}
\\
\rho_{\widehat{\Pi}_m^\perp}^{A_{\widehat{\Pi}_m^\perp}^m}
&\coloneqq
\operatorname{Tr}_{R^m}
\left[
\rho_{\widehat{\Pi}_m^\perp}^{A_{\widehat{\Pi}_m^\perp}^mR^m}
\right].
\label{eq:decoupling_IID_complementary_block_marginal}
\end{align}
When
\(\delta_{\widehat{\Pi}_m}=0\),
the complementary branch is absent, and every weighted expression of
the form
\begin{align}
\delta_{\widehat{\Pi}_m}
f
\left(
\rho_{\widehat{\Pi}_m^\perp}
\right)
\label{eq:decoupling_IID_zero_complementary_convention}
\end{align}
appearing below is defined to be zero, consistently with
\eqref{eq:IID_decoupling_zero_complementary_convention}.

The sequence of cutoff ranks
\(\{N_m\}_{m=1}^{\infty}\)
is specified by the following proposition.
It shows that, for every prescribed partial-trace rate, the cutoff
ranks can be chosen so that the corresponding projected states have
the required first-order entropic properties and the projected input
spaces admit exact finite-dimensional factorizations.

\begin{proposition}[Finite-entropy spectral cutoffs]
\label{prop:decoupling_IID_finite_entropy_spectral_cutoffs}
Let
\(\mathcal H^A\)
and
\(\mathcal H^R\)
be separable and possibly infinite-dimensional, and let
\begin{align}
\rho^{AR}
\in
\operatorname{D}
\left(
\mathcal H^A
\otimes
\mathcal H^R
\right)
\end{align}
satisfy
\begin{align}
H(A)_\rho
<
\infty.
\end{align}
For every
\(m\in\{1,2,\ldots\}\),
let
\((\rho^A)^{\otimes m}\)
have the spectral decomposition in
\eqref{eq:decoupling_IID_m_copy_spectral_decomposition}.
Then, for every
\begin{align}
q
\in
\left[
0,
H(A)_\rho
\right],
\end{align}
there exist a sequence of positive integers
\(\{N_m\}_{m=1}^{\infty}\)
satisfying
\eqref{eq:decoupling_IID_cutoff_rank_range}
whenever
\(\mathcal H^{A^m}\)
is finite-dimensional, sequences of finite-dimensional systems
\(\{E_m\}_{m=1}^{\infty}\)
and
\(\{M_m\}_{m=1}^{\infty}\),
and unitary isomorphisms
\begin{align}
V_m^{A_{\widehat{\Pi}_m}^m\to E_mM_m}
:
\mathcal H^{A_{\widehat{\Pi}_m}^m}
\longrightarrow
\mathcal H^{E_m}
\otimes
\mathcal H^{M_m}
\label{eq:IID_decoupling_prescribed_common_factorization}
\end{align}
such that the projections
\(\widehat{\Pi}_m^{A^m}\),
the cutoff errors
\(\delta_{\widehat{\Pi}_m}\),
the projected spaces
\(\mathcal H^{A_{\widehat{\Pi}_m}^m}\),
the projected dimensions
\(\lvert A_{\widehat{\Pi}_m}^m\rvert\),
the projected states
\(\rho_{\widehat{\Pi}_m}^{A_{\widehat{\Pi}_m}^mR^m}\),
the projected input marginals
\(\rho_{\widehat{\Pi}_m}^{A_{\widehat{\Pi}_m}^m}\),
the projected reference marginals
\(\rho_{\widehat{\Pi}_m}^{R^m}\),
the complementary projections
\(\widehat{\Pi}_m^{\perp A^m}\),
and the complementary spaces
\(\mathcal H^{A_{\widehat{\Pi}_m^\perp}^m}\)
defined in
\eqref{eq:decoupling_IID_spectral_cutoff_projection},
\eqref{eq:decoupling_IID_spectral_cutoff_error},
\eqref{eq:decoupling_IID_projected_input_space},
\eqref{eq:decoupling_IID_projected_input_dimension},
\eqref{eq:decoupling_IID_normalized_projected_block_state},
\eqref{eq:decoupling_IID_projected_block_input_marginal},
\eqref{eq:decoupling_IID_projected_block_reference_marginal},
\eqref{eq:decoupling_IID_complementary_projection}, and
\eqref{eq:decoupling_IID_complementary_input_space},
respectively, and, whenever
\(\delta_{\widehat{\Pi}_m}>0\),
the complementary states
\(\rho_{\widehat{\Pi}_m^\perp}^{A_{\widehat{\Pi}_m^\perp}^mR^m}\)
and their input marginals
\(\rho_{\widehat{\Pi}_m^\perp}^{A_{\widehat{\Pi}_m^\perp}^m}\)
defined in
\eqref{eq:decoupling_IID_normalized_complementary_block_state}
and
\eqref{eq:decoupling_IID_complementary_block_marginal},
respectively, satisfy
\begin{align}
\lim_{m\to\infty}
\delta_{\widehat{\Pi}_m}
&=
0,
\label{eq:decoupling_IID_spectral_cutoff_error_limit}
\\
\lim_{m\to\infty}
\frac{1}{m}
H
\left(
A_{\widehat{\Pi}_m}^m
\right)_{\rho_{\widehat{\Pi}_m}}
&=
H(A)_\rho,
\label{eq:decoupling_IID_projected_entropy_limit}
\\
\lim_{m\to\infty}
\frac{
\delta_{\widehat{\Pi}_m}
}{
m
}
H
\left(
A_{\widehat{\Pi}_m^\perp}^m
\right)_{\rho_{\widehat{\Pi}_m^\perp}}
&=
0,
\label{eq:decoupling_IID_weighted_complementary_entropy_limit}
\\
\lim_{m\to\infty}
\frac{1}{m}
H
\left(
A_{\widehat{\Pi}_m}^m
\middle|
R^m
\right)_{\rho_{\widehat{\Pi}_m}}
&=
H(A|R)_\rho,
\label{eq:decoupling_IID_projected_conditional_entropy_limit}
\\
\lim_{m\to\infty}
\frac{1}{m}
I
\left(
A_{\widehat{\Pi}_m}^m:R^m
\right)_{\rho_{\widehat{\Pi}_m}}
&=
I(A:R)_\rho,
\label{eq:decoupling_IID_projected_mutual_information_limit}
\\
\lim_{m\to\infty}
\frac{1}{m}
\log
\left|
A_{\widehat{\Pi}_m}^m
\right|
&=
H(A)_\rho,
\label{eq:decoupling_IID_spectral_cutoff_rank_rate}
\\
\lim_{m\to\infty}
\frac{1}{m}
\log
\left|
M_m
\right|
&=
q.
\label{eq:IID_decoupling_prescribed_common_discarded_rate}
\end{align}
\end{proposition}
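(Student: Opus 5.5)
I would choose the cutoff ranks via weak typicality of \((\rho^A)^{\otimes m}\), adjusting them so that they factor as \(|E_m||M_m|\).

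\textbf{Choice of \(N_m\).} Fix a sequence \(\epsilon_m\downarrow0\) decaying slowly enough that the weak law of large numbers applies. The i.i.d. random variable \(-\log\lambda_{I}\), with \(I\sim\lambda^A\), has finite mean \(H(A)_\rho\). Hence the weakly typical projection onto eigenvectors with eigenvalues in \([e^{-m(H+\epsilon_m)},e^{-m(H-\epsilon_m)}]\) has weight tending to one and rank at most \(e^{m(H+\epsilon_m)}\). Since \(\widehat\Pi_m\) keeps the \(N_m\) largest eigenvalues, it captures at least as much weight as any projection of the same rank.

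I would set \(|M_m|\coloneqq\lceil e^{mq}\rceil\), or \(1\) when \(q=0\), and \(|E_m|\coloneqq\lceil e^{m(H+\epsilon_m)}/|M_m|\rceil\), and then \(N_m\coloneqq|E_m||M_m|\). The unitary isomorphism \(V_m\) is then trivial. If \(\mathcal H^A\) is finite-dimensional, then \(q\le H\le\log|A|\), and I would cap the ranks to respect \(N_m\le|A|^m\), for instance by taking \(|E_m|\le\lfloor|A|^m/|M_m|\rfloor\). This gives \(\delta_{\widehat\Pi_m}\to0\), the rank rate \(H(A)_\rho\) (the lower bound follows from the entropy limit below, since \(H\le\log N_m\)), and the discarded rate \(q\).

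\textbf{Entropy limits.} Because \(\widehat\Pi_m\) commutes with \((\rho^A)^{\otimes m}\), the spectrum of \((\rho^A)^{\otimes m}\) splits into the projected and complementary branches. This gives the exact grouping identity
\begin{align}
mH(A)_\rho=(1-\delta_m)H(A^m_{\widehat\Pi_m})_{\rho_{\widehat\Pi_m}}+\delta_mH(A^m_{\widehat\Pi_m^\perp})_{\rho_{\widehat\Pi_m^\perp}}+h_2(\delta_m).
\end{align}
The projected entropy is at most \(\log N_m\), so it is at most \(m(H+o(1))\). The main obstacle is showing that the weighted complementary term is \(o(m)\). It equals \(\sum_{i\ge N_m}\lambda_i(-\log\lambda_i)+\delta_m\log\delta_m\). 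I would bound it as follows. Write \(\mathbb E[X\mathbf 1_{X>m(H+\epsilon_m)}]\) with \(X=-\log\lambda_I\). Eigenvalues outside the kept set but inside the typical window contribute at most \(m(H+\epsilon_m)\cdot o(1)\). The large-\(X\) tail is \(o(m)\) by uniform integrability of \(X/m\), which follows from the \(L^1\) weak law for sums of i.i.d. integrable variables. Combining these with the grouping identity yields both \eqref{eq:decoupling_IID_projected_entropy_limit} and \eqref{eq:decoupling_IID_weighted_complementary_entropy_limit}.

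For the conditional entropy, I would use the decomposition of \((\rho^{AR})^{\otimes m}\) after the pinching \(\{\widehat\Pi_m,\widehat\Pi_m^\perp\}\), which is a flagged mixture. Conditional entropy is concave, and for classical flag mixtures it obeys the almost-convexity bound. Together these give
\begin{align}
\big|H(A^m|R^m)_{\text{pinched}}-(1-\delta_m)H(A^m_{\widehat\Pi_m}|R^m)_{\rho_{\widehat\Pi_m}}\big|\le\delta_mH(\cdot)_{\perp}+h_2(\delta_m).
\end{align}
The pinching changes \(H(A^m|R^m)\) by at most \(o(m)\). I would justify this with a Shirokov-type continuity bound on conditional entropy under finite \(H(A)\), or alternatively through \(I(A:R)\): the data-processing bound \(I_{\text{pinched}}\le mI\), combined with lower semicontinuity, or an explicit bound \(mI-I_{\text{pinched}}\le h_2(\delta_m)+\)weighted entropy term. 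The mutual information limit then follows as the difference of the entropy and conditional entropy limits.
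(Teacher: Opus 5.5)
Your construction and most of the estimates coincide with the paper's. You use a top-$N_m$ spectral cutoff with $N_m$ a multiple of $\lceil e^{mq}\rceil$ and at least $e^{m(H(A)_\rho+\epsilon_m)}$, so every eigenvalue above $e^{-m(H(A)_\rho+\epsilon_m)}$ is retained. You then combine the grouping identity with an $L^1$-law-of-large-numbers (uniform integrability) bound on the discarded spectral tail, and handle the conditional entropy by a pinching followed by a flagged decomposition. Your concavity/almost-convexity sandwich is equivalent to the paper's exact identity with a flag determined by the orthogonal $A^m$-sectors.

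The gap is the sentence ``the pinching changes $H(A^m|R^m)$ by at most $o(m)$.'' This is where the conditional-entropy limit is decided, and none of your three suggestions is carried out. The route through data processing and lower semicontinuity cannot work. Data processing only gives $I(A^m:R^m)_{\mathrm{pinched}}\le mI(A:R)_\rho$, i.e.\ the easy direction $H(A^m|R^m)_{\mathrm{pinched}}\ge mH(A|R)_\rho$. The upper bound on $H(A^m_{\widehat\Pi_m}|R^m)_{\rho_{\widehat\Pi_m}}$ needs $mI(A:R)_\rho-I(A^m:R^m)_{\mathrm{pinched}}=o(m)$. Lower semicontinuity is a qualitative statement about states converging on a fixed space, so it gives no such rate along systems whose size grows with $m$. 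The ``explicit bound'' is only asserted. The continuity-bound route would need three things you do not verify: a trace-distance estimate (e.g.\ $\le 2\sqrt{\delta_{\widehat\Pi_m}}$), a Gibbs-hypothesis Hamiltonian with finite energy, and an energy-constrained bound whose $m$-dependence is linear with a vanishing prefactor.

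The missing idea is short. Since $\widehat\Pi_m$ commutes with $(\rho^A)^{\otimes m}$, the pinching leaves the $A^m$-marginal unchanged, so the change in $H(A^m|R^m)$ equals the drop in mutual information. Write the pinching as the equal mixture of conjugations by $I$ and by the local unitary $\widehat\Pi_m-\widehat\Pi_m^\perp$, and attach a classical label $X$. The chain rule then gives $I(XA^m:R^m)=I(X:R^m)+I(A^m:R^m|X)=mI(A:R)_\rho$. It also gives $I(XA^m:R^m)=I(A^m:R^m)_{\mathrm{pinched}}+I(X:R^m|A^m)$ with $0\le I(X:R^m|A^m)\le\log2$. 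Hence $0\le H(A^m|R^m)_{\mathrm{pinched}}-mH(A|R)_\rho\le\log 2$, which is the paper's argument. Alternatively, apply the isometry $\ket{0}\otimes\widehat\Pi_m+\ket{1}\otimes\widehat\Pi_m^\perp$, whose output has the pinched state as its $A^mR^m$-marginal and a flag $Y$ of entropy $h_2(\delta_{\widehat\Pi_m})$. Then $mI(A:R)_\rho=I(A^m:R^m)_{\mathrm{pinched}}+I(Y:R^m|A^m)$, and $I(Y:R^m|A^m)\le 2h_2(\delta_{\widehat\Pi_m})$.

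A minor point concerns the finite-dimensional case with $\rho^A=\pi^A$. There your cap binds for every $m$, so the retained subspace no longer contains the typical one, and $\delta_{\widehat\Pi_m}\to0$ must be checked directly. The discarded weight is below $|M_m|/|A|^m$, which vanishes for $q<\log|A|$, and $\delta_{\widehat\Pi_m}=0$ for $q=\log|A|$. The paper sidesteps this by treating the maximally mixed case separately.
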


\begin{proof}
Fix
\begin{align}
q
\in
\left[
0,
H(A)_\rho
\right].
\end{align}

\medskip
\noindent
\textbf{The finite-dimensional maximally mixed case.}
We first treat the case in which
\(\mathcal H^A\)
is finite-dimensional and
\begin{align}
H(A)_\rho
=
\log
\left|
A
\right|.
\label{eq:decoupling_IID_maximal_entropy_case}
\end{align}
Condition
\eqref{eq:decoupling_IID_maximal_entropy_case}
implies
\begin{align}
\rho^A
=
\pi^A.
\end{align}
For every \(m\), set
\begin{align}
\widehat{\Pi}_m^{A^m}
&\coloneqq
I^{A^m},
&
N_m
&\coloneqq
\left|
A
\right|^m.
\end{align}
Then
\begin{align}
\delta_{\widehat{\Pi}_m}
&=
0,
&
\rho_{\widehat{\Pi}_m}^{A_{\widehat{\Pi}_m}^mR^m}
&=
\left(
\rho^{AR}
\right)^{\otimes m}.
\end{align}
Consequently, additivity gives
\begin{align}
H
\left(
A_{\widehat{\Pi}_m}^m
\right)_{\rho_{\widehat{\Pi}_m}}
&=
mH(A)_\rho,
\\
H
\left(
A_{\widehat{\Pi}_m}^m
\middle|
R^m
\right)_{\rho_{\widehat{\Pi}_m}}
&=
mH(A|R)_\rho,
\\
I
\left(
A_{\widehat{\Pi}_m}^m:R^m
\right)_{\rho_{\widehat{\Pi}_m}}
&=
mI(A:R)_\rho.
\end{align}
The weighted complementary entropy is defined to be zero, and
\begin{align}
\frac{1}{m}
\log
\left|
A_{\widehat{\Pi}_m}^m
\right|
=
\log
\left|
A
\right|
=
H(A)_\rho.
\end{align}

If
\begin{align}
\left|
A
\right|
=
1,
\end{align}
then \(q=0\).
For every \(m\), take both
\(E_m\)
and
\(M_m\)
to be one-dimensional and choose the canonical unitary isomorphism
\begin{align}
V_m^{A_{\widehat{\Pi}_m}^m\to E_mM_m}
:
\mathcal H^{A_{\widehat{\Pi}_m}^m}
\longrightarrow
\mathcal H^{E_m}
\otimes
\mathcal H^{M_m}.
\end{align}
Then
\begin{align}
\frac{1}{m}
\log
\left|
M_m
\right|
=
0
=
q.
\end{align}

Suppose now that
\begin{align}
\left|
A
\right|
\ge
2.
\end{align}
For every \(m\), define
\begin{align}
a_m
\coloneqq
\left\lfloor
\frac{
mq
}{
\log
\left|
A
\right|
}
\right\rfloor.
\end{align}
Since
\begin{align}
0
\le
q
\le
\log
\left|
A
\right|,
\end{align}
one has
\begin{align}
a_m
\in
\{0,\ldots,m\},
\end{align}
and
\begin{align}
\lim_{m\to\infty}
\frac{a_m}{m}
=
\frac{
q
}{
\log
\left|
A
\right|
}.
\end{align}
Choose finite-dimensional systems \(M_m\) and \(E_m\) satisfying
\begin{align}
\left|
M_m
\right|
&\coloneqq
\left|
A
\right|^{a_m},
&
\left|
E_m
\right|
&\coloneqq
\left|
A
\right|^{m-a_m}.
\end{align}
Since
\begin{align}
\left|
A_{\widehat{\Pi}_m}^m
\right|
=
\left|
A
\right|^m
=
\left|
E_m
\right|
\left|
M_m
\right|,
\end{align}
there exists a unitary isomorphism
\begin{align}
V_m^{A_{\widehat{\Pi}_m}^m\to E_mM_m}
:
\mathcal H^{A_{\widehat{\Pi}_m}^m}
\longrightarrow
\mathcal H^{E_m}
\otimes
\mathcal H^{M_m}.
\end{align}
Moreover,
\begin{align}
\lim_{m\to\infty}
\frac{1}{m}
\log
\left|
M_m
\right|
&=
\lim_{m\to\infty}
\frac{a_m}{m}
\log
\left|
A
\right|
\nonumber\\
&=
q.
\end{align}
Thus, all the claimed limits hold in the finite-dimensional maximally
mixed case.

\medskip
\noindent
\textbf{Choice of the cutoff ranks.}
Henceforth, suppose that either
\(\mathcal H^A\)
is infinite-dimensional or
\(\mathcal H^A\)
is finite-dimensional and
\begin{align}
H(A)_\rho
<
\log
\left|
A
\right|.
\label{eq:decoupling_IID_nonmaximal_entropy_case}
\end{align}

Let
\begin{align}
\rho^A
=
\sum_{j=0}^{\infty}
\lambda_j^A
\ket{\psi_j}\bra{\psi_j}^A
\end{align}
be the spectral decomposition obtained from
\eqref{eq:decoupling_IID_m_copy_spectral_decomposition}
with \(m=1\).
Let \(J\) be a random variable taking values in
\(\{0,1,\ldots\}\)
according to
\begin{align}
\Pr
\left\{
J=j
\right\}
=
\lambda_j^A,
\end{align}
and define
\begin{align}
\imath_A(J)
\coloneqq
-\log
\lambda_J^A.
\end{align}
The indices \(j\) for which
\(\lambda_j^A=0\)
occur with probability zero.
By
\eqref{eq:IID_decoupling_finite_entropy_assumption}
and
\eqref{eq:entropy_eigenvalue_formula},
\begin{align}
\mathbb E
\left[
\imath_A(J)
\right]
&=
-\sum_{j=0}^{\infty}
\lambda_j^A
\log
\lambda_j^A
\nonumber\\
&=
H(A)_\rho
<
\infty.
\label{eq:decoupling_IID_information_density_expectation}
\end{align}
Thus,
\(\imath_A(J)\)
is integrable.

Let
\(J_1,J_2,\ldots\)
be IID copies of \(J\), write
\begin{align}
J^m
\coloneqq
\left(
J_1,\ldots,J_m
\right),
\end{align}
and define
\begin{align}
\overline{\imath}_{A,m}(J^m)
\coloneqq
\frac{1}{m}
\sum_{\ell=1}^{m}
\imath_A(J_\ell).
\label{eq:decoupling_IID_average_information_density}
\end{align}
By the law of large numbers in \(L^1\),
\(\overline{\imath}_{A,m}(J^m)\)
converges in mean to
\(H(A)_\rho\)~\cite[Example~6.2.3]{durrett2019probability}.
Thus, defining
\begin{align}
\Delta_m
\coloneqq
\mathbb E
\left[
\left|
\overline{\imath}_{A,m}(J^m)
-
H(A)_\rho
\right|
\right],
\label{eq:decoupling_IID_information_density_mean_deviation}
\end{align}
one has
\begin{align}
\lim_{m\to\infty}
\Delta_m
=
0.
\label{eq:decoupling_IID_information_density_mean_convergence}
\end{align}

Choose positive parameters
\(\{\zeta_m\}_{m=1}^{\infty}\)
such that
\begin{align}
\lim_{m\to\infty}
\zeta_m
&=
0,
&
\lim_{m\to\infty}
\frac{\Delta_m}{\zeta_m}
&=
0.
\label{eq:decoupling_IID_information_density_window_properties}
\end{align}
For example, one may take
\begin{align}
\zeta_m
\coloneqq
\sqrt{\Delta_m}
+
\frac{1}{m},
\label{eq:decoupling_IID_information_density_window_choice}
\end{align}
which satisfies
\begin{align}
0
\le
\frac{\Delta_m}{\zeta_m}
=
\frac{
\Delta_m
}{
\sqrt{\Delta_m}+1/m
}
\le
\sqrt{\Delta_m}.
\label{eq:decoupling_IID_information_density_window_ratio_bound}
\end{align}

Define
\begin{align}
L_m
&\coloneqq
\left\lceil
\exp
\left[
m
\left(
H(A)_\rho+\zeta_m
\right)
\right]
\right\rceil,
\label{eq:decoupling_IID_base_cutoff_rank}
\\
K_m
&\coloneqq
\left\lceil
\exp
\left[
mq
\right]
\right\rceil,
\label{eq:decoupling_IID_prescribed_discarded_dimension}
\\
\widehat N_m
&\coloneqq
K_m
\left\lceil
\frac{L_m}{K_m}
\right\rceil.
\label{eq:decoupling_IID_adjusted_cutoff_rank_candidate}
\end{align}
Since
\(q\le H(A)_\rho\)
and
\(\zeta_m>0\),
one has
\begin{align}
K_m
\le
L_m.
\label{eq:decoupling_IID_prescribed_dimension_below_base_rank}
\end{align}
Consequently,
\begin{align}
L_m
\le
\widehat N_m
<
L_m+K_m
\le
2L_m.
\label{eq:decoupling_IID_adjusted_cutoff_rank_bounds}
\end{align}

By
\eqref{eq:decoupling_IID_base_cutoff_rank},
\begin{align}
L_m
&=
\left\lceil
\exp
\left[
m
\left(
H(A)_\rho+\zeta_m
\right)
\right]
\right\rceil
\nonumber\\
&\le
2
\exp
\left[
m
\left(
H(A)_\rho+\zeta_m
\right)
\right],
\label{eq:decoupling_IID_base_cutoff_rank_upper_bound}
\end{align}
where we used
\(\exp[m(H(A)_\rho+\zeta_m)]\ge1\).
Combining
\eqref{eq:decoupling_IID_adjusted_cutoff_rank_bounds}
and
\eqref{eq:decoupling_IID_base_cutoff_rank_upper_bound}
gives
\begin{align}
\widehat N_m
<
4
\exp
\left[
m
\left(
H(A)_\rho+\zeta_m
\right)
\right].
\label{eq:decoupling_IID_adjusted_cutoff_rank_exponential_upper_bound}
\end{align}

If
\(\mathcal H^A\)
is infinite-dimensional, set
\begin{align}
m_*
\coloneqq
1.
\end{align}
If
\(\mathcal H^A\)
is finite-dimensional, then
\eqref{eq:decoupling_IID_nonmaximal_entropy_case},
\eqref{eq:decoupling_IID_adjusted_cutoff_rank_exponential_upper_bound},
and
\eqref{eq:decoupling_IID_information_density_window_properties}
give
\begin{align}
\limsup_{m\to\infty}
\frac{1}{m}
\log
\widehat N_m
\le
H(A)_\rho
<
\log
\left|
A
\right|.
\label{eq:decoupling_IID_adjusted_cutoff_rank_rate_strict_bound}
\end{align}
Therefore, there exists
\(m_*\in\{1,2,\ldots\}\)
such that
\begin{align}
\widehat N_m
\le
\left|
A
\right|^m
=
\left|A^m\right|
\label{eq:decoupling_IID_adjusted_cutoff_rank_finite_dimensional_range}
\end{align}
for every
\(m\ge m_*\).

Define the final cutoff ranks by
\begin{align}
N_m
\coloneqq
\begin{cases}
1,
&
m<m_*,
\\[1mm]
\widehat N_m,
&
m\ge m_*,
\end{cases}
\label{eq:decoupling_IID_cutoff_rank_choice}
\end{align}
where the first case can occur only when
\(\mathcal H^A\)
is finite-dimensional.
Equations
\eqref{eq:decoupling_IID_adjusted_cutoff_rank_finite_dimensional_range}
and
\eqref{eq:decoupling_IID_cutoff_rank_choice}
show that this definition satisfies
\eqref{eq:decoupling_IID_cutoff_rank_range}.

For every \(m\ge m_*\),
\eqref{eq:decoupling_IID_cutoff_rank_choice} and
\eqref{eq:decoupling_IID_adjusted_cutoff_rank_exponential_upper_bound}
give
\begin{align}
N_m
<
4
\exp
\left[
m
\left(
H(A)_\rho+\zeta_m
\right)
\right].
\label{eq:decoupling_IID_cutoff_rank_exponential_upper_bound}
\end{align}

Choose finite-dimensional systems
\(E_m\)
and
\(M_m\)
such that
\begin{align}
\left(
\left|
E_m
\right|,
\left|
M_m
\right|
\right)
\coloneqq
\begin{cases}
(1,1),
&
m<m_*,
\\[1mm]
\left(
N_m/K_m,
K_m
\right),
&
m\ge m_*.
\end{cases}
\label{eq:decoupling_IID_factor_dimensions}
\end{align}
By
\eqref{eq:decoupling_IID_adjusted_cutoff_rank_candidate}
and
\eqref{eq:decoupling_IID_cutoff_rank_choice},
\(K_m\) divides \(N_m\) whenever \(m\ge m_*\).
Consequently, by
\eqref{eq:decoupling_IID_projected_input_dimension},
\begin{align}
\left|
A_{\widehat{\Pi}_m}^m
\right|
=
N_m
=
\left|
E_m
\right|
\left|
M_m
\right|
\label{eq:decoupling_IID_cutoff_rank_factorization_identity}
\end{align}
for every \(m\).
Therefore, there exists a unitary isomorphism
\begin{align}
V_m^{A_{\widehat{\Pi}_m}^m\to E_mM_m}
:
\mathcal H^{A_{\widehat{\Pi}_m}^m}
\longrightarrow
\mathcal H^{E_m}
\otimes
\mathcal H^{M_m}.
\label{eq:decoupling_IID_cutoff_rank_unitary_factorization}
\end{align}
The finitely many terms with
\(m<m_*\)
do not affect any of the limits below.

\medskip
\noindent
\textbf{Proof of
\eqref{eq:decoupling_IID_spectral_cutoff_error_limit}.}
For a multi-index
\(j^m=(j_1,\ldots,j_m)\), write
\begin{align}
\lambda_{j^m}^{A^m}
\coloneqq
\prod_{\ell=1}^{m}
\lambda_{j_\ell}^A.
\end{align}
These product eigenvalues, counted with multiplicity and rearranged in
non-increasing order, are precisely the eigenvalues in
\eqref{eq:decoupling_IID_m_copy_spectral_decomposition}.
Define
\begin{align}
\mathcal T_m
\coloneqq
\left\{
j^m:
\lambda_{j^m}^{A^m}>0,\
-\frac{1}{m}
\log
\lambda_{j^m}^{A^m}
\le
H(A)_\rho+\zeta_m
\right\}.
\end{align}
Every
\(j^m\in\mathcal T_m\)
satisfies
\begin{align}
\lambda_{j^m}^{A^m}
\ge
\exp
\left[
-m
\left(
H(A)_\rho+\zeta_m
\right)
\right].
\end{align}
Since the sum of all product eigenvalues is one,
\begin{align}
\left|
\mathcal T_m
\right|
\le
\exp
\left[
m
\left(
H(A)_\rho+\zeta_m
\right)
\right]
\le
L_m.
\end{align}
For every \(m\ge m_*\),
\begin{align}
L_m
\le
N_m.
\end{align}
Thus, the top-\(N_m\) spectral projection retains every product
eigenvalue indexed by
\(\mathcal T_m\).
Therefore, for every \(m\ge m_*\),
\begin{align}
\delta_{\widehat{\Pi}_m}
&\le
\Pr
\left\{
\overline{\imath}_{A,m}(J^m)
>
H(A)_\rho+\zeta_m
\right\}
\nonumber\\
&\le
\Pr
\left\{
\left|
\overline{\imath}_{A,m}(J^m)
-
H(A)_\rho
\right|
>
\zeta_m
\right\}
\nonumber\\
&\le
\frac{
\mathbb E
\left[
\left|
\overline{\imath}_{A,m}(J^m)
-
H(A)_\rho
\right|
\right]
}{
\zeta_m
}
\nonumber\\
&=
\frac{\Delta_m}{\zeta_m},
\label{eq:decoupling_IID_cutoff_error_information_density_bound}
\end{align}
where the third inequality is Markov's inequality.
Equations
\eqref{eq:decoupling_IID_information_density_window_properties}
and
\eqref{eq:decoupling_IID_cutoff_error_information_density_bound}
prove
\eqref{eq:decoupling_IID_spectral_cutoff_error_limit}.

\medskip
\noindent
\textbf{Proof of
\eqref{eq:decoupling_IID_projected_entropy_limit}.}
Define the per-copy entropy contribution of the omitted spectral tail
by
\begin{align}
t_m
\coloneqq
\frac{1}{m}
\sum_{i=N_m}^{\infty}
\left(
-\lambda_i^{A^m}
\log
\lambda_i^{A^m}
\right).
\label{eq:decoupling_IID_weighted_spectral_tail}
\end{align}
Since every product eigenvalue indexed by
\(\mathcal T_m\)
is retained for every \(m\ge m_*\),
\begin{align}
t_m
&\le
\mathbb E
\left[
\overline{\imath}_{A,m}(J^m)
\mathbf 1
\left\{
\overline{\imath}_{A,m}(J^m)
>
H(A)_\rho+\zeta_m
\right\}
\right]
\nonumber\\
&=
H(A)_\rho
\Pr
\left\{
\overline{\imath}_{A,m}(J^m)
>
H(A)_\rho+\zeta_m
\right\}
\nonumber\\
&\quad+
\mathbb E
\left[
\left(
\overline{\imath}_{A,m}(J^m)
-
H(A)_\rho
\right)
\mathbf 1
\left\{
\overline{\imath}_{A,m}(J^m)
>
H(A)_\rho+\zeta_m
\right\}
\right]
\nonumber\\
&\le
H(A)_\rho
\Pr
\left\{
\left|
\overline{\imath}_{A,m}(J^m)
-
H(A)_\rho
\right|
>
\zeta_m
\right\}
+
\Delta_m
\nonumber\\
&\le
H(A)_\rho
\frac{\Delta_m}{\zeta_m}
+
\Delta_m.
\label{eq:decoupling_IID_weighted_spectral_tail_bound}
\end{align}
Hence,
\begin{align}
\lim_{m\to\infty}
t_m
=
0.
\label{eq:decoupling_IID_weighted_spectral_tail_limit}
\end{align}

The eigenvalues of
\(\rho_{\widehat{\Pi}_m}^{A_{\widehat{\Pi}_m}^m}\)
are
\begin{align}
\frac{
\lambda_i^{A^m}
}{
1-\delta_{\widehat{\Pi}_m}
},
\qquad
i\in\{0,\ldots,N_m-1\}.
\end{align}
Therefore,
\begin{align}
H
\left(
A_{\widehat{\Pi}_m}^m
\right)_{\rho_{\widehat{\Pi}_m}}
&=
-\sum_{i=0}^{N_m-1}
\frac{
\lambda_i^{A^m}
}{
1-\delta_{\widehat{\Pi}_m}
}
\log
\frac{
\lambda_i^{A^m}
}{
1-\delta_{\widehat{\Pi}_m}
}
\nonumber\\
&=
\frac{
mH(A)_\rho-mt_m
}{
1-\delta_{\widehat{\Pi}_m}
}
+
\log
\left(
1-\delta_{\widehat{\Pi}_m}
\right).
\label{eq:decoupling_IID_projected_marginal_entropy_formula}
\end{align}
Dividing
\eqref{eq:decoupling_IID_projected_marginal_entropy_formula}
by \(m\), and using
\eqref{eq:decoupling_IID_spectral_cutoff_error_limit}
and
\eqref{eq:decoupling_IID_weighted_spectral_tail_limit},
proves
\eqref{eq:decoupling_IID_projected_entropy_limit}.

\medskip
\noindent
\textbf{Proof of
\eqref{eq:decoupling_IID_weighted_complementary_entropy_limit}.}
Since
\(\widehat{\Pi}_m^{A^m}\)
commutes with
\((\rho^A)^{\otimes m}\),
the block-diagonal entropy decomposition gives
\begin{align}
mH(A)_\rho
&=
h_2
\left(
\delta_{\widehat{\Pi}_m}
\right)
+
\left(
1-\delta_{\widehat{\Pi}_m}
\right)
H
\left(
A_{\widehat{\Pi}_m}^m
\right)_{\rho_{\widehat{\Pi}_m}}
\nonumber\\
&\quad+
\delta_{\widehat{\Pi}_m}
H
\left(
A_{\widehat{\Pi}_m^\perp}^m
\right)_{\rho_{\widehat{\Pi}_m^\perp}},
\label{eq:decoupling_IID_marginal_entropy_branch_decomposition}
\end{align}
where the final term is defined to be zero when
\(\delta_{\widehat{\Pi}_m}=0\).
In particular, when
\(\delta_{\widehat{\Pi}_m}>0\),
the entropy
\begin{align}
H
\left(
A_{\widehat{\Pi}_m^\perp}^m
\right)_{\rho_{\widehat{\Pi}_m^\perp}}
\end{align}
is finite.
Consequently,
\begin{align}
&\frac{
\delta_{\widehat{\Pi}_m}
}{
m
}
H
\left(
A_{\widehat{\Pi}_m^\perp}^m
\right)_{\rho_{\widehat{\Pi}_m^\perp}}
\nonumber\\
&=
H(A)_\rho
-
\frac{1}{m}
h_2
\left(
\delta_{\widehat{\Pi}_m}
\right)
-
\left(
1-\delta_{\widehat{\Pi}_m}
\right)
\frac{1}{m}
H
\left(
A_{\widehat{\Pi}_m}^m
\right)_{\rho_{\widehat{\Pi}_m}}.
\end{align}
Using
\eqref{eq:decoupling_IID_spectral_cutoff_error_limit},
\eqref{eq:decoupling_IID_projected_entropy_limit}, and
\begin{align}
0
\le
h_2
\left(
\delta_{\widehat{\Pi}_m}
\right)
\le
\log2,
\end{align}
we obtain
\eqref{eq:decoupling_IID_weighted_complementary_entropy_limit}.

\medskip
\noindent
\textbf{Proof of
\eqref{eq:decoupling_IID_projected_conditional_entropy_limit}.}
Since
\begin{align}
H(A^m)_{\rho^{\otimes m}}
=
mH(A)_\rho
<
\infty,
\end{align}
one has
\begin{align}
I(A^m:R^m)_{\rho^{\otimes m}}
\le
2H(A^m)_{\rho^{\otimes m}}
=
2mH(A)_\rho
<
\infty.
\label{eq:decoupling_IID_tensor_power_mutual_information_finite}
\end{align}

Define the pinched state
\begin{align}
\overline\rho_m^{A^mR^m}
&\coloneqq
\left(
\widehat{\Pi}_m^{A^m}
\otimes
I^{R^m}
\right)
\left(
\rho^{AR}
\right)^{\otimes m}
\left(
\widehat{\Pi}_m^{A^m}
\otimes
I^{R^m}
\right)
\nonumber\\
&\quad+
\left(
\widehat{\Pi}_m^{\perp A^m}
\otimes
I^{R^m}
\right)
\left(
\rho^{AR}
\right)^{\otimes m}
\left(
\widehat{\Pi}_m^{\perp A^m}
\otimes
I^{R^m}
\right).
\end{align}
Set
\begin{align}
U_m^{A^m,\mathrm{pin}}
\coloneqq
\widehat{\Pi}_m^{A^m}
-
\widehat{\Pi}_m^{\perp A^m}.
\end{align}
Then
\(U_m^{A^m,\mathrm{pin}}\)
is unitary and
\begin{align}
\overline\rho_m^{A^mR^m}
&=
\frac{1}{2}
\left(
\rho^{AR}
\right)^{\otimes m}
\nonumber\\
&\quad+
\frac{1}{2}
\left(
U_m^{A^m,\mathrm{pin}}
\otimes
I^{R^m}
\right)
\left(
\rho^{AR}
\right)^{\otimes m}
\left(
U_m^{A^m,\mathrm{pin}}
\otimes
I^{R^m}
\right).
\label{eq:decoupling_IID_pinching_random_unitary_representation}
\end{align}
The \(A^m\)-marginal is unchanged by the pinching, and hence
\begin{align}
H(A^m)_{\overline\rho_m}
=
mH(A)_\rho
<
\infty.
\end{align}
In particular,
\begin{align}
I(A^m:R^m)_{\overline\rho_m}
<
\infty.
\end{align}

Introduce a classical register \(X\), and define
\begin{align}
\overline\rho_m^{XA^mR^m}
&\coloneqq
\frac{1}{2}
\ket{0}\bra{0}^X
\otimes
\left(
\rho^{AR}
\right)^{\otimes m}
\nonumber\\
&\quad+
\frac{1}{2}
\ket{1}\bra{1}^X
\otimes
\left(
U_m^{A^m,\mathrm{pin}}
\otimes
I^{R^m}
\right)
\left(
\rho^{AR}
\right)^{\otimes m}
\left(
U_m^{A^m,\mathrm{pin}}
\otimes
I^{R^m}
\right).
\end{align}
The chain rule for mutual information in
\eqref{eq:mutual_information_chain_rule}
gives
\begin{align}
I(XA^m:R^m)_{\overline\rho_m}
=
I(X:R^m)_{\overline\rho_m}
+
I(A^m:R^m|X)_{\overline\rho_m}.
\label{eq:decoupling_IID_pinching_mutual_information_chain_rule_X_first}
\end{align}
Since the two conditional branches have the same \(R^m\)-marginal,
one has
\begin{align}
I(X:R^m)_{\overline\rho_m}
=
0.
\end{align}
Moreover, conditional mutual information for a classical register is
the average of the mutual informations of the conditional states.
Hence,
\begin{align}
I(A^m:R^m|X)_{\overline\rho_m}
&=
\frac{1}{2}
I(A^m:R^m)_{\rho^{\otimes m}}
\nonumber\\
&\quad+
\frac{1}{2}
I(A^m:R^m)_{
\left(
U_m^{A^m,\mathrm{pin}}
\otimes
I^{R^m}
\right)
\left(
\rho^{AR}
\right)^{\otimes m}
\left(
U_m^{A^m,\mathrm{pin}}
\otimes
I^{R^m}
\right)
}
\nonumber\\
&=
I(A^m:R^m)_{\rho^{\otimes m}},
\label{eq:decoupling_IID_pinching_conditional_mutual_information}
\end{align}
where the second equality follows from invariance of mutual
information under local unitaries.
Substituting these identities into
\eqref{eq:decoupling_IID_pinching_mutual_information_chain_rule_X_first}
gives
\begin{align}
I(XA^m:R^m)_{\overline\rho_m}
=
I(A^m:R^m)_{\rho^{\otimes m}}.
\label{eq:decoupling_IID_pinching_mutual_information_first_expansion}
\end{align}

Applying the chain rule
\eqref{eq:mutual_information_chain_rule}
in the opposite order gives
\begin{align}
I(XA^m:R^m)_{\overline\rho_m}
=
I(A^m:R^m)_{\overline\rho_m}
+
I(X:R^m|A^m)_{\overline\rho_m}.
\label{eq:decoupling_IID_pinching_mutual_information_chain_rule_A_first}
\end{align}
Combining
\eqref{eq:decoupling_IID_pinching_mutual_information_first_expansion}
and
\eqref{eq:decoupling_IID_pinching_mutual_information_chain_rule_A_first}
yields
\begin{align}
I(A^m:R^m)_{\rho^{\otimes m}}
-
I(A^m:R^m)_{\overline\rho_m}
=
I(X:R^m|A^m)_{\overline\rho_m}.
\label{eq:decoupling_IID_pinching_mutual_information_difference}
\end{align}

Since \(X\) is classical,
\begin{align}
0
&\le
I(X:R^m|A^m)_{\overline\rho_m}
\nonumber\\
&\le
H(X|A^m)_{\overline\rho_m}
\nonumber\\
&\le
H(X)_{\overline\rho_m}
=
\log2.
\label{eq:decoupling_IID_pinching_conditional_mutual_information_bound}
\end{align}
Since the \(A^m\)-marginal is unchanged by the pinching,
\eqref{eq:I_AR_definition_finite_A_general} gives
\begin{align}
&H(A^m|R^m)_{\overline\rho_m}
-
H(A^m|R^m)_{\rho^{\otimes m}}
\nonumber\\
&=
I(A^m:R^m)_{\rho^{\otimes m}}
-
I(A^m:R^m)_{\overline\rho_m}.
\label{eq:decoupling_IID_pinching_conditional_entropy_difference}
\end{align}
Moreover, additivity of the conditional entropy under tensor products
gives
\begin{align}
H(A^m|R^m)_{\rho^{\otimes m}}
=
mH(A|R)_\rho.
\label{eq:decoupling_IID_conditional_entropy_tensor_power_additivity}
\end{align}
Combining
\eqref{eq:decoupling_IID_pinching_mutual_information_difference},
\eqref{eq:decoupling_IID_pinching_conditional_mutual_information_bound},
\eqref{eq:decoupling_IID_pinching_conditional_entropy_difference}, and
\eqref{eq:decoupling_IID_conditional_entropy_tensor_power_additivity},
we obtain
\begin{align}
0
\le
H(A^m|R^m)_{\overline\rho_m}
-
mH(A|R)_\rho
\le
\log2.
\label{eq:decoupling_IID_pinching_conditional_entropy_bound}
\end{align}

Introduce a binary classical register \(Y\).
When
\(\delta_{\widehat{\Pi}_m}>0\), define
\begin{align}
\widetilde\rho_m^{YA^mR^m}
&\coloneqq
\left(
1-\delta_{\widehat{\Pi}_m}
\right)
\ket{0}\bra{0}^Y
\otimes
\rho_{\widehat{\Pi}_m}^{A_{\widehat{\Pi}_m}^mR^m}
\nonumber\\
&\quad+
\delta_{\widehat{\Pi}_m}
\ket{1}\bra{1}^Y
\otimes
\rho_{\widehat{\Pi}_m^\perp}^{A_{\widehat{\Pi}_m^\perp}^mR^m},
\end{align}
where the conditional states are canonically embedded into the
orthogonal subspaces
\(\mathcal H^{A_{\widehat{\Pi}_m}^m}\)
and
\(\mathcal H^{A_{\widehat{\Pi}_m^\perp}^m}\)
of
\(\mathcal H^{A^m}\).
When
\(\delta_{\widehat{\Pi}_m}=0\), define
\begin{align}
\widetilde\rho_m^{YA^mR^m}
\coloneqq
\ket{0}\bra{0}^Y
\otimes
\left(
\rho^{AR}
\right)^{\otimes m}.
\end{align}
In either case, its \(A^mR^m\)-marginal is
\(\overline\rho_m^{A^mR^m}\).
Since \(Y\) is determined by the orthogonal support sectors of the
\(A^m\)-system,
\begin{align}
H(Y|A^mR^m)_{\widetilde\rho_m}
=
0.
\end{align}
Consequently,
\begin{align}
H(A^m|R^m)_{\overline\rho_m}
&=
H(Y|R^m)_{\widetilde\rho_m}
+
\left(
1-\delta_{\widehat{\Pi}_m}
\right)
H
\left(
A_{\widehat{\Pi}_m}^m
\middle|
R^m
\right)_{\rho_{\widehat{\Pi}_m}}
\nonumber\\
&\quad+
\delta_{\widehat{\Pi}_m}
H
\left(
A_{\widehat{\Pi}_m^\perp}^m
\middle|
R^m
\right)_{\rho_{\widehat{\Pi}_m^\perp}}.
\label{eq:decoupling_IID_conditional_entropy_branch_decomposition}
\end{align}
Because \(Y\) is classical,
\begin{align}
0
\le
H(Y|R^m)_{\widetilde\rho_m}
\le
H(Y)_{\widetilde\rho_m}
=
h_2
\left(
\delta_{\widehat{\Pi}_m}
\right)
\le
\log2.
\end{align}
Combining this bound with
\eqref{eq:decoupling_IID_pinching_conditional_entropy_bound}
and
\eqref{eq:decoupling_IID_conditional_entropy_branch_decomposition}
gives
\begin{align}
\Biggl|
&
\left(
1-\delta_{\widehat{\Pi}_m}
\right)
H
\left(
A_{\widehat{\Pi}_m}^m
\middle|
R^m
\right)_{\rho_{\widehat{\Pi}_m}}
\nonumber\\
&\quad+
\delta_{\widehat{\Pi}_m}
H
\left(
A_{\widehat{\Pi}_m^\perp}^m
\middle|
R^m
\right)_{\rho_{\widehat{\Pi}_m^\perp}}
-
mH(A|R)_\rho
\Biggr|
\le
\log2.
\label{eq:decoupling_IID_weighted_conditional_entropy_bound}
\end{align}

Whenever
\(\delta_{\widehat{\Pi}_m}>0\),
the marginal entropy
\begin{align}
H
\left(
A_{\widehat{\Pi}_m^\perp}^m
\right)_{\rho_{\widehat{\Pi}_m^\perp}}
\end{align}
is finite by
\eqref{eq:decoupling_IID_marginal_entropy_branch_decomposition}.
Therefore,
\eqref{eq:conditional_entropy_bounds_finite_marginal}
gives
\begin{align}
\left|
H
\left(
A_{\widehat{\Pi}_m^\perp}^m
\middle|
R^m
\right)_{\rho_{\widehat{\Pi}_m^\perp}}
\right|
\le
H
\left(
A_{\widehat{\Pi}_m^\perp}^m
\right)_{\rho_{\widehat{\Pi}_m^\perp}}.
\end{align}
Consequently,
\begin{align}
&\left|
\frac{
\delta_{\widehat{\Pi}_m}
}{
m
}
H
\left(
A_{\widehat{\Pi}_m^\perp}^m
\middle|
R^m
\right)_{\rho_{\widehat{\Pi}_m^\perp}}
\right|
\nonumber\\
&\le
\frac{
\delta_{\widehat{\Pi}_m}
}{
m
}
H
\left(
A_{\widehat{\Pi}_m^\perp}^m
\right)_{\rho_{\widehat{\Pi}_m^\perp}}.
\end{align}
Equation
\eqref{eq:decoupling_IID_weighted_complementary_entropy_limit}
therefore implies
\begin{align}
\lim_{m\to\infty}
\frac{
\delta_{\widehat{\Pi}_m}
}{
m
}
H
\left(
A_{\widehat{\Pi}_m^\perp}^m
\middle|
R^m
\right)_{\rho_{\widehat{\Pi}_m^\perp}}
=
0.
\label{eq:decoupling_IID_weighted_complementary_conditional_entropy_limit}
\end{align}
Dividing
\eqref{eq:decoupling_IID_weighted_conditional_entropy_bound}
by \(m\), and using
\eqref{eq:decoupling_IID_spectral_cutoff_error_limit}
and
\eqref{eq:decoupling_IID_weighted_complementary_conditional_entropy_limit},
proves
\eqref{eq:decoupling_IID_projected_conditional_entropy_limit}.

\medskip
\noindent
\textbf{Proof of
\eqref{eq:decoupling_IID_projected_mutual_information_limit}.}
By
\eqref{eq:I_AR_definition_finite_A_general},
\begin{align}
I
\left(
A_{\widehat{\Pi}_m}^m:R^m
\right)_{\rho_{\widehat{\Pi}_m}}
&=
H
\left(
A_{\widehat{\Pi}_m}^m
\right)_{\rho_{\widehat{\Pi}_m}}
\nonumber\\
&\quad-
H
\left(
A_{\widehat{\Pi}_m}^m
\middle|
R^m
\right)_{\rho_{\widehat{\Pi}_m}}.
\end{align}
Equations
\eqref{eq:decoupling_IID_projected_entropy_limit}
and
\eqref{eq:decoupling_IID_projected_conditional_entropy_limit}
therefore prove
\eqref{eq:decoupling_IID_projected_mutual_information_limit}.

\medskip
\noindent
\textbf{Proof of
\eqref{eq:IID_decoupling_prescribed_common_discarded_rate}.}
For every \(m\ge m_*\),
\eqref{eq:decoupling_IID_prescribed_discarded_dimension}
and
\eqref{eq:decoupling_IID_factor_dimensions}
give
\begin{align}
\left|
M_m
\right|
=
\left\lceil
\exp
\left[
mq
\right]
\right\rceil.
\end{align}
Since
\begin{align}
\exp
\left[
mq
\right]
\le
\left\lceil
\exp
\left[
mq
\right]
\right\rceil
\le
2
\exp
\left[
mq
\right],
\end{align}
we obtain
\begin{align}
q
\le
\frac{1}{m}
\log
\left|
M_m
\right|
\le
q
+
\frac{\log2}{m}.
\end{align}
This proves
\eqref{eq:IID_decoupling_prescribed_common_discarded_rate}.

\medskip
\noindent
\textbf{Proof of
\eqref{eq:decoupling_IID_spectral_cutoff_rank_rate}.}
By
\eqref{eq:decoupling_IID_projected_input_dimension}
and
\eqref{eq:decoupling_IID_cutoff_rank_exponential_upper_bound},
for every \(m\ge m_*\),
\begin{align}
\left|
A_{\widehat{\Pi}_m}^m
\right|
=
N_m
<
4
\exp
\left[
m
\left(
H(A)_\rho+\zeta_m
\right)
\right].
\label{eq:decoupling_IID_projected_dimension_exponential_upper_bound}
\end{align}
Taking logarithms in
\eqref{eq:decoupling_IID_projected_dimension_exponential_upper_bound}
and dividing by \(m\) gives
\begin{align}
\frac{1}{m}
\log
\left|
A_{\widehat{\Pi}_m}^m
\right|
<
H(A)_\rho
+
\zeta_m
+
\frac{\log4}{m}.
\label{eq:decoupling_IID_spectral_cutoff_rank_rate_upper_estimate}
\end{align}
Using
\eqref{eq:decoupling_IID_information_density_window_properties}
in
\eqref{eq:decoupling_IID_spectral_cutoff_rank_rate_upper_estimate},
we obtain
\begin{align}
\limsup_{m\to\infty}
\frac{1}{m}
\log
\left|
A_{\widehat{\Pi}_m}^m
\right|
\le
H(A)_\rho.
\label{eq:decoupling_IID_spectral_cutoff_rank_rate_upper_bound}
\end{align}

On the other hand,
\(\rho_{\widehat{\Pi}_m}^{A_{\widehat{\Pi}_m}^m}\)
is supported on the finite-dimensional space
\(\mathcal H^{A_{\widehat{\Pi}_m}^m}\).
The entropy--dimension bound therefore gives
\begin{align}
H
\left(
A_{\widehat{\Pi}_m}^m
\right)_{\rho_{\widehat{\Pi}_m}}
\le
\log
\left|
A_{\widehat{\Pi}_m}^m
\right|.
\label{eq:decoupling_IID_projected_entropy_dimension_bound}
\end{align}
Dividing
\eqref{eq:decoupling_IID_projected_entropy_dimension_bound}
by \(m\) gives
\begin{align}
\frac{1}{m}
H
\left(
A_{\widehat{\Pi}_m}^m
\right)_{\rho_{\widehat{\Pi}_m}}
\le
\frac{1}{m}
\log
\left|
A_{\widehat{\Pi}_m}^m
\right|.
\label{eq:decoupling_IID_projected_entropy_dimension_rate_bound}
\end{align}
Taking the limit inferior in
\eqref{eq:decoupling_IID_projected_entropy_dimension_rate_bound}
and using
\eqref{eq:decoupling_IID_projected_entropy_limit},
we obtain
\begin{align}
\liminf_{m\to\infty}
\frac{1}{m}
\log
\left|
A_{\widehat{\Pi}_m}^m
\right|
\ge
H(A)_\rho.
\label{eq:decoupling_IID_spectral_cutoff_rank_rate_lower_bound}
\end{align}

Combining
\eqref{eq:decoupling_IID_spectral_cutoff_rank_rate_upper_bound}
and
\eqref{eq:decoupling_IID_spectral_cutoff_rank_rate_lower_bound}
proves
\eqref{eq:decoupling_IID_spectral_cutoff_rank_rate}.
\end{proof}

\subsection{High-probability finite-rank projections for IID states}
\label{subsec:high_probability_finite_rank_IID_decoupling}

Using the blockwise spectral cutoffs constructed in
Subsection~\ref{subsec:finite_rank_spectral_cutoff},
we now construct a single finite-rank projection on the entire
\(n\)-copy input space.

Fix
\begin{align}
q
\in
\left[
0,
H(A)_\rho
\right],
\end{align}
and choose sequences
\begin{align}
\left\{
\widehat{\Pi}_m^{A^m},
E_m,
M_m,
V_m^{A_{\widehat{\Pi}_m}^m\to E_mM_m}
\right\}_{m=1}^{\infty}
\end{align}
satisfying
Proposition~\ref{prop:decoupling_IID_finite_entropy_spectral_cutoffs}
for this value of \(q\).
Fix
\begin{align}
m
\in
\{1,2,\ldots\}.
\end{align}
The block length \(m\) remains fixed throughout the following
construction, while the total number \(n\) of copies tends to
infinity.

A measurement of
\begin{align}
\left\{
\widehat{\Pi}_m^{A^m},
\widehat{\Pi}_m^{\perp A^m}
\right\}
\end{align}
on
\((\rho^A)^{\otimes m}\)
produces the common outcome with probability
\(1-\delta_{\widehat{\Pi}_m}\)
and the rare outcome with probability
\(\delta_{\widehat{\Pi}_m}\),
where the corresponding projections and spaces are defined in
\eqref{eq:decoupling_IID_spectral_cutoff_projection},
\eqref{eq:decoupling_IID_projected_input_space},
\eqref{eq:decoupling_IID_complementary_projection}, and
\eqref{eq:decoupling_IID_complementary_input_space}.
We refer to
\(\mathcal H^{A_{\widehat{\Pi}_m}^m}\)
as the ``common'' block subspace and to
\(\mathcal H^{A_{\widehat{\Pi}_m^\perp}^m}\)
as the ``rare'' block subspace.
The $m$-copy space $\mathcal{H}^{A^m}$ is decomposed into
\begin{align}
\label{eq:decoupling_IID_Pi_perp_decomposition}
    \mathcal{H}^{A^m}=\mathcal H^{A_{\widehat{\Pi}_m}^m}\oplus\mathcal H^{A_{\widehat{\Pi}_m^\perp}^m}.
\end{align}

A direct repetition of the common-block projection does not yield a
high-probability projection on a growing number of blocks.
Indeed, the tensor-product projection
\begin{align}
\left(
\widehat{\Pi}_m^{A^m}
\right)^{\otimes b}
\end{align}
accepts
\((\rho^A)^{\otimes mb}\)
with probability
\begin{align}
\operatorname{Tr}
\left[
\left(
\widehat{\Pi}_m^{A^m}
\right)^{\otimes b}
(\rho^A)^{\otimes mb}
\right]
=
\left(
1-\delta_{\widehat{\Pi}_m}
\right)^b,
\label{eq:IID_decoupling_naive_repeated_cutoff_probability}
\end{align}
which converges to zero as
\(b\to\infty\)
whenever
\(\delta_{\widehat{\Pi}_m}>0\).
Thus, even when
\(\delta_{\widehat{\Pi}_m}\)
is small, requiring every block to lie in the common subspace is
asymptotically too restrictive.

The construction below instead retains all common--rare patterns
whose empirical rare-block frequency is close to
\(\delta_{\widehat{\Pi}_m}\).
Within each such pattern, we group a fixed number of common blocks into a subsystem to be decoupled by Haar-random unitary in the subsequent decoupling protocol, while keeping the remaining common blocks and the rare blocks.
The rare blocks, which may be infinite-dimensional, are projected into a finite-rank typical subspace.
The resulting projection on \(n\) copies has asymptotically unit
success probability.
Moreover, with the above grouping, 
the range of the projection admits a pattern-independent subsystem consisting of the fixed number of identical copies of \(\mathcal H^{A_{\widehat{\Pi}_m}^m}\).
This provides the finite-dimensional tensor-product structure needed for the subsequent decoupling protocol, while the asymptotic choice of the blockwise cutoffs is governed by
Proposition~\ref{prop:decoupling_IID_finite_entropy_spectral_cutoffs}.

For the fixed block length \(m\) and every \(n\ge m\), let
\(b_{n,m}\in\{1,2,\ldots\}\)
and
\(c_{n,m}\in\{0,\ldots,m-1\}\)
be uniquely determined by
\begin{align}
n
=
mb_{n,m}
+
c_{n,m},
\label{eq:decoupling_IID_number_of_blocks}
\end{align}
where
\begin{align}
b_{n,m}
&=
\left\lfloor
\frac{n}{m}
\right\rfloor,
&
\lim_{n\to\infty}
\frac{
b_{n,m}
}{
n
}
&=
\frac{1}{m},
&
0
\le
c_{n,m}
<
m.
\label{eq:decoupling_IID_block_quotient_properties}
\end{align}
Here,
\(b_{n,m}\)
is the number of complete \(m\)-copy blocks, while
\(c_{n,m}\)
is the number of residual copies not contained in a complete block.
The entire $n$-copy input space then decomposes into
\begin{align}
    \left(\mathcal H^A\right)^{\otimes n}=\left(\mathcal H^{A^m}\right)^{\otimes b_{n,m}}\otimes\mathcal H^{A^{c_{n,m}}},
\end{align}
the first factor consists of \(b\) complete \(m\)-copy blocks, while the second is the residual subsystem.

For a binary pattern
\begin{align}
x^{b_{n,m}}
=
\left(
x_1,\ldots,x_{b_{n,m}}
\right)
\in
\{0,1\}^{b_{n,m}},
\end{align}
the value \(x_j=0\) indicates that the \(j\)-th complete block lies in
the common subspace
\(\mathcal H^{A_{\widehat{\Pi}_m}^m}\),
whereas \(x_j=1\) indicates that it lies in the rare subspace
\(\mathcal H^{A_{\widehat{\Pi}_m^\perp}^m}\).
Due to~\eqref{eq:decoupling_IID_Pi_perp_decomposition}, the space of the complete blocks decomposes into
\begin{align}
    \left(\mathcal H^{A^m}\right)^{\otimes b_{n,m}}
    &=\bigoplus_{x^{b_{n,m}}\in \{0,1\}^{b_{n,m}}}\mathcal{H}_{x_1}\otimes\cdots\otimes\mathcal{H}_{x_{b_{n,m}}}, \\
    \mathcal H_x
&\coloneqq
\begin{cases}
\mathcal H^{A_{\widehat{\Pi}_m}^m},
&
x=0,
\\
\mathcal H^{A_{\widehat{\Pi}_m^\perp}^m},
&
x=1.
\end{cases}
\end{align}

By
\eqref{eq:decoupling_IID_spectral_cutoff_error_strict_bound},
one always has
\begin{align}
\delta_{\widehat{\Pi}_m}
<
1.
\end{align}
We now treat separately the case in which the rare branch has positive
probability and the case in which it is absent.

\medskip
\noindent
\textbf{Typical common--rare patterns (\(0<\delta_{\widehat{\Pi}_m}<1\)).}
Assume first that
\begin{align}
0
<
\delta_{\widehat{\Pi}_m}
<
1.
\label{eq:IID_decoupling_global_cutoff_positive_rare_case}
\end{align}
In this case, the rare branch occurs with positive probability.

Define the number of rare blocks by
\begin{align}
k
\left(
x^{b_{n,m}}
\right)
\coloneqq
\sum_{j=1}^{b_{n,m}}
x_j.
\label{eq:IID_decoupling_global_cutoff_rare_block_number}
\end{align}
Thus,
\(k(x^{b_{n,m}})\)
is determined by the pattern and is not an independently chosen
parameter.
Fix a pattern-typicality width
\begin{align}
0
<
\eta
<
\min
\left\{
\delta_{\widehat{\Pi}_m},
1-\delta_{\widehat{\Pi}_m}
\right\}.
\label{eq:IID_decoupling_global_cutoff_pattern_tolerance}
\end{align}
Define the weakly typical set for the common--rare patterns by
\begin{align}
\mathcal T_{n,m,\eta}
\coloneqq
\left\{
x^{b_{n,m}}
\in
\{0,1\}^{b_{n,m}}:
\left|
\frac{
k(x^{b_{n,m}})
}{
b_{n,m}
}
-
\delta_{\widehat{\Pi}_m}
\right|
\le
\eta
\right\}.
\label{eq:IID_decoupling_global_cutoff_pattern_typical_set}
\end{align}
Because the complete blocks are in a product state, the common--rare
outcomes are IID Bernoulli random variables with rare-outcome
probability
\(\delta_{\widehat{\Pi}_m}\).
The law of large numbers therefore gives
\begin{align}
\lim_{n\to\infty}
\sum_{
x^{b_{n,m}}
\in
\mathcal T_{n,m,\eta}
}
\left(
1-\delta_{\widehat{\Pi}_m}
\right)^{
b_{n,m}-k(x^{b_{n,m}})
}
\delta_{\widehat{\Pi}_m}^{k(x^{b_{n,m}})}
=
1.
\label{eq:IID_decoupling_global_cutoff_pattern_probability_limit}
\end{align}
The method of types~\cite[Eq.~(12.1), Theorem~12.1.3]{Cover2006} gives
\begin{align}
\left|
\mathcal T_{n,m,\eta}
\right|
\le
\left(
b_{n,m}+1
\right)
\exp
\Biggl[
b_{n,m}
\max
\left\{
h_2(t):
t\in[0,1],\
\left|
t-\delta_{\widehat{\Pi}_m}
\right|
\le
\eta
\right\}
\Biggr].
\label{eq:IID_decoupling_global_cutoff_pattern_cardinality}
\end{align}

Every pattern in
\(\mathcal T_{n,m,\eta}\)
contains at least
\begin{align}
b_{n,m}
\left(
1-\delta_{\widehat{\Pi}_m}-\eta
\right)
\end{align}
common blocks.
Define
\begin{align}
\ell_{n,m,\eta}
\coloneqq
\left\lfloor
b_{n,m}
\left(
1-\delta_{\widehat{\Pi}_m}-\eta
\right)
\right\rfloor.
\label{eq:IID_decoupling_global_cutoff_common_block_number}
\end{align}
The integer
\(\ell_{n,m,\eta}\)
is the fixed number of common blocks isolated from every pattern in
\(\mathcal T_{n,m,\eta}\).
It depends on \(n\), the fixed block length \(m\), and the
pattern-typicality width \(\eta\), but not on the individual pattern.

For every
\(x^{b_{n,m}}\in\mathcal T_{n,m,\eta}\),
write the common-block indices in increasing order as
\begin{align}
\left\{
j:
x_j=0
\right\}
=
\left\{
j_1
\left(
x^{b_{n,m}}
\right)
<
\cdots
<
j_{
b_{n,m}-k(x^{b_{n,m}})
}
\left(
x^{b_{n,m}}
\right)
\right\}.
\end{align}
Define three disjoint subsets of
\(\{1,\ldots,b_{n,m}\}\)
by
\begin{align}
\mathcal I_{\mathrm{fix}}
\left(
x^{b_{n,m}}
\right)
&\coloneqq
\left\{
j_1
\left(
x^{b_{n,m}}
\right),
\ldots,
j_{\ell_{n,m,\eta}}
\left(
x^{b_{n,m}}
\right)
\right\},
\nonumber\\
\mathcal I_{\mathrm{rem}}
\left(
x^{b_{n,m}}
\right)
&\coloneqq
\left\{
j:
x_j=0
\right\}
\setminus
\mathcal I_{\mathrm{fix}}
\left(
x^{b_{n,m}}
\right),
\nonumber\\
\mathcal I_{\mathrm{rare}}
\left(
x^{b_{n,m}}
\right)
&\coloneqq
\left\{
j:
x_j=1
\right\}.
\label{eq:IID_decoupling_global_cutoff_pattern_index_sets}
\end{align}
The first set is understood to be empty when
\(\ell_{n,m,\eta}=0\).

The set
\(\mathcal I_{\mathrm{fix}}\)
labels the fixed-count common blocks.
Here, ``fixed'' refers to their number:
\begin{align}
\left|
\mathcal I_{\mathrm{fix}}
\left(
x^{b_{n,m}}
\right)
\right|
=
\ell_{n,m,\eta}
\label{eq:ell}
\end{align}
for every typical pattern, although their original positions may
depend on the pattern.
After a pattern-dependent permutation, these blocks form the
pattern-independent subsystem
\begin{align}
\left(
\mathcal H^{A_{\widehat{\Pi}_m}^m}
\right)^{\otimes\ell_{n,m,\eta}}
\end{align}
used in the subsequent random-unitary partial-trace protocol.

The set
\(\mathcal I_{\mathrm{rem}}\)
labels the remaining common blocks after the fixed-count common blocks
have been isolated.
Its cardinality is
\begin{align}
\left|
\mathcal I_{\mathrm{rem}}
\left(
x^{b_{n,m}}
\right)
\right|
=
b_{n,m}
-
k
\left(
x^{b_{n,m}}
\right)
-
\ell_{n,m,\eta}.
\end{align}
The convention of taking the first
\(\ell_{n,m,\eta}\)
common-block indices in increasing order is used only to specify this
decomposition canonically.
By
\eqref{eq:IID_decoupling_global_cutoff_pattern_typical_set},
\begin{align}
&b_{n,m}
\left(
\delta_{\widehat{\Pi}_m}-\eta
\right)
\le
k
\left(
x^{b_{n,m}}
\right)
\le
b_{n,m}
\left(
\delta_{\widehat{\Pi}_m}+\eta
\right),
\label{eq:IID_decoupling_global_cutoff_rare_count}
\end{align}
and hence,
\begin{align}
&0
\le
\left|
\mathcal I_{\mathrm{rem}}
\left(
x^{b_{n,m}}
\right)
\right|
\le
2\eta b_{n,m}+1.
\label{eq:IID_decoupling_global_cutoff_remaining_common_count}
\end{align}

The set
\(\mathcal I_{\mathrm{rare}}\)
labels the blocks lying in
\(\mathcal H^{A_{\widehat{\Pi}_m^\perp}^m}\), and
\begin{align}
\left|
\mathcal I_{\mathrm{rare}}
\left(
x^{b_{n,m}}
\right)
\right|
=
k
\left(
x^{b_{n,m}}
\right),
\end{align}
which is in the range~\eqref{eq:IID_decoupling_global_cutoff_rare_count}.
In particular, since
\(\eta<\delta_{\widehat{\Pi}_m}\),
\eqref{eq:IID_decoupling_global_cutoff_rare_count} implies
\begin{align}
\lim_{n\to\infty}
\min_{
x^{b_{n,m}}
\in
\mathcal T_{n,m,\eta}
}
k
\left(
x^{b_{n,m}}
\right)
=
\infty.
\label{eq:IID_decoupling_global_cutoff_uniform_rare_count_divergence}
\end{align}

Let
\begin{align}
U_{x^{b_{n,m}}}^{A,\mathrm{perm}}
\in
\operatorname{U}
\left(
\mathcal H^{A^{mb_{n,m}}}
\right)
\end{align}
denote the unitary that permutes the
\(b_{n,m}\)
complete \(m\)-copy blocks so that they appear in the order
\begin{align}
\mathcal I_{\mathrm{fix}}
\left(
x^{b_{n,m}}
\right),
\qquad
\mathcal I_{\mathrm{rem}}
\left(
x^{b_{n,m}}
\right),
\qquad
\mathcal I_{\mathrm{rare}}
\left(
x^{b_{n,m}}
\right),
\end{align}
with the indices within each set kept in increasing order.
Its restriction to the pattern sector
\begin{align}
\mathcal H_{x_1}
\otimes\cdots\otimes
\mathcal H_{x_{b_{n,m}}}
\end{align}
is a unitary isomorphism onto
\begin{align}
&
\left(
\mathcal H^{A_{\widehat{\Pi}_m}^m}
\right)^{\otimes\ell_{n,m,\eta}}
\otimes
\left(
\mathcal H^{A_{\widehat{\Pi}_m}^m}
\right)^{
\otimes
\left(
b_{n,m}
-
k(x^{b_{n,m}})
-
\ell_{n,m,\eta}
\right)
}
\otimes
\left(
\mathcal H^{A_{\widehat{\Pi}_m^\perp}^m}
\right)^{
\otimes k(x^{b_{n,m}})
}.
\end{align}

\medskip
\noindent
\textbf{Finite-rank cutoff for the rare blocks ($0<\delta_{\widehat{\Pi}_m}<1$).}
Whereas $\mathcal{H}^{A_{\widehat{\Pi}_m}^m}$ for the common blocks is finite-dimensional, $\mathcal{H}^{A_{\widehat{\Pi}_m^\perp}^m}$ for the rare blocks may be infinite-dimensional; thus, we further use weak typicality to introduce a finite-dimensional cutoff.
The normalized rare-branch state
\(\rho_{\widehat{\Pi}_m^\perp}^{A_{\widehat{\Pi}_m^\perp}^mR^m}\)
and its input marginal
\(\rho_{\widehat{\Pi}_m^\perp}^{A_{\widehat{\Pi}_m^\perp}^m}\)
are those defined in
\eqref{eq:decoupling_IID_normalized_complementary_block_state}
and
\eqref{eq:decoupling_IID_complementary_block_marginal},
respectively.
Equation
\eqref{eq:decoupling_IID_marginal_entropy_branch_decomposition}
gives
\begin{align}
&\delta_{\widehat{\Pi}_m}
H
\left(
A_{\widehat{\Pi}_m^\perp}^m
\right)_{\rho_{\widehat{\Pi}_m^\perp}}
\nonumber\\
&=
mH(A)_\rho
-
h_2
\left(
\delta_{\widehat{\Pi}_m}
\right)
-
\left(
1-\delta_{\widehat{\Pi}_m}
\right)
H
\left(
A_{\widehat{\Pi}_m}^m
\right)_{\rho_{\widehat{\Pi}_m}}
<
\infty.
\label{eq:IID_decoupling_global_cutoff_rare_entropy_finite}
\end{align}
Since
\(\delta_{\widehat{\Pi}_m}>0\),
it follows that
\begin{align}
H
\left(
A_{\widehat{\Pi}_m^\perp}^m
\right)_{\rho_{\widehat{\Pi}_m^\perp}}
<
\infty,
\label{eq:IID_decoupling_global_cutoff_normalized_rare_entropy_finite}
\end{align}
which allows us to apply weak typicality.

Fix a weak-typicality width
\begin{align}
\xi
>
0.
\end{align}
The nonzero eigenvalues of
\(\rho_{\widehat{\Pi}_m^\perp}^{A_{\widehat{\Pi}_m^\perp}^m}\)
are
\begin{align}
\frac{
\lambda_i^{A^m}
}{
\delta_{\widehat{\Pi}_m}
},
\qquad
i
\in \mathcal J_m^{\mathrm{rare}},
\end{align}
where \(\mathcal J_m^{\mathrm{rare}}\) is the index set
\begin{align}
\mathcal J_m^{\mathrm{rare}}\coloneqq\left\{
i\in\{N_m,N_m+1,\ldots\}:
\lambda_i^{A^m}>0
\right\},
\end{align}
and the index set terminates when
\(\mathcal H^{A^m}\)
is finite-dimensional.
For every
\(k\in\{1,2,\ldots\}\),
where \(k\) denotes the number of rare \(m\)-copy blocks, define
\begin{align}
\mathcal T_{k,m,\xi}^{\mathrm{rare}}
\coloneqq
\Biggl\{
i^k
\in
\left(
\mathcal J_m^{\mathrm{rare}}
\right)^k:
\Biggl|
-\frac{1}{k}
\log
\prod_{j=1}^{k}
\frac{
\lambda_{i_j}^{A^m}
}{
\delta_{\widehat{\Pi}_m}
}
-
H
\left(
A_{\widehat{\Pi}_m^\perp}^m
\right)_{\rho_{\widehat{\Pi}_m^\perp}}
\Biggr|
\le
\xi
\Biggr\}.
\label{eq:IID_decoupling_global_cutoff_rare_typical_set}
\end{align}
The corresponding finite-rank projection on
\begin{align}
\left(
\mathcal H^{A_{\widehat{\Pi}_m^\perp}^m}
\right)^{\otimes k}
\end{align}
is
\begin{align}
\Pi_{k,m,\xi}^{\mathrm{rare}}
\coloneqq
\sum_{
i^k
\in
\mathcal T_{k,m,\xi}^{\mathrm{rare}}
}
\bigotimes_{j=1}^{k}
\ket{\psi_{i_j}}\bra{\psi_{i_j}},
\label{eq:IID_decoupling_global_cutoff_rare_typical_projection}
\end{align}
and the corresponding subspace is denoted by
\begin{align}
    \mathcal{H}^{A_{\Pi_{k,m,\xi}^{\mathrm{rare}}}^{mk}}
    &\coloneqq
    \Pi_{k,m,\xi}^{\mathrm{rare}}\left(
\mathcal H^{A_{\widehat{\Pi}_m^\perp}^m}
\right)^{\otimes k}.
\end{align}
Whenever
\(\Pi_{k,m,\xi}^{\mathrm{rare}}\)
is regarded as an operator on
\(\left(\mathcal H^{A^m}\right)^{\otimes k}\),
we identify it with its extension by zero on the orthogonal complement
of
\(\left(\mathcal H^{A_{\widehat{\Pi}_m^\perp}^m}\right)^{\otimes k}\).
For \(k=0\), set
\begin{align}
\mathcal T_{0,m,\xi}^{\mathrm{rare}}
&\coloneqq
\{\emptyset\},
&
\Pi_{0,m,\xi}^{\mathrm{rare}}
&\coloneqq
1,
&\mathcal{H}^{A_{\Pi_{0,m,\xi}^{\mathrm{rare}}}^{0}}
&\coloneqq\mathbb{C}.
\label{eq:IID_decoupling_global_cutoff_rare_zero_copy_definition}
\end{align}

The law of large numbers and
\eqref{eq:IID_decoupling_global_cutoff_normalized_rare_entropy_finite}
give
\begin{align}
\lim_{k\to\infty}
\operatorname{Tr}
\left[
\Pi_{k,m,\xi}^{\mathrm{rare}}
\left(
\rho_{\widehat{\Pi}_m^\perp}^{A_{\widehat{\Pi}_m^\perp}^m}
\right)^{\otimes k}
\right]
=
1.
\label{eq:IID_decoupling_global_cutoff_rare_probability_limit}
\end{align}
Moreover, every sequence in
\(\mathcal T_{k,m,\xi}^{\mathrm{rare}}\)
has probability at least
\begin{align}
\exp
\left[
-k
\left(
H
\left(
A_{\widehat{\Pi}_m^\perp}^m
\right)_{\rho_{\widehat{\Pi}_m^\perp}}
+
\xi
\right)
\right].
\end{align}
Since the sum of the probabilities of all sequences is at most one,
it follows that
\begin{align}
\operatorname{rank}
\Pi_{k,m,\xi}^{\mathrm{rare}}
\le
\exp
\left[
k
\left(
H
\left(
A_{\widehat{\Pi}_m^\perp}^m
\right)_{\rho_{\widehat{\Pi}_m^\perp}}
+
\xi
\right)
\right].
\label{eq:IID_decoupling_global_cutoff_rare_rank_bound}
\end{align}

For fixed \(m\) and \(\xi\),  define the asymptotic rank exponent of the rare typical projections by
\begin{align}
\gamma_{m,\xi}^{\mathrm{rare}}
\coloneqq
\lim_{k\to\infty}
\frac{1}{k}
\log
\operatorname{rank}
\Pi_{k,m,\xi}^{\mathrm{rare}},
\label{eq:IID_decoupling_global_cutoff_rare_rank_exponent}
\end{align}
where we adopt the convention
\begin{align}
\log 0
\coloneqq
-\infty.
\label{eq:IID_decoupling_global_cutoff_log_zero_convention}
\end{align}
This limit exists.
To see its existence, if
\begin{align}
i^k
\in
\mathcal T_{k,m,\xi}^{\mathrm{rare}}
\qquad
\text{and}
\qquad
j^{k'}
\in
\mathcal T_{k',m,\xi}^{\mathrm{rare}},
\end{align}
then their concatenation belongs to
\(\mathcal T_{k+k',m,\xi}^{\mathrm{rare}}\).
Consequently,
\begin{align}
\operatorname{rank}
\Pi_{k+k',m,\xi}^{\mathrm{rare}}
\ge
\operatorname{rank}
\Pi_{k,m,\xi}^{\mathrm{rare}}
\operatorname{rank}
\Pi_{k',m,\xi}^{\mathrm{rare}}.
\label{eq:IID_decoupling_global_cutoff_rare_rank_supermultiplicativity}
\end{align}
Therefore, the extended-real-valued sequence
\begin{align}
\left\{
\log
\operatorname{rank}
\Pi_{k,m,\xi}^{\mathrm{rare}}
\right\}_{k\in\{1,2,\ldots\}}
\end{align}
is superadditive.
Moreover,
\eqref{eq:IID_decoupling_global_cutoff_rare_probability_limit}
implies that
\begin{align}
\operatorname{rank}
\Pi_{k,m,\xi}^{\mathrm{rare}}
>
0
\end{align}
for all sufficiently large \(k\).
Thus, the sequence is finite from some point onward, and Fekete's
lemma~\cite{fekete1923verteilung} proves the existence of the limit of the extended-real-valued sequence in
\eqref{eq:IID_decoupling_global_cutoff_rare_rank_exponent}.

\medskip
\noindent
\textbf{Finite-rank cutoff for the residual subsystem.}
It remains to control the separable, possibly infinite-dimensional residual subsystem, which is treated identically
in the cases
\(\delta_{\widehat{\Pi}_m}>0\)
and
\(\delta_{\widehat{\Pi}_m}=0\).
For \(c_{n,m}>0\), define
\begin{align}
r_{n,m}^{\mathrm{res}}
\coloneqq
\begin{cases}
\displaystyle
\min\left\{
\left\lceil \exp(\sqrt n)\right\rceil,
|A|^{c_{n,m}}
\right\},
& |A|<\infty,\\[1.2ex]
\displaystyle
\left\lceil \exp(\sqrt n)\right\rceil,
& |A|=\infty.
\end{cases}
\label{eq:IID_decoupling_global_cutoff_residual_rank}
\end{align}
Let
\(\Pi_{n,m}^{\mathrm{res},A^{c_{n,m}}}\)
be the top-\(r_{n,m}^{\mathrm{res}}\) spectral projection of
\((\rho^A)^{\otimes c_{n,m}}\), with eigenvalues counted with
multiplicity.
The corresponding subspace is denoted by
\begin{align}
    \mathcal{H}^{A_{\Pi_{n,m}^{\mathrm{res},A^{c_{n,m}}}}^{c_{n,m}}}\coloneqq\Pi_{n,m}^{\mathrm{res},A^{c_{n,m}}}\mathcal{H}^{A^{c_{n,m}}}.
\end{align}
Define
\begin{align}
\delta_{n,m}^{\mathrm{res}}
\coloneqq
\operatorname{Tr}
\left[
\left(
I^{A^{c_{n,m}}}
-
\Pi_{n,m}^{\mathrm{res},A^{c_{n,m}}}
\right)
(\rho^A)^{\otimes c_{n,m}}
\right].
\label{eq:IID_decoupling_global_cutoff_residual_error}
\end{align}
When
\(c_{n,m}=0\), set
\begin{align}
\mathcal{H}^{A_{\Pi_{n,m}^{\mathrm{res},A^{c_{n,m}}}}^{c_{n,m}}}
&\coloneqq
\mathbb C,
&
\Pi_{n,m}^{\mathrm{res},A^{c_{n,m}}}
&\coloneqq
1,
&
\delta_{n,m}^{\mathrm{res}}
&\coloneqq
0.
\label{eq:IID_decoupling_zero_residual_definition}
\end{align}

For each fixed
\(c\in\{1,\ldots,m-1\}\),
the retained rank tends to infinity when
\(\mathcal H^{A^c}\)
is infinite-dimensional and eventually equals
\(\left|A\right|^c\)
when
\(\mathcal H^{A^c}\)
is finite-dimensional.
Consequently, along every subsequence on which
\(c_{n,m}=c\),
monotone convergence of the corresponding spectral sums gives
\begin{align}
\delta_{n,m}^{\mathrm{res}}
\longrightarrow
0.
\end{align}
Since only finitely many residual copy numbers can occur,
\begin{align}
\lim_{n\to\infty}
\delta_{n,m}^{\mathrm{res}}
=
0.
\label{eq:IID_decoupling_global_cutoff_residual_error_limit}
\end{align}
Moreover,
\begin{align}
\operatorname{rank}
\Pi_{n,m}^{\mathrm{res},A^{c_{n,m}}}
\le
\left\lceil
\exp
\left[
\sqrt n
\right]
\right\rceil
\le
2
\exp
\left[
\sqrt n
\right]
\end{align}
whenever \(c_{n,m}>0\), while the rank equals one when
\(c_{n,m}=0\).
Therefore,
\begin{align}
\lim_{n\to\infty}
\frac{1}{n}
\log
\operatorname{rank}
\Pi_{n,m}^{\mathrm{res},A^{c_{n,m}}}
=
0.
\label{eq:IID_decoupling_global_cutoff_residual_rank_rate}
\end{align}

\medskip
\noindent
\textbf{The high-probability finite-rank projections in the $n$-copy space ($0<\delta_{\widehat{\Pi}_m}<1$).}
Under $0<\delta_{\widehat{\Pi}_m}<1$, for every sufficiently large \(n\), define
\begin{align}
&\Pi_{n,m,\eta,\xi}^{A^n}
\nonumber\\
&\coloneqq
\sum_{
x^{b_{n,m}}
\in
\mathcal T_{n,m,\eta}
}
\Biggl\{
\left(
U_{x^{b_{n,m}}}^{A,\mathrm{perm}}
\right)^\dagger
\Biggl[
\left(
\widehat{\Pi}_m^{A^m}
\right)^{\otimes\ell_{n,m,\eta}}
\otimes
\left(
\widehat{\Pi}_m^{A^m}
\right)^{
\otimes
\left(b_{n,m} - k \left( x^{b_{n,m}} \right) - \ell_{n,m,\eta}\right)
}
\otimes
\Pi_{
k(x^{b_{n,m}}),m,\xi
}^{\mathrm{rare}}
\Biggr]
U_{x^{b_{n,m}}}^{A,\mathrm{perm}}
\Biggr\}
\nonumber\\
&\quad\otimes
\Pi_{n,m}^{\mathrm{res},A^{c_{n,m}}}.
\label{eq:IID_decoupling_global_cutoff_projection_fixed_block}
\end{align}
Distinct common--rare patterns correspond to mutually orthogonal
tensor-product subspaces.
Hence, the summands in
\eqref{eq:IID_decoupling_global_cutoff_projection_fixed_block}
have mutually orthogonal ranges, and their sum is a finite-rank
projection.

Equations
\eqref{eq:IID_decoupling_global_cutoff_pattern_probability_limit},
\eqref{eq:IID_decoupling_global_cutoff_rare_probability_limit}, and
\eqref{eq:IID_decoupling_global_cutoff_uniform_rare_count_divergence}
imply that the right-hand side of
\eqref{eq:IID_decoupling_global_cutoff_projection_fixed_block}
is nonzero for all sufficiently large \(n\).
For the remaining finitely many \(n\ge m\), define
\(\Pi_{n,m,\eta,\xi}^{A^n}\)
to be the rank-one projection onto an eigenvector associated with the
largest eigenvalue of
\((\rho^A)^{\otimes n}\).
For the finitely many values
\(n<m\),
we use the same exceptional convention: we choose
\(\Pi_{n,m,\eta,\xi}^{A^n}\)
to be a rank-one spectral projection corresponding to a positive
eigenvalue of
\((\rho^A)^{\otimes n}\),
take
\(M_{n,m,\eta,\xi}\)
to be one-dimensional, and identify
\(E_{n,m,\eta,\xi}\)
with the range of the projection.
These choices do not affect any asymptotic statement.

For all sufficiently large \(n\), define the finite-dimensional
remainder space by the external Hilbert-space direct sum
\begin{align}
&\mathcal H^{B_{n,m,\eta,\xi}}
\nonumber\\
&\coloneqq
\Biggl[
\bigoplus_{
x^{b_{n,m}}
\in
\mathcal T_{n,m,\eta}
}
\Biggl(
\left(
\mathcal H^{A_{\widehat{\Pi}_m}^m}
\right)^{
\otimes
\left(b_{n,m} - k \left( x^{b_{n,m}} \right) - \ell_{n,m,\eta}\right)
}
\otimes
\mathcal{H}^{A_{\Pi_{k(x^{b_{n,m}}),m,\xi}^{\mathrm{rare}}}^{mk(x^{b_{n,m}})}}
\Biggr)
\Biggr]
\otimes
\mathcal{H}^{A_{\Pi_{n,m}^{\mathrm{res},A^{c_{n,m}}}}^{c_{n,m}}}.
\label{eq:IID_decoupling_global_cutoff_remainder_space}
\end{align}
The external direct sum retains the mutually orthogonal
typical-pattern sector label.
Thus,
\(B_{n,m,\eta,\xi}\)
contains the remaining common blocks, the rare blocks, the pattern
sector, and the residual copies.

\medskip
\noindent
\textbf{The high-probability finite-rank projections in the $n$-copy space 
(\(\delta_{\widehat{\Pi}_m}=0\)).}
Assume now that
\begin{align}
\delta_{\widehat{\Pi}_m}
=
0.
\label{eq:IID_decoupling_global_cutoff_common_only_case}
\end{align}
In this case, the rare branch is absent.

Set
\begin{align}
\eta
=
\xi
=
0,
\label{eq:eta_xi_zero}
\end{align}
and adopt the convention
\begin{align}
\gamma_{m,0}^{\mathrm{rare}}
\coloneqq
0.
\label{eq:IID_decoupling_global_cutoff_common_only_rare_rank_exponent}
\end{align}
Define
\begin{align}
\mathcal T_{n,m,0}
&\coloneqq
\left\{
0^{b_{n,m}}
\right\},
\label{eq:IID_decoupling_common_only_pattern_set}
\\
k
\left(
0^{b_{n,m}}
\right)
&\coloneqq
0,
\label{eq:IID_decoupling_common_only_rare_count}
\\
\ell_{n,m,0}
&\coloneqq
b_{n,m},
\label{eq:IID_decoupling_common_only_fixed_count}
\\
\mathcal I_{\mathrm{fix}}
\left(
0^{b_{n,m}}
\right)
&\coloneqq
\left\{
1,\ldots,b_{n,m}
\right\},
\label{eq:IID_decoupling_common_only_fixed_set}
\\
\mathcal I_{\mathrm{rem}}
\left(
0^{b_{n,m}}
\right)
&\coloneqq
\emptyset,
\label{eq:IID_decoupling_common_only_remaining_set}
\\
\mathcal I_{\mathrm{rare}}
\left(
0^{b_{n,m}}
\right)
&\coloneqq
\emptyset,
\label{eq:IID_decoupling_common_only_rare_set}
\\
U_{0^{b_{n,m}}}^{A,\mathrm{perm}}
&\coloneqq
I^{A^{mb_{n,m}}}.
\label{eq:IID_decoupling_common_only_permutation}
\end{align}
Define the global projection and the remainder space by
\begin{align}
\Pi_{n,m,0,0}^{A^n}
&\coloneqq
\left(
\widehat{\Pi}_m^{A^m}
\right)^{\otimes b_{n,m}}
\otimes
\Pi_{n,m}^{\mathrm{res},A^{c_{n,m}}},
\label{eq:IID_decoupling_global_cutoff_projection_common_only}
\\
\mathcal H^{B_{n,m,0,0}}
&\coloneqq
\mathcal{H}^{A_{\Pi_{n,m}^{\mathrm{res},A^{c_{n,m}}}}^{c_{n,m}}}.
\label{eq:IID_decoupling_global_cutoff_common_only_remainder_space}
\end{align}
Thus, every complete block is a fixed-count common block, and there
are no remaining common blocks or rare blocks.

\medskip
\noindent
\textbf{Unified definitions.}
The following projection-conditioned spaces, errors, states, and
marginals are the specializations of
\eqref{eq:IID_decoupling_projection_complement}--%
\eqref{eq:IID_decoupling_zero_complementary_convention}
to the projection
\(\Pi_{n,m,\eta,\xi}^{A^n}\).
The separate equation labels are retained for convenient reference to
the global-cutoff construction.

In either case, define the complementary projection, cutoff error,
projected input spaces, and projected dimension by
\begin{align}
\Pi_{n,m,\eta,\xi}^{\perp A^n}
&\coloneqq
I^{A^n}
-
\Pi_{n,m,\eta,\xi}^{A^n},
\label{eq:IID_decoupling_global_cutoff_complement}
\\
\delta_{\Pi_{n,m,\eta,\xi}}
&\coloneqq
\operatorname{Tr}
\left[
\Pi_{n,m,\eta,\xi}^{\perp A^n}
(\rho^A)^{\otimes n}
\right],
\label{eq:IID_decoupling_global_cutoff_error}
\\
\mathcal H^{A_{\Pi_{n,m,\eta,\xi}}^n}
&\coloneqq
\Pi_{n,m,\eta,\xi}^{A^n}
\mathcal H^{A^n},
\label{eq:IID_decoupling_global_cutoff_input_space}
\\
\mathcal H^{A_{\Pi_{n,m,\eta,\xi}^{\perp}}^n}
&\coloneqq
\Pi_{n,m,\eta,\xi}^{\perp A^n}
\mathcal H^{A^n},
\label{eq:IID_decoupling_global_cutoff_complementary_input_space}
\\
\left|
A_{\Pi_{n,m,\eta,\xi}}^n
\right|
&\coloneqq
\operatorname{rank}
\Pi_{n,m,\eta,\xi}^{A^n}.
\label{eq:IID_decoupling_global_cutoff_input_dimension}
\end{align}

Each cutoff projection is chosen to be diagonal in an eigenbasis of
the corresponding marginal and therefore commutes with that marginal.
Moreover, the permutations of the complete \(m\)-copy blocks commute
with
\((\rho^A)^{\otimes mb_{n,m}}\).
The exceptional rank-one projections are chosen onto eigenvectors of
\((\rho^A)^{\otimes n}\).
Therefore, one has the commutation relation
\begin{align}
\left[
\Pi_{n,m,\eta,\xi}^{A^n},
(\rho^A)^{\otimes n}
\right]
=
0.
\label{eq:IID_decoupling_global_cutoff_commutation}
\end{align}
By construction,
\(\Pi_{n,m,\eta,\xi}^{A^n}\)
is nonzero and its range contains an eigenvector corresponding to a
positive eigenvalue of
\((\rho^A)^{\otimes n}\).
Consequently,
\begin{align}
\delta_{\Pi_{n,m,\eta,\xi}}
<
1.
\label{eq:IID_decoupling_global_cutoff_error_strict_bound}
\end{align}

The normalized state conditioned on successful global projection and
its input and reference marginals are
\begin{align}
\rho_{\Pi_{n,m,\eta,\xi}}^{
A_{\Pi_{n,m,\eta,\xi}}^nR^n
}
&\coloneqq
\frac{
\left(
\Pi_{n,m,\eta,\xi}^{A^n}
\otimes
I^{R^n}
\right)
(\rho^{AR})^{\otimes n}
\left(
\Pi_{n,m,\eta,\xi}^{A^n}
\otimes
I^{R^n}
\right)
}{
1-\delta_{\Pi_{n,m,\eta,\xi}}
},
\label{eq:IID_decoupling_global_cutoff_normalized_state}
\\
\rho_{\Pi_{n,m,\eta,\xi}}^{
A_{\Pi_{n,m,\eta,\xi}}^n
}
&\coloneqq
\operatorname{Tr}_{R^n}
\left[
\rho_{\Pi_{n,m,\eta,\xi}}^{
A_{\Pi_{n,m,\eta,\xi}}^nR^n
}
\right],
\label{eq:IID_decoupling_global_cutoff_input_marginal}
\\
\rho_{\Pi_{n,m,\eta,\xi}}^{R^n}
&\coloneqq
\operatorname{Tr}_{A_{\Pi_{n,m,\eta,\xi}}^n}
\left[
\rho_{\Pi_{n,m,\eta,\xi}}^{
A_{\Pi_{n,m,\eta,\xi}}^nR^n
}
\right].
\label{eq:IID_decoupling_global_cutoff_reference_marginal}
\end{align}

Whenever
\begin{align}
\delta_{\Pi_{n,m,\eta,\xi}}
>
0,
\end{align}
the normalized state conditioned on the complementary outcome and
its input and reference marginals are defined by
\begin{align}
\rho_{\Pi_{n,m,\eta,\xi}^{\perp}}^{
A_{\Pi_{n,m,\eta,\xi}^{\perp}}^nR^n
}
&\coloneqq
\frac{
\left(
\Pi_{n,m,\eta,\xi}^{\perp A^n}
\otimes
I^{R^n}
\right)
(\rho^{AR})^{\otimes n}
\left(
\Pi_{n,m,\eta,\xi}^{\perp A^n}
\otimes
I^{R^n}
\right)
}{
\delta_{\Pi_{n,m,\eta,\xi}}
},
\label{eq:IID_decoupling_global_cutoff_normalized_complementary_state}
\\
\rho_{\Pi_{n,m,\eta,\xi}^{\perp}}^{
A_{\Pi_{n,m,\eta,\xi}^{\perp}}^n
}
&\coloneqq
\operatorname{Tr}_{R^n}
\left[
\rho_{\Pi_{n,m,\eta,\xi}^{\perp}}^{
A_{\Pi_{n,m,\eta,\xi}^{\perp}}^nR^n
}
\right],
\label{eq:IID_decoupling_global_cutoff_complementary_input_marginal}
\\
\rho_{\Pi_{n,m,\eta,\xi}^{\perp}}^{R^n}
&\coloneqq
\operatorname{Tr}_{A_{\Pi_{n,m,\eta,\xi}^{\perp}}^n}
\left[
\rho_{\Pi_{n,m,\eta,\xi}^{\perp}}^{
A_{\Pi_{n,m,\eta,\xi}^{\perp}}^nR^n
}
\right].
\label{eq:IID_decoupling_global_cutoff_complementary_reference_marginal}
\end{align}
When
\(\delta_{\Pi_{n,m,\eta,\xi}}=0\),
the complementary branch is absent, and every weighted expression
involving
\(\rho_{\Pi_{n,m,\eta,\xi}^{\perp}}\)
is defined to be zero according to
\eqref{eq:IID_decoupling_zero_complementary_convention}.

For all sufficiently large \(n\) in the case
\(0<\delta_{\widehat{\Pi}_m}<1\), and for every \(n\ge m\) in the case
\(\delta_{\widehat{\Pi}_m}=0\), the restrictions of the sectorwise
permutations
\(U_{x^{b_{n,m}}}^{A,\mathrm{perm}}\)
to the corresponding mutually orthogonal pattern sectors, together
with the identity on the residual subsystem, define a unitary
isomorphism
\begin{align}
&V_{n,m,\eta,\xi}^{\mathrm{fix}}
:
\mathcal H^{A_{\Pi_{n,m,\eta,\xi}}^n}
\nonumber\\
&\hspace{20mm}\longrightarrow
\left(
\mathcal H^{A_{\widehat{\Pi}_m}^m}
\right)^{\otimes\ell_{n,m,\eta}}
\otimes
\mathcal H^{B_{n,m,\eta,\xi}},
\label{eq:IID_decoupling_global_cutoff_fixed_factorization}
\end{align}
where for each typical pattern
\(x^{b_{n,m}}\),
the permutation unitary
\(U_{x^{b_{n,m}}}^{A,\mathrm{perm}}\)
identifies the corresponding sector with
\begin{align}
\left(
\mathcal H^{A_{\widehat{\Pi}_m}^m}
\right)^{\otimes\ell_{n,m,\eta}}
\otimes
\left(
\mathcal H^{A_{\widehat{\Pi}_m}^m}
\right)^{
\otimes
\left(b_{n,m} - k \left( x^{b_{n,m}} \right) - \ell_{n,m,\eta}\right)
}
\otimes
\mathcal{H}^{A_{\Pi_{k(x^{b_{n,m}}),m,\xi}^{\mathrm{rare}}}^{mk(x^{b_{n,m}})}}
\otimes
\mathcal{H}^{A_{\Pi_{n,m}^{\mathrm{res},A^{c_{n,m}}}}^{c_{n,m}}}.
\end{align}
Taking the external direct sum over the pattern sectors and using the
canonical distributive unitary isomorphism
\begin{align}
\bigoplus_x
\left(
\mathcal H
\otimes
\mathcal H_x
\right)
\simeq
\mathcal H
\otimes
\left(
\bigoplus_x
\mathcal H_x
\right)
\end{align}
gives
\eqref{eq:IID_decoupling_global_cutoff_fixed_factorization}.

For these values of \(n\), define finite-dimensional systems
\(E_{n,m,\eta,\xi}\)
and
\(M_{n,m,\eta,\xi}\)
by
\begin{align}
\mathcal H^{E_{n,m,\eta,\xi}}
&\coloneqq
\left(
\mathcal H^{E_m}
\right)^{\otimes\ell_{n,m,\eta}},
\label{eq:IID_decoupling_global_cutoff_output_space}
\\
\mathcal H^{M_{n,m,\eta,\xi}}
&\coloneqq
\left(
\mathcal H^{M_m}
\right)^{\otimes\ell_{n,m,\eta}}
\otimes
\mathcal H^{B_{n,m,\eta,\xi}}.
\label{eq:IID_decoupling_global_cutoff_discarded_space}
\end{align}
Let
\begin{align}
V_{n,m,\eta,\xi}^{\mathrm{reord}}
:
\left(
\mathcal H^{E_m}
\otimes
\mathcal H^{M_m}
\right)^{\otimes\ell_{n,m,\eta}}
\otimes
\mathcal H^{B_{n,m,\eta,\xi}}
\longrightarrow
\mathcal H^{E_{n,m,\eta,\xi}}
\otimes
\mathcal H^{M_{n,m,\eta,\xi}}
\end{align}
be the canonical unitary isomorphism that reorders and regroups the
tensor factors.
Define the unitary isomorphism
\begin{align}
&V_{n,m,\eta,\xi}^{
A_{\Pi_{n,m,\eta,\xi}}^n
\to
E_{n,m,\eta,\xi}M_{n,m,\eta,\xi}
}
\nonumber\\
&\coloneqq
V_{n,m,\eta,\xi}^{\mathrm{reord}}
\left[
\left(
V_m^{A_{\widehat{\Pi}_m}^m\to E_mM_m}
\right)^{\otimes\ell_{n,m,\eta}}
\otimes
I^{B_{n,m,\eta,\xi}}
\right]
V_{n,m,\eta,\xi}^{\mathrm{fix}}.
\label{eq:IID_decoupling_global_cutoff_partial_trace_factorization}
\end{align}
Thus,
\begin{align}
\mathcal H^{A_{\Pi_{n,m,\eta,\xi}}^n}
\simeq
\mathcal H^{E_{n,m,\eta,\xi}}
\otimes
\mathcal H^{M_{n,m,\eta,\xi}}.
\label{eq:IID_decoupling_global_cutoff_output_discarded_factorization}
\end{align}
The corresponding partial-trace channel is
\begin{align}
&\mathcal T_{n,m,\eta,\xi}^{
A_{\Pi_{n,m,\eta,\xi}}^n
\to
E_{n,m,\eta,\xi}
}
\left(
X
\right)
\nonumber\\
&\coloneqq
\operatorname{Tr}_{M_{n,m,\eta,\xi}}
\left[
V_{n,m,\eta,\xi}
X
V_{n,m,\eta,\xi}^{\dagger}
\right].
\label{eq:IID_decoupling_global_cutoff_partial_trace_channel}
\end{align}

For the finitely many exceptional \(n\) in the case
\(0<\delta_{\widehat{\Pi}_m}<1\)
for which the rank-one definition of
\(\Pi_{n,m,\eta,\xi}^{A^n}\)
is used, choose
\(M_{n,m,\eta,\xi}\)
to be one-dimensional and set
\begin{align}
\mathcal H^{E_{n,m,\eta,\xi}}
\coloneqq
\mathcal H^{A_{\Pi_{n,m,\eta,\xi}}^n}.
\end{align}
Choose the canonical unitary isomorphism
\begin{align}
&V_{n,m,\eta,\xi}^{
A_{\Pi_{n,m,\eta,\xi}}^n
\to
E_{n,m,\eta,\xi}M_{n,m,\eta,\xi}
}
:
\mathcal H^{A_{\Pi_{n,m,\eta,\xi}}^n}
\nonumber\\
&\hspace{25mm}\longrightarrow
\mathcal H^{E_{n,m,\eta,\xi}}
\otimes
\mathcal H^{M_{n,m,\eta,\xi}},
\end{align}
and define
\begin{align}
&\mathcal T_{n,m,\eta,\xi}^{
A_{\Pi_{n,m,\eta,\xi}}^n
\to
E_{n,m,\eta,\xi}
}
\left(
X
\right)
\nonumber\\
&\coloneqq
\operatorname{Tr}_{M_{n,m,\eta,\xi}}
\left[
V_{n,m,\eta,\xi}
X
V_{n,m,\eta,\xi}^{\dagger}
\right].
\end{align}
These finitely many choices do not affect any asymptotic statement.

The preceding construction yields high-probability finite-rank projections for IID states such that the corresponding projected spaces admit a pattern-independent subsystem consisting of identical finite-dimensional blocks and, consequently, an exact factorization of the partial-trace map. The relevant asymptotic properties are summarized in the following theorem.

\begin{theorem}[High-probability finite-rank projections for IID states]
\label{thm:IID_decoupling_high_probability_global_cutoffs}
Let
\(\mathcal H^A\)
and
\(\mathcal H^R\)
be separable and possibly infinite-dimensional, and let
\begin{align}
\rho^{AR}
\in
\operatorname{D}
\left(
\mathcal H^A
\otimes
\mathcal H^R
\right)
\end{align}
satisfy
\begin{align}
H(A)_\rho
<
\infty.
\end{align}
Fix
\begin{align}
q
\in
\left[
0,
H(A)_\rho
\right],
\end{align}
and let
\(
\left\{
\widehat{\Pi}_m^{A^m},
E_m,
M_m,
V_m^{A_{\widehat{\Pi}_m}^m\to E_mM_m}
\right\}_{m=1}^{\infty}
\)
be a choice satisfying
Proposition~\ref{prop:decoupling_IID_finite_entropy_spectral_cutoffs}
for this value of \(q\).

Fix
\begin{align}
m
\in
\{1,2,\ldots\}
\end{align}
such that
\(
0
<
\delta_{\widehat{\Pi}_m}
<
1
\),
and fix parameters
\begin{align}
0
<
\eta
<
\min
\left\{
\delta_{\widehat{\Pi}_m},
1-\delta_{\widehat{\Pi}_m}
\right\},
\qquad
\xi
>
0.
\end{align}
Let
\(\ell_{n,m,\eta}\),
\(\gamma_{m,\xi}^{\mathrm{rare}}\),
\(\Pi_{n,m,\eta,\xi}^{A^n}\),
\(\delta_{\Pi_{n,m,\eta,\xi}}\),
\(\mathcal H^{A_{\Pi_{n,m,\eta,\xi}}^n}\),
\(\lvert A_{\Pi_{n,m,\eta,\xi}}^n\rvert\),
\(\rho_{\Pi_{n,m,\eta,\xi}}^{
A_{\Pi_{n,m,\eta,\xi}}^nR^n
}\),
\(\rho_{\Pi_{n,m,\eta,\xi}}^{
A_{\Pi_{n,m,\eta,\xi}}^n
}\),
and
\(\rho_{\Pi_{n,m,\eta,\xi}}^{R^n}\)
be those defined in
\eqref{eq:IID_decoupling_global_cutoff_common_block_number},
\eqref{eq:IID_decoupling_global_cutoff_rare_rank_exponent},
\eqref{eq:IID_decoupling_global_cutoff_projection_fixed_block},
\eqref{eq:IID_decoupling_global_cutoff_error},
\eqref{eq:IID_decoupling_global_cutoff_input_space},
\eqref{eq:IID_decoupling_global_cutoff_input_dimension},
\eqref{eq:IID_decoupling_global_cutoff_normalized_state},
\eqref{eq:IID_decoupling_global_cutoff_input_marginal}, and
\eqref{eq:IID_decoupling_global_cutoff_reference_marginal},
respectively.
For all sufficiently large \(n\), let
\(B_{n,m,\eta,\xi}\),
\(E_{n,m,\eta,\xi}\),
\(M_{n,m,\eta,\xi}\),
\(V_{n,m,\eta,\xi}^{
A_{\Pi_{n,m,\eta,\xi}}^n
\to
E_{n,m,\eta,\xi}M_{n,m,\eta,\xi}
}\),
and
\(\mathcal T_{n,m,\eta,\xi}\)
be those defined in
\eqref{eq:IID_decoupling_global_cutoff_remainder_space},
\eqref{eq:IID_decoupling_global_cutoff_output_space},
\eqref{eq:IID_decoupling_global_cutoff_discarded_space},
\eqref{eq:IID_decoupling_global_cutoff_partial_trace_factorization}, and
\eqref{eq:IID_decoupling_global_cutoff_partial_trace_channel},
respectively.
For the remaining finitely many exceptional values of \(n\), use the
prescribed definitions following
\eqref{eq:IID_decoupling_global_cutoff_partial_trace_channel}.
Then the following limits hold:
\begin{align}
\lim_{n\to\infty}
\delta_{\Pi_{n,m,\eta,\xi}}
&=
0,
\label{eq:IID_decoupling_global_cutoff_fixed_block_error_limit}
\\
\lim_{n\to\infty}
\frac{1}{n}
H
\left(
A_{\Pi_{n,m,\eta,\xi}}^n
\right)_{\rho_{\Pi_{n,m,\eta,\xi}}}
&=
H(A)_\rho,
\label{eq:IID_decoupling_global_cutoff_fixed_block_entropy_limit}
\\
\lim_{n\to\infty}
\frac{1}{n}
H
\left(
A_{\Pi_{n,m,\eta,\xi}}^n
\middle|
R^n
\right)_{\rho_{\Pi_{n,m,\eta,\xi}}}
&=
H(A|R)_\rho,
\label{eq:IID_decoupling_global_cutoff_fixed_block_conditional_limit}
\\
\lim_{n\to\infty}
\frac{1}{n}
I
\left(
A_{\Pi_{n,m,\eta,\xi}}^n:R^n
\right)_{\rho_{\Pi_{n,m,\eta,\xi}}}
&=
I(A:R)_\rho,
\label{eq:IID_decoupling_global_cutoff_fixed_block_mutual_information_limit}
\\
\lim_{n\to\infty}
\frac{1}{n}
\log
\left|
A_{\Pi_{n,m,\eta,\xi}}^n
\right|
&=
\frac{1}{m}
\max_{
\substack{
t\in[0,1]\\
\left|
t-\delta_{\widehat{\Pi}_m}
\right|
\le
\eta
}
}
\Biggl\{
h_2(t)
+
(1-t)
\log
\left|
A_{\widehat{\Pi}_m}^m
\right|
+
t
\gamma_{m,\xi}^{\mathrm{rare}}
\Biggr\},
\label{eq:IID_decoupling_global_cutoff_fixed_block_rank_rate}
\\
\lim_{n\to\infty}
\frac{1}{n}
\log
\left|
M_{n,m,\eta,\xi}
\right|
&=
\frac{1}{m}
\max_{
\substack{
t\in[0,1]\\
\left|
t-\delta_{\widehat{\Pi}_m}
\right|
\le
\eta
}
}
\Biggl\{
h_2(t)
+
(1-t)
\log
\left|
A_{\widehat{\Pi}_m}^m
\right|
+
t
\gamma_{m,\xi}^{\mathrm{rare}}
\Biggr\}
\nonumber\\
&\quad
-
\frac{
1-\delta_{\widehat{\Pi}_m}-\eta
}{
m
}
\Biggl[
\log
\left|
A_{\widehat{\Pi}_m}^m
\right|
-
\log
\left|
M_m
\right|
\Biggr].
\label{eq:IID_decoupling_global_cutoff_fixed_partial_trace_rate}
\end{align}

For every fixed
\(m\in\{1,2,\ldots\}\)
such that
\(
\delta_{\widehat{\Pi}_m}
=
0
\),
set
\(
\eta
=
\xi
=
0
\)
as in
\eqref{eq:eta_xi_zero},
and adopt the convention
\(
\gamma_{m,0}^{\mathrm{rare}}
=
0
\)
in
\eqref{eq:IID_decoupling_global_cutoff_common_only_rare_rank_exponent}.
Let all corresponding objects be defined using the common-only
definitions in
\eqref{eq:IID_decoupling_common_only_pattern_set}--%
\eqref{eq:IID_decoupling_global_cutoff_common_only_remainder_space}
and the unified definitions in
\eqref{eq:IID_decoupling_global_cutoff_complement}--%
\eqref{eq:IID_decoupling_global_cutoff_partial_trace_channel}.
Then
\eqref{eq:IID_decoupling_global_cutoff_fixed_block_error_limit}--%
\eqref{eq:IID_decoupling_global_cutoff_fixed_block_mutual_information_limit}
hold with
\(\eta=\xi=0\), and
\begin{align}
\lim_{n\to\infty}
\frac{1}{n}
\log
\left|
A_{\Pi_{n,m,0,0}}^n
\right|
&=
\frac{1}{m}
\log
\left|
A_{\widehat{\Pi}_m}^m
\right|,
\label{eq:IID_decoupling_global_cutoff_common_only_rank_rate}
\\
\lim_{n\to\infty}
\frac{1}{n}
\log
\left|
M_{n,m,0,0}
\right|
&=
\frac{1}{m}
\log
\left|
M_m
\right|.
\label{eq:IID_decoupling_global_cutoff_common_only_partial_trace_rate}
\end{align}

Finally, let
\(\{\eta_m\}_{m=1}^{\infty}\)
and
\(\{\xi_m\}_{m=1}^{\infty}\)
be any sequences satisfying
\begin{align}
0
<
\eta_m
&<
\min
\left\{
\delta_{\widehat{\Pi}_m},
1-\delta_{\widehat{\Pi}_m}
\right\},
\qquad
\xi_m>0,
&&
\text{if }
0<\delta_{\widehat{\Pi}_m}<1,
\nonumber\\
\eta_m
&=
\xi_m
=
0,
&&
\text{if }
\delta_{\widehat{\Pi}_m}=0,
\label{eq:IID_decoupling_global_cutoff_admissible_parameter_sequences}
\end{align}
and
\begin{align}
\lim_{m\to\infty}
\eta_m
&=
0,
&
\lim_{m\to\infty}
\xi_m
&=
0.
\label{eq:IID_decoupling_global_cutoff_parameter_sequence_limits}
\end{align}
Then
\begin{align}
\lim_{m\to\infty}
\lim_{n\to\infty}
\frac{1}{n}
\log
\left|
A_{\Pi_{n,m,\eta_m,\xi_m}}^n
\right|
&=
H(A)_\rho,
\label{eq:IID_decoupling_global_cutoff_projected_dimension_iterated_limit}
\\
\lim_{m\to\infty}
\lim_{n\to\infty}
\frac{1}{n}
\log
\left|
M_{n,m,\eta_m,\xi_m}
\right|
&=
q.
\label{eq:IID_decoupling_global_cutoff_prescribed_partial_trace_rate}
\end{align}
\end{theorem}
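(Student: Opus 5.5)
The proof has three parts: the success probability, the entropic limits, and the dimension rates. All of them rest on the fact that $\Pi_{n,m,\eta,\xi}^{A^n}$ commutes with $(\rho^A)^{\otimes n}$ and is a sum over orthogonal, product-structured sectors.

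\textbf{Success probability.} Since $\Pi_{n,m,\eta,\xi}^{A^n}$ is block diagonal over patterns, $1-\delta_{\Pi_{n,m,\eta,\xi}}$ factorizes. It equals the residual success probability $1-\delta_{n,m}^{\mathrm{res}}$ times a sum over typical patterns. Each pattern contributes its Bernoulli weight $(1-\delta_{\widehat\Pi_m})^{b-k}\delta_{\widehat\Pi_m}^{k}$ times $\operatorname{Tr}[\Pi^{\mathrm{rare}}_{k,m,\xi}(\rho_{\widehat\Pi_m^\perp})^{\otimes k}]$. This uses that the common blocks are fully retained and that a rare block restricted to the rare subspace is $\delta_{\widehat\Pi_m}\rho_{\widehat\Pi_m^\perp}$. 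By \eqref{eq:IID_decoupling_global_cutoff_uniform_rare_count_divergence} and \eqref{eq:IID_decoupling_global_cutoff_rare_probability_limit}, the rare factor tends to $1$ uniformly over typical patterns. Combining this with \eqref{eq:IID_decoupling_global_cutoff_pattern_probability_limit} and \eqref{eq:IID_decoupling_global_cutoff_residual_error_limit} gives \eqref{eq:IID_decoupling_global_cutoff_fixed_block_error_limit}. In the common-only case, only the residual factor remains.

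\textbf{Entropic limits.} I would repeat the argument of Proposition~\ref{prop:decoupling_IID_finite_entropy_spectral_cutoffs} verbatim at the level of $n$ copies. Pinching $(\rho^{AR})^{\otimes n}$ by $\{\Pi,\Pi^\perp\}$ changes the conditional entropy by at most $\log 2$, through the classical-register chain-rule argument. The branch decomposition then gives $nH(A|R)_\rho$ up to $2\log2$, in the form
\begin{align}
(1-\delta_{\Pi})H(A_\Pi^n|R^n)_{\rho_\Pi}+\delta_{\Pi}H(A^n_{\Pi^\perp}|R^n)_{\rho_{\Pi^\perp}} .
\end{align}
The commuting block-diagonal decomposition of $nH(A)_\rho$ controls the complementary term, so the main task is to show $\frac1n\delta_\Pi H(A^n_{\Pi^\perp})_{\rho_{\Pi^\perp}}\to0$. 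This reduces to showing $\frac1n H(A^n_\Pi)_{\rho_\Pi}\to H(A)_\rho$. For the lower bound I would note that $\rho_\Pi$ is the normalized restriction of the product state to $\Pi$, so
\begin{align}
H(A^n_\Pi)_{\rho_\Pi}\ge \frac{-\operatorname{Tr}[\Pi\rho^{\otimes n}\log\rho^{\otimes n}]}{1-\delta_\Pi}+\log(1-\delta_\Pi).
\end{align}
I would then show that the discarded information density $\frac1n\operatorname{Tr}[\Pi^\perp\rho^{\otimes n}(-\log\rho^{\otimes n})]$ vanishes. This is the main obstacle. To handle it, I would use uniform integrability of $\frac1n\sum_i\imath_A(J_i)$, which follows from $L^1$ convergence in the law of large numbers (as in \eqref{eq:decoupling_IID_information_density_mean_deviation}), together with $\delta_\Pi\to0$: an $L^1$-convergent sequence integrated over events of vanishing probability tends to zero. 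The upper bound follows from $H\le nH(A)_\rho$ combined with the same decomposition. The limit for the mutual information then follows as the difference of the two.

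\textbf{Dimension rates.} The rank is
\begin{align}
\operatorname{rank}\Pi=\operatorname{rank}\Pi^{\mathrm{res}}\sum_{x\in\mathcal T_{n,m,\eta}} N_m^{\,b-k(x)}\operatorname{rank}\Pi^{\mathrm{rare}}_{k(x),m,\xi}.
\end{align}
I would group the sum by $k$, use binomial counts $\binom{b}{k}=\exp[b h_2(k/b)+o(b)]$, apply $\operatorname{rank}\Pi^{\mathrm{rare}}_k=\exp[k\gamma^{\mathrm{rare}}_{m,\xi}+o(k)]$ (Fekete), and use \eqref{eq:IID_decoupling_global_cutoff_residual_rank_rate}. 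A Laplace-type argument with polynomially many terms, together with continuity of the exponent in $t=k/b$ on the closed interval, then gives \eqref{eq:IID_decoupling_global_cutoff_fixed_block_rank_rate}, using $b/n\to1/m$. For $|M|$, note that $|A_\Pi|=|E||M|$ with $|E|=|E_m|^{\ell}$ and $|E_m|=N_m/|M_m|$. Since $\ell/n\to(1-\delta_{\widehat\Pi_m}-\eta)/m$, this yields \eqref{eq:IID_decoupling_global_cutoff_fixed_partial_trace_rate}; the common-only case is immediate.

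\textbf{Iterated limits.} I would bound the maximum in \eqref{eq:IID_decoupling_global_cutoff_fixed_block_rank_rate} by
\begin{align}
\log 2+\log N_m+(\delta_{\widehat\Pi_m}+\eta_m)\bigl(H(A^m_{\widehat\Pi_m^\perp})_{\rho_{\widehat\Pi_m^\perp}}+\xi_m\bigr),
\end{align}
using \eqref{eq:IID_decoupling_global_cutoff_rare_rank_bound}. Dividing by $m$ and applying \eqref{eq:decoupling_IID_spectral_cutoff_rank_rate}, the upper limit is $H(A)_\rho$ plus a term $\frac{\eta_m}{m}H(A^m_{\widehat\Pi_m^\perp})_{\rho_{\widehat\Pi_m^\perp}}$. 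That term needs care: I would choose $\eta_m\le\delta_{\widehat\Pi_m}$, which the admissible range already forces, and then invoke \eqref{eq:decoupling_IID_weighted_complementary_entropy_limit}. For the lower bound, the rate is at least $\frac1n H(A^n_\Pi)_{\rho_\Pi}\to H(A)_\rho$. For $M$, subtracting $\frac{1-\delta-\eta}{m}(\log N_m-\log|M_m|)\to H(A)_\rho-q$ gives $q$.
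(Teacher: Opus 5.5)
Your proposal is correct and matches the paper's overall structure. That structure consists of the sector-wise factorization of $1-\delta_{\Pi_{n,m,\eta,\xi}}$ and the commutation-based identity for the projected entropy. It also includes the pinching/classical-register comparison for the conditional entropy; the paper phrases this via the coherent-measurement isometry $A^n\to XA^n$ followed by dephasing of $X$, which is equivalent to your $\pm$-unitary pinching. Finally, it uses type-class counting with the Fekete exponent for the rank, and the identity $\log|M_{n,m,\eta,\xi}|=\log|A^n_{\Pi_{n,m,\eta,\xi}}|-\ell_{n,m,\eta}\bigl(\log|A^m_{\widehat\Pi_m}|-\log|M_m|\bigr)$.

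Two steps are argued differently.

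\textbf{Discarded information density.} You obtain its vanishing from uniform integrability via the $L^1$ law of large numbers. At the operator level, with $\overline{\imath}_n\coloneqq-\frac1n\log(\rho^A)^{\otimes n}$, this is the bound $\operatorname{Tr}[\Pi^{\perp A^n}_{n,m,\eta,\xi}(\rho^A)^{\otimes n}\overline{\imath}_n]\le\operatorname{Tr}[(\rho^A)^{\otimes n}|\overline{\imath}_n-H(A)_\rho|]+H(A)_\rho\,\delta_{\Pi_{n,m,\eta,\xi}}=\Delta_n+H(A)_\rho\,\delta_{\Pi_{n,m,\eta,\xi}}$. Stating it this way is cleaner than passing to events, which would require labeling degenerate eigenspaces by sequences. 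The paper instead proves the needed uniform integrability by hand. It splits the rejected eigenvectors according to whether their normalized self-information exceeds $L$. It then uses that a sequence whose average exceeds $L$ has total less than twice the sum of its letters above $L/2$. This yields $L\,\delta_{\Pi_{n,m,\eta,\xi}}+2\sum_{i:\,-\log\lambda_i^A>L/2}\lambda_i^A(-\log\lambda_i^A)$. The paper's bound depends only on the single-copy entropy tail and avoids the law of large numbers. Yours is shorter and reuses $\Delta_n$ from the spectral-cutoff proposition.

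\textbf{Lower half of the iterated dimension limit.} You use $\log|A^n_{\Pi}|\ge H(A^n_{\Pi})_{\rho_{\Pi}}$ together with the entropy limit already proved. This gives $\lim_{n}\frac1n\log|A^n_{\Pi_{n,m,\eta_m,\xi_m}}|\ge H(A)_\rho$ for every $m$. The paper instead evaluates the variational formula at $t=\delta_{\widehat\Pi_m}$ and proves the extra estimate $\gamma^{\mathrm{rare}}_{m,\xi}\ge H(A^m_{\widehat\Pi_m^\perp})_{\rho_{\widehat\Pi_m^\perp}}-\xi$. Combined with the block entropy decomposition, this yields only $H(A)_\rho-\delta_{\widehat\Pi_m}\xi_m/m$. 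Your route makes that lower bound on $\gamma^{\mathrm{rare}}_{m,\xi}$ unnecessary.
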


\begin{proof}
By
Proposition~\ref{prop:decoupling_IID_finite_entropy_spectral_cutoffs},
one has
\begin{align}
0
\le
\delta_{\widehat{\Pi}_m}
<
1
\end{align}
for every \(m\), so the two cases considered in the theorem are
exhaustive.

\medskip
\noindent
\textbf{Proof of
\eqref{eq:IID_decoupling_global_cutoff_fixed_block_error_limit}.}
Suppose first that
\(0<\delta_{\widehat{\Pi}_m}<1\).
For every sufficiently large \(n\), the product structure of
\((\rho^A)^{\otimes n}\), together with
\eqref{eq:IID_decoupling_global_cutoff_projection_fixed_block},
gives
\begin{align}
&1-\delta_{\Pi_{n,m,\eta,\xi}}
\nonumber\\
&=
\left(
1-\delta_{n,m}^{\mathrm{res}}
\right)
\sum_{
x^{b_{n,m}}
\in
\mathcal T_{n,m,\eta}
}
\left(
1-\delta_{\widehat{\Pi}_m}
\right)^{
b_{n,m}-k(x^{b_{n,m}})
}
\delta_{\widehat{\Pi}_m}^{k(x^{b_{n,m}})}
\nonumber\\
&\qquad\qquad\times
\operatorname{Tr}
\left[
\Pi_{k(x^{b_{n,m}}),m,\xi}^{\mathrm{rare}}
\left(
\rho_{\widehat{\Pi}_m^\perp}^{A_{\widehat{\Pi}_m^\perp}^m}
\right)^{
\otimes k(x^{b_{n,m}})
}
\right].
\label{eq:IID_decoupling_global_cutoff_success_probability_proof}
\end{align}
Consequently,
\begin{align}
1-\delta_{\Pi_{n,m,\eta,\xi}}
&\ge
\left(
1-\delta_{n,m}^{\mathrm{res}}
\right)
\nonumber\\
&\quad\times
\sum_{
x^{b_{n,m}}
\in
\mathcal T_{n,m,\eta}
}
\left(
1-\delta_{\widehat{\Pi}_m}
\right)^{
b_{n,m}-k(x^{b_{n,m}})
}
\delta_{\widehat{\Pi}_m}^{k(x^{b_{n,m}})}
\nonumber\\
&\quad\times
\inf_{
x^{b_{n,m}}
\in
\mathcal T_{n,m,\eta}
}
\operatorname{Tr}
\left[
\Pi_{k(x^{b_{n,m}}),m,\xi}^{\mathrm{rare}}
\left(
\rho_{\widehat{\Pi}_m^\perp}^{A_{\widehat{\Pi}_m^\perp}^m}
\right)^{
\otimes k(x^{b_{n,m}})
}
\right].
\end{align}
The first factor has limit one by
\eqref{eq:IID_decoupling_global_cutoff_residual_error_limit},
and the second factor has limit one by
\eqref{eq:IID_decoupling_global_cutoff_pattern_probability_limit}.
By
\eqref{eq:IID_decoupling_global_cutoff_uniform_rare_count_divergence},
the number of rare blocks tends to infinity uniformly over all
patterns in
\(\mathcal T_{n,m,\eta}\).
Hence,
\eqref{eq:IID_decoupling_global_cutoff_rare_probability_limit}
implies
\begin{align}
&\lim_{n\to\infty}
\inf_{
x^{b_{n,m}}
\in
\mathcal T_{n,m,\eta}
}
\operatorname{Tr}
\left[
\Pi_{k(x^{b_{n,m}}),m,\xi}^{\mathrm{rare}}
\left(
\rho_{\widehat{\Pi}_m^\perp}^{A_{\widehat{\Pi}_m^\perp}^m}
\right)^{
\otimes k(x^{b_{n,m}})
}
\right]
=
1.
\end{align}
Since
\(1-\delta_{\Pi_{n,m,\eta,\xi}}\le1\),
we obtain
\begin{align}
\lim_{n\to\infty}
\delta_{\Pi_{n,m,\eta,\xi}}
=
0.
\end{align}

If
\(\delta_{\widehat{\Pi}_m}=0\),
then
\eqref{eq:IID_decoupling_global_cutoff_projection_common_only}
gives
\begin{align}
1-\delta_{\Pi_{n,m,0,0}}
=
1-\delta_{n,m}^{\mathrm{res}}.
\end{align}
Thus,
\eqref{eq:IID_decoupling_global_cutoff_residual_error_limit}
proves
\eqref{eq:IID_decoupling_global_cutoff_fixed_block_error_limit}
also in this case.

\medskip
\noindent
\textbf{Proof of
\eqref{eq:IID_decoupling_global_cutoff_fixed_block_entropy_limit}.}
By
\eqref{eq:IID_decoupling_global_cutoff_commutation},
the projection
\(\Pi_{n,m,\eta,\xi}^{A^n}\)
commutes with
\((\rho^A)^{\otimes n}\).
Hence, these two operators admit a common orthonormal eigenbasis.
Each positive eigenvalue of
\((\rho^A)^{\otimes n}\)
is of the form
\begin{align}
\prod_{j=1}^{n}
\lambda_{i_j}^A
\end{align}
for a sequence
\begin{align}
i^n
=
(i_1,\ldots,i_n),
\qquad
i_j\in\{0,1,\ldots\},
\qquad
\lambda_{i_j}^A>0.
\end{align}
Within each degenerate eigenspace, the common eigenbasis may be
labelled by such sequences while preserving their multiplicities.

For every \(L>0\), split the rejected positive-eigenvalue
eigenvectors according to whether their normalized self-information
is at most \(L\) or greater than \(L\).
After division by \(n\), the contribution of the former is at most
\begin{align}
L
\delta_{\Pi_{n,m,\eta,\xi}}.
\end{align}
To bound the contribution of the latter, consider any sequence
\(i^n=(i_1,\ldots,i_n)\)
satisfying
\begin{align}
\frac{1}{n}
\sum_{j=1}^{n}
\left(
-\log\lambda_{i_j}^A
\right)
>
L.
\label{eq:IID_decoupling_global_cutoff_large_self_information_sequence}
\end{align}
Separating the indices for which
\(-\log\lambda_{i_j}^A\le L/2\)
from those for which
\(-\log\lambda_{i_j}^A>L/2\), we obtain
\begin{align}
\sum_{j=1}^{n}
\left(
-\log\lambda_{i_j}^A
\right)
&\le
\frac{nL}{2}
+
\sum_{
\substack{
1\le j\le n\\
-\log\lambda_{i_j}^A>L/2
}
}
\left(
-\log\lambda_{i_j}^A
\right).
\end{align}
By
\eqref{eq:IID_decoupling_global_cutoff_large_self_information_sequence},
the sum on the left-hand side is greater than \(nL\).
Together with the preceding upper bound, this implies
\begin{align}
\sum_{
\substack{
1\le j\le n\\
-\log\lambda_{i_j}^A>L/2
}
}
\left(
-\log\lambda_{i_j}^A
\right)
>
\frac{nL}{2}.
\end{align}
Consequently,
\begin{align}
\sum_{j=1}^{n}
\left(
-\log\lambda_{i_j}^A
\right)
&\le
\frac{nL}{2}
+
\sum_{
\substack{
1\le j\le n\\
-\log\lambda_{i_j}^A>L/2
}
}
\left(
-\log\lambda_{i_j}^A
\right)
\nonumber\\
&<
2
\sum_{
\substack{
1\le j\le n\\
-\log\lambda_{i_j}^A>L/2
}
}
\left(
-\log\lambda_{i_j}^A
\right).
\label{eq:IID_decoupling_global_cutoff_self_information_tail_bound}
\end{align}

Averaging
\eqref{eq:IID_decoupling_global_cutoff_self_information_tail_bound}
over the product eigenvalue distribution and enlarging the summation
from the rejected eigenvectors to all product eigenvectors gives
\begin{align}
&0
\le
\frac{1}{n}
\operatorname{Tr}
\left[
\Pi_{n,m,\eta,\xi}^{\perp A^n}
(\rho^A)^{\otimes n}
\left(
-\log
(\rho^A)^{\otimes n}
\right)
\right]
\nonumber\\
&\le
L
\delta_{\Pi_{n,m,\eta,\xi}}
+
\frac{2}{n}
\sum_{j=1}^{n}
\sum_{
\substack{
i\in\{0,1,\ldots\}\\
\lambda_i^A>0\\
-\log\lambda_i^A>L/2
}
}
\lambda_i^A
\left(
-\log\lambda_i^A
\right)
\nonumber\\
&=
L
\delta_{\Pi_{n,m,\eta,\xi}}
+
2
\sum_{
\substack{
i\in\{0,1,\ldots\}\\
\lambda_i^A>0\\
-\log\lambda_i^A>L/2
}
}
\lambda_i^A
\left(
-\log\lambda_i^A
\right).
\label{eq:IID_decoupling_global_cutoff_information_tail_bound}
\end{align}

By
\eqref{eq:IID_decoupling_global_cutoff_fixed_block_error_limit},
the first term on the right-hand side has limit zero for every fixed
\(L\).
Moreover,
\begin{align}
H(A)_\rho
=
\sum_{
\substack{
i\in\{0,1,\ldots\}\\
\lambda_i^A>0
}
}
\lambda_i^A
\left(
-\log\lambda_i^A
\right)
<
\infty,
\end{align}
and therefore
\begin{align}
\lim_{L\to\infty}
\sum_{
\substack{
i\in\{0,1,\ldots\}\\
\lambda_i^A>0\\
-\log\lambda_i^A>L/2
}
}
\lambda_i^A
\left(
-\log\lambda_i^A
\right)
=
0.
\end{align}
It follows that
\begin{align}
\lim_{n\to\infty}
\frac{1}{n}
\operatorname{Tr}
\left[
\Pi_{n,m,\eta,\xi}^{\perp A^n}
(\rho^A)^{\otimes n}
\left(
-\log
(\rho^A)^{\otimes n}
\right)
\right]
=
0.
\label{eq:IID_decoupling_global_cutoff_information_tail_limit}
\end{align}

Taking the \(A^n\)-marginal in
\eqref{eq:IID_decoupling_global_cutoff_normalized_state}
and using
\eqref{eq:IID_decoupling_global_cutoff_commutation},
we obtain
\begin{align}
&\left(
1-\delta_{\Pi_{n,m,\eta,\xi}}
\right)
H
\left(
A_{\Pi_{n,m,\eta,\xi}}^n
\right)_{\rho_{\Pi_{n,m,\eta,\xi}}}
\nonumber\\
&=
\operatorname{Tr}
\left[
\Pi_{n,m,\eta,\xi}^{A^n}
(\rho^A)^{\otimes n}
\left(
-\log
(\rho^A)^{\otimes n}
\right)
\right]
\nonumber\\
&\quad+
\left(
1-\delta_{\Pi_{n,m,\eta,\xi}}
\right)
\log
\left(
1-\delta_{\Pi_{n,m,\eta,\xi}}
\right).
\label{eq:IID_decoupling_global_cutoff_projected_entropy_identity}
\end{align}
Since
\begin{align}
\operatorname{Tr}
\left[
(\rho^A)^{\otimes n}
\left(
-\log
(\rho^A)^{\otimes n}
\right)
\right]
=
nH(A)_\rho,
\end{align}
equations
\eqref{eq:IID_decoupling_global_cutoff_fixed_block_error_limit},
\eqref{eq:IID_decoupling_global_cutoff_information_tail_limit}, and
\eqref{eq:IID_decoupling_global_cutoff_projected_entropy_identity}
give
\begin{align}
\lim_{n\to\infty}
\frac{
1-\delta_{\Pi_{n,m,\eta,\xi}}
}{
n
}
H
\left(
A_{\Pi_{n,m,\eta,\xi}}^n
\right)_{\rho_{\Pi_{n,m,\eta,\xi}}}
=
H(A)_\rho.
\end{align}
Since
\begin{align}
\lim_{n\to\infty}
\left(
1-\delta_{\Pi_{n,m,\eta,\xi}}
\right)
=
1,
\end{align}
we conclude that
\begin{align}
\lim_{n\to\infty}
\frac{1}{n}
H
\left(
A_{\Pi_{n,m,\eta,\xi}}^n
\right)_{\rho_{\Pi_{n,m,\eta,\xi}}}
=
H(A)_\rho.
\end{align}

The argument uses only
\eqref{eq:IID_decoupling_global_cutoff_commutation}
and
\eqref{eq:IID_decoupling_global_cutoff_fixed_block_error_limit}.
It therefore applies without change when
\(\delta_{\widehat{\Pi}_m}=0\).

\medskip
\noindent
\textbf{Proof of
\eqref{eq:IID_decoupling_global_cutoff_fixed_block_conditional_limit}.}
Whenever
\(\delta_{\Pi_{n,m,\eta,\xi}}>0\),
the normalized complementary branch and its input marginal are those
defined in
\eqref{eq:IID_decoupling_global_cutoff_normalized_complementary_state}
and
\eqref{eq:IID_decoupling_global_cutoff_complementary_input_marginal}.
By
\eqref{eq:IID_decoupling_global_cutoff_commutation},
one has
\begin{align}
&\delta_{\Pi_{n,m,\eta,\xi}}
H
\left(
A_{\Pi_{n,m,\eta,\xi}^{\perp}}^n
\right)_{\rho_{\Pi_{n,m,\eta,\xi}^{\perp}}}
\nonumber\\
&=
\operatorname{Tr}
\left[
\Pi_{n,m,\eta,\xi}^{\perp A^n}
(\rho^A)^{\otimes n}
\left(
-\log
(\rho^A)^{\otimes n}
\right)
\right]
\nonumber\\
&\quad+
\delta_{\Pi_{n,m,\eta,\xi}}
\log
\delta_{\Pi_{n,m,\eta,\xi}}.
\label{eq:IID_decoupling_global_cutoff_complementary_entropy_identity}
\end{align}
The right-hand side is finite because
\begin{align}
0
&\le
\operatorname{Tr}
\left[
\Pi_{n,m,\eta,\xi}^{\perp A^n}
(\rho^A)^{\otimes n}
\left(
-\log
(\rho^A)^{\otimes n}
\right)
\right]
\nonumber\\
&\le
\operatorname{Tr}
\left[
(\rho^A)^{\otimes n}
\left(
-\log
(\rho^A)^{\otimes n}
\right)
\right]
\nonumber\\
&=
nH(A)_\rho
<
\infty.
\end{align}
Thus,
\eqref{eq:conditional_entropy_bounds_finite_marginal}
gives
\begin{align}
&\left|
\delta_{\Pi_{n,m,\eta,\xi}}
H
\left(
A_{\Pi_{n,m,\eta,\xi}^{\perp}}^n
\middle|
R^n
\right)_{\rho_{\Pi_{n,m,\eta,\xi}^{\perp}}}
\right|
\nonumber\\
&\le
\delta_{\Pi_{n,m,\eta,\xi}}
H
\left(
A_{\Pi_{n,m,\eta,\xi}^{\perp}}^n
\right)_{\rho_{\Pi_{n,m,\eta,\xi}^{\perp}}}
\nonumber\\
&\le
\operatorname{Tr}
\left[
\Pi_{n,m,\eta,\xi}^{\perp A^n}
(\rho^A)^{\otimes n}
\left(
-\log
(\rho^A)^{\otimes n}
\right)
\right],
\label{eq:IID_decoupling_global_cutoff_complementary_conditional_entropy_bound}
\end{align}
where the final inequality follows from
\begin{align}
\delta_{\Pi_{n,m,\eta,\xi}}
\log
\delta_{\Pi_{n,m,\eta,\xi}}
\le
0.
\end{align}
When
\(\delta_{\Pi_{n,m,\eta,\xi}}=0\),
the weighted complementary contribution is defined to be zero, and
\eqref{eq:IID_decoupling_global_cutoff_complementary_conditional_entropy_bound}
remains valid.

Let \(X\) be a binary register, and define the isometry
\begin{align}
V_{\Pi_{n,m,\eta,\xi}}^{A^n\to XA^n}
\coloneqq
\ket{0}^X
\otimes
\Pi_{n,m,\eta,\xi}^{A^n}
+
\ket{1}^X
\otimes
\Pi_{n,m,\eta,\xi}^{\perp A^n}.
\end{align}
Define
\begin{align}
&\omega_{n,m,\eta,\xi}^{XA^nR^n}
\nonumber\\
&\coloneqq
\left(
V_{\Pi_{n,m,\eta,\xi}}^{A^n\to XA^n}
\otimes
I^{R^n}
\right)
(\rho^{AR})^{\otimes n}
\left(
\left(
V_{\Pi_{n,m,\eta,\xi}}^{A^n\to XA^n}
\right)^\dagger
\otimes
I^{R^n}
\right),
\label{eq:IID_decoupling_global_cutoff_coherent_measurement_state}
\end{align}
and
\begin{align}
&\overline{\omega}_{n,m,\eta,\xi}^{XA^nR^n}
\nonumber\\
&\coloneqq
\left(
\Delta^X
\otimes
\id^{A^nR^n}
\right)
\left(
\omega_{n,m,\eta,\xi}^{XA^nR^n}
\right),
\label{eq:IID_decoupling_global_cutoff_dephased_measurement_state}
\end{align}
where
\(\Delta^X\)
denotes the dephasing channel in the basis
\(\{\ket{0},\ket{1}\}\).

Invariance of conditional entropy under a local isometry gives
\begin{align}
H
\left(
XA^n
\middle|
R^n
\right)_{\omega_{n,m,\eta,\xi}}
=
nH(A|R)_\rho.
\label{eq:IID_decoupling_global_cutoff_coherent_measurement_conditional_entropy}
\end{align}
The off-diagonal blocks of the
\(XA^n\)-marginal of
\(\omega_{n,m,\eta,\xi}\)
are
\begin{align}
\ket{0}\bra{1}^X
\otimes
\Pi_{n,m,\eta,\xi}^{A^n}
(\rho^A)^{\otimes n}
\Pi_{n,m,\eta,\xi}^{\perp A^n}
\end{align}
and its adjoint.
They vanish by
\eqref{eq:IID_decoupling_global_cutoff_commutation}.
It follows that
\begin{align}
\omega_{n,m,\eta,\xi}^{XA^n}
=
\overline{\omega}_{n,m,\eta,\xi}^{XA^n}.
\label{eq:IID_decoupling_global_cutoff_dephasing_marginal_invariance}
\end{align}
Moreover, since the dephasing acts trivially on \(R^n\),
\begin{align}
\omega_{n,m,\eta,\xi}^{R^n}
=
\overline{\omega}_{n,m,\eta,\xi}^{R^n}.
\end{align}

Since
\(V_{\Pi_{n,m,\eta,\xi}}^{A^n\to XA^n}\)
is an isometry,
\begin{align}
H
\left(
XA^n
\right)_{\omega_{n,m,\eta,\xi}}
=
H(A^n)_{\rho^{\otimes n}}
=
nH(A)_\rho
<
\infty.
\end{align}
Equation
\eqref{eq:IID_decoupling_global_cutoff_dephasing_marginal_invariance}
therefore gives
\begin{align}
H
\left(
XA^n
\right)_{\overline{\omega}_{n,m,\eta,\xi}}
=
nH(A)_\rho
<
\infty.
\end{align}
In particular,
\begin{align}
I
\left(
XA^n:R^n
\right)_{\omega_{n,m,\eta,\xi}}
&\le
2nH(A)_\rho
<
\infty,
\\
I
\left(
XA^n:R^n
\right)_{\overline{\omega}_{n,m,\eta,\xi}}
&\le
2nH(A)_\rho
<
\infty.
\end{align}

Define the unitary
\begin{align}
U^{X,\mathrm{deph}}
\coloneqq
\ket{0}\bra{0}^X
-
\ket{1}\bra{1}^X.
\end{align}
Then the dephasing channel has the random-unitary representation
\begin{align}
&\overline{\omega}_{n,m,\eta,\xi}^{XA^nR^n}
\nonumber\\
&=
\frac{1}{2}
\omega_{n,m,\eta,\xi}^{XA^nR^n}
\nonumber\\
&\quad+
\frac{1}{2}
\left(
U^{X,\mathrm{deph}}
\otimes
I^{A^nR^n}
\right)
\omega_{n,m,\eta,\xi}^{XA^nR^n}
\left(
U^{X,\mathrm{deph}}
\otimes
I^{A^nR^n}
\right).
\label{eq:IID_decoupling_global_cutoff_dephasing_random_unitary_representation}
\end{align}

Introduce a binary classical register \(Y\), and define
\begin{align}
&\widetilde{\omega}_{n,m,\eta,\xi}^{YXA^nR^n}
\nonumber\\
&\coloneqq
\frac{1}{2}
\ket{0}\bra{0}^Y
\otimes
\omega_{n,m,\eta,\xi}^{XA^nR^n}
\nonumber\\
&\quad+
\frac{1}{2}
\ket{1}\bra{1}^Y
\otimes
\left(
U^{X,\mathrm{deph}}
\otimes
I^{A^nR^n}
\right)
\omega_{n,m,\eta,\xi}^{XA^nR^n}
\left(
U^{X,\mathrm{deph}}
\otimes
I^{A^nR^n}
\right).
\end{align}
Its \(XA^nR^n\)-marginal is
\(\overline{\omega}_{n,m,\eta,\xi}^{XA^nR^n}\).

The chain rule for mutual information in
\eqref{eq:mutual_information_chain_rule}
gives
\begin{align}
&I
\left(
YXA^n:R^n
\right)_{\widetilde{\omega}_{n,m,\eta,\xi}}
\nonumber\\
&=
I
\left(
Y:R^n
\right)_{\widetilde{\omega}_{n,m,\eta,\xi}}
+
I
\left(
XA^n:R^n
\middle|
Y
\right)_{\widetilde{\omega}_{n,m,\eta,\xi}}.
\label{eq:IID_decoupling_global_cutoff_dephasing_mutual_information_chain_rule_Y_first}
\end{align}
Since the two conditional branches have the same \(R^n\)-marginal,
one has
\begin{align}
I
\left(
Y:R^n
\right)_{\widetilde{\omega}_{n,m,\eta,\xi}}
=
0.
\end{align}
Moreover, conditional mutual information for a classical register is
the average of the mutual informations of the conditional states.
Hence,
\begin{align}
&I
\left(
XA^n:R^n
\middle|
Y
\right)_{\widetilde{\omega}_{n,m,\eta,\xi}}
\nonumber\\
&=
\frac{1}{2}
I
\left(
XA^n:R^n
\right)_{\omega_{n,m,\eta,\xi}}
\nonumber\\
&\quad+
\frac{1}{2}
I
\left(
XA^n:R^n
\right)_{
\left(
U^{X,\mathrm{deph}}
\otimes
I^{A^nR^n}
\right)
\omega_{n,m,\eta,\xi}
\left(
U^{X,\mathrm{deph}}
\otimes
I^{A^nR^n}
\right)
}
\nonumber\\
&=
I
\left(
XA^n:R^n
\right)_{\omega_{n,m,\eta,\xi}},
\label{eq:IID_decoupling_global_cutoff_dephasing_conditional_mutual_information}
\end{align}
where the final equality follows from invariance of mutual information
under local unitaries.
Substituting these identities into
\eqref{eq:IID_decoupling_global_cutoff_dephasing_mutual_information_chain_rule_Y_first}
gives
\begin{align}
I
\left(
YXA^n:R^n
\right)_{\widetilde{\omega}_{n,m,\eta,\xi}}
=
I
\left(
XA^n:R^n
\right)_{\omega_{n,m,\eta,\xi}}.
\label{eq:IID_decoupling_global_cutoff_dephasing_mutual_information_first_expansion}
\end{align}

Applying the chain rule
\eqref{eq:mutual_information_chain_rule}
in the opposite order gives
\begin{align}
&I
\left(
YXA^n:R^n
\right)_{\widetilde{\omega}_{n,m,\eta,\xi}}
\nonumber\\
&=
I
\left(
XA^n:R^n
\right)_{\overline{\omega}_{n,m,\eta,\xi}}
+
I
\left(
Y:R^n
\middle|
XA^n
\right)_{\widetilde{\omega}_{n,m,\eta,\xi}}.
\label{eq:IID_decoupling_global_cutoff_dephasing_mutual_information_chain_rule_XA_first}
\end{align}
Combining
\eqref{eq:IID_decoupling_global_cutoff_dephasing_mutual_information_first_expansion}
and
\eqref{eq:IID_decoupling_global_cutoff_dephasing_mutual_information_chain_rule_XA_first}
yields
\begin{align}
&I
\left(
XA^n:R^n
\right)_{\omega_{n,m,\eta,\xi}}
-
I
\left(
XA^n:R^n
\right)_{\overline{\omega}_{n,m,\eta,\xi}}
\nonumber\\
&=
I
\left(
Y:R^n
\middle|
XA^n
\right)_{\widetilde{\omega}_{n,m,\eta,\xi}}.
\label{eq:IID_decoupling_global_cutoff_mutual_information_pinching_decomposition}
\end{align}
Since \(Y\) is classical,
\begin{align}
0
&\le
I
\left(
Y:R^n
\middle|
XA^n
\right)_{\widetilde{\omega}_{n,m,\eta,\xi}}
\nonumber\\
&\le
H
\left(
Y
\middle|
XA^n
\right)_{\widetilde{\omega}_{n,m,\eta,\xi}}
\nonumber\\
&\le
H(Y)_{\widetilde{\omega}_{n,m,\eta,\xi}}
=
\log 2.
\label{eq:IID_decoupling_global_cutoff_dephasing_conditional_mutual_information_bound}
\end{align}
Since the \(XA^n\)-marginals of
\(\omega_{n,m,\eta,\xi}\)
and
\(\overline{\omega}_{n,m,\eta,\xi}\)
coincide,
\eqref{eq:I_AR_definition_finite_A_general} gives
\begin{align}
&H
\left(
XA^n
\middle|
R^n
\right)_{\overline{\omega}_{n,m,\eta,\xi}}
-
H
\left(
XA^n
\middle|
R^n
\right)_{\omega_{n,m,\eta,\xi}}
\nonumber\\
&=
I
\left(
XA^n:R^n
\right)_{\omega_{n,m,\eta,\xi}}
-
I
\left(
XA^n:R^n
\right)_{\overline{\omega}_{n,m,\eta,\xi}}.
\label{eq:IID_decoupling_global_cutoff_dephasing_conditional_entropy_difference}
\end{align}
Combining
\eqref{eq:IID_decoupling_global_cutoff_mutual_information_pinching_decomposition},
\eqref{eq:IID_decoupling_global_cutoff_dephasing_conditional_mutual_information_bound},
and
\eqref{eq:IID_decoupling_global_cutoff_dephasing_conditional_entropy_difference},
we obtain
\begin{align}
0
&\le
H
\left(
XA^n
\middle|
R^n
\right)_{\overline{\omega}_{n,m,\eta,\xi}}
-
H
\left(
XA^n
\middle|
R^n
\right)_{\omega_{n,m,\eta,\xi}}
\le
\log 2.
\label{eq:IID_decoupling_global_cutoff_dephased_conditional_entropy_comparison}
\end{align}

The state
\(\overline{\omega}_{n,m,\eta,\xi}^{XA^nR^n}\)
has the form
\begin{align}
&\overline{\omega}_{n,m,\eta,\xi}^{XA^nR^n}
\nonumber\\
&=
\left(
1-\delta_{\Pi_{n,m,\eta,\xi}}
\right)
\ket{0}\bra{0}^X
\otimes
\rho_{\Pi_{n,m,\eta,\xi}}^{
A_{\Pi_{n,m,\eta,\xi}}^nR^n
}
\nonumber\\
&\quad+
\delta_{\Pi_{n,m,\eta,\xi}}
\ket{1}\bra{1}^X
\otimes
\rho_{\Pi_{n,m,\eta,\xi}^{\perp}}^{
A_{\Pi_{n,m,\eta,\xi}^{\perp}}^nR^n
},
\label{eq:IID_decoupling_global_cutoff_dephased_state_branch_decomposition}
\end{align}
where the conditional states are canonically embedded into the
orthogonal subspaces
\(\mathcal H^{A_{\Pi_{n,m,\eta,\xi}}^n}\)
and
\(\mathcal H^{A_{\Pi_{n,m,\eta,\xi}^{\perp}}^n}\)
of
\(\mathcal H^{A^n}\).
When
\(\delta_{\Pi_{n,m,\eta,\xi}}=0\),
the second term is absent.

Since \(X\) is classical, the chain rule gives
\begin{align}
&H
\left(
XA^n
\middle|
R^n
\right)_{\overline{\omega}_{n,m,\eta,\xi}}
\nonumber\\
&=
H
\left(
X
\middle|
R^n
\right)_{\overline{\omega}_{n,m,\eta,\xi}}
\nonumber\\
&\quad+
\left(
1-\delta_{\Pi_{n,m,\eta,\xi}}
\right)
H
\left(
A_{\Pi_{n,m,\eta,\xi}}^n
\middle|
R^n
\right)_{\rho_{\Pi_{n,m,\eta,\xi}}}
\nonumber\\
&\quad+
\delta_{\Pi_{n,m,\eta,\xi}}
H
\left(
A_{\Pi_{n,m,\eta,\xi}^{\perp}}^n
\middle|
R^n
\right)_{\rho_{\Pi_{n,m,\eta,\xi}^{\perp}}}.
\label{eq:IID_decoupling_global_cutoff_dephased_conditional_entropy_chain_rule}
\end{align}
Moreover,
\begin{align}
0
&\le
H
\left(
X
\middle|
R^n
\right)_{\overline{\omega}_{n,m,\eta,\xi}}
\nonumber\\
&\le
H(X)_{\overline{\omega}_{n,m,\eta,\xi}}
\nonumber\\
&=
h_2
\left(
\delta_{\Pi_{n,m,\eta,\xi}}
\right)
\le
\log 2.
\end{align}
It follows from
\eqref{eq:IID_decoupling_global_cutoff_coherent_measurement_conditional_entropy},
\eqref{eq:IID_decoupling_global_cutoff_dephased_conditional_entropy_comparison},
and
\eqref{eq:IID_decoupling_global_cutoff_dephased_conditional_entropy_chain_rule}
that
\begin{align}
-\log 2
&\le
\left(
1-\delta_{\Pi_{n,m,\eta,\xi}}
\right)
H
\left(
A_{\Pi_{n,m,\eta,\xi}}^n
\middle|
R^n
\right)_{\rho_{\Pi_{n,m,\eta,\xi}}}
\nonumber\\
&\quad+
\delta_{\Pi_{n,m,\eta,\xi}}
H
\left(
A_{\Pi_{n,m,\eta,\xi}^{\perp}}^n
\middle|
R^n
\right)_{\rho_{\Pi_{n,m,\eta,\xi}^{\perp}}}
-
nH(A|R)_\rho
\nonumber\\
&\le
\log 2.
\end{align}
Equivalently,
\begin{align}
&\Biggl|
\left(
1-\delta_{\Pi_{n,m,\eta,\xi}}
\right)
H
\left(
A_{\Pi_{n,m,\eta,\xi}}^n
\middle|
R^n
\right)_{\rho_{\Pi_{n,m,\eta,\xi}}}
\nonumber\\
&\quad+
\delta_{\Pi_{n,m,\eta,\xi}}
H
\left(
A_{\Pi_{n,m,\eta,\xi}^{\perp}}^n
\middle|
R^n
\right)_{\rho_{\Pi_{n,m,\eta,\xi}^{\perp}}}
-
nH(A|R)_\rho
\Biggr|
\le
\log 2.
\label{eq:IID_decoupling_global_cutoff_conditional_entropy_branch_comparison}
\end{align}
When
\(\delta_{\Pi_{n,m,\eta,\xi}}=0\),
the weighted complementary contribution is defined to be zero.

Equations
\eqref{eq:IID_decoupling_global_cutoff_conditional_entropy_branch_comparison}
and
\eqref{eq:IID_decoupling_global_cutoff_complementary_conditional_entropy_bound}
give
\begin{align}
&\Biggl|
\frac{
1-\delta_{\Pi_{n,m,\eta,\xi}}
}{
n
}
H
\left(
A_{\Pi_{n,m,\eta,\xi}}^n
\middle|
R^n
\right)_{\rho_{\Pi_{n,m,\eta,\xi}}}
-
H(A|R)_\rho
\Biggr|
\nonumber\\
&\le
\frac{\log 2}{n}
\nonumber\\
&\quad+
\frac{1}{n}
\operatorname{Tr}
\left[
\Pi_{n,m,\eta,\xi}^{\perp A^n}
(\rho^A)^{\otimes n}
\left(
-\log
(\rho^A)^{\otimes n}
\right)
\right].
\label{eq:IID_decoupling_global_cutoff_weighted_conditional_entropy_bound}
\end{align}
The first term on the right-hand side has limit zero, while the second
has limit zero by
\eqref{eq:IID_decoupling_global_cutoff_information_tail_limit}.
Consequently,
\begin{align}
\lim_{n\to\infty}
\frac{
1-\delta_{\Pi_{n,m,\eta,\xi}}
}{
n
}
H
\left(
A_{\Pi_{n,m,\eta,\xi}}^n
\middle|
R^n
\right)_{\rho_{\Pi_{n,m,\eta,\xi}}}
=
H(A|R)_\rho.
\end{align}
By
\eqref{eq:IID_decoupling_global_cutoff_fixed_block_error_limit},
\begin{align}
\lim_{n\to\infty}
\left(
1-\delta_{\Pi_{n,m,\eta,\xi}}
\right)
=
1.
\end{align}
For all sufficiently large \(n\), this factor is strictly positive,
and therefore
\begin{align}
&\frac{1}{n}
H
\left(
A_{\Pi_{n,m,\eta,\xi}}^n
\middle|
R^n
\right)_{\rho_{\Pi_{n,m,\eta,\xi}}}
\nonumber\\
&=
\frac{1}{
1-\delta_{\Pi_{n,m,\eta,\xi}}
}
\frac{
1-\delta_{\Pi_{n,m,\eta,\xi}}
}{
n
}
H
\left(
A_{\Pi_{n,m,\eta,\xi}}^n
\middle|
R^n
\right)_{\rho_{\Pi_{n,m,\eta,\xi}}}.
\end{align}
Taking the limit gives
\begin{align}
\lim_{n\to\infty}
\frac{1}{n}
H
\left(
A_{\Pi_{n,m,\eta,\xi}}^n
\middle|
R^n
\right)_{\rho_{\Pi_{n,m,\eta,\xi}}}
=
H(A|R)_\rho.
\end{align}

The proof uses only the unified definitions and estimates and thus
also applies when
\(\delta_{\widehat{\Pi}_m}=0\).

\medskip
\noindent
\textbf{Proof of
\eqref{eq:IID_decoupling_global_cutoff_fixed_block_mutual_information_limit}.}
Since
\(H(A)_\rho<\infty\),
the mutual information is finite and
\begin{align}
I(A:R)_\rho
=
H(A)_\rho
-
H(A|R)_\rho.
\end{align}
Moreover, since
\(A_{\Pi_{n,m,\eta,\xi}}^n\)
is finite-dimensional,
\begin{align}
&I
\left(
A_{\Pi_{n,m,\eta,\xi}}^n:R^n
\right)_{\rho_{\Pi_{n,m,\eta,\xi}}}
\nonumber\\
&=
H
\left(
A_{\Pi_{n,m,\eta,\xi}}^n
\right)_{\rho_{\Pi_{n,m,\eta,\xi}}}
-
H
\left(
A_{\Pi_{n,m,\eta,\xi}}^n
\middle|
R^n
\right)_{\rho_{\Pi_{n,m,\eta,\xi}}}.
\end{align}
Consequently,
\eqref{eq:IID_decoupling_global_cutoff_fixed_block_entropy_limit}
and
\eqref{eq:IID_decoupling_global_cutoff_fixed_block_conditional_limit}
give
\begin{align}
\lim_{n\to\infty}
\frac{1}{n}
I
\left(
A_{\Pi_{n,m,\eta,\xi}}^n:R^n
\right)_{\rho_{\Pi_{n,m,\eta,\xi}}}
=
I(A:R)_\rho.
\end{align}
The same argument applies when
\(\delta_{\widehat{\Pi}_m}=0\).

\medskip
\noindent
\textbf{Proof of
\eqref{eq:IID_decoupling_global_cutoff_fixed_block_rank_rate}.}
Suppose that
\(0<\delta_{\widehat{\Pi}_m}<1\).
For all sufficiently large \(n\), distinct patterns in
\(\mathcal T_{n,m,\eta}\)
give mutually orthogonal sectors in
\eqref{eq:IID_decoupling_global_cutoff_projection_fixed_block}.
For each integer \(k\) satisfying
\begin{align}
0
\le
k
\le
b_{n,m},
\qquad
\left|
\frac{k}{b_{n,m}}
-
\delta_{\widehat{\Pi}_m}
\right|
\le
\eta,
\end{align}
there are
\(\binom{b_{n,m}}{k}\)
patterns containing exactly \(k\) rare blocks.
The sector corresponding to each such pattern has dimension
\begin{align}
\left|
A_{\widehat{\Pi}_m}^m
\right|^{b_{n,m}-k}
\operatorname{rank}
\Pi_{k,m,\xi}^{\mathrm{rare}},
\end{align}
and the residual subsystem contributes the common multiplicative
factor
\begin{align}
\operatorname{rank}
\Pi_{n,m}^{\mathrm{res},A^{c_{n,m}}}.
\end{align}
Consequently,
\begin{align}
&\left|
A_{\Pi_{n,m,\eta,\xi}}^n
\right|
\nonumber\\
&=
\operatorname{rank}
\Pi_{n,m}^{\mathrm{res},A^{c_{n,m}}}
\sum_{
\substack{
0\le k\le b_{n,m}\\
\left|
k/b_{n,m}-\delta_{\widehat{\Pi}_m}
\right|
\le
\eta
}
}
\binom{b_{n,m}}{k}
\left|
A_{\widehat{\Pi}_m}^m
\right|^{b_{n,m}-k}
\operatorname{rank}
\Pi_{k,m,\xi}^{\mathrm{rare}}.
\label{eq:IID_decoupling_global_cutoff_rank_decomposition_proof}
\end{align}

For every
\(0\le k\le b_{n,m}\),
the binary type-class bound
\cite[Example~12.1.3]{Cover2006}
gives
\begin{align}
\frac{1}{
b_{n,m}+1
}
\exp
\left[
b_{n,m}
h_2
\left(
\frac{k}{b_{n,m}}
\right)
\right]
&\le
\binom{b_{n,m}}{k}
\nonumber\\
&\le
\exp
\left[
b_{n,m}
h_2
\left(
\frac{k}{b_{n,m}}
\right)
\right].
\label{eq:IID_decoupling_global_cutoff_binomial_type_bounds}
\end{align}

By the assumed choice of \(\eta\) and
\eqref{eq:decoupling_IID_block_quotient_properties},
the summation range in
\eqref{eq:IID_decoupling_global_cutoff_rank_decomposition_proof}
is nonempty for all sufficiently large \(n\).
Since
\(\eta<\delta_{\widehat{\Pi}_m}\),
every \(k\) in this range satisfies
\begin{align}
k
\ge
b_{n,m}
\left(
\delta_{\widehat{\Pi}_m}-\eta
\right).
\end{align}
Together with
\eqref{eq:decoupling_IID_block_quotient_properties},
this yields
\begin{align}
\lim_{n\to\infty}
\min_{
\substack{
0\le k\le b_{n,m}\\
\left|
k/b_{n,m}-\delta_{\widehat{\Pi}_m}
\right|
\le
\eta
}
}
k
=
\infty.
\label{eq:IID_decoupling_global_cutoff_minimum_rare_count_limit}
\end{align}
By
\eqref{eq:IID_decoupling_global_cutoff_rare_probability_limit},
the projections
\(\Pi_{k,m,\xi}^{\mathrm{rare}}\)
have positive rank for all sufficiently large \(k\).
Therefore,
\eqref{eq:IID_decoupling_global_cutoff_minimum_rare_count_limit}
and
\eqref{eq:IID_decoupling_global_cutoff_rare_rank_exponent}
give
\begin{align}
&\lim_{n\to\infty}
\max_{
\substack{
0\le k\le b_{n,m}\\
\left|
k/b_{n,m}-\delta_{\widehat{\Pi}_m}
\right|
\le
\eta
}
}
\left|
\frac{1}{k}
\log
\operatorname{rank}
\Pi_{k,m,\xi}^{\mathrm{rare}}
-
\gamma_{m,\xi}^{\mathrm{rare}}
\right|
=
0.
\label{eq:IID_decoupling_global_cutoff_rare_rank_uniform_limit}
\end{align}

For every sufficiently large \(n\) and every \(k\) in the summation
range, one has
\begin{align}
&\frac{1}{b_{n,m}}
\log
\Biggl[
\binom{b_{n,m}}{k}
\left|
A_{\widehat{\Pi}_m}^m
\right|^{b_{n,m}-k}
\operatorname{rank}
\Pi_{k,m,\xi}^{\mathrm{rare}}
\Biggr]
\nonumber\\
&\quad-
\Biggl[
h_2
\left(
\frac{k}{b_{n,m}}
\right)
+
\left(
1-\frac{k}{b_{n,m}}
\right)
\log
\left|
A_{\widehat{\Pi}_m}^m
\right|
+
\frac{k}{b_{n,m}}
\gamma_{m,\xi}^{\mathrm{rare}}
\Biggr]
\nonumber\\
&=
\frac{1}{b_{n,m}}
\log
\binom{b_{n,m}}{k}
-
h_2
\left(
\frac{k}{b_{n,m}}
\right)
\nonumber\\
&\quad+
\frac{k}{b_{n,m}}
\left[
\frac{1}{k}
\log
\operatorname{rank}
\Pi_{k,m,\xi}^{\mathrm{rare}}
-
\gamma_{m,\xi}^{\mathrm{rare}}
\right].
\label{eq:IID_decoupling_global_cutoff_rank_summand_error_decomposition}
\end{align}

By
\eqref{eq:IID_decoupling_global_cutoff_binomial_type_bounds},
\begin{align}
\left|
\frac{1}{b_{n,m}}
\log
\binom{b_{n,m}}{k}
-
h_2
\left(
\frac{k}{b_{n,m}}
\right)
\right|
\le
\frac{
\log
\left(
b_{n,m}+1
\right)
}{
b_{n,m}
}.
\label{eq:IID_decoupling_global_cutoff_binomial_type_uniform_error}
\end{align}
Since
\begin{align}
0
\le
\frac{k}{b_{n,m}}
\le
1,
\end{align}
it follows that
\begin{align}
&\max_{
\substack{
0\le k\le b_{n,m}\\
\left|
k/b_{n,m}-\delta_{\widehat{\Pi}_m}
\right|
\le
\eta
}
}
\Biggl|
\frac{1}{b_{n,m}}
\log
\Biggl[
\binom{b_{n,m}}{k}
\left|
A_{\widehat{\Pi}_m}^m
\right|^{b_{n,m}-k}
\operatorname{rank}
\Pi_{k,m,\xi}^{\mathrm{rare}}
\Biggr]
\nonumber\\
&\hspace{14mm}-
\Biggl[
h_2
\left(
\frac{k}{b_{n,m}}
\right)
+
\left(
1-\frac{k}{b_{n,m}}
\right)
\log
\left|
A_{\widehat{\Pi}_m}^m
\right|
+
\frac{k}{b_{n,m}}
\gamma_{m,\xi}^{\mathrm{rare}}
\Biggr]
\Biggr|
\nonumber\\
&\le
\frac{
\log
\left(
b_{n,m}+1
\right)
}{
b_{n,m}
}
\nonumber\\
&\quad+
\max_{
\substack{
0\le k\le b_{n,m}\\
\left|
k/b_{n,m}-\delta_{\widehat{\Pi}_m}
\right|
\le
\eta
}
}
\left|
\frac{1}{k}
\log
\operatorname{rank}
\Pi_{k,m,\xi}^{\mathrm{rare}}
-
\gamma_{m,\xi}^{\mathrm{rare}}
\right|.
\label{eq:IID_decoupling_global_cutoff_rank_summand_uniform_bound}
\end{align}
By
\eqref{eq:decoupling_IID_block_quotient_properties},
\begin{align}
\lim_{n\to\infty}
\frac{
\log
\left(
b_{n,m}+1
\right)
}{
b_{n,m}
}
=
0.
\label{eq:IID_decoupling_global_cutoff_number_of_rank_terms_rate}
\end{align}
Together with
\eqref{eq:IID_decoupling_global_cutoff_rare_rank_uniform_limit},
this gives
\begin{align}
&\lim_{n\to\infty}
\max_{
\substack{
0\le k\le b_{n,m}\\
\left|
k/b_{n,m}-\delta_{\widehat{\Pi}_m}
\right|
\le
\eta
}
}
\Biggl|
\frac{1}{b_{n,m}}
\log
\Biggl[
\binom{b_{n,m}}{k}
\left|
A_{\widehat{\Pi}_m}^m
\right|^{b_{n,m}-k}
\operatorname{rank}
\Pi_{k,m,\xi}^{\mathrm{rare}}
\Biggr]
\nonumber\\
&\hspace{14mm}-
\Biggl[
h_2
\left(
\frac{k}{b_{n,m}}
\right)
+
\left(
1-\frac{k}{b_{n,m}}
\right)
\log
\left|
A_{\widehat{\Pi}_m}^m
\right|
+
\frac{k}{b_{n,m}}
\gamma_{m,\xi}^{\mathrm{rare}}
\Biggr]
\Biggr|
=
0.
\label{eq:IID_decoupling_global_cutoff_rank_summand_uniform_limit}
\end{align}

For all sufficiently large \(n\), every summand in
\eqref{eq:IID_decoupling_global_cutoff_rank_decomposition_proof}
is positive.
Since the sum contains at most
\(b_{n,m}+1\)
terms, one has
\begin{align}
&\max_{
\substack{
0\le k\le b_{n,m}\\
\left|
k/b_{n,m}-\delta_{\widehat{\Pi}_m}
\right|
\le
\eta
}
}
\Biggl\{
\binom{b_{n,m}}{k}
\left|
A_{\widehat{\Pi}_m}^m
\right|^{b_{n,m}-k}
\operatorname{rank}
\Pi_{k,m,\xi}^{\mathrm{rare}}
\Biggr\}
\nonumber\\
&\le
\sum_{
\substack{
0\le k\le b_{n,m}\\
\left|
k/b_{n,m}-\delta_{\widehat{\Pi}_m}
\right|
\le
\eta
}
}
\binom{b_{n,m}}{k}
\left|
A_{\widehat{\Pi}_m}^m
\right|^{b_{n,m}-k}
\operatorname{rank}
\Pi_{k,m,\xi}^{\mathrm{rare}}
\nonumber\\
&\le
\left(
b_{n,m}+1
\right)
\max_{
\substack{
0\le k\le b_{n,m}\\
\left|
k/b_{n,m}-\delta_{\widehat{\Pi}_m}
\right|
\le
\eta
}
}
\Biggl\{
\binom{b_{n,m}}{k}
\left|
A_{\widehat{\Pi}_m}^m
\right|^{b_{n,m}-k}
\operatorname{rank}
\Pi_{k,m,\xi}^{\mathrm{rare}}
\Biggr\}.
\label{eq:IID_decoupling_global_cutoff_rank_sum_maximum_comparison}
\end{align}
Taking logarithms and dividing by \(b_{n,m}\) gives
\begin{align}
0
&\le
\frac{1}{b_{n,m}}
\log
\sum_{
\substack{
0\le k\le b_{n,m}\\
\left|
k/b_{n,m}-\delta_{\widehat{\Pi}_m}
\right|
\le
\eta
}
}
\binom{b_{n,m}}{k}
\left|
A_{\widehat{\Pi}_m}^m
\right|^{b_{n,m}-k}
\operatorname{rank}
\Pi_{k,m,\xi}^{\mathrm{rare}}
\nonumber\\
&\quad-
\max_{
\substack{
0\le k\le b_{n,m}\\
\left|
k/b_{n,m}-\delta_{\widehat{\Pi}_m}
\right|
\le
\eta
}
}
\frac{1}{b_{n,m}}
\log
\Biggl[
\binom{b_{n,m}}{k}
\left|
A_{\widehat{\Pi}_m}^m
\right|^{b_{n,m}-k}
\operatorname{rank}
\Pi_{k,m,\xi}^{\mathrm{rare}}
\Biggr]
\nonumber\\
&\le
\frac{
\log
\left(
b_{n,m}+1
\right)
}{
b_{n,m}
}.
\label{eq:IID_decoupling_global_cutoff_rank_sum_maximum_log_comparison}
\end{align}

The inequality
\begin{align}
\left|
\max_k x_k-\max_k y_k
\right|
\le
\max_k
\left|
x_k-y_k
\right|
\end{align}
and
\eqref{eq:IID_decoupling_global_cutoff_rank_summand_uniform_limit}
give
\begin{align}
&\lim_{n\to\infty}
\Biggl|
\max_{
\substack{
0\le k\le b_{n,m}\\
\left|
k/b_{n,m}-\delta_{\widehat{\Pi}_m}
\right|
\le
\eta
}
}
\frac{1}{b_{n,m}}
\log
\Biggl[
\binom{b_{n,m}}{k}
\left|
A_{\widehat{\Pi}_m}^m
\right|^{b_{n,m}-k}
\operatorname{rank}
\Pi_{k,m,\xi}^{\mathrm{rare}}
\Biggr]
\nonumber\\
&\quad-
\max_{
\substack{
0\le k\le b_{n,m}\\
\left|
k/b_{n,m}-\delta_{\widehat{\Pi}_m}
\right|
\le
\eta
}
}
\Biggl[
h_2
\left(
\frac{k}{b_{n,m}}
\right)
+
\left(
1-\frac{k}{b_{n,m}}
\right)
\log
\left|
A_{\widehat{\Pi}_m}^m
\right|
+
\frac{k}{b_{n,m}}
\gamma_{m,\xi}^{\mathrm{rare}}
\Biggr]
\Biggr|
=
0.
\label{eq:IID_decoupling_global_cutoff_largest_rank_summand_rate}
\end{align}

The function
\begin{align}
t
\longmapsto
h_2(t)
+
(1-t)
\log
\left|
A_{\widehat{\Pi}_m}^m
\right|
+
t
\gamma_{m,\xi}^{\mathrm{rare}}
\end{align}
is continuous on the nonempty compact interval
\begin{align}
\left\{
t\in[0,1]:
\left|
t-\delta_{\widehat{\Pi}_m}
\right|
\le
\eta
\right\}.
\end{align}
For all sufficiently large \(n\), every point in this interval is at
distance at most
\(1/b_{n,m}\)
from an admissible point of the form
\(k/b_{n,m}\).
Since
\begin{align}
\lim_{n\to\infty}
\frac{1}{b_{n,m}}
=
0,
\end{align}
uniform continuity gives
\begin{align}
&\lim_{n\to\infty}
\max_{
\substack{
0\le k\le b_{n,m}\\
\left|
k/b_{n,m}-\delta_{\widehat{\Pi}_m}
\right|
\le
\eta
}
}
\Biggl[
h_2
\left(
\frac{k}{b_{n,m}}
\right)
+
\left(
1-\frac{k}{b_{n,m}}
\right)
\log
\left|
A_{\widehat{\Pi}_m}^m
\right|
+
\frac{k}{b_{n,m}}
\gamma_{m,\xi}^{\mathrm{rare}}
\Biggr]
\nonumber\\
&=
\max_{
\substack{
t\in[0,1]\\
\left|
t-\delta_{\widehat{\Pi}_m}
\right|
\le
\eta
}
}
\Biggl\{
h_2(t)
+
(1-t)
\log
\left|
A_{\widehat{\Pi}_m}^m
\right|
+
t
\gamma_{m,\xi}^{\mathrm{rare}}
\Biggr\}.
\label{eq:IID_decoupling_global_cutoff_continuous_rank_maximum_limit}
\end{align}
Combining
\eqref{eq:IID_decoupling_global_cutoff_number_of_rank_terms_rate},
\eqref{eq:IID_decoupling_global_cutoff_rank_sum_maximum_log_comparison},
\eqref{eq:IID_decoupling_global_cutoff_largest_rank_summand_rate}, and
\eqref{eq:IID_decoupling_global_cutoff_continuous_rank_maximum_limit},
we obtain
\begin{align}
&\lim_{n\to\infty}
\frac{1}{b_{n,m}}
\log
\sum_{
\substack{
0\le k\le b_{n,m}\\
\left|
k/b_{n,m}-\delta_{\widehat{\Pi}_m}
\right|
\le
\eta
}
}
\binom{b_{n,m}}{k}
\left|
A_{\widehat{\Pi}_m}^m
\right|^{b_{n,m}-k}
\operatorname{rank}
\Pi_{k,m,\xi}^{\mathrm{rare}}
\nonumber\\
&=
\max_{
\substack{
t\in[0,1]\\
\left|
t-\delta_{\widehat{\Pi}_m}
\right|
\le
\eta
}
}
\Biggl\{
h_2(t)
+
(1-t)
\log
\left|
A_{\widehat{\Pi}_m}^m
\right|
+
t
\gamma_{m,\xi}^{\mathrm{rare}}
\Biggr\}.
\label{eq:IID_decoupling_global_cutoff_rank_sum_rate}
\end{align}

Taking the logarithm of
\eqref{eq:IID_decoupling_global_cutoff_rank_decomposition_proof}
and dividing by \(n\) gives
\begin{align}
&\frac{1}{n}
\log
\left|
A_{\Pi_{n,m,\eta,\xi}}^n
\right|
\nonumber\\
&=
\frac{1}{n}
\log
\operatorname{rank}
\Pi_{n,m}^{\mathrm{res},A^{c_{n,m}}}
\nonumber\\
&\quad+
\frac{b_{n,m}}{n}
\frac{1}{b_{n,m}}
\log
\sum_{
\substack{
0\le k\le b_{n,m}\\
\left|
k/b_{n,m}-\delta_{\widehat{\Pi}_m}
\right|
\le
\eta
}
}
\binom{b_{n,m}}{k}
\left|
A_{\widehat{\Pi}_m}^m
\right|^{b_{n,m}-k}
\operatorname{rank}
\Pi_{k,m,\xi}^{\mathrm{rare}}.
\end{align}
Using
\eqref{eq:IID_decoupling_global_cutoff_residual_rank_rate},
\eqref{eq:IID_decoupling_global_cutoff_rank_sum_rate}, and
\eqref{eq:decoupling_IID_block_quotient_properties},
we conclude that
\begin{align}
&\lim_{n\to\infty}
\frac{1}{n}
\log
\left|
A_{\Pi_{n,m,\eta,\xi}}^n
\right|
\nonumber\\
&=
\frac{1}{m}
\max_{
\substack{
t\in[0,1]\\
\left|
t-\delta_{\widehat{\Pi}_m}
\right|
\le
\eta
}
}
\Biggl\{
h_2(t)
+
(1-t)
\log
\left|
A_{\widehat{\Pi}_m}^m
\right|
+
t
\gamma_{m,\xi}^{\mathrm{rare}}
\Biggr\}.
\end{align}
The finitely many exceptional values of \(n\) do not affect this
limit.

\medskip
\noindent
\textbf{Proof of
\eqref{eq:IID_decoupling_global_cutoff_fixed_partial_trace_rate}.}
Suppose that
\(0<\delta_{\widehat{\Pi}_m}<1\).
For all sufficiently large \(n\),
\eqref{eq:IID_decoupling_global_cutoff_fixed_factorization},
\eqref{eq:IID_decoupling_global_cutoff_output_space}, and
\eqref{eq:IID_decoupling_global_cutoff_discarded_space}
give
\begin{align}
\left|
A_{\Pi_{n,m,\eta,\xi}}^n
\right|
&=
\left|
A_{\widehat{\Pi}_m}^m
\right|^{\ell_{n,m,\eta}}
\left|
B_{n,m,\eta,\xi}
\right|,
\label{eq:IID_decoupling_global_cutoff_fixed_factorization_dimension}
\\
\left|
M_{n,m,\eta,\xi}
\right|
&=
\left|
M_m
\right|^{\ell_{n,m,\eta}}
\left|
B_{n,m,\eta,\xi}
\right|.
\label{eq:IID_decoupling_global_cutoff_discarded_dimension}
\end{align}
Eliminating
\(\lvert B_{n,m,\eta,\xi}\rvert\)
from
\eqref{eq:IID_decoupling_global_cutoff_fixed_factorization_dimension}
and
\eqref{eq:IID_decoupling_global_cutoff_discarded_dimension}
gives
\begin{align}
\log
\left|
M_{n,m,\eta,\xi}
\right|
&=
\log
\left|
A_{\Pi_{n,m,\eta,\xi}}^n
\right|
\nonumber\\
&\quad-
\ell_{n,m,\eta}
\Biggl[
\log
\left|
A_{\widehat{\Pi}_m}^m
\right|
-
\log
\left|
M_m
\right|
\Biggr].
\label{eq:IID_decoupling_global_cutoff_discarded_input_dimension_identity}
\end{align}
By
\eqref{eq:IID_decoupling_global_cutoff_common_block_number}
and
\eqref{eq:decoupling_IID_block_quotient_properties},
\begin{align}
\lim_{n\to\infty}
\frac{
\ell_{n,m,\eta}
}{
n
}
=
\frac{
1-\delta_{\widehat{\Pi}_m}-\eta
}{
m
}.
\label{eq:IID_decoupling_global_cutoff_common_block_rate}
\end{align}
Dividing
\eqref{eq:IID_decoupling_global_cutoff_discarded_input_dimension_identity}
by \(n\), taking the limit, and using
\eqref{eq:IID_decoupling_global_cutoff_fixed_block_rank_rate}
and
\eqref{eq:IID_decoupling_global_cutoff_common_block_rate},
we obtain
\begin{align}
\lim_{n\to\infty}
\frac{1}{n}
\log
\left|
M_{n,m,\eta,\xi}
\right|
&=
\frac{1}{m}
\max_{
\substack{
t\in[0,1]\\
\left|
t-\delta_{\widehat{\Pi}_m}
\right|
\le
\eta
}
}
\Biggl\{
h_2(t)
+
(1-t)
\log
\left|
A_{\widehat{\Pi}_m}^m
\right|
+
t
\gamma_{m,\xi}^{\mathrm{rare}}
\Biggr\}
\nonumber\\
&\quad-
\frac{
1-\delta_{\widehat{\Pi}_m}-\eta
}{
m
}
\Biggl[
\log
\left|
A_{\widehat{\Pi}_m}^m
\right|
-
\log
\left|
M_m
\right|
\Biggr].
\end{align}
The finitely many exceptional definitions do not affect this limit.
This proves
\eqref{eq:IID_decoupling_global_cutoff_fixed_partial_trace_rate}.

\medskip
\noindent
\textbf{Proof of
\eqref{eq:IID_decoupling_global_cutoff_common_only_rank_rate}
and
\eqref{eq:IID_decoupling_global_cutoff_common_only_partial_trace_rate}.}
Suppose that
\(\delta_{\widehat{\Pi}_m}=0\).
By
\eqref{eq:IID_decoupling_global_cutoff_projection_common_only},
\begin{align}
\left|
A_{\Pi_{n,m,0,0}}^n
\right|
=
\left|
A_{\widehat{\Pi}_m}^m
\right|^{b_{n,m}}
\operatorname{rank}
\Pi_{n,m}^{\mathrm{res},A^{c_{n,m}}}.
\label{eq:IID_decoupling_global_cutoff_common_only_input_dimension}
\end{align}
Therefore,
\eqref{eq:IID_decoupling_global_cutoff_residual_rank_rate}
and
\eqref{eq:decoupling_IID_block_quotient_properties}
give
\begin{align}
\lim_{n\to\infty}
\frac{1}{n}
\log
\left|
A_{\Pi_{n,m,0,0}}^n
\right|
=
\frac{1}{m}
\log
\left|
A_{\widehat{\Pi}_m}^m
\right|,
\end{align}
which proves
\eqref{eq:IID_decoupling_global_cutoff_common_only_rank_rate}.

Moreover,
\eqref{eq:IID_decoupling_global_cutoff_common_only_remainder_space}
and
\eqref{eq:IID_decoupling_global_cutoff_discarded_space}
give
\begin{align}
\left|
M_{n,m,0,0}
\right|
=
\left|
M_m
\right|^{b_{n,m}}
\operatorname{rank}
\Pi_{n,m}^{\mathrm{res},A^{c_{n,m}}}.
\label{eq:IID_decoupling_global_cutoff_common_only_discarded_dimension}
\end{align}
Using again
\eqref{eq:IID_decoupling_global_cutoff_residual_rank_rate}
and
\eqref{eq:decoupling_IID_block_quotient_properties},
we obtain
\begin{align}
\lim_{n\to\infty}
\frac{1}{n}
\log
\left|
M_{n,m,0,0}
\right|
=
\frac{1}{m}
\log
\left|
M_m
\right|.
\end{align}
This proves
\eqref{eq:IID_decoupling_global_cutoff_common_only_partial_trace_rate}.

\medskip
\noindent
\textbf{Proof of
\eqref{eq:IID_decoupling_global_cutoff_projected_dimension_iterated_limit}.}
Consider first an index \(m\) for which
\begin{align}
0
<
\delta_{\widehat{\Pi}_m}
<
1.
\end{align}
The upper typical-rank bound
\eqref{eq:IID_decoupling_global_cutoff_rare_rank_bound}
and
\eqref{eq:IID_decoupling_global_cutoff_rare_rank_exponent}
give
\begin{align}
\gamma_{m,\xi}^{\mathrm{rare}}
\le
H
\left(
A_{\widehat{\Pi}_m^\perp}^m
\right)_{\rho_{\widehat{\Pi}_m^\perp}}
+
\xi.
\label{eq:IID_decoupling_global_cutoff_rare_rank_exponent_upper_proof}
\end{align}

Every sequence in
\(\mathcal T_{k,m,\xi}^{\mathrm{rare}}\)
has probability at most
\begin{align}
\exp
\left[
-k
\left(
H
\left(
A_{\widehat{\Pi}_m^\perp}^m
\right)_{\rho_{\widehat{\Pi}_m^\perp}}
-
\xi
\right)
\right].
\end{align}
Therefore,
\begin{align}
&\operatorname{Tr}
\left[
\Pi_{k,m,\xi}^{\mathrm{rare}}
\left(
\rho_{\widehat{\Pi}_m^\perp}^{A_{\widehat{\Pi}_m^\perp}^m}
\right)^{\otimes k}
\right]
\nonumber\\
&\le
\operatorname{rank}
\Pi_{k,m,\xi}^{\mathrm{rare}}
\exp
\left[
-k
\left(
H
\left(
A_{\widehat{\Pi}_m^\perp}^m
\right)_{\rho_{\widehat{\Pi}_m^\perp}}
-
\xi
\right)
\right].
\end{align}
For all sufficiently large \(k\), the trace on the left-hand side is
positive, and hence
\begin{align}
\frac{1}{k}
\log
\operatorname{rank}
\Pi_{k,m,\xi}^{\mathrm{rare}}
&\ge
H
\left(
A_{\widehat{\Pi}_m^\perp}^m
\right)_{\rho_{\widehat{\Pi}_m^\perp}}
-
\xi
\nonumber\\
&\quad+
\frac{1}{k}
\log
\operatorname{Tr}
\left[
\Pi_{k,m,\xi}^{\mathrm{rare}}
\left(
\rho_{\widehat{\Pi}_m^\perp}^{A_{\widehat{\Pi}_m^\perp}^m}
\right)^{\otimes k}
\right].
\end{align}
By
\eqref{eq:IID_decoupling_global_cutoff_rare_probability_limit},
\begin{align}
\lim_{k\to\infty}
\operatorname{Tr}
\left[
\Pi_{k,m,\xi}^{\mathrm{rare}}
\left(
\rho_{\widehat{\Pi}_m^\perp}^{A_{\widehat{\Pi}_m^\perp}^m}
\right)^{\otimes k}
\right]
=
1,
\end{align}
and therefore
\begin{align}
\lim_{k\to\infty}
\frac{1}{k}
\log
\operatorname{Tr}
\left[
\Pi_{k,m,\xi}^{\mathrm{rare}}
\left(
\rho_{\widehat{\Pi}_m^\perp}^{A_{\widehat{\Pi}_m^\perp}^m}
\right)^{\otimes k}
\right]
=
0.
\end{align}
Using
\eqref{eq:IID_decoupling_global_cutoff_rare_rank_exponent},
we obtain
\begin{align}
\gamma_{m,\xi}^{\mathrm{rare}}
\ge
H
\left(
A_{\widehat{\Pi}_m^\perp}^m
\right)_{\rho_{\widehat{\Pi}_m^\perp}}
-
\xi.
\label{eq:IID_decoupling_global_cutoff_rare_rank_exponent_lower_proof}
\end{align}

Applying
\eqref{eq:IID_decoupling_global_cutoff_fixed_block_rank_rate}
with
\(\eta=\eta_m\)
and
\(\xi=\xi_m\),
and evaluating the maximum at the admissible point
\(t=\delta_{\widehat{\Pi}_m}\), gives
\begin{align}
&\lim_{n\to\infty}
\frac{1}{n}
\log
\left|
A_{\Pi_{n,m,\eta_m,\xi_m}}^n
\right|
\nonumber\\
&\ge
\frac{1}{m}
\Biggl[
h_2
\left(
\delta_{\widehat{\Pi}_m}
\right)
+
\left(
1-\delta_{\widehat{\Pi}_m}
\right)
\log
\left|
A_{\widehat{\Pi}_m}^m
\right|
+
\delta_{\widehat{\Pi}_m}
\gamma_{m,\xi_m}^{\mathrm{rare}}
\Biggr].
\end{align}
Since
\begin{align}
H
\left(
A_{\widehat{\Pi}_m}^m
\right)_{\rho_{\widehat{\Pi}_m}}
\le
\log
\left|
A_{\widehat{\Pi}_m}^m
\right|,
\end{align}
and
\eqref{eq:IID_decoupling_global_cutoff_rare_rank_exponent_lower_proof}
gives
\begin{align}
&\lim_{n\to\infty}
\frac{1}{n}
\log
\left|
A_{\Pi_{n,m,\eta_m,\xi_m}}^n
\right|
\nonumber\\
&\ge
\frac{1}{m}
\Biggl[
h_2
\left(
\delta_{\widehat{\Pi}_m}
\right)
+
\left(
1-\delta_{\widehat{\Pi}_m}
\right)
H
\left(
A_{\widehat{\Pi}_m}^m
\right)_{\rho_{\widehat{\Pi}_m}}
+
\delta_{\widehat{\Pi}_m}
H
\left(
A_{\widehat{\Pi}_m^\perp}^m
\right)_{\rho_{\widehat{\Pi}_m^\perp}}
\nonumber\\
&\qquad-
\delta_{\widehat{\Pi}_m}\xi_m
\Biggr].
\end{align}
By
\eqref{eq:decoupling_IID_marginal_entropy_branch_decomposition},
\begin{align}
&mH(A)_\rho
\nonumber\\
&=
h_2
\left(
\delta_{\widehat{\Pi}_m}
\right)
+
\left(
1-\delta_{\widehat{\Pi}_m}
\right)
H
\left(
A_{\widehat{\Pi}_m}^m
\right)_{\rho_{\widehat{\Pi}_m}}
+
\delta_{\widehat{\Pi}_m}
H
\left(
A_{\widehat{\Pi}_m^\perp}^m
\right)_{\rho_{\widehat{\Pi}_m^\perp}}.
\end{align}
Consequently,
\begin{align}
&\lim_{n\to\infty}
\frac{1}{n}
\log
\left|
A_{\Pi_{n,m,\eta_m,\xi_m}}^n
\right|
\nonumber\\
&\ge
H(A)_\rho
-
\frac{
\delta_{\widehat{\Pi}_m}\xi_m
}{
m
}.
\label{eq:IID_decoupling_global_cutoff_iterated_dimension_lower_proof}
\end{align}

For the converse bound,
\eqref{eq:IID_decoupling_global_cutoff_fixed_block_rank_rate}
and
\eqref{eq:IID_decoupling_global_cutoff_rare_rank_exponent_upper_proof}
give
\begin{align}
&\lim_{n\to\infty}
\frac{1}{n}
\log
\left|
A_{\Pi_{n,m,\eta_m,\xi_m}}^n
\right|
\nonumber\\
&\le
\frac{\log 2}{m}
+
\frac{
1-\delta_{\widehat{\Pi}_m}+\eta_m
}{
m
}
\log
\left|
A_{\widehat{\Pi}_m}^m
\right|
\nonumber\\
&\quad+
\frac{
\delta_{\widehat{\Pi}_m}+\eta_m
}{
m
}
\left[
H
\left(
A_{\widehat{\Pi}_m^\perp}^m
\right)_{\rho_{\widehat{\Pi}_m^\perp}}
+
\xi_m
\right].
\label{eq:IID_decoupling_global_cutoff_iterated_dimension_upper_proof}
\end{align}
Here, for every admissible \(t\), we used
\begin{align}
h_2(t)
&\le
\log 2,
\\
1-t
&\le
1-\delta_{\widehat{\Pi}_m}+\eta_m,
\\
t
&\le
\delta_{\widehat{\Pi}_m}+\eta_m.
\end{align}

By
\eqref{eq:IID_decoupling_global_cutoff_admissible_parameter_sequences},
\begin{align}
0
<
\eta_m
<
\min
\left\{
\delta_{\widehat{\Pi}_m},
1-\delta_{\widehat{\Pi}_m}
\right\}.
\end{align}
It follows that
\begin{align}
\frac{
\delta_{\widehat{\Pi}_m}+\eta_m
}{
m
}
H
\left(
A_{\widehat{\Pi}_m^\perp}^m
\right)_{\rho_{\widehat{\Pi}_m^\perp}}
\le
2
\frac{
\delta_{\widehat{\Pi}_m}
}{
m
}
H
\left(
A_{\widehat{\Pi}_m^\perp}^m
\right)_{\rho_{\widehat{\Pi}_m^\perp}},
\label{eq:IID_decoupling_global_cutoff_rare_entropy_upper_control}
\end{align}
and
\begin{align}
0
\le
\frac{
\delta_{\widehat{\Pi}_m}+\eta_m
}{
m
}
\xi_m
\le
\frac{\xi_m}{m}.
\label{eq:IID_decoupling_global_cutoff_rare_typicality_upper_control}
\end{align}
Moreover,
\(\eta_m<\delta_{\widehat{\Pi}_m}\)
implies
\begin{align}
1-\delta_{\widehat{\Pi}_m}+\eta_m
<
1.
\end{align}
Thus,
\begin{align}
&\lim_{n\to\infty}
\frac{1}{n}
\log
\left|
A_{\Pi_{n,m,\eta_m,\xi_m}}^n
\right|
\nonumber\\
&\le
\frac{1}{m}
\log
\left|
A_{\widehat{\Pi}_m}^m
\right|
+
\frac{\log 2}{m}
+
2
\frac{
\delta_{\widehat{\Pi}_m}
}{
m
}
H
\left(
A_{\widehat{\Pi}_m^\perp}^m
\right)_{\rho_{\widehat{\Pi}_m^\perp}}
+
\frac{\xi_m}{m}.
\label{eq:IID_decoupling_global_cutoff_positive_rare_dimension_simplified_upper_bound}
\end{align}

Consider next an index \(m\) for which
\(\delta_{\widehat{\Pi}_m}=0\).
Then
\(\eta_m=\xi_m=0\), and
\eqref{eq:IID_decoupling_global_cutoff_common_only_rank_rate}
gives
\begin{align}
\lim_{n\to\infty}
\frac{1}{n}
\log
\left|
A_{\Pi_{n,m,0,0}}^n
\right|
=
\frac{1}{m}
\log
\left|
A_{\widehat{\Pi}_m}^m
\right|.
\label{eq:IID_decoupling_global_cutoff_common_only_dimension_identity}
\end{align}
Since
\((\rho^A)^{\otimes m}\)
is supported on
\(\mathcal H^{A_{\widehat{\Pi}_m}^m}\),
\begin{align}
mH(A)_\rho
=
H(A^m)_{\rho^{\otimes m}}
\le
\log
\left|
A_{\widehat{\Pi}_m}^m
\right|.
\end{align}
Hence,
\begin{align}
\lim_{n\to\infty}
\frac{1}{n}
\log
\left|
A_{\Pi_{n,m,0,0}}^n
\right|
\ge
H(A)_\rho.
\label{eq:IID_decoupling_global_cutoff_iterated_dimension_lower_proof_common_only}
\end{align}

Consequently, for every \(m\),
\begin{align}
&\lim_{n\to\infty}
\frac{1}{n}
\log
\left|
A_{\Pi_{n,m,\eta_m,\xi_m}}^n
\right|
\nonumber\\
&\ge
H(A)_\rho
-
\frac{
\delta_{\widehat{\Pi}_m}\xi_m
}{
m
}.
\label{eq:IID_decoupling_global_cutoff_unified_dimension_lower_bound}
\end{align}
Similarly, for every \(m\),
\begin{align}
&\lim_{n\to\infty}
\frac{1}{n}
\log
\left|
A_{\Pi_{n,m,\eta_m,\xi_m}}^n
\right|
\nonumber\\
&\le
\frac{1}{m}
\log
\left|
A_{\widehat{\Pi}_m}^m
\right|
+
\begin{cases}
\displaystyle
\frac{\log 2}{m}
+
2
\frac{
\delta_{\widehat{\Pi}_m}
}{
m
}
H
\left(
A_{\widehat{\Pi}_m^\perp}^m
\right)_{\rho_{\widehat{\Pi}_m^\perp}}
+
\frac{\xi_m}{m},
&
0<\delta_{\widehat{\Pi}_m}<1,
\\[4mm]
0,
&
\delta_{\widehat{\Pi}_m}=0.
\end{cases}
\label{eq:IID_decoupling_global_cutoff_unified_dimension_upper_bound}
\end{align}

Proposition~\ref{prop:decoupling_IID_finite_entropy_spectral_cutoffs}
gives
\begin{align}
\lim_{m\to\infty}
\frac{1}{m}
\log
\left|
A_{\widehat{\Pi}_m}^m
\right|
&=
H(A)_\rho,
\\
\lim_{m\to\infty}
\begin{cases}
\displaystyle
\frac{
\delta_{\widehat{\Pi}_m}
}{
m
}
H
\left(
A_{\widehat{\Pi}_m^\perp}^m
\right)_{\rho_{\widehat{\Pi}_m^\perp}},
&
0<\delta_{\widehat{\Pi}_m}<1,
\\[4mm]
0,
&
\delta_{\widehat{\Pi}_m}=0
\end{cases}
&=
0.
\end{align}
Moreover,
\eqref{eq:IID_decoupling_global_cutoff_parameter_sequence_limits}
gives
\begin{align}
\lim_{m\to\infty}
\frac{
\delta_{\widehat{\Pi}_m}\xi_m
}{
m
}
&=
0,
&
\lim_{m\to\infty}
\frac{\xi_m}{m}
&=
0,
\end{align}
and
\begin{align}
\lim_{m\to\infty}
\frac{\log 2}{m}
=
0.
\end{align}
It follows that
\begin{align}
&\lim_{m\to\infty}
\begin{cases}
\displaystyle
\frac{\log 2}{m}
+
2
\frac{
\delta_{\widehat{\Pi}_m}
}{
m
}
H
\left(
A_{\widehat{\Pi}_m^\perp}^m
\right)_{\rho_{\widehat{\Pi}_m^\perp}}
+
\frac{\xi_m}{m},
&
0<\delta_{\widehat{\Pi}_m}<1,
\\[4mm]
0,
&
\delta_{\widehat{\Pi}_m}=0
\end{cases}
\nonumber\\
&=
0.
\end{align}
Therefore,
\eqref{eq:IID_decoupling_global_cutoff_unified_dimension_lower_bound}
and
\eqref{eq:IID_decoupling_global_cutoff_unified_dimension_upper_bound}
yield
\begin{align}
\lim_{m\to\infty}
\lim_{n\to\infty}
\frac{1}{n}
\log
\left|
A_{\Pi_{n,m,\eta_m,\xi_m}}^n
\right|
=
H(A)_\rho,
\end{align}
which proves
\eqref{eq:IID_decoupling_global_cutoff_projected_dimension_iterated_limit}.

\medskip
\noindent
\textbf{Proof of
\eqref{eq:IID_decoupling_global_cutoff_prescribed_partial_trace_rate}.}
For every \(m\) such that
\(0<\delta_{\widehat{\Pi}_m}<1\),
equations
\eqref{eq:IID_decoupling_global_cutoff_fixed_partial_trace_rate}
and
\eqref{eq:IID_decoupling_global_cutoff_fixed_block_rank_rate},
with
\(\eta=\eta_m\)
and
\(\xi=\xi_m\), give
\begin{align}
&\lim_{n\to\infty}
\frac{1}{n}
\log
\left|
M_{n,m,\eta_m,\xi_m}
\right|
\nonumber\\
&=
\lim_{n\to\infty}
\frac{1}{n}
\log
\left|
A_{\Pi_{n,m,\eta_m,\xi_m}}^n
\right|
\nonumber\\
&\quad-
\frac{
1-\delta_{\widehat{\Pi}_m}-\eta_m
}{
m
}
\Biggl[
\log
\left|
A_{\widehat{\Pi}_m}^m
\right|
-
\log
\left|
M_m
\right|
\Biggr].
\label{eq:IID_decoupling_global_cutoff_positive_rare_discarded_rate_identity}
\end{align}
For every \(m\) such that
\(\delta_{\widehat{\Pi}_m}=0\),
one has
\(\eta_m=\xi_m=0\), and
\eqref{eq:IID_decoupling_global_cutoff_common_only_rank_rate}
and
\eqref{eq:IID_decoupling_global_cutoff_common_only_partial_trace_rate}
give
\begin{align}
&\lim_{n\to\infty}
\frac{1}{n}
\log
\left|
M_{n,m,0,0}
\right|
\nonumber\\
&=
\lim_{n\to\infty}
\frac{1}{n}
\log
\left|
A_{\Pi_{n,m,0,0}}^n
\right|
\nonumber\\
&\quad-
\frac{1}{m}
\Biggl[
\log
\left|
A_{\widehat{\Pi}_m}^m
\right|
-
\log
\left|
M_m
\right|
\Biggr].
\label{eq:IID_decoupling_global_cutoff_common_only_discarded_rate_identity}
\end{align}
Thus, for every \(m\),
\begin{align}
&\lim_{n\to\infty}
\frac{1}{n}
\log
\left|
M_{n,m,\eta_m,\xi_m}
\right|
\nonumber\\
&=
\lim_{n\to\infty}
\frac{1}{n}
\log
\left|
A_{\Pi_{n,m,\eta_m,\xi_m}}^n
\right|
\nonumber\\
&\quad-
\left(
1-\delta_{\widehat{\Pi}_m}-\eta_m
\right)
\Biggl[
\frac{1}{m}
\log
\left|
A_{\widehat{\Pi}_m}^m
\right|
-
\frac{1}{m}
\log
\left|
M_m
\right|
\Biggr].
\label{eq:IID_decoupling_global_cutoff_unified_discarded_rate_identity}
\end{align}

By
\eqref{eq:IID_decoupling_global_cutoff_projected_dimension_iterated_limit},
\begin{align}
\lim_{m\to\infty}
\lim_{n\to\infty}
\frac{1}{n}
\log
\left|
A_{\Pi_{n,m,\eta_m,\xi_m}}^n
\right|
=
H(A)_\rho.
\end{align}
Moreover,
Proposition~\ref{prop:decoupling_IID_finite_entropy_spectral_cutoffs}
and
\eqref{eq:IID_decoupling_global_cutoff_parameter_sequence_limits}
give
\begin{align}
\lim_{m\to\infty}
\left(
1-\delta_{\widehat{\Pi}_m}-\eta_m
\right)
&=
1,
\\
\lim_{m\to\infty}
\frac{1}{m}
\log
\left|
A_{\widehat{\Pi}_m}^m
\right|
&=
H(A)_\rho,
\\
\lim_{m\to\infty}
\frac{1}{m}
\log
\left|
M_m
\right|
&=
q.
\end{align}
Consequently,
\begin{align}
&\lim_{m\to\infty}
\left(
1-\delta_{\widehat{\Pi}_m}-\eta_m
\right)
\Biggl[
\frac{1}{m}
\log
\left|
A_{\widehat{\Pi}_m}^m
\right|
-
\frac{1}{m}
\log
\left|
M_m
\right|
\Biggr]
\nonumber\\
&=
H(A)_\rho-q.
\end{align}
Taking the limit \(m\to\infty\) in
\eqref{eq:IID_decoupling_global_cutoff_unified_discarded_rate_identity}
therefore gives
\begin{align}
\lim_{m\to\infty}
\lim_{n\to\infty}
\frac{1}{n}
\log
\left|
M_{n,m,\eta_m,\xi_m}
\right|
&=
H(A)_\rho
-
\left(
H(A)_\rho-q
\right)
\nonumber\\
&=
q.
\end{align}
This proves
\eqref{eq:IID_decoupling_global_cutoff_prescribed_partial_trace_rate}.
\end{proof}

\subsection{Bounds on infinite-dimensional IID decoupling}
\label{subsec:infinite_dimensional_IID_decoupling_achievability}

We construct the infinite-dimensional IID decoupling protocol by
applying the finite-dimensional one-shot decoupling theorem in
Section~\ref{sec:finite_input_infinite_reference_one_shot_decoupling}
to the fixed-count common-block subsystem constructed in
Subsection~\ref{subsec:high_probability_finite_rank_IID_decoupling},
while including all remaining components in the discarded system.
These remaining components contribute to the discarded-system dimension
for each fixed block length \(m\); however, in
Theorem~\ref{thm:IID_decoupling_high_probability_global_cutoffs},
we will control this contribution to show that it does not alter the target first-order rate after taking the limit \(m\to\infty\).

Fix
\begin{align}
q
\in
\left[
0,
H(A)_\rho
\right],
\end{align}
and choose
\begin{align}
\left\{
\widehat{\Pi}_m^{A^m},
E_m,
M_m,
V_m^{A_{\widehat{\Pi}_m}^m\to E_mM_m}
\right\}_{m=1}^{\infty}
\end{align}
as in
Proposition~\ref{prop:decoupling_IID_finite_entropy_spectral_cutoffs}
for this value of \(q\).
In particular,
\(\widehat{\Pi}_m^{A^m}\),
\(\delta_{\widehat{\Pi}_m}\),
\(\mathcal H^{A_{\widehat{\Pi}_m}^m}\), and
\(\rho_{\widehat{\Pi}_m}^{A_{\widehat{\Pi}_m}^mR^m}\)
are defined in
\eqref{eq:decoupling_IID_spectral_cutoff_projection},
\eqref{eq:decoupling_IID_spectral_cutoff_error},
\eqref{eq:decoupling_IID_projected_input_space}, and
\eqref{eq:decoupling_IID_normalized_projected_block_state},
respectively, and
\begin{align}
V_m^{A_{\widehat{\Pi}_m}^m\to E_mM_m}
:
\mathcal H^{A_{\widehat{\Pi}_m}^m}
\longrightarrow
\mathcal H^{E_m}
\otimes
\mathcal H^{M_m}
\end{align}
is the unitary isomorphism in
\eqref{eq:IID_decoupling_prescribed_common_factorization}.

Fix
\begin{align}
m
\in
\{1,2,\ldots\}.
\end{align}
If
\begin{align}
0
<
\delta_{\widehat{\Pi}_m}
<
1,
\end{align}
fix
\begin{align}
0
<
\eta
<
\min
\left\{
\delta_{\widehat{\Pi}_m},
1-\delta_{\widehat{\Pi}_m}
\right\},
\qquad
\xi
>
0.
\label{eq:IID_decoupling_achievability_positive_rare_parameters}
\end{align}
If
\begin{align}
\delta_{\widehat{\Pi}_m}
=
0,
\end{align}
set
\begin{align}
\eta
=
\xi
=
0,
\end{align}
as in
\eqref{eq:eta_xi_zero},
and use the common-only definitions in
\eqref{eq:IID_decoupling_common_only_pattern_set}--%
\eqref{eq:IID_decoupling_global_cutoff_common_only_remainder_space}.

Let
\(\Pi_{n,m,\eta,\xi}^{A^n}\),
\(\delta_{\Pi_{n,m,\eta,\xi}}\),
\(\mathcal H^{A_{\Pi_{n,m,\eta,\xi}}^n}\),
\(\rho_{\Pi_{n,m,\eta,\xi}}^{
A_{\Pi_{n,m,\eta,\xi}}^nR^n}\),
and
\(\rho_{\Pi_{n,m,\eta,\xi}}^{R^n}\)
be defined in
\eqref{eq:IID_decoupling_global_cutoff_projection_fixed_block},
\eqref{eq:IID_decoupling_global_cutoff_projection_common_only},
\eqref{eq:IID_decoupling_global_cutoff_error},
\eqref{eq:IID_decoupling_global_cutoff_input_space},
\eqref{eq:IID_decoupling_global_cutoff_normalized_state}, and
\eqref{eq:IID_decoupling_global_cutoff_reference_marginal},
respectively.
Let
\(\ell_{n,m,\eta}\)
be defined in
\eqref{eq:IID_decoupling_global_cutoff_common_block_number}
when
\(0<\delta_{\widehat{\Pi}_m}<1\),
and in
\eqref{eq:IID_decoupling_common_only_fixed_count}
when
\(\delta_{\widehat{\Pi}_m}=0\).
Similarly, let
\(\mathcal H^{B_{n,m,\eta,\xi}}\)
be the remainder space defined in
\eqref{eq:IID_decoupling_global_cutoff_remainder_space}
when
\(0<\delta_{\widehat{\Pi}_m}<1\),
and in
\eqref{eq:IID_decoupling_global_cutoff_common_only_remainder_space}
when
\(\delta_{\widehat{\Pi}_m}=0\).

For all sufficiently large \(n\) in the case
\(0<\delta_{\widehat{\Pi}_m}<1\), and for every \(n\ge m\) in the case
\(\delta_{\widehat{\Pi}_m}=0\), the unitary isomorphism
\begin{align}
V_{n,m,\eta,\xi}^{\mathrm{fix}}
:
\mathcal H^{A_{\Pi_{n,m,\eta,\xi}}^n}
\longrightarrow
\left(
\mathcal H^{A_{\widehat{\Pi}_m}^m}
\right)^{\otimes\ell_{n,m,\eta}}
\otimes
\mathcal H^{B_{n,m,\eta,\xi}}
\end{align}
defined in
\eqref{eq:IID_decoupling_global_cutoff_fixed_factorization}
identifies the globally projected input space as
\begin{align}
\mathcal H^{A_{\Pi_{n,m,\eta,\xi}}^n}
\simeq
\left(
\mathcal H^{A_{\widehat{\Pi}_m}^m}
\right)^{\otimes\ell_{n,m,\eta}}
\otimes
\mathcal H^{B_{n,m,\eta,\xi}}.
\label{eq:IID_decoupling_achievability_fixed_common_factor}
\end{align}
The first tensor factor in
\eqref{eq:IID_decoupling_achievability_fixed_common_factor}
is the fixed-count common tensor factor to which the
finite-dimensional one-shot decoupling theorem will be applied.
The remainder system
\(B_{n,m,\eta,\xi}\)
contains the pattern-sector label, the remaining common blocks, the
rare weakly typical component, and the residual component.

The finite-dimensional output and discarded systems
\(E_{n,m,\eta,\xi}\)
and
\(M_{n,m,\eta,\xi}\)
are defined in
\eqref{eq:IID_decoupling_global_cutoff_output_space}
and
\eqref{eq:IID_decoupling_global_cutoff_discarded_space},
respectively.
Thus,
\begin{align}
\mathcal H^{E_{n,m,\eta,\xi}}
&=
\left(
\mathcal H^{E_m}
\right)^{\otimes\ell_{n,m,\eta}},
\label{eq:IID_decoupling_achievability_output_space}
\\
\mathcal H^{M_{n,m,\eta,\xi}}
&=
\left(
\mathcal H^{M_m}
\right)^{\otimes\ell_{n,m,\eta}}
\otimes
\mathcal H^{B_{n,m,\eta,\xi}}.
\label{eq:IID_decoupling_achievability_discarded_space}
\end{align}
In particular, the entire remainder system
\(B_{n,m,\eta,\xi}\)
is included in the discarded system
\(M_{n,m,\eta,\xi}\).

The unitary isomorphism
\begin{align}
V_{n,m,\eta,\xi}^{
A_{\Pi_{n,m,\eta,\xi}}^n
\to
E_{n,m,\eta,\xi}M_{n,m,\eta,\xi}
}
:
\mathcal H^{A_{\Pi_{n,m,\eta,\xi}}^n}
\longrightarrow
\mathcal H^{E_{n,m,\eta,\xi}}
\otimes
\mathcal H^{M_{n,m,\eta,\xi}}
\end{align}
is defined in
\eqref{eq:IID_decoupling_global_cutoff_partial_trace_factorization}
by applying
\(\left(V_m^{A_{\widehat{\Pi}_m}^m\to E_mM_m}\right)^{
\otimes\ell_{n,m,\eta}}\)
to the fixed-count common tensor factor in
\eqref{eq:IID_decoupling_achievability_fixed_common_factor}
and acting as the identity on
\(B_{n,m,\eta,\xi}\).
The corresponding partial-trace channel
\begin{align}
\mathcal T_{n,m,\eta,\xi}^{
A_{\Pi_{n,m,\eta,\xi}}^n
\to
E_{n,m,\eta,\xi}
}
\end{align}
is defined in
\eqref{eq:IID_decoupling_global_cutoff_partial_trace_channel}.

For
\begin{align}
U
\in
\operatorname{U}
\left(
\left(
\mathcal H^{A_{\widehat{\Pi}_m}^m}
\right)^{\otimes\ell_{n,m,\eta}}
\right),
\end{align}
define the induced unitary on the globally projected input space by
\begin{align}
U_{n,m,\eta,\xi}[U]
\coloneqq
\left(
V_{n,m,\eta,\xi}^{\mathrm{fix}}
\right)^\dagger
\left(
U
\otimes
I^{B_{n,m,\eta,\xi}}
\right)
V_{n,m,\eta,\xi}^{\mathrm{fix}}.
\label{eq:IID_decoupling_achievability_induced_unitary}
\end{align}
Thus, the same unitary \(U\) acts on the fixed-count common tensor
factor in every typical common--rare pattern sector, while the
remainder system \(B_{n,m,\eta,\xi}\) is left unchanged.

As defined in \eqref{eq:IID_decoupling_output_state}, the resulting output state, under the identifications
\begin{align}
\Pi_n^{A^n}
&=
\Pi_{n,m,\eta,\xi}^{A^n},
&
V_n
&=
V_{n,m,\eta,\xi},
\nonumber\\
\mathcal T_n
&=
\mathcal T_{n,m,\eta,\xi},
&
U_n
&=
U_{n,m,\eta,\xi}[U],
\label{eq:IID_decoupling_achievability_protocol_identification}
\end{align}
is
\begin{align}
&\tau_{
n,U_{n,m,\eta,\xi}[U]
}^{
E_{n,m,\eta,\xi}R^n
}
\nonumber\\
&\coloneqq
\left(
\mathcal T_{n,m,\eta,\xi}
\otimes
\id^{R^n}
\right)
\Biggl[
\left(
U_{n,m,\eta,\xi}[U]
\otimes
I^{R^n}
\right)
\rho_{\Pi_{n,m,\eta,\xi}}^{
A_{\Pi_{n,m,\eta,\xi}}^nR^n
}
\left(
U_{n,m,\eta,\xi}[U]^\dagger
\otimes
I^{R^n}
\right)
\Biggr].
\label{eq:IID_decoupling_achievability_output_state}
\end{align}
By
\eqref{eq:IID_decoupling_output_reference},
its reference marginal is
\begin{align}
\tau_{
n,U_{n,m,\eta,\xi}[U]
}^{R^n}
=
\rho_{\Pi_{n,m,\eta,\xi}}^{R^n}.
\label{eq:IID_decoupling_achievability_output_reference}
\end{align}

Using the one-shot decoupling theorem for finite-dimensional input and
separable reference systems in
Theorem~\ref{thm:finite_input_infinite_reference_decoupling},
we first show that the relative-entropy decoupling error of this protocol
with respect to
\(\pi^{E_{n,m,\eta,\xi}}
\otimes
\rho_{\Pi_{n,m,\eta,\xi}}^{R^n}\)
decays exponentially.

\begin{proposition}[Exponential decoupling in the projected subspace]
\label{prop:IID_decoupling_achievability_fixed_common_factor}
Let
\(\mathcal H^A\)
and
\(\mathcal H^R\)
be separable and possibly infinite-dimensional, and let
\begin{align}
\rho^{AR}
\in
\operatorname{D}
\left(
\mathcal H^A
\otimes
\mathcal H^R
\right)
\end{align}
satisfy
\begin{align}
H(A)_\rho
<
\infty.
\end{align}
Fix
\begin{align}
q
\in
\left[
0,
H(A)_\rho
\right],
\end{align}
and choose
\begin{align}
\left\{
\widehat{\Pi}_m^{A^m},
E_m,
M_m,
V_m^{A_{\widehat{\Pi}_m}^m\to E_mM_m}
\right\}_{m=1}^{\infty}
\end{align}
as in
Proposition~\ref{prop:decoupling_IID_finite_entropy_spectral_cutoffs}
for this value of \(q\).

Fix
\begin{align}
m
\in
\{1,2,\ldots\}.
\end{align}
If
\(0<\delta_{\widehat{\Pi}_m}<1\),
fix \(\eta\) and \(\xi\) satisfying
\eqref{eq:IID_decoupling_achievability_positive_rare_parameters}; if
\(\delta_{\widehat{\Pi}_m}=0\),
set
\(\eta
=
\xi
=
0\) as in \eqref{eq:eta_xi_zero}.
Let
\(\ell_{n,m,\eta}\),
\(E_{n,m,\eta,\xi}\), and
\(\rho_{\Pi_{n,m,\eta,\xi}}^{R^n}\)
be defined in
\eqref{eq:IID_decoupling_global_cutoff_common_block_number},
\eqref{eq:IID_decoupling_common_only_fixed_count},
\eqref{eq:IID_decoupling_global_cutoff_output_space}, and
\eqref{eq:IID_decoupling_global_cutoff_reference_marginal},
as applicable.

Suppose that
\begin{align}
\sup_{0<s\le 1}
\widetilde H_{1+s}
\left(
A_{\widehat{\Pi}_m}^m
\middle|
R^m
\right)_{\rho_{\widehat{\Pi}_m}}
+
2\log
\left|
M_m
\right|
-
\log
\left|
A_{\widehat{\Pi}_m}^m
\right|
>
0.
\label{eq:IID_decoupling_achievability_positive_Renyi_gap}
\end{align}
Then, for all sufficiently large \(n\),
\begin{align}
&\mathbb E_{
U
\sim
\mathrm{Haar}
\left(
\operatorname{U}
\left(
\left(
\mathcal H^{A_{\widehat{\Pi}_m}^m}
\right)^{\otimes\ell_{n,m,\eta}}
\right)
\right)
}
\left[
D
\left(
\tau_{
n,U_{n,m,\eta,\xi}[U]
}^{
E_{n,m,\eta,\xi}R^n
}
\middle\|
\pi^{E_{n,m,\eta,\xi}}
\otimes
\rho_{\Pi_{n,m,\eta,\xi}}^{R^n}
\right)
\right]
\nonumber\\
&\le
\inf_{0<s\le 1}
C_{
\left|
A_{\widehat{\Pi}_m}^m
\right|^{\ell_{n,m,\eta}}
}
G_s
\exp
\Biggl[
-s
\ell_{n,m,\eta}
\Biggl(
\widetilde H_{1+s}
\left(
A_{\widehat{\Pi}_m}^m
\middle|
R^m
\right)_{\rho_{\widehat{\Pi}_m}}
+
2\log
\left|
M_m
\right|
-
\log
\left|
A_{\widehat{\Pi}_m}^m
\right|
\Biggr)
\Biggr],
\label{eq:IID_decoupling_achievability_projected_error_exponential_bound}
\end{align}
where
\(U_{n,m,\eta,\xi}[U]\)
and
\(\tau_{n,U_{n,m,\eta,\xi}[U]}^{
E_{n,m,\eta,\xi}R^n}\)
are defined in
\eqref{eq:IID_decoupling_achievability_induced_unitary}
and
\eqref{eq:IID_decoupling_achievability_output_state},
respectively, while
\(C_{|A|}\),
\(G_s\), and
\(\widetilde H_{1+s}(A|R)_\rho\)
are defined in
\eqref{eq:decoupling_dimension_dependent_constant},
\eqref{eq:decoupling_Gs_definition}, and
\eqref{eq:sandwiched_conditional_entropy_downarrow},
respectively.
\end{proposition}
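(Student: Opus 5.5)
We will reduce the claim to Theorem~\ref{thm:finite_input_infinite_reference_decoupling} applied with the finite-dimensional input $A\leftarrow(A_{\widehat{\Pi}_m}^m)^{\otimes\ell_{n,m,\eta}}$, the separable reference $R\leftarrow B_{n,m,\eta,\xi}R^n$, and the partial trace over $(M_m)^{\otimes\ell}$ after $(V_m)^{\otimes\ell}$ as the channel. Under the identification $V_{n,m,\eta,\xi}^{\mathrm{fix}}$, the projected state $\rho_{\Pi_{n,m,\eta,\xi}}$ becomes a state $\sigma_n$ on $(A_{\widehat{\Pi}_m}^m)^{\otimes\ell}\otimes B_{n,m,\eta,\xi}\otimes R^n$. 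The induced unitary $U_{n,m,\eta,\xi}[U]$ acts as $U\otimes I^B$, and the channel $\mathcal T_{n,m,\eta,\xi}$ traces out $(M_m)^{\otimes\ell}$ and $B$. Hence
\[
\tau_{n,U_{n,m,\eta,\xi}[U]}^{E_{n,m,\eta,\xi}R^n}=\operatorname{Tr}_B\bigl[\widetilde\tau^{E BR^n}_{\mathcal E,U}\bigr],
\]
where $\mathcal E$ is the partial trace of $(V_m)^{\otimes\ell}$ onto $E_m^{\otimes\ell}$ and the reference is $BR^n$. This $\mathcal E$ is trace-preserving with $\omega_{\mathcal E}^E=\pi^{E_{n,m,\eta,\xi}}$, so data processing under $\operatorname{Tr}_B$ gives
\[
D\bigl(\tau\,\big\|\,\pi^E\otimes\rho_{\Pi}^{R^n}\bigr)\le D\bigl(\widetilde\tau^{EBR^n}_{\mathcal E,U}\,\big\|\,\pi^E\otimes\sigma_n^{BR^n}\bigr),
\]
since $\operatorname{Tr}_B\sigma_n^{BR^n}=\rho_{\Pi}^{R^n}$. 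Taking Haar averages and applying Theorem~\ref{thm:finite_input_infinite_reference_decoupling} bounds the right-hand side by
\[
C_{|A_{\widehat\Pi_m}^m|^{\ell}}\,G_s\exp\Bigl[-s\Bigl(\widetilde H_{1+s}\bigl((A_{\widehat\Pi_m}^m)^{\ell}\big|BR^n\bigr)_{\sigma_n}+\widetilde H_{1+s}(A'|E)_{\omega}\Bigr)\Bigr].
\]
For the partial-trace Choi state $\pi^{M'}\otimes\Phi^{E'E}$, a direct computation gives $\widetilde H_{1+s}(A'|E)_\omega=\ell(\log|M_m|-\log|E_m|)=\ell(2\log|M_m|-\log|A_{\widehat\Pi_m}^m|)$.

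The main step is to show $\widetilde H_{1+s}\bigl((A_{\widehat\Pi_m}^m)^{\ell}\big|BR^n\bigr)_{\sigma_n}\ge \ell\,\widetilde H_{1+s}(A_{\widehat\Pi_m}^m|R^m)_{\rho_{\widehat\Pi_m}}$. We will exploit the structure of $\sigma_n$. In each typical pattern sector $x$, after the permutation, the unnormalized projected operator is a tensor product: $\ell$ copies of $(1-\delta_{\widehat\Pi_m})\rho_{\widehat\Pi_m}^{A_{\widehat\Pi_m}^mR^m}$ on the fixed blocks, tensored with an operator on the remaining common, rare and residual blocks together with their $R$-systems. Off-diagonal pattern terms survive only between different sectors of $A$, so the $x$-label is coherent in $B$ rather than classical. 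To deal with this, we first apply the sector-pinching on $B$, a unital mixed-unitary channel acting only on the conditioning system. Data processing of $\widetilde D_{1+s}$ under channels on $BR^n$, valid in separable spaces via Ref.~\cite{mosonyi2023strong}, gives $\widetilde H_{1+s}(\cdot|BR^n)_{\sigma_n}\ge\widetilde H_{1+s}(\cdot|BR^n)_{\overline\sigma_n}$ for the pinched state $\overline\sigma_n$. Alternatively we can use that conditioning on a larger system only increases the entropy, and condition on an augmented classical copy of $x$. The pinched state is $\bigoplus_x p_x\,\rho_{\widehat\Pi_m}^{\otimes\ell}\otimes\beta_x$, with the sector label classical on $B$. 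For states classical on the conditioning side, $\widetilde Q_{1+s}$ is additive over blocks. Together with multiplicativity of $\widetilde Q$ on the tensor factor $\rho_{\widehat\Pi_m}^{\otimes\ell}\otimes\beta_x$ and the fact that $\widetilde Q_{1+s}(\beta_x\|I\otimes\beta_x)\le$ (trace of $\beta_x$) when the conditioned system is trivial, this yields $\widetilde Q\le\widetilde Q(\rho_{\widehat\Pi_m}\|I\otimes\rho^{R^m}_{\widehat\Pi_m})^{\ell}$. Taking logarithms gives the additivity $\ell\widetilde H_{1+s}$.

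Finally, hypothesis~\eqref{eq:IID_decoupling_achievability_positive_Renyi_gap} plays no role in the inequality itself. It only ensures that the bound is a genuine exponential decay once $\ell_{n,m,\eta}\to\infty$. We need large $n$ so that $\ell\ge1$ and the nonexceptional constructions apply; the case $|A_{\widehat\Pi_m}^m|^{\ell}=1$ is covered by $C_1=0$. We expect the main obstacle to be the pinching/additivity step in infinite-dimensional $B R^n$, in particular justifying the sandwiched data-processing and direct-sum decomposition rigorously on separable spaces.
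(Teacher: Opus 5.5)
Your reduction is sound up to and including the application of Theorem~\ref{thm:finite_input_infinite_reference_decoupling} with reference $B_{n,m,\eta,\xi}R^n$. Data processing under $\operatorname{Tr}_B$ and the Choi-state value $\ell(2\log|M_m|-\log|A_{\widehat\Pi_m}^m|)$ are both correct, where $\ell\coloneqq\ell_{n,m,\eta}$. The gap is the main step: the lower bound $\widetilde H_{1+s}(F|BR^n)_{\sigma_n}\ge\ell\,\widetilde H_{1+s}(A_{\widehat\Pi_m}^m|R^m)_{\rho_{\widehat\Pi_m}}$, with $F\coloneqq(A_{\widehat\Pi_m}^m)^{\otimes\ell}$.

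The sector pinching acts on the \emph{conditioning} system. Data processing applied to $\widetilde D_{1+s}(\sigma_n\|I^F\otimes\sigma_n^{BR^n})$ therefore yields $\widetilde H_{1+s}(F|BR^n)_{\sigma_n}\le\widetilde H_{1+s}(F|BR^n)_{\overline\sigma_n}$, which is the reverse of what you wrote. Your block-additivity computation for $\overline\sigma_n$ is correct, but it consequently only shows that $\ell\,\widetilde H_{1+s}(A_{\widehat\Pi_m}^m|R^m)_{\rho_{\widehat\Pi_m}}$ is an \emph{upper} bound.

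The fallback does not rescue this. Conditioning on an extra register can only lower the conditional entropy, so ``conditioning on a larger system only increases the entropy'' is backwards (although lowering would be the useful direction here). More importantly, an extension of $\sigma_n$ carrying a classical copy of the pattern label exists only if $\sigma_n$ is block diagonal in the pattern sectors of $B$. It is not, because $\widehat\Pi_m^{A^m}(\rho^{AR})^{\otimes m}\widehat\Pi_m^{\perp A^m}\neq0$ in general.

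Nor is this merely a missing justification: the inequality can fail. Purify $\rho^{AR}$ by $C$. Then $\sigma_n$ has a pure extension on $FBR^nC^n$, so $H(F|BR^n)_{\sigma_n}=-H(F|C^n)_{\sigma_n}$. Tracing out $B$ removes the cross-sector terms, so $\sigma_n^{FC^n}$ is the mixture over typical patterns $x$ of $\ell$ copies of the purified common-branch marginal. These copies sit on $F$ and the pattern-dependent blocks $C_{\mathrm{fix}(x)}$, tensored with a state on the remaining blocks. Concavity gives $H(F|C^n)_{\sigma_n}\ge-\ell H(A_{\widehat\Pi_m}^m|R^m)_{\rho_{\widehat\Pi_m}}$, with strict inequality whenever $C^n$ does not reveal $x$.

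As an illustration, take $m=1$, $\mathcal H^A=\mathbb C^3$, common subspace $\operatorname{span}\{\ket{0},\ket{1}\}$, and $\ket{\psi}^{ARC}=\sqrt{1-\delta}\,\ket{\Phi}^{A_cC}\ket{0}^R+\sqrt{\delta}\,\ket{2}^A\ket{\Phi}^{CR}$, with the $R$-support of the second term orthogonal to $\ket{0}^R$. Then $\rho_{\widehat\Pi_1}^{AR}=\pi^{A_c}\otimes\ket{0}\bra{0}^R$. All mixture components have entropy $(n-\ell)\log2$ and the same maximally mixed $C^n$-marginal, and distinct fixed-position sets give distinct components. Hence $\widetilde H_{1+s}(F|BR^n)_{\sigma_n}\le H(F|BR^n)_{\sigma_n}<\ell\log2=\ell\,\widetilde H_{1+s}(A_{\widehat\Pi_1}^1|R)_{\rho_{\widehat\Pi_1}}$.

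So the bound the one-shot theorem gives in your setup is strictly weaker than the target, and the chain cannot close. Keeping $B$ in the reference puts the inter-sector coherence into the conditioning system, where it hurts.

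The paper never places $B$ in the reference. Since $B\subseteq M_{n,m,\eta,\xi}$ is traced out, the cross-sector terms vanish both in the output and in $\pi^{E_{n,m,\eta,\xi}}\otimes\rho_{\Pi_{n,m,\eta,\xi}}^{R^n}$. Up to pattern-dependent permutations of the $R$-factors, both become the same mixture $\sum_xp_x$: of $\widehat\tau_{n,m,\eta;U}\otimes\sigma_x$ for the output, and of $(\pi^{E_m})^{\otimes\ell}\otimes(\rho_{\widehat\Pi_m}^{R^m})^{\otimes\ell}\otimes\sigma_x$ for the reference, where $\sigma_x$ are the remaining-reference states. Joint convexity, unitary invariance and additivity of $D$ then reduce the claim to the one-shot theorem for $(\rho_{\widehat\Pi_m})^{\otimes\ell}$ with reference $R^{m\ell}$ alone. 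The pattern mixture now enters through convexity of the relative entropy, which is the favorable direction.
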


\begin{proof}
Fix \(n\) sufficiently large that the factorization in
\eqref{eq:IID_decoupling_achievability_fixed_common_factor}
is available.

Consider first the partial-trace channel from
\(\mathcal H^{A_{\widehat{\Pi}_m}^m}\)
to
\(\mathcal H^{E_m}\)
induced by the unitary isomorphism
\(V_m^{A_{\widehat{\Pi}_m}^m\to E_mM_m}\).
Denote its normalized Choi state by
\begin{align}
\omega_m^{A_{\widehat{\Pi}_m}^{m\prime}E_m},
\end{align}
where
\(A_{\widehat{\Pi}_m}^{m\prime}\)
is an isomorphic copy of
\(A_{\widehat{\Pi}_m}^m\).
Under the corresponding identification
\begin{align}
\mathcal H^{A_{\widehat{\Pi}_m}^{m\prime}}
\simeq
\mathcal H^{E_m'}
\otimes
\mathcal H^{M_m'},
\end{align}
one has, up to the canonical ordering of tensor factors,
\begin{align}
\omega_m^{A_{\widehat{\Pi}_m}^{m\prime}E_m}
&=
\pi^{M_m'}
\otimes
\Phi^{E_m'E_m},
\label{eq:IID_decoupling_achievability_block_Choi_state}
\\
\omega_m^{E_m}
&=
\pi^{E_m}.
\label{eq:IID_decoupling_achievability_block_Choi_output}
\end{align}
It follows from
\eqref{eq:IID_decoupling_achievability_block_Choi_state},
\eqref{eq:sandwiched_conditional_entropy_downarrow}, and
\eqref{eq:IID_decoupling_prescribed_common_factorization}
that, for every \(0<s\le 1\),
\begin{align}
\widetilde H_{1+s}
\left(
A_{\widehat{\Pi}_m}^{m\prime}
\middle|
E_m
\right)_{\omega_m}
&=
\log
\left|
M_m
\right|
-
\log
\left|
E_m
\right|
\nonumber\\
&=
2\log
\left|
M_m
\right|
-
\log
\left|
A_{\widehat{\Pi}_m}^m
\right|.
\label{eq:IID_decoupling_achievability_partial_trace_Choi_entropy}
\end{align}

The normalized Choi state of the
\(\ell_{n,m,\eta}\)-fold partial-trace channel induced by
\begin{align}
\left(
V_m^{A_{\widehat{\Pi}_m}^m\to E_mM_m}
\right)^{\otimes\ell_{n,m,\eta}}
\end{align}
is
\(\omega_m^{\otimes\ell_{n,m,\eta}}\).
By additivity of the sandwiched conditional R\'enyi entropy under
tensor products,
\begin{align}
&\widetilde H_{1+s}
\left(
\left(
A_{\widehat{\Pi}_m}^{m\prime}
\right)^{\ell_{n,m,\eta}}
\middle|
E_m^{\ell_{n,m,\eta}}
\right)_{\omega_m^{\otimes\ell_{n,m,\eta}}}
\nonumber\\
&=
\ell_{n,m,\eta}
\left(
2\log
\left|
M_m
\right|
-
\log
\left|
A_{\widehat{\Pi}_m}^m
\right|
\right),
\label{eq:IID_decoupling_achievability_tensor_power_Choi_entropy}
\end{align}
and
\begin{align}
&\widetilde H_{1+s}
\left(
\left(
A_{\widehat{\Pi}_m}^m
\right)^{\ell_{n,m,\eta}}
\middle|
R^{m\ell_{n,m,\eta}}
\right)_{
\rho_{\widehat{\Pi}_m}^{\otimes\ell_{n,m,\eta}}
}
\nonumber\\
&=
\ell_{n,m,\eta}
\widetilde H_{1+s}
\left(
A_{\widehat{\Pi}_m}^m
\middle|
R^m
\right)_{\rho_{\widehat{\Pi}_m}}.
\label{eq:IID_decoupling_achievability_tensor_power_input_entropy}
\end{align}

For
\begin{align}
U
\in
\operatorname{U}
\left(
\left(
\mathcal H^{A_{\widehat{\Pi}_m}^m}
\right)^{\otimes\ell_{n,m,\eta}}
\right),
\end{align}
define the output obtained from the fixed-count common tensor factor
by
\begin{align}
&\widehat{\tau}_{n,m,\eta;U}^{
E_m^{\ell_{n,m,\eta}}
R^{m\ell_{n,m,\eta}}
}
\nonumber\\
&\coloneqq
\operatorname{Tr}_{M_m^{\ell_{n,m,\eta}}}
\Biggl[
\left(
\left(
V_m^{A_{\widehat{\Pi}_m}^m\to E_mM_m}
\right)^{\otimes\ell_{n,m,\eta}}
U
\otimes
I^{R^{m\ell_{n,m,\eta}}}
\right)
\left(
\rho_{\widehat{\Pi}_m}^{A_{\widehat{\Pi}_m}^mR^m}
\right)^{\otimes\ell_{n,m,\eta}}
\nonumber\\
&\hspace{28mm}\times
\left(
U^\dagger
\left(
\left(
V_m^{A_{\widehat{\Pi}_m}^m\to E_mM_m}
\right)^{\otimes\ell_{n,m,\eta}}
\right)^\dagger
\otimes
I^{R^{m\ell_{n,m,\eta}}}
\right)
\Biggr].
\label{eq:IID_decoupling_achievability_common_factor_output}
\end{align}

Applying
Theorem~\ref{thm:finite_input_infinite_reference_decoupling}
to
\begin{align}
\left(
\rho_{\widehat{\Pi}_m}^{A_{\widehat{\Pi}_m}^mR^m}
\right)^{\otimes\ell_{n,m,\eta}}
\end{align}
and to the
\(\ell_{n,m,\eta}\)-fold partial-trace channel induced by
\begin{align}
\left(
V_m^{A_{\widehat{\Pi}_m}^m\to E_mM_m}
\right)^{\otimes\ell_{n,m,\eta}},
\end{align}
and using
\eqref{eq:IID_decoupling_achievability_block_Choi_output},
\eqref{eq:IID_decoupling_achievability_tensor_power_Choi_entropy}, and
\eqref{eq:IID_decoupling_achievability_tensor_power_input_entropy},
gives, for every \(0<s\le 1\),
\begin{align}
&\mathbb E_{
U
\sim
\mathrm{Haar}
\left(
\operatorname{U}
\left(
\left(
\mathcal H^{A_{\widehat{\Pi}_m}^m}
\right)^{\otimes\ell_{n,m,\eta}}
\right)
\right)
}
\left[
D
\left(
\widehat{\tau}_{n,m,\eta;U}^{
E_m^{\ell_{n,m,\eta}}
R^{m\ell_{n,m,\eta}}
}
\middle\|
\left(
\pi^{E_m}
\right)^{\otimes\ell_{n,m,\eta}}
\otimes
\left(
\rho_{\widehat{\Pi}_m}^{R^m}
\right)^{\otimes\ell_{n,m,\eta}}
\right)
\right]
\nonumber\\
&\le
C_{
\left|
A_{\widehat{\Pi}_m}^m
\right|^{\ell_{n,m,\eta}}
}
G_s
\exp
\Biggl[
-s\ell_{n,m,\eta}
\Biggl(
\widetilde H_{1+s}
\left(
A_{\widehat{\Pi}_m}^m
\middle|
R^m
\right)_{\rho_{\widehat{\Pi}_m}}
+
2\log
\left|
M_m
\right|
-
\log
\left|
A_{\widehat{\Pi}_m}^m
\right|
\Biggr)
\Biggr].
\label{eq:IID_decoupling_achievability_common_block_one_shot_bound}
\end{align}

We next compare the fixed-count common-factor output with the output
of the globally projected protocol.
Under the unitary isomorphism
\(V_{n,m,\eta,\xi}^{\mathrm{fix}}\)
in
\eqref{eq:IID_decoupling_global_cutoff_fixed_factorization},
the remainder system
\(B_{n,m,\eta,\xi}\)
is an orthogonal direct sum over the typical common--rare pattern
sectors.
For each
\begin{align}
x^{b_{n,m}}
\in
\mathcal T_{n,m,\eta},
\end{align}
let
\begin{align}
\Pi_{x^{b_{n,m}}}^{
B_{n,m,\eta,\xi},\mathrm{pat}
}
\end{align}
denote the orthogonal projection onto the corresponding pattern
summand of
\(\mathcal H^{B_{n,m,\eta,\xi}}\).
In the case
\(\delta_{\widehat{\Pi}_m}=0\),
the pattern set is understood to consist only of
\(0^{b_{n,m}}\).

By
\eqref{eq:IID_decoupling_global_cutoff_discarded_space},
the system
\(B_{n,m,\eta,\xi}\)
is included in the discarded system
\(M_{n,m,\eta,\xi}\).
Consequently, for every operator \(X\) on
\begin{align}
\mathcal H^{E_{n,m,\eta,\xi}}
\otimes
\mathcal H^{M_m^{\ell_{n,m,\eta}}}
\otimes
\mathcal H^{B_{n,m,\eta,\xi}}
\otimes
\mathcal H^{R^n},
\end{align}
the partial trace over
\(M_{n,m,\eta,\xi}\)
annihilates every off-diagonal pattern block:
\begin{align}
&\operatorname{Tr}_{M_{n,m,\eta,\xi}}
\Biggl[
\left(
I^{
E_{n,m,\eta,\xi}
M_m^{\ell_{n,m,\eta}}
}
\otimes
\Pi_{x^{b_{n,m}}}^{
B_{n,m,\eta,\xi},\mathrm{pat}
}
\otimes
I^{R^n}
\right)
X
\nonumber\\
&\hspace{22mm}\times
\left(
I^{
E_{n,m,\eta,\xi}
M_m^{\ell_{n,m,\eta}}
}
\otimes
\Pi_{y^{b_{n,m}}}^{
B_{n,m,\eta,\xi},\mathrm{pat}
}
\otimes
I^{R^n}
\right)
\Biggr]
\nonumber\\
&=
0
\qquad
\text{for }
x^{b_{n,m}}
\ne
y^{b_{n,m}}.
\label{eq:IID_decoupling_achievability_off_diagonal_pattern_vanishing}
\end{align}
Likewise, the partial trace over the projected input system
annihilates the off-diagonal pattern blocks in the reference
marginal.

For each diagonal pattern sector, the product structure of
\((\rho^{AR})^{\otimes n}\)
and the construction of
\(\Pi_{n,m,\eta,\xi}^{A^n}\)
imply that the selected
\(\ell_{n,m,\eta}\)
common blocks and their corresponding reference systems are in the
normalized state
\begin{align}
\left(
\rho_{\widehat{\Pi}_m}^{A_{\widehat{\Pi}_m}^mR^m}
\right)^{\otimes\ell_{n,m,\eta}},
\end{align}
independently of the pattern.
The remaining reference systems are in a normalized state that may
depend on the pattern.

Consequently, there exist probabilities
\begin{align}
\left\{
p_{x^{b_{n,m}}}
\right\}_{
x^{b_{n,m}}
\in
\mathcal T_{n,m,\eta}
},
\end{align}
normalized states
\begin{align}
\left\{
\sigma_{x^{b_{n,m}}}
\right\}_{
x^{b_{n,m}}
\in
\mathcal T_{n,m,\eta}
}
\end{align}
on the remaining reference systems, and unitary permutations
\begin{align}
\left\{
U_{x^{b_{n,m}}}^{R,\mathrm{perm}}
\right\}_{
x^{b_{n,m}}
\in
\mathcal T_{n,m,\eta}
}
\end{align}
of the reference tensor factors such that, for every \(U\),
\begin{align}
&\tau_{
n,U_{n,m,\eta,\xi}[U]
}^{E_{n,m,\eta,\xi}R^n}
\nonumber\\
&=
\sum_{
x^{b_{n,m}}
\in
\mathcal T_{n,m,\eta}
}
p_{x^{b_{n,m}}}
\left(
I^{E_{n,m,\eta,\xi}}
\otimes
U_{x^{b_{n,m}}}^{R,\mathrm{perm}}
\right)
\left[
\widehat{\tau}_{n,m,\eta;U}^{
E_m^{\ell_{n,m,\eta}}
R^{m\ell_{n,m,\eta}}
}
\otimes
\sigma_{x^{b_{n,m}}}
\right]
\nonumber\\
&\hspace{35mm}\times
\left(
I^{E_{n,m,\eta,\xi}}
\otimes
\left(
U_{x^{b_{n,m}}}^{R,\mathrm{perm}}
\right)^\dagger
\right),
\label{eq:IID_decoupling_achievability_global_output_sector_decomposition}
\end{align}
whereas
\begin{align}
&\pi^{E_{n,m,\eta,\xi}}
\otimes
\rho_{\Pi_{n,m,\eta,\xi}}^{R^n}
\nonumber\\
&=
\sum_{
x^{b_{n,m}}
\in
\mathcal T_{n,m,\eta}
}
p_{x^{b_{n,m}}}
\left(
I^{E_{n,m,\eta,\xi}}
\otimes
U_{x^{b_{n,m}}}^{R,\mathrm{perm}}
\right)
\Biggl[
\left(
\pi^{E_m}
\right)^{\otimes\ell_{n,m,\eta}}
\otimes
\left(
\rho_{\widehat{\Pi}_m}^{R^m}
\right)^{\otimes\ell_{n,m,\eta}}
\otimes
\sigma_{x^{b_{n,m}}}
\Biggr]
\nonumber\\
&\hspace{35mm}\times
\left(
I^{E_{n,m,\eta,\xi}}
\otimes
\left(
U_{x^{b_{n,m}}}^{R,\mathrm{perm}}
\right)^\dagger
\right).
\label{eq:IID_decoupling_achievability_global_reference_sector_decomposition}
\end{align}

Joint convexity of the quantum relative entropy, invariance under
unitary conjugation, and additivity under tensor products give, for
every \(U\),
\begin{align}
&D
\left(
\tau_{
n,U_{n,m,\eta,\xi}[U]
}^{E_{n,m,\eta,\xi}R^n}
\middle\|
\pi^{E_{n,m,\eta,\xi}}
\otimes
\rho_{\Pi_{n,m,\eta,\xi}}^{R^n}
\right)
\nonumber\\
&\le
\sum_{
x^{b_{n,m}}
\in
\mathcal T_{n,m,\eta}
}
p_{x^{b_{n,m}}}
D
\Biggl(
\widehat{\tau}_{n,m,\eta;U}^{
E_m^{\ell_{n,m,\eta}}
R^{m\ell_{n,m,\eta}}
}
\otimes
\sigma_{x^{b_{n,m}}}
\Biggm\|
\left(
\pi^{E_m}
\right)^{\otimes\ell_{n,m,\eta}}
\otimes
\left(
\rho_{\widehat{\Pi}_m}^{R^m}
\right)^{\otimes\ell_{n,m,\eta}}
\otimes
\sigma_{x^{b_{n,m}}}
\Biggr)
\nonumber\\
&=
D
\left(
\widehat{\tau}_{n,m,\eta;U}^{
E_m^{\ell_{n,m,\eta}}
R^{m\ell_{n,m,\eta}}
}
\middle\|
\left(
\pi^{E_m}
\right)^{\otimes\ell_{n,m,\eta}}
\otimes
\left(
\rho_{\widehat{\Pi}_m}^{R^m}
\right)^{\otimes\ell_{n,m,\eta}}
\right).
\label{eq:IID_decoupling_achievability_global_to_common_error_bound}
\end{align}
Taking the expectation over the Haar-random unitary \(U\) in
\eqref{eq:IID_decoupling_achievability_global_to_common_error_bound}
and applying
\eqref{eq:IID_decoupling_achievability_common_block_one_shot_bound}
therefore yields, for every \(0<s\le 1\),
\begin{align}
&\mathbb E_{
U
\sim
\mathrm{Haar}
\left(
\operatorname{U}
\left(
\left(
\mathcal H^{A_{\widehat{\Pi}_m}^m}
\right)^{\otimes\ell_{n,m,\eta}}
\right)
\right)
}
\left[
D
\left(
\tau_{
n,U_{n,m,\eta,\xi}[U]
}^{E_{n,m,\eta,\xi}R^n}
\middle\|
\pi^{E_{n,m,\eta,\xi}}
\otimes
\rho_{\Pi_{n,m,\eta,\xi}}^{R^n}
\right)
\right]
\nonumber\\
&\le
C_{
\left|
A_{\widehat{\Pi}_m}^m
\right|^{\ell_{n,m,\eta}}
}
G_s
\exp
\Biggl[
-s\ell_{n,m,\eta}
\Biggl(
\widetilde H_{1+s}
\left(
A_{\widehat{\Pi}_m}^m
\middle|
R^m
\right)_{\rho_{\widehat{\Pi}_m}}
+
2\log
\left|
M_m
\right|
-
\log
\left|
A_{\widehat{\Pi}_m}^m
\right|
\Biggr)
\Biggr].
\end{align}
Taking the infimum over \(0<s\le 1\) proves
\eqref{eq:IID_decoupling_achievability_projected_error_exponential_bound}.
\end{proof}

We now choose the block length, the cutoff parameters, and the common partial-trace rate diagonally.
This yields a single sequence of protocols whose error relative to the original reference marginal vanishes.

\begin{theorem}[Achievability of infinite-dimensional IID decoupling]
\label{thm:IID_decoupling_original_direct}
Let
\(\mathcal H^A\)
and
\(\mathcal H^R\)
be separable and possibly infinite-dimensional, and let
\begin{align}
\rho^{AR}
\in
\operatorname{D}
\left(
\mathcal H^A
\otimes
\mathcal H^R
\right)
\end{align}
satisfy
\begin{align}
H(A)_\rho
<
\infty.
\end{align}
Then there exist nonzero finite-rank projections
\begin{align}
\left\{
\Pi_n^{A^n}
\right\}_{n=1}^{\infty},
\end{align}
finite-dimensional systems
\begin{align}
\left\{
E_n
\right\}_{n=1}^{\infty},
\qquad
\left\{
M_n
\right\}_{n=1}^{\infty},
\end{align}
unitary isomorphisms
\begin{align}
V_n^{A_{\Pi_n}^n\to E_nM_n}
:
\mathcal H^{A_{\Pi_n}^n}
\longrightarrow
\mathcal H^{E_n}
\otimes
\mathcal H^{M_n},
\end{align}
and unitaries
\begin{align}
U_n
\in
\operatorname{U}
\left(
\mathcal H^{A_{\Pi_n}^n}
\right)
\end{align}
such that, with
\(\delta_{\Pi_n}\),
\(\mathcal H^{A_{\Pi_n}^n}\),
\(\lvert A_{\Pi_n}^n\rvert\),
\(\mathcal T_n^{A_{\Pi_n}^n\to E_n}\), and
\(\tau_{n,U_n}^{E_nR^n}\)
defined in
\eqref{eq:IID_decoupling_projection_error},
\eqref{eq:IID_decoupling_projected_input_space},
\eqref{eq:IID_decoupling_projected_input_dimension},
\eqref{eq:IID_decoupling_partial_trace_channel}, and
\eqref{eq:IID_decoupling_output_state},
respectively, one has
\begin{align}
\lim_{n\to\infty}
\delta_{\Pi_n}
&=
0,
\label{eq:IID_decoupling_original_direct_projection_limit}
\\
\lim_{n\to\infty}
D
\left(
\tau_{n,U_n}^{E_nR^n}
\middle\|
\pi^{E_n}
\otimes
\left(
\rho^R
\right)^{\otimes n}
\right)
&=
0,
\label{eq:IID_decoupling_original_direct_error_limit}
\\
\lim_{n\to\infty}
\frac{1}{n}
\log
\left|
A_{\Pi_n}^n
\right|
&=
H(A)_\rho,
\label{eq:IID_decoupling_original_direct_rank_limit}
\\
\lim_{n\to\infty}
\frac{1}{n}
\log
\left|
M_n
\right|
&=
\frac{1}{2}
I(A:R)_\rho.
\label{eq:IID_decoupling_original_direct_discarded_rate}
\end{align}
\end{theorem}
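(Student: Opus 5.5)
The plan is to run the construction of Theorem~\ref{thm:IID_decoupling_high_probability_global_cutoffs} with a slightly shifted discarded rate $q_j\downarrow \tfrac12 I(A:R)_\rho$, and then choose $m$, the parameters $\eta,\xi$, and the rate diagonally in $n$. Write $q^\star\coloneqq\frac12 I(A:R)_\rho\le H(A)_\rho$. Fix a sequence $q_j\in(q^\star,H(A)_\rho]$ with $q_j\to q^\star$; if $q^\star=H(A)_\rho$, take $q_j=q^\star$ and handle that edge case separately, for instance via strict positivity of the gap below. For each $j$, Proposition~\ref{prop:decoupling_IID_finite_entropy_spectral_cutoffs} supplies blockwise cutoffs $\widehat\Pi_m^{(j)}$ and factorizations with $\frac1m\log|M_m^{(j)}|\to q_j$. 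The R\'enyi gap in \eqref{eq:IID_decoupling_achievability_positive_Renyi_gap} is then $g_{m}^{(j)}\coloneqq\sup_s \widetilde H_{1+s}(A^m_{\widehat\Pi_m}|R^m)_{\rho_{\widehat\Pi_m}}+2\log|M_m|-\log|A^m_{\widehat\Pi_m}|$.

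\textbf{Step 1 (positivity of the gap).} Since $A^m_{\widehat\Pi_m}$ is finite-dimensional, Lemma~\ref{lem:conditional_sandwiched_order_one_limit} gives $\widetilde H_{1+s}\to H$ as $s\downarrow0$. Hence $g^{(j)}_m\ge H(A^m_{\widehat\Pi_m}|R^m)_{\rho_{\widehat\Pi_m}}+2\log|M_m|-\log|A^m_{\widehat\Pi_m}|$. By \eqref{eq:decoupling_IID_projected_conditional_entropy_limit}, \eqref{eq:decoupling_IID_spectral_cutoff_rank_rate} and \eqref{eq:IID_decoupling_prescribed_common_discarded_rate}, dividing by $m$ gives a quantity tending to $H(A|R)_\rho+2q_j-H(A)_\rho=2(q_j-q^\star)>0$. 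So for each $j$ there is $m_j$ beyond which $g^{(j)}_m>0$. The case $q^\star=H(A)_\rho$ (so that $H(A|R)=-H(A)$) will need either a small strict margin or a check that the gap is still positive. This is where I expect the most care to be needed, together with the finite-dimensional-$A$ maximally mixed case, where $|A^m_{\widehat\Pi_m}|=|A|^m$ exactly.

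\textbf{Step 2 (fixed $(j,m)$ analysis).} With $\eta_m,\xi_m\to0$ admissible, Proposition~\ref{prop:IID_decoupling_achievability_fixed_common_factor} applies. Since $\ell_{n,m,\eta}\to\infty$ linearly in $n$ and $C_{|A|^\ell}\to2$, the Haar-averaged projected error decays exponentially in $n$. We therefore pick a unitary $U$ achieving at most the average and set $U_n=U_{n,m,\eta,\xi}[U]$. Combining this with \eqref{eq:IID_decoupling_original_error_from_projected_error} and \eqref{eq:IID_decoupling_global_cutoff_fixed_block_error_limit} shows that the original-reference error and $\delta_{\Pi}$ tend to $0$ as $n\to\infty$ for fixed $(j,m)$. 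Theorem~\ref{thm:IID_decoupling_high_probability_global_cutoffs} gives the iterated limits $\lim_m\lim_n\frac1n\log|A^n_\Pi|=H(A)_\rho$ and $\lim_m\lim_n\frac1n\log|M|=q_j$.

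\textbf{Step 3 (diagonalization).} Using the iterated limits, choose for each $j$ an $m=m(j)\ge m_j$ with $|\lim_n\frac1n\log|A_\Pi^n|-H(A)_\rho|\le 1/j$ and $|\lim_n\frac1n\log|M|-q_j|\le1/j$. Then choose increasing thresholds $n_j$ such that, for all $n\ge n_j$ and the pair $(j,m(j))$, the following hold: $\delta_\Pi\le1/j$, the error is at most $1/j$, and both rates are within $2/j$ of their targets. For $n_j\le n<n_{j+1}$, use the $(j,m(j))$ protocol, with arbitrary rank-one choices for $n<n_1$. This is the standard diagonal argument and yields \eqref{eq:IID_decoupling_original_direct_projection_limit}--\eqref{eq:IID_decoupling_original_direct_discarded_rate} with rate $\lim_j q_j=q^\star$. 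The only real obstacle is Step~1, and in particular guaranteeing a strictly positive R\'enyi gap uniformly enough through the order-one limit. Since the gap need only be positive for each fixed $(j,m)$, the pointwise limit in Lemma~\ref{lem:conditional_sandwiched_order_one_limit} should suffice.
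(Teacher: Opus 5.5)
Your treatment of the case \(\frac12 I(A:R)_\rho<H(A)_\rho\) is essentially the paper's argument. The paper takes \(q_j=\frac12 I(A:R)_\rho+4\varepsilon_j\) and chooses \(m_j\) so that the block conditional entropy, rank rate, and discarded rate are each within \(\varepsilon_j\), which forces a von Neumann gap of at least \(4\varepsilon_j m_j\). It then passes to a positive R\'enyi gap via Lemma~\ref{lem:conditional_sandwiched_order_one_limit}, applies Proposition~\ref{prop:IID_decoupling_achievability_fixed_common_factor}, picks a unitary no worse than the Haar average, and diagonalizes in \(n\) exactly as you do. Your observation that the supremum over \(s\) dominates the \(s\downarrow0\) limit is also correct. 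The finite-dimensional maximally mixed case you worry about causes no trouble, since it is absorbed into Proposition~\ref{prop:decoupling_IID_finite_entropy_spectral_cutoffs}.

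The genuine gap is the boundary case \(\frac12 I(A:R)_\rho=H(A)_\rho\), which you leave open. It is not a negligible corner: it contains every pure \(\rho^{AR}\) with \(H(A)_\rho<\infty\).

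Neither of your suggested remedies works. A strict margin \(q_j>q^\star\) is unavailable, because Proposition~\ref{prop:decoupling_IID_finite_entropy_spectral_cutoffs} only allows \(q\le H(A)_\rho\). Moreover, \(|M_n|\le|A^n_{\Pi_n}|\) caps the discarded rate by the rank rate \(H(A)_\rho\). With \(q_j=q^\star\), your Step~1 limit equals \(2(q_j-q^\star)=0\), so it says nothing about the sign of the gap. Concretely, for pure \(\rho^{AR}\) the projected block state is pure, so \(\widetilde H_{1+s}(A^m_{\widehat\Pi_m}|R^m)\le H(A^m_{\widehat\Pi_m}|R^m)=-H(A^m_{\widehat\Pi_m})_{\rho_{\widehat\Pi_m}}\). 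Using \(\log|A^m_{\widehat\Pi_m}|=\log|E_m|+\log|M_m|\), the gap is then at most \(\log|M_m|-\log|E_m|-H(A^m_{\widehat\Pi_m})_{\rho_{\widehat\Pi_m}}\), where \(|E_m|=\lceil L_m/K_m\rceil\) grows with \(m\). There is no margin forcing this to be positive.

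The missing idea is that no randomization is needed in this case. Take \(\Pi_n\coloneqq\widehat\Pi_n\) from Proposition~\ref{prop:decoupling_IID_finite_entropy_spectral_cutoffs} with \(q=H(A)_\rho\) and block length \(n\), and set \(E_n=\mathbb C\), \(M_n=A^n_{\Pi_n}\), \(U_n=I\). Then \(\tau_{n,U_n}^{E_nR^n}=\pi^{E_n}\otimes\rho^{R^n}_{\Pi_n}\) exactly. Hence \eqref{eq:IID_decoupling_original_error_from_projected_error} bounds the error by \(-\log(1-\delta_{\Pi_n})\to0\). Meanwhile \(\frac1n\log|M_n|=\frac1n\log|A^n_{\Pi_n}|\to H(A)_\rho=\frac12 I(A:R)_\rho\) by \eqref{eq:decoupling_IID_spectral_cutoff_rank_rate}. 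This is how the paper closes the boundary case.
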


\begin{proof}
By
\eqref{eq:mutual_information_bounds_finite_marginal},
\begin{align}
0
\le
\frac{1}{2}
I(A:R)_\rho
\le
H(A)_\rho.
\label{eq:IID_decoupling_achievability_mutual_information_range}
\end{align}
We distinguish the cases in which the second inequality is strict and
in which equality holds.

\medskip
\noindent
\textbf{The case
\(\frac{1}{2}I(A:R)_\rho<H(A)_\rho\).}
Choose a sequence
\(\{\varepsilon_j\}_{j=1}^{\infty}\)
such that
\begin{align}
0
<
\varepsilon_j
<
\frac{1}{2},
\qquad
\lim_{j\to\infty}
\varepsilon_j
=
0,
\label{eq:IID_decoupling_achievability_epsilon_limit}
\end{align}
and
\begin{align}
4\varepsilon_j
<
H(A)_\rho
-
\frac{1}{2}
I(A:R)_\rho
\label{eq:IID_decoupling_achievability_epsilon_range}
\end{align}
for every \(j\).
Define
\begin{align}
q_j
\coloneqq
\frac{1}{2}
I(A:R)_\rho
+
4\varepsilon_j.
\label{eq:IID_decoupling_achievability_qj}
\end{align}
Then
\begin{align}
0
<
q_j
<
H(A)_\rho,
\qquad
\lim_{j\to\infty}
q_j
=
\frac{1}{2}
I(A:R)_\rho.
\label{eq:IID_decoupling_achievability_qj_limit}
\end{align}

For every \(j\), apply
Proposition~\ref{prop:decoupling_IID_finite_entropy_spectral_cutoffs}
with \(q=q_j\).
To distinguish the constructions obtained for different values of
\(j\), denote the resulting objects by
\begin{align}
\left\{
\widehat{\Pi}_m^{(j),A^m},
E_m^{(j)},
M_m^{(j)},
V_m^{(j)}
\right\}_{m=1}^{\infty}.
\end{align}
Thus,
\begin{align}
V_m^{(j)}
:
\mathcal H^{A_{\widehat{\Pi}_m^{(j)}}^m}
\longrightarrow
\mathcal H^{E_m^{(j)}}
\otimes
\mathcal H^{M_m^{(j)}}
\end{align}
is a unitary isomorphism, and
\begin{align}
\lim_{m\to\infty}
\frac{1}{m}
H
\left(
A_{\widehat{\Pi}_m^{(j)}}^m
\middle|
R^m
\right)_{\rho_{\widehat{\Pi}_m^{(j)}}}
&=
H(A|R)_\rho,
\label{eq:IID_decoupling_achievability_j_conditional_limit}
\\
\lim_{m\to\infty}
\frac{1}{m}
\log
\left|
A_{\widehat{\Pi}_m^{(j)}}^m
\right|
&=
H(A)_\rho,
\label{eq:IID_decoupling_achievability_j_rank_limit}
\\
\lim_{m\to\infty}
\frac{1}{m}
\log
\left|
M_m^{(j)}
\right|
&=
q_j.
\label{eq:IID_decoupling_achievability_j_discarded_limit}
\end{align}

For each \(j\), choose admissible parameter sequences
\begin{align}
\left\{
\eta_m^{(j)}
\right\}_{m=1}^{\infty},
\qquad
\left\{
\xi_m^{(j)}
\right\}_{m=1}^{\infty}
\end{align}
satisfying
\eqref{eq:IID_decoupling_global_cutoff_admissible_parameter_sequences}
and
\eqref{eq:IID_decoupling_global_cutoff_parameter_sequence_limits}
for the \(j\)-th spectral-cutoff construction.

Using
\eqref{eq:IID_decoupling_achievability_j_conditional_limit}--%
\eqref{eq:IID_decoupling_achievability_j_discarded_limit},
\eqref{eq:IID_decoupling_global_cutoff_projected_dimension_iterated_limit},
and
\eqref{eq:IID_decoupling_global_cutoff_prescribed_partial_trace_rate},
choose a strictly increasing sequence of block lengths
\(\{m_j\}_{j=1}^{\infty}\)
such that
\begin{align}
\left|
\frac{1}{m_j}
H
\left(
A_{\widehat{\Pi}_{m_j}^{(j)}}^{m_j}
\middle|
R^{m_j}
\right)_{\rho_{\widehat{\Pi}_{m_j}^{(j)}}}
-
H(A|R)_\rho
\right|
&\le
\varepsilon_j,
\label{eq:IID_decoupling_achievability_block_conditional_approximation}
\\
\left|
\frac{1}{m_j}
\log
\left|
A_{\widehat{\Pi}_{m_j}^{(j)}}^{m_j}
\right|
-
H(A)_\rho
\right|
&\le
\varepsilon_j,
\label{eq:IID_decoupling_achievability_block_rank_approximation}
\\
\left|
\frac{1}{m_j}
\log
\left|
M_{m_j}^{(j)}
\right|
-
q_j
\right|
&\le
\varepsilon_j,
\label{eq:IID_decoupling_achievability_block_discarded_approximation}
\end{align}
and, upon setting
\begin{align}
\eta_j
&\coloneqq
\eta_{m_j}^{(j)},
&
\xi_j
&\coloneqq
\xi_{m_j}^{(j)},
\end{align}
such that
\begin{align}
\Biggl|
\lim_{n\to\infty}
\frac{1}{n}
\log
\left|
A_{\Pi_{n,m_j,\eta_j,\xi_j}^{(j)}}^n
\right|
-
H(A)_\rho
\Biggr|
&\le
\varepsilon_j,
\label{eq:IID_decoupling_achievability_fixed_rank_rate_choice}
\\
\Biggl|
\lim_{n\to\infty}
\frac{1}{n}
\log
\left|
M_{n,m_j,\eta_j,\xi_j}^{(j)}
\right|
-
q_j
\Biggr|
&\le
\varepsilon_j.
\label{eq:IID_decoupling_achievability_fixed_discarded_rate_choice}
\end{align}
Here and below, the superscript \((j)\) on an object with indices
\((n,m_j,\eta_j,\xi_j)\) indicates that the object is constructed
from the \(j\)-th spectral-cutoff family.

Equations
\eqref{eq:IID_decoupling_achievability_qj},
\eqref{eq:IID_decoupling_achievability_block_conditional_approximation},
\eqref{eq:IID_decoupling_achievability_block_rank_approximation}, and
\eqref{eq:IID_decoupling_achievability_block_discarded_approximation}
give
\begin{align}
&\frac{1}{m_j}
\Biggl[
H
\left(
A_{\widehat{\Pi}_{m_j}^{(j)}}^{m_j}
\middle|
R^{m_j}
\right)_{\rho_{\widehat{\Pi}_{m_j}^{(j)}}}
+
2
\log
\left|
M_{m_j}^{(j)}
\right|
-
\log
\left|
A_{\widehat{\Pi}_{m_j}^{(j)}}^{m_j}
\right|
\Biggr]
\nonumber\\
&\ge
H(A|R)_\rho
-
\varepsilon_j
+
2q_j
-
2\varepsilon_j
-
H(A)_\rho
-
\varepsilon_j
\nonumber\\
&=
4\varepsilon_j
>
0,
\label{eq:IID_decoupling_achievability_positive_von_Neumann_gap}
\end{align}
where we used
\begin{align}
I(A:R)_\rho
=
H(A)_\rho
-
H(A|R)_\rho.
\end{align}

Since
\(A_{\widehat{\Pi}_{m_j}^{(j)}}^{m_j}\)
is finite-dimensional and
\(R^{m_j}\)
is separable,
Lemma~\ref{lem:conditional_sandwiched_order_one_limit}
gives
\begin{align}
&\lim_{s\downarrow0}
\widetilde H_{1+s}
\left(
A_{\widehat{\Pi}_{m_j}^{(j)}}^{m_j}
\middle|
R^{m_j}
\right)_{\rho_{\widehat{\Pi}_{m_j}^{(j)}}}
\nonumber\\
&=
H
\left(
A_{\widehat{\Pi}_{m_j}^{(j)}}^{m_j}
\middle|
R^{m_j}
\right)_{\rho_{\widehat{\Pi}_{m_j}^{(j)}}}.
\end{align}
Consequently, for every \(j\), there exists
\(s_j\in(0,1]\)
such that
\begin{align}
&\widetilde H_{1+s_j}
\left(
A_{\widehat{\Pi}_{m_j}^{(j)}}^{m_j}
\middle|
R^{m_j}
\right)_{\rho_{\widehat{\Pi}_{m_j}^{(j)}}}
+
2
\log
\left|
M_{m_j}^{(j)}
\right|
-
\log
\left|
A_{\widehat{\Pi}_{m_j}^{(j)}}^{m_j}
\right|
>
0.
\label{eq:IID_decoupling_achievability_positive_Renyi_gap_diagonal}
\end{align}
Thus, the hypothesis of
Proposition~\ref{prop:IID_decoupling_achievability_fixed_common_factor}
is satisfied for each \(j\).

For each fixed \(j\),
Theorem~\ref{thm:IID_decoupling_high_probability_global_cutoffs}
gives
\begin{align}
\lim_{n\to\infty}
\delta_{\Pi_{n,m_j,\eta_j,\xi_j}^{(j)}}
=
0.
\label{eq:IID_decoupling_achievability_fixed_projection_limit}
\end{align}
Moreover,
Proposition~\ref{prop:IID_decoupling_achievability_fixed_common_factor}
and
\eqref{eq:IID_decoupling_achievability_positive_Renyi_gap_diagonal},
together with
\begin{align}
\ell_{n,m_j,\eta_j}
&=
\Theta(n),
\label{eq:IID_decoupling_achievability_diagonal_common_block_scaling}
\\
C_{
\left|
A_{\widehat{\Pi}_{m_j}^{(j)}}^{m_j}
\right|^{\ell_{n,m_j,\eta_j}}
}
&=
O(1),
\label{eq:IID_decoupling_achievability_diagonal_prefactor_bound}
\end{align}
imply that
\begin{align}
&\lim_{n\to\infty}
\mathbb E_{
U
\sim
\mathrm{Haar}
\left(
\operatorname{U}
\left(
\left(
\mathcal H^{A_{\widehat{\Pi}_{m_j}^{(j)}}^{m_j}}
\right)^{
\otimes
\ell_{n,m_j,\eta_j}
}
\right)
\right)
}
\Biggl[
D
\Biggl(
\tau_{
n,
U_{n,m_j,\eta_j,\xi_j}^{(j)}[U]
}^{
E_{n,m_j,\eta_j,\xi_j}^{(j)}R^n
}
\Biggm\|
\pi^{E_{n,m_j,\eta_j,\xi_j}^{(j)}}
\otimes
\rho_{\Pi_{n,m_j,\eta_j,\xi_j}^{(j)}}^{R^n}
\Biggr)
\Biggr]
=
0,
\label{eq:IID_decoupling_achievability_fixed_expected_error_limit}
\end{align}
where
\eqref{eq:IID_decoupling_achievability_diagonal_common_block_scaling}
follows from
\eqref{eq:IID_decoupling_global_cutoff_common_block_rate}
when
\(0<\delta_{\widehat{\Pi}_{m_j}^{(j)}}<1\),
and from
\eqref{eq:IID_decoupling_common_only_fixed_count}
and
\eqref{eq:decoupling_IID_block_quotient_properties}
when
\(\delta_{\widehat{\Pi}_{m_j}^{(j)}}=0\), and
\eqref{eq:IID_decoupling_achievability_diagonal_prefactor_bound}
follows from
\eqref{eq:decoupling_dimension_dependent_constant}.

Since the expectation is finite, for every sufficiently large
\(n\), there exists a unitary
\begin{align}
\widetilde U_{n,j}
\in
\operatorname{U}
\left(
\left(
\mathcal H^{A_{\widehat{\Pi}_{m_j}^{(j)}}^{m_j}}
\right)^{
\otimes
\ell_{n,m_j,\eta_j}
}
\right)
\end{align}
such that
\begin{align}
&D
\Biggl(
\tau_{
n,
U_{n,m_j,\eta_j,\xi_j}^{(j)}
\left[
\widetilde U_{n,j}
\right]
}^{
E_{n,m_j,\eta_j,\xi_j}^{(j)}R^n
}
\Biggm\|
\pi^{E_{n,m_j,\eta_j,\xi_j}^{(j)}}
\otimes
\rho_{\Pi_{n,m_j,\eta_j,\xi_j}^{(j)}}^{R^n}
\Biggr)
\nonumber\\
&\le
\mathbb E_{
U
\sim
\mathrm{Haar}
\left(
\operatorname{U}
\left(
\left(
\mathcal H^{A_{\widehat{\Pi}_{m_j}^{(j)}}^{m_j}}
\right)^{
\otimes
\ell_{n,m_j,\eta_j}
}
\right)
\right)
}
\Biggl[
D
\Biggl(
\tau_{
n,
U_{n,m_j,\eta_j,\xi_j}^{(j)}[U]
}^{
E_{n,m_j,\eta_j,\xi_j}^{(j)}R^n
}
\Biggm\|
\pi^{E_{n,m_j,\eta_j,\xi_j}^{(j)}}
\otimes
\rho_{\Pi_{n,m_j,\eta_j,\xi_j}^{(j)}}^{R^n}
\Biggr)
\Biggr].
\label{eq:IID_decoupling_achievability_fixed_unitary_choice}
\end{align}
For each fixed \(j\), choose
\(\widetilde U_{n,j}\)
arbitrarily for the remaining finitely many values of \(n\).
It follows from
\eqref{eq:IID_decoupling_achievability_fixed_expected_error_limit}
and
\eqref{eq:IID_decoupling_achievability_fixed_unitary_choice}
that
\begin{align}
&\lim_{n\to\infty}
D
\Biggl(
\tau_{
n,
U_{n,m_j,\eta_j,\xi_j}^{(j)}
\left[
\widetilde U_{n,j}
\right]
}^{
E_{n,m_j,\eta_j,\xi_j}^{(j)}R^n
}
\Biggm\|
\pi^{E_{n,m_j,\eta_j,\xi_j}^{(j)}}
\otimes
\rho_{\Pi_{n,m_j,\eta_j,\xi_j}^{(j)}}^{R^n}
\Biggr)
=
0.
\label{eq:IID_decoupling_achievability_fixed_projected_error_limit}
\end{align}

Choose a strictly increasing sequence
\(\{n_j\}_{j=1}^{\infty}\)
such that
\begin{align}
n_j
\ge
m_j
\end{align}
and such that, for every \(n\ge n_j\),
\begin{align}
\delta_{\Pi_{n,m_j,\eta_j,\xi_j}^{(j)}}
&\le
\varepsilon_j,
\label{eq:IID_decoupling_achievability_diagonal_projection_choice}
\\
D
\Biggl(
\tau_{
n,
U_{n,m_j,\eta_j,\xi_j}^{(j)}
\left[
\widetilde U_{n,j}
\right]
}^{
E_{n,m_j,\eta_j,\xi_j}^{(j)}R^n
}
\Biggm\|
\pi^{E_{n,m_j,\eta_j,\xi_j}^{(j)}}
\otimes
\rho_{\Pi_{n,m_j,\eta_j,\xi_j}^{(j)}}^{R^n}
\Biggr)
&\le
\varepsilon_j,
\label{eq:IID_decoupling_achievability_diagonal_projected_error_choice}
\\
\left|
\frac{1}{n}
\log
\left|
A_{\Pi_{n,m_j,\eta_j,\xi_j}^{(j)}}^n
\right|
-
H(A)_\rho
\right|
&\le
2\varepsilon_j,
\label{eq:IID_decoupling_achievability_diagonal_rank_choice}
\\
\left|
\frac{1}{n}
\log
\left|
M_{n,m_j,\eta_j,\xi_j}^{(j)}
\right|
-
q_j
\right|
&\le
2\varepsilon_j.
\label{eq:IID_decoupling_achievability_diagonal_discarded_choice}
\end{align}
The existence of such a sequence follows from
\eqref{eq:IID_decoupling_achievability_fixed_rank_rate_choice},
\eqref{eq:IID_decoupling_achievability_fixed_discarded_rate_choice},
\eqref{eq:IID_decoupling_achievability_fixed_projection_limit}, and
\eqref{eq:IID_decoupling_achievability_fixed_projected_error_limit}.

For
\begin{align}
n_j
\le
n
<
n_{j+1},
\end{align}
define
\begin{align}
\Pi_n^{A^n}
&\coloneqq
\Pi_{n,m_j,\eta_j,\xi_j}^{(j),A^n},
\\
\mathcal H^{E_n}
&\coloneqq
\mathcal H^{E_{n,m_j,\eta_j,\xi_j}^{(j)}},
\\
\mathcal H^{M_n}
&\coloneqq
\mathcal H^{M_{n,m_j,\eta_j,\xi_j}^{(j)}},
\\
V_n
&\coloneqq
V_{n,m_j,\eta_j,\xi_j}^{(j)},
\\
U_n
&\coloneqq
U_{n,m_j,\eta_j,\xi_j}^{(j)}
\left[
\widetilde U_{n,j}
\right].
\label{eq:IID_decoupling_achievability_diagonal_protocol_definition}
\end{align}
For the finitely many values \(n<n_1\), choose
\(\Pi_n^{A^n}\)
to be a rank-one spectral projection corresponding to a positive
eigenvalue of
\((\rho^A)^{\otimes n}\),
take
\begin{align}
\mathcal H^{E_n}
&\coloneqq
\mathcal H^{A_{\Pi_n}^n},
&
\mathcal H^{M_n}
&\coloneqq
\mathbb C,
\end{align}
choose the canonical unitary isomorphism \(V_n\), and set
\begin{align}
U_n
=
I^{A_{\Pi_n}^n}.
\end{align}
These finitely many choices do not affect any asymptotic statement.

Equations
\eqref{eq:IID_decoupling_achievability_epsilon_limit},
\eqref{eq:IID_decoupling_achievability_diagonal_projection_choice}, and
\eqref{eq:IID_decoupling_achievability_diagonal_rank_choice}
give
\begin{align}
\lim_{n\to\infty}
\delta_{\Pi_n}
&=
0,
\\
\lim_{n\to\infty}
\frac{1}{n}
\log
\left|
A_{\Pi_n}^n
\right|
&=
H(A)_\rho.
\end{align}
This proves
\eqref{eq:IID_decoupling_original_direct_projection_limit}
and
\eqref{eq:IID_decoupling_original_direct_rank_limit}.

For
\(n_j\le n<n_{j+1}\),
\eqref{eq:IID_decoupling_original_error_from_projected_error},
\eqref{eq:IID_decoupling_achievability_diagonal_projection_choice}, and
\eqref{eq:IID_decoupling_achievability_diagonal_projected_error_choice}
give
\begin{align}
&0
\le
D
\left(
\tau_{n,U_n}^{E_nR^n}
\middle\|
\pi^{E_n}
\otimes
\left(
\rho^R
\right)^{\otimes n}
\right)
\nonumber\\
&\le
\varepsilon_j
-
\log
\left(
1-\varepsilon_j
\right).
\label{eq:IID_decoupling_achievability_original_error_diagonal_bound}
\end{align}
Since the right-hand side converges to zero as \(j\to\infty\),
\eqref{eq:IID_decoupling_original_direct_error_limit}
follows.

Finally, whenever
\(n_j\le n<n_{j+1}\),
\eqref{eq:IID_decoupling_achievability_qj} and
\eqref{eq:IID_decoupling_achievability_diagonal_discarded_choice}
give
\begin{align}
\left|
\frac{1}{n}
\log
\left|
M_n
\right|
-
\frac{1}{2}
I(A:R)_\rho
\right|
&\le
\left|
\frac{1}{n}
\log
\left|
M_n
\right|
-
q_j
\right|
+
\left|
q_j
-
\frac{1}{2}
I(A:R)_\rho
\right|
\nonumber\\
&\le
2\varepsilon_j
+
4\varepsilon_j
\nonumber\\
&=
6\varepsilon_j.
\label{eq:IID_decoupling_achievability_diagonal_discarded_convergence}
\end{align}
Therefore, the discarded-system rate has a limit and satisfies
\begin{align}
\lim_{n\to\infty}
\frac{1}{n}
\log
\left|
M_n
\right|
=
\frac{1}{2}
I(A:R)_\rho,
\label{eq:IID_decoupling_achievability_discarded_rate_identity}
\end{align}
which proves \eqref{eq:IID_decoupling_original_direct_discarded_rate}.

\medskip
\noindent
\textbf{The case
\(\frac{1}{2}I(A:R)_\rho=H(A)_\rho\).}
Apply
Proposition~\ref{prop:decoupling_IID_finite_entropy_spectral_cutoffs}
with
\begin{align}
q
=
H(A)_\rho,
\end{align}
and let
\(\{\widehat{\Pi}_n^{A^n}\}_{n=1}^{\infty}\)
be the resulting block-length-\(n\) spectral projections.
Define the final protocol projections in this case by
\begin{align}
\Pi_n^{A^n}
\coloneqq
\widehat{\Pi}_n^{A^n}
\end{align}
for every \(n\in\{1,2,\ldots\}\).
Then
\begin{align}
\lim_{n\to\infty}
\delta_{\Pi_n}
&=
0,
\label{eq:IID_decoupling_achievability_boundary_projection_limit}
\\
\lim_{n\to\infty}
\frac{1}{n}
\log
\left|
A_{\Pi_n}^n
\right|
&=
H(A)_\rho.
\label{eq:IID_decoupling_achievability_boundary_rank_limit}
\end{align}

For every \(n\), set
\begin{align}
\mathcal H^{E_n}
&\coloneqq
\mathbb C,
&
\mathcal H^{M_n}
&\coloneqq
\mathcal H^{A_{\Pi_n}^n},
\end{align}
let
\begin{align}
V_n
:
\mathcal H^{A_{\Pi_n}^n}
\longrightarrow
\mathcal H^{E_n}
\otimes
\mathcal H^{M_n}
\end{align}
be the canonical unitary isomorphism, and choose
\begin{align}
U_n
=
I^{A_{\Pi_n}^n}.
\end{align}
Since the output system \(E_n\) is one-dimensional,
\begin{align}
\tau_{n,U_n}^{E_nR^n}
=
\pi^{E_n}
\otimes
\rho_{\Pi_n}^{R^n},
\end{align}
and hence
\begin{align}
D
\left(
\tau_{n,U_n}^{E_nR^n}
\middle\|
\pi^{E_n}
\otimes
\rho_{\Pi_n}^{R^n}
\right)
=
0.
\end{align}
It follows from
\eqref{eq:IID_decoupling_original_error_from_projected_error} that
\begin{align}
0
&\le
D
\left(
\tau_{n,U_n}^{E_nR^n}
\middle\|
\pi^{E_n}
\otimes
\left(
\rho^R
\right)^{\otimes n}
\right)
\nonumber\\
&\le
-\log
\left(
1-\delta_{\Pi_n}
\right).
\end{align}
Together with
\eqref{eq:IID_decoupling_achievability_boundary_projection_limit},
this proves
\eqref{eq:IID_decoupling_original_direct_projection_limit}
and
\eqref{eq:IID_decoupling_original_direct_error_limit}.
Equation
\eqref{eq:IID_decoupling_achievability_boundary_rank_limit}
proves
\eqref{eq:IID_decoupling_original_direct_rank_limit}.

Moreover,
\begin{align}
\left|
M_n
\right|
=
\left|
A_{\Pi_n}^n
\right|,
\end{align}
and therefore
\begin{align}
\lim_{n\to\infty}
\frac{1}{n}
\log
\left|
M_n
\right|
&=
H(A)_\rho
\nonumber\\
&=
\frac{1}{2}
I(A:R)_\rho.
\end{align}
Thus,
\eqref{eq:IID_decoupling_original_direct_discarded_rate}
also holds in the equality case.
\end{proof}

\subsection{Optimality of infinite-dimensional IID decoupling}
\label{subsec:converse_infinite_dimensional_IID_decoupling}

We prove converse bounds for arbitrary sequences of finite-rank
projections and arbitrary partial-trace decoupling protocols defined
on the corresponding projected input spaces.
In particular, no commutation relation between the projections and
the IID input marginals is assumed.

We first provide converse bounds that hold for every sequence of
finite-rank projections whose success probabilities converge to one.
These bounds apply, in particular, to the projections constructed in
Theorem~\ref{thm:IID_decoupling_high_probability_global_cutoffs}.

\begin{proposition}[Converse for high-probability finite-rank
projections of IID states]
\label{prop:IID_decoupling_high_probability_projection_converse}
Let
\(\mathcal H^A\)
and
\(\mathcal H^R\)
be separable and possibly infinite-dimensional, and let
\begin{align}
\rho^{AR}
\in
\operatorname{D}
\left(
\mathcal H^A
\otimes
\mathcal H^R
\right)
\end{align}
satisfy
\begin{align}
H(A)_\rho
<
\infty.
\end{align}
For every
\(n\in\{1,2,\ldots\}\),
let
\(\Pi_n^{A^n}\)
be a nonzero finite-rank projection satisfying
\begin{align}
\delta_{\Pi_n}
<
1,
\end{align}
where
\(\Pi_n^{\perp A^n}\),
\(\delta_{\Pi_n}\),
\(\mathcal H^{A_{\Pi_n}^n}\),
\(\lvert A_{\Pi_n}^n\rvert\),
\(\rho_{\Pi_n}^{A_{\Pi_n}^nR^n}\),
\(\rho_{\Pi_n}^{A_{\Pi_n}^n}\), and
\(\rho_{\Pi_n}^{R^n}\)
are defined in
\eqref{eq:IID_decoupling_projection_complement},
\eqref{eq:IID_decoupling_projection_error},
\eqref{eq:IID_decoupling_projected_input_space},
\eqref{eq:IID_decoupling_projected_input_dimension},
\eqref{eq:IID_decoupling_projected_state},
\eqref{eq:IID_decoupling_projected_input_marginal}, and
\eqref{eq:IID_decoupling_projected_reference},
respectively.
Whenever
\(\delta_{\Pi_n}>0\),
let
\(\mathcal H^{A_{\Pi_n^\perp}^n}\),
\(\rho_{\Pi_n^\perp}^{A_{\Pi_n^\perp}^nR^n}\), and
\(\rho_{\Pi_n^\perp}^{A_{\Pi_n^\perp}^n}\)
be defined in
\eqref{eq:IID_decoupling_complementary_input_space},
\eqref{eq:IID_decoupling_complementary_state}, and
\eqref{eq:IID_decoupling_complementary_input_marginal},
respectively.
When
\(\delta_{\Pi_n}=0\),
use the convention in
\eqref{eq:IID_decoupling_zero_complementary_convention}.

Suppose that
\begin{align}
\lim_{n\to\infty}
\delta_{\Pi_n}
=
0.
\label{eq:IID_decoupling_converse_projection_success_assumption}
\end{align}
Then
\begin{align}
\lim_{n\to\infty}
\frac{1}{n}
H
\left(
A_{\Pi_n}^n
\right)_{\rho_{\Pi_n}}
&=
H(A)_\rho,
\label{eq:IID_decoupling_converse_projected_entropy_limit}
\\
\lim_{n\to\infty}
\frac{
\delta_{\Pi_n}
}{
n
}
H
\left(
A_{\Pi_n^\perp}^n
\right)_{\rho_{\Pi_n^\perp}}
&=
0,
\label{eq:IID_decoupling_converse_weighted_complementary_entropy_limit}
\\
\liminf_{n\to\infty}
\frac{1}{n}
\log
\left|
A_{\Pi_n}^n
\right|
&\ge
H(A)_\rho,
\label{eq:IID_decoupling_converse_projection_rank_bound_lemma}
\\
\liminf_{n\to\infty}
\frac{1}{n}
I
\left(
A_{\Pi_n}^n:R^n
\right)_{\rho_{\Pi_n}}
&\ge
I(A:R)_\rho.
\label{eq:IID_decoupling_converse_projected_mutual_information_bound}
\end{align}
\end{proposition}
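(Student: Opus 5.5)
The proof rests on two-sided entropy bookkeeping for the binary measurement $\{\Pi_n,\Pi_n^\perp\}$. The projections need not commute with $(\rho^A)^{\otimes n}$, so the clean block-diagonal identity used in Theorem~\ref{thm:IID_decoupling_high_probability_global_cutoffs} is not available. I would replace it with the pinching state
\[
\overline\rho_n^{A^n}\coloneqq\Pi_n(\rho^A)^{\otimes n}\Pi_n+\Pi_n^\perp(\rho^A)^{\otimes n}\Pi_n^\perp
=(1-\delta_{\Pi_n})\rho_{\Pi_n}^{A^n}\oplus\delta_{\Pi_n}\rho_{\Pi_n^\perp}^{A^n}.
\]

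\textbf{Step 1 (entropy bookkeeping).} The pinching is a random-unitary (hence unital) channel, so entropy does not decrease. Its output is block diagonal, which gives
\[
nH(A)_\rho\le H(\overline\rho_n)=h_2(\delta_{\Pi_n})+(1-\delta_{\Pi_n})H(A^n_{\Pi_n})_{\rho_{\Pi_n}}+\delta_{\Pi_n}H(A^n_{\Pi_n^\perp})_{\rho_{\Pi_n^\perp}}.
\]
For the reverse direction, I would use the random-unitary representation $\overline\rho_n=\frac12\rho^{\otimes n}+\frac12 U\rho^{\otimes n}U^\dagger$ with $U=\Pi_n-\Pi_n^\perp$, together with the Holevo bound. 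This yields $H(\overline\rho_n)\le nH(A)_\rho+\log2$, and in particular all terms on the right-hand side are finite. Dividing by $n$ and letting $\delta_{\Pi_n}\to0$, the weighted sum $\frac1n[(1-\delta)H_{\Pi}+\delta H_{\Pi^\perp}]$ converges to $H(A)_\rho$.

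\textbf{Step 2 (splitting the limit; main obstacle).} The key step is to show that each of the two terms separately has the correct limit. For the upper bound on the projected term, I would use concavity and monotonicity-type arguments. The projected state is dominated by the full state, since $(1-\delta)\rho_{\Pi_n}\le \overline\rho_n$, but operator domination does not by itself bound entropy. Instead I would argue through the weighted complementary term. For the lower bound on the projected term, I would compare $\rho_{\Pi_n}^{A^n}$ with $(\rho^A)^{\otimes n}$ via relative entropy:
\[
D\bigl(\rho_{\Pi_n}\big\|\rho^{\otimes n}\bigr)=-H(\rho_{\Pi_n})-\operatorname{Tr}[\rho_{\Pi_n}\log\rho^{\otimes n}]\ge 0 .
\]
For the complementary term, a finite-entropy truncation like \eqref{eq:IID_decoupling_global_cutoff_information_tail_bound} is natural. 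Split $-\log\rho^{\otimes n}$ into the part where the information density is at most $nL$ and a tail. The bounded part contributes at most $L\delta_{\Pi_n}$, because the energy-type quantity $\operatorname{Tr}[\Pi_n^\perp\rho^{\otimes n}\Pi_n^\perp(-\log\rho^{\otimes n})_{\le nL}]$ is at most $nL\,\delta$. The tail is controlled uniformly by the per-letter tail sum as in that proof; for a non-commuting projection this control would come through the inequality $\Pi^\perp X\Pi^\perp\le 2X+2\Pi X\Pi$, which I'd need to check. Combining this with the Gibbs variational bound $\delta H(\rho_{\Pi^\perp})\le -\operatorname{Tr}[\delta\rho_{\Pi^\perp}\log\rho^{\otimes n}]+\delta\log\delta^{-1}$ (from $D\ge0$) would give \eqref{eq:IID_decoupling_converse_weighted_complementary_entropy_limit}. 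Then \eqref{eq:IID_decoupling_converse_projected_entropy_limit} follows from Step~1.

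\textbf{Step 3 (rank bound).} Equation \eqref{eq:IID_decoupling_converse_projection_rank_bound_lemma} is immediate from $H(A^n_{\Pi_n})_{\rho_{\Pi_n}}\le\log|A^n_{\Pi_n}|$.

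\textbf{Step 4 (mutual information).} For \eqref{eq:IID_decoupling_converse_projected_mutual_information_bound}, I would use the classical flag $X$ as in the proof of Theorem~\ref{thm:IID_decoupling_high_probability_global_cutoffs}, with the same pinching/Holevo argument now applied on $A^nR^n$. This gives
\[
\bigl|(1-\delta)H(A^n_{\Pi}|R^n)+\delta H(A^n_{\Pi^\perp}|R^n)-nH(A|R)\bigr|\le 2\log2 .
\]
The complementary conditional entropy is bounded in absolute value by the weighted marginal entropy, which vanishes after division by $n$ by Step~2. Hence $\frac1nH(A^n_{\Pi_n}|R^n)\to H(A|R)_\rho$. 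Combined with Step~2, this gives convergence, and in particular the stated liminf, of $\frac1n I(A^n_{\Pi_n}:R^n)_{\rho_{\Pi_n}}$ to $I(A:R)_\rho$.
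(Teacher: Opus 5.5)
Your Steps 1, 3 and 4 are sound. Step 4 even gives the full limit of $\frac1n I(A^n_{\Pi_n}:R^n)_{\rho_{\Pi_n}}$, which is more than the stated liminf, and the upper half of Step 2 follows immediately from Step 1 by dropping the nonnegative complementary term. The gap is the other half of Step 2, which carries the entire content of the proposition: a lower bound on $\frac1n H(A^n_{\Pi_n})_{\rho_{\Pi_n}}$, or equivalently $\frac{\delta_{\Pi_n}}{n}H(A^n_{\Pi_n^\perp})_{\rho_{\Pi_n^\perp}}\to0$. Neither tool you propose delivers this.

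The inequality $D(\rho_{\Pi_n}\|(\rho^A)^{\otimes n})\ge0$ bounds $H(\rho_{\Pi_n})$ from above, so it points the wrong way. The Gibbs bound $\delta_{\Pi_n}H(\rho_{\Pi_n^\perp})\le\operatorname{Tr}[\Pi_n^\perp(\rho^A)^{\otimes n}\Pi_n^\perp K_n]$, with $K_n\coloneqq-\log(\rho^A)^{\otimes n}$, needs $\operatorname{supp}(\rho_{\Pi_n^\perp})\subseteq\operatorname{supp}((\rho^A)^{\otimes n})$. A non-commuting $\Pi_n$ can violate this when $\rho^A$ is not faithful.

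Even for faithful $\rho^A$, the right-hand side need not be $o(n)$. To see this, take eigenvectors $\{\ket{j}\}_{j\in T_n}$ of $(\rho^A)^{\otimes n}$ with total weight $\beta_n\to1$. Pair each $\ket{j}$ with a distinct eigenvector $\ket{i_j}$ whose eigenvalue is at most $\exp(-n^2/\epsilon_n^2)$; such eigenvectors exist whenever $\rho^A$ has infinite rank. Let $\Pi_n$ project onto $\operatorname{span}\{\sqrt{1-\epsilon_n}\ket{j}-\sqrt{\epsilon_n}\ket{i_j}\}_{j\in T_n}$. Then:
\begin{itemize}
\item $\delta_{\Pi_n}\le1-(1-\epsilon_n)\beta_n\to0$ when $\epsilon_n\to0$;
\item $\Pi_n^\perp$ contains $\sqrt{\epsilon_n}\ket{j}+\sqrt{1-\epsilon_n}\ket{i_j}$, so your trace is at least $(1-\epsilon_n)\beta_n n^2/\epsilon_n$;
\item by Step 1, the quantity it is meant to bound is at most $nH(A)_\rho+\log2$.
\end{itemize}

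Your fix $\Pi^\perp X\Pi^\perp\le2X+2\Pi X\Pi$ only moves the uncontrolled tail into $\operatorname{Tr}[\Pi_n(\rho^A)^{\otimes n}\Pi_n(K_n)_{>nL}]$, which is of the same size in this example. Log-moments are not stable under compression by non-commuting projections. The per-letter tail control from Theorem~\ref{thm:IID_decoupling_high_probability_global_cutoffs} therefore has no analogue here.

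The missing idea is to bound the projected entropy from below using only an operator inequality that survives compression. Fix $\gamma<H(A)_\rho$ and let $P_{n,\gamma}$ be the spectral projection of $(\rho^A)^{\otimes n}$ onto eigenvalues $>e^{-n\gamma}$. Its weight $\beta_n(\gamma)$ tends to zero by the law of large numbers for the information density.

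Then $(\rho^A)^{\otimes n}\le P_{n,\gamma}(\rho^A)^{\otimes n}P_{n,\gamma}+e^{-n\gamma}I$, and this inequality is preserved under compression by $\Pi_n$. Compare traces on the spectral projection of $\rho_{\Pi_n}^{A^n_{\Pi_n}}$ onto eigenvalues above $2e^{-n\gamma}/(1-\delta_{\Pi_n})$. This shows that the projection has rank at most $\beta_n(\gamma)e^{n\gamma}$ and carries weight at most $2\beta_n(\gamma)/(1-\delta_{\Pi_n})\to0$. Hence $\liminf_n\frac1n H(A^n_{\Pi_n})_{\rho_{\Pi_n}}\ge\gamma$, which is how the paper argues. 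Step 1 then gives both the entropy limit and the vanishing of the weighted complementary term, and your Steps 3 and 4 go through unchanged.
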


\begin{proof}
By
\eqref{eq:IID_decoupling_converse_projection_success_assumption},
\begin{align}
\lim_{n\to\infty}
\left(
1-\delta_{\Pi_n}
\right)
=
1.
\label{eq:IID_decoupling_converse_projection_success_probability_limit}
\end{align}

\medskip
\noindent
\textbf{Lower bound in
\eqref{eq:IID_decoupling_converse_projected_entropy_limit}.}
Suppose first that
\begin{align}
H(A)_\rho
>
0,
\end{align}
and fix
\begin{align}
0
<
\gamma
<
H(A)_\rho.
\end{align}

Throughout this part, operators on
\(\mathcal H^{A_{\Pi_n}^n}\)
are canonically identified with their zero extensions to
\(\mathcal H^{A^n}\).
Taking the \(A_{\Pi_n}^n\)-marginal in
\eqref{eq:IID_decoupling_projected_state}
gives
\begin{align}
\left(
1-\delta_{\Pi_n}
\right)
\rho_{\Pi_n}^{A_{\Pi_n}^n}
=
\Pi_n^{A^n}
\left(
\rho^A
\right)^{\otimes n}
\Pi_n^{A^n}.
\label{eq:IID_decoupling_converse_compressed_marginal}
\end{align}

For a self-adjoint operator \(X\), we use the notation
\begin{align}
\left\{
X>0
\right\}
\coloneqq
\mathbf 1_{(0,\infty)}(X)
\label{eq:IID_decoupling_converse_positive_spectral_projection_notation}
\end{align}
for the spectral projection of \(X\) onto its strictly positive
spectral subspace.
Define
\begin{align}
\Pi_{n,\gamma}^{\mathrm{src},A^n}
\coloneqq
\left\{
\left(
\rho^A
\right)^{\otimes n}
-
\exp[-n\gamma]
I^{A^n}
>
0
\right\}.
\label{eq:IID_decoupling_converse_source_large_eigenvalue_projection}
\end{align}
Thus,
\(\Pi_{n,\gamma}^{\mathrm{src},A^n}\)
is the spectral projection of
\((\rho^A)^{\otimes n}\)
corresponding to eigenvalues strictly larger than
\(\exp[-n\gamma]\).
Since
\((\rho^A)^{\otimes n}\)
is trace class and the threshold is strictly positive,
\(\Pi_{n,\gamma}^{\mathrm{src},A^n}\)
has finite rank.

Set
\begin{align}
\beta_n(\gamma)
\coloneqq
\operatorname{Tr}
\left[
\Pi_{n,\gamma}^{\mathrm{src},A^n}
\left(
\rho^A
\right)^{\otimes n}
\right].
\label{eq:IID_decoupling_converse_source_large_eigenvalue_mass}
\end{align}
An eigenvalue
\(\lambda_{j^n}^{A^n}\)
of
\((\rho^A)^{\otimes n}\)
satisfies
\begin{align}
\lambda_{j^n}^{A^n}
>
\exp[-n\gamma]
\end{align}
if and only if
\begin{align}
-\frac{1}{n}
\log
\lambda_{j^n}^{A^n}
<
\gamma.
\end{align}
Therefore, in terms of the IID information density introduced in
\eqref{eq:decoupling_IID_average_information_density},
\begin{align}
\beta_n(\gamma)
=
\Pr
\left\{
\overline{\imath}_{A,n}(J^n)
<
\gamma
\right\}.
\label{eq:IID_decoupling_converse_source_large_eigenvalue_probability}
\end{align}
By
\eqref{eq:decoupling_IID_information_density_mean_convergence},
the random variable
\(\overline{\imath}_{A,n}(J^n)\)
converges in mean, and hence in probability, to
\(H(A)_\rho\).
Since
\(\gamma<H(A)_\rho\),
it follows that
\begin{align}
\lim_{n\to\infty}
\beta_n(\gamma)
=
0.
\label{eq:IID_decoupling_converse_source_large_eigenvalue_mass_limit}
\end{align}

By the definition of
\(\Pi_{n,\gamma}^{\mathrm{src},A^n}\),
the restriction of
\((\rho^A)^{\otimes n}\)
to its complementary spectral subspace satisfies
\begin{align}
&\left(
I^{A^n}
-
\Pi_{n,\gamma}^{\mathrm{src},A^n}
\right)
\left(
\rho^A
\right)^{\otimes n}
\left(
I^{A^n}
-
\Pi_{n,\gamma}^{\mathrm{src},A^n}
\right)
\nonumber\\
&\le
\exp[-n\gamma]
\left(
I^{A^n}
-
\Pi_{n,\gamma}^{\mathrm{src},A^n}
\right).
\label{eq:IID_decoupling_converse_source_complementary_spectral_bound}
\end{align}
Moreover,
\(\Pi_{n,\gamma}^{\mathrm{src},A^n}\)
commutes with
\((\rho^A)^{\otimes n}\).
Consequently,
\begin{align}
\left(
\rho^A
\right)^{\otimes n}
&\le
\Pi_{n,\gamma}^{\mathrm{src},A^n}
\left(
\rho^A
\right)^{\otimes n}
\Pi_{n,\gamma}^{\mathrm{src},A^n}
+
\exp[-n\gamma]
I^{A^n}.
\label{eq:IID_decoupling_converse_source_spectral_threshold_bound}
\end{align}

Compressing
\eqref{eq:IID_decoupling_converse_source_spectral_threshold_bound}
by
\(\Pi_n^{A^n}\),
using
\eqref{eq:IID_decoupling_converse_compressed_marginal},
and dividing by
\(1-\delta_{\Pi_n}>0\),
gives
\begin{align}
\rho_{\Pi_n}^{A_{\Pi_n}^n}
&\le
\frac{
\Pi_n^{A^n}
\Pi_{n,\gamma}^{\mathrm{src},A^n}
\left(
\rho^A
\right)^{\otimes n}
\Pi_{n,\gamma}^{\mathrm{src},A^n}
\Pi_n^{A^n}
}{
1-\delta_{\Pi_n}
}
\nonumber\\
&\quad+
\frac{
\exp[-n\gamma]
}{
1-\delta_{\Pi_n}
}
\Pi_n^{A^n}.
\label{eq:IID_decoupling_converse_compressed_spectral_threshold_bound}
\end{align}

Define
\begin{align}
\Pi_{n,\gamma}^{\mathrm{cmp},A^n}
\coloneqq
\left\{
\rho_{\Pi_n}^{A_{\Pi_n}^n}
-
\frac{
2\exp[-n\gamma]
}{
1-\delta_{\Pi_n}
}
\Pi_n^{A^n}
>
0
\right\}.
\label{eq:IID_decoupling_converse_compressed_large_eigenvalue_projection}
\end{align}
Thus,
\(\Pi_{n,\gamma}^{\mathrm{cmp},A^n}\)
is the spectral projection of
\(\rho_{\Pi_n}^{A_{\Pi_n}^n}\)
corresponding to eigenvalues strictly larger than
\begin{align}
\frac{
2\exp[-n\gamma]
}{
1-\delta_{\Pi_n}
}.
\end{align}
Since
\(\rho_{\Pi_n}^{A_{\Pi_n}^n}\)
is supported on
\(\mathcal H^{A_{\Pi_n}^n}\),
one has
\begin{align}
\Pi_{n,\gamma}^{\mathrm{cmp},A^n}
\le
\Pi_n^{A^n}.
\label{eq:IID_decoupling_converse_compressed_projection_domination}
\end{align}
By the definition of
\(\Pi_{n,\gamma}^{\mathrm{cmp},A^n}\),
\begin{align}
\frac{
2\exp[-n\gamma]
}{
1-\delta_{\Pi_n}
}
\operatorname{Tr}
\left[
\Pi_{n,\gamma}^{\mathrm{cmp},A^n}
\right]
\le
\operatorname{Tr}
\left[
\Pi_{n,\gamma}^{\mathrm{cmp},A^n}
\rho_{\Pi_n}^{A_{\Pi_n}^n}
\right].
\label{eq:IID_decoupling_converse_compressed_large_eigenvalue_rank_lower_bound}
\end{align}

On the other hand,
\eqref{eq:IID_decoupling_converse_compressed_spectral_threshold_bound}
and
\eqref{eq:IID_decoupling_converse_compressed_projection_domination}
give
\begin{align}
&\operatorname{Tr}
\left[
\Pi_{n,\gamma}^{\mathrm{cmp},A^n}
\rho_{\Pi_n}^{A_{\Pi_n}^n}
\right]
\nonumber\\
&\le
\frac{1}{
1-\delta_{\Pi_n}
}
\operatorname{Tr}
\left[
\Pi_{n,\gamma}^{\mathrm{cmp},A^n}
\Pi_{n,\gamma}^{\mathrm{src},A^n}
\left(
\rho^A
\right)^{\otimes n}
\Pi_{n,\gamma}^{\mathrm{src},A^n}
\right]
\nonumber\\
&\quad+
\frac{
\exp[-n\gamma]
}{
1-\delta_{\Pi_n}
}
\operatorname{Tr}
\left[
\Pi_{n,\gamma}^{\mathrm{cmp},A^n}
\right]
\nonumber\\
&\le
\frac{
\beta_n(\gamma)
}{
1-\delta_{\Pi_n}
}
+
\frac{
\exp[-n\gamma]
}{
1-\delta_{\Pi_n}
}
\operatorname{Tr}
\left[
\Pi_{n,\gamma}^{\mathrm{cmp},A^n}
\right].
\label{eq:IID_decoupling_converse_compressed_large_eigenvalue_mass_intermediate}
\end{align}
The last inequality follows from
\begin{align}
0
\le
\Pi_{n,\gamma}^{\mathrm{cmp},A^n}
\le
I^{A^n},
\end{align}
the positivity of
\begin{align}
\Pi_{n,\gamma}^{\mathrm{src},A^n}
\left(
\rho^A
\right)^{\otimes n}
\Pi_{n,\gamma}^{\mathrm{src},A^n},
\end{align}
and the definition
\eqref{eq:IID_decoupling_converse_source_large_eigenvalue_mass}.

Combining
\eqref{eq:IID_decoupling_converse_compressed_large_eigenvalue_rank_lower_bound}
and
\eqref{eq:IID_decoupling_converse_compressed_large_eigenvalue_mass_intermediate}
yields
\begin{align}
\exp[-n\gamma]
\operatorname{Tr}
\left[
\Pi_{n,\gamma}^{\mathrm{cmp},A^n}
\right]
\le
\beta_n(\gamma).
\label{eq:IID_decoupling_converse_compressed_large_eigenvalue_rank_bound}
\end{align}
Substituting
\eqref{eq:IID_decoupling_converse_compressed_large_eigenvalue_rank_bound}
back into
\eqref{eq:IID_decoupling_converse_compressed_large_eigenvalue_mass_intermediate}
gives
\begin{align}
\operatorname{Tr}
\left[
\Pi_{n,\gamma}^{\mathrm{cmp},A^n}
\rho_{\Pi_n}^{A_{\Pi_n}^n}
\right]
\le
\frac{
2\beta_n(\gamma)
}{
1-\delta_{\Pi_n}
}.
\label{eq:IID_decoupling_converse_compressed_large_eigenvalue_mass}
\end{align}
Equations
\eqref{eq:IID_decoupling_converse_projection_success_probability_limit},
\eqref{eq:IID_decoupling_converse_source_large_eigenvalue_mass_limit},
and
\eqref{eq:IID_decoupling_converse_compressed_large_eigenvalue_mass}
therefore imply
\begin{align}
\lim_{n\to\infty}
\operatorname{Tr}
\left[
\Pi_{n,\gamma}^{\mathrm{cmp},A^n}
\rho_{\Pi_n}^{A_{\Pi_n}^n}
\right]
=
0.
\label{eq:IID_decoupling_converse_projected_large_eigenvalue_mass_limit}
\end{align}

Let
\begin{align}
\nu_{n,0}
\ge
\nu_{n,1}
\ge
\cdots
\ge
0
\end{align}
be the eigenvalues of
\(\rho_{\Pi_n}^{A_{\Pi_n}^n}\),
counted with multiplicity and extended by zeros.
By the definition of
\(\Pi_{n,\gamma}^{\mathrm{cmp},A^n}\),
\begin{align}
\operatorname{Tr}
\left[
\Pi_{n,\gamma}^{\mathrm{cmp},A^n}
\rho_{\Pi_n}^{A_{\Pi_n}^n}
\right]
=
\sum_{
k:
\nu_{n,k}
>
\frac{
2\exp[-n\gamma]
}{
1-\delta_{\Pi_n}
}
}
\nu_{n,k}.
\label{eq:IID_decoupling_converse_projected_large_eigenvalue_mass_expansion}
\end{align}
Since
\begin{align}
\sum_{k=0}^{\infty}
\nu_{n,k}
=
1,
\end{align}
it follows that
\begin{align}
&\sum_{
k:
\nu_{n,k}
\le
\frac{
2\exp[-n\gamma]
}{
1-\delta_{\Pi_n}
}
}
\nu_{n,k}
\nonumber\\
&=
1
-
\operatorname{Tr}
\left[
\Pi_{n,\gamma}^{\mathrm{cmp},A^n}
\rho_{\Pi_n}^{A_{\Pi_n}^n}
\right].
\label{eq:IID_decoupling_converse_projected_small_eigenvalue_mass}
\end{align}

For every index satisfying
\begin{align}
\nu_{n,k}
\le
\frac{
2\exp[-n\gamma]
}{
1-\delta_{\Pi_n}
},
\end{align}
one has
\begin{align}
-\log
\nu_{n,k}
\ge
n\gamma
-
\log 2
+
\log
\left(
1-\delta_{\Pi_n}
\right).
\label{eq:IID_decoupling_converse_small_eigenvalue_information_bound}
\end{align}
Therefore,
\begin{align}
H
\left(
A_{\Pi_n}^n
\right)_{\rho_{\Pi_n}}
&=
\sum_{k=0}^{\infty}
\nu_{n,k}
\left(
-\log
\nu_{n,k}
\right)
\nonumber\\
&\ge
\sum_{
k:
\nu_{n,k}
\le
\frac{
2\exp[-n\gamma]
}{
1-\delta_{\Pi_n}
}
}
\nu_{n,k}
\left[
n\gamma
-
\log 2
+
\log
\left(
1-\delta_{\Pi_n}
\right)
\right]
\nonumber\\
&=
\left[
1
-
\operatorname{Tr}
\left[
\Pi_{n,\gamma}^{\mathrm{cmp},A^n}
\rho_{\Pi_n}^{A_{\Pi_n}^n}
\right]
\right]
\nonumber\\
&\qquad\times
\left[
n\gamma
-
\log 2
+
\log
\left(
1-\delta_{\Pi_n}
\right)
\right].
\label{eq:IID_decoupling_converse_projected_entropy_lower_bound}
\end{align}
Dividing
\eqref{eq:IID_decoupling_converse_projected_entropy_lower_bound}
by \(n\), and using
\eqref{eq:IID_decoupling_converse_projection_success_assumption}
and
\eqref{eq:IID_decoupling_converse_projected_large_eigenvalue_mass_limit},
gives
\begin{align}
\liminf_{n\to\infty}
\frac{1}{n}
H
\left(
A_{\Pi_n}^n
\right)_{\rho_{\Pi_n}}
\ge
\gamma.
\end{align}
Since this holds for every
\begin{align}
0
<
\gamma
<
H(A)_\rho,
\end{align}
taking
\(\gamma\uparrow H(A)_\rho\)
yields
\begin{align}
\liminf_{n\to\infty}
\frac{1}{n}
H
\left(
A_{\Pi_n}^n
\right)_{\rho_{\Pi_n}}
\ge
H(A)_\rho.
\label{eq:IID_decoupling_converse_projected_entropy_lower_limit}
\end{align}

When
\(H(A)_\rho=0\),
\eqref{eq:IID_decoupling_converse_projected_entropy_lower_limit}
follows directly from nonnegativity of the entropy.

\medskip
\noindent
\textbf{Upper bound in
\eqref{eq:IID_decoupling_converse_projected_entropy_limit}.}
Define the self-adjoint unitary
\begin{align}
U_n^{A^n,\mathrm{pin}}
\coloneqq
\Pi_n^{A^n}
-
\Pi_n^{\perp A^n},
\label{eq:IID_decoupling_converse_pinching_unitary}
\end{align}
and the pinched marginal state
\begin{align}
\overline\rho_n^{A^n}
&\coloneqq
\Pi_n^{A^n}
\left(
\rho^A
\right)^{\otimes n}
\Pi_n^{A^n}
\nonumber\\
&\quad+
\Pi_n^{\perp A^n}
\left(
\rho^A
\right)^{\otimes n}
\Pi_n^{\perp A^n}
\nonumber\\
&=
\frac{1}{2}
\left(
\rho^A
\right)^{\otimes n}
+
\frac{1}{2}
U_n^{A^n,\mathrm{pin}}
\left(
\rho^A
\right)^{\otimes n}
U_n^{A^n,\mathrm{pin}}.
\label{eq:IID_decoupling_converse_pinched_input_marginal}
\end{align}
Concavity of the entropy, invariance under unitary conjugation, and
the binary-mixture upper bound give
\begin{align}
nH(A)_\rho
\le
H(A^n)_{\overline\rho_n}
\le
nH(A)_\rho
+
\log 2.
\label{eq:IID_decoupling_converse_pinched_entropy_bounds}
\end{align}

Since the two branches in
\eqref{eq:IID_decoupling_converse_pinched_input_marginal}
have orthogonal supports, the block-diagonal entropy decomposition
gives
\begin{align}
H(A^n)_{\overline\rho_n}
&=
h_2
\left(
\delta_{\Pi_n}
\right)
+
\left(
1-\delta_{\Pi_n}
\right)
H
\left(
A_{\Pi_n}^n
\right)_{\rho_{\Pi_n}}
\nonumber\\
&\quad+
\delta_{\Pi_n}
H
\left(
A_{\Pi_n^\perp}^n
\right)_{\rho_{\Pi_n^\perp}}.
\label{eq:IID_decoupling_converse_pinched_entropy_decomposition}
\end{align}
When
\(\delta_{\Pi_n}=0\),
the final term is defined to be zero according to
\eqref{eq:IID_decoupling_zero_complementary_convention}.
In particular, every positive-probability branch entropy appearing in
\eqref{eq:IID_decoupling_converse_pinched_entropy_decomposition}
is finite.

Dropping the nonnegative first and third terms in
\eqref{eq:IID_decoupling_converse_pinched_entropy_decomposition}
and using
\eqref{eq:IID_decoupling_converse_pinched_entropy_bounds}
gives
\begin{align}
\left(
1-\delta_{\Pi_n}
\right)
H
\left(
A_{\Pi_n}^n
\right)_{\rho_{\Pi_n}}
\le
nH(A)_\rho
+
\log 2.
\label{eq:IID_decoupling_converse_projected_entropy_finite_upper_bound}
\end{align}
Together with
\eqref{eq:IID_decoupling_converse_projection_success_probability_limit},
this yields
\begin{align}
\limsup_{n\to\infty}
\frac{1}{n}
H
\left(
A_{\Pi_n}^n
\right)_{\rho_{\Pi_n}}
\le
H(A)_\rho.
\label{eq:IID_decoupling_converse_projected_entropy_upper_limit}
\end{align}
Combining
\eqref{eq:IID_decoupling_converse_projected_entropy_lower_limit}
and
\eqref{eq:IID_decoupling_converse_projected_entropy_upper_limit}
proves
\eqref{eq:IID_decoupling_converse_projected_entropy_limit}.

\medskip
\noindent
\textbf{Proof of
\eqref{eq:IID_decoupling_converse_weighted_complementary_entropy_limit}.}
Solving
\eqref{eq:IID_decoupling_converse_pinched_entropy_decomposition}
for the complementary contribution and using
\eqref{eq:IID_decoupling_converse_pinched_entropy_bounds}
gives
\begin{align}
0
&\le
\delta_{\Pi_n}
H
\left(
A_{\Pi_n^\perp}^n
\right)_{\rho_{\Pi_n^\perp}}
\nonumber\\
&\le
nH(A)_\rho
+
\log 2
-
\left(
1-\delta_{\Pi_n}
\right)
H
\left(
A_{\Pi_n}^n
\right)_{\rho_{\Pi_n}}.
\label{eq:IID_decoupling_converse_complementary_entropy_upper_bound}
\end{align}
Dividing
\eqref{eq:IID_decoupling_converse_complementary_entropy_upper_bound}
by \(n\), and using
\eqref{eq:IID_decoupling_converse_projection_success_probability_limit}
and
\eqref{eq:IID_decoupling_converse_projected_entropy_limit},
proves
\eqref{eq:IID_decoupling_converse_weighted_complementary_entropy_limit}.

\medskip
\noindent
\textbf{Proof of
\eqref{eq:IID_decoupling_converse_projection_rank_bound_lemma}.}
Since
\(\rho_{\Pi_n}^{A_{\Pi_n}^n}\)
is supported on the finite-dimensional space
\(\mathcal H^{A_{\Pi_n}^n}\),
the entropy--dimension bound gives
\begin{align}
H
\left(
A_{\Pi_n}^n
\right)_{\rho_{\Pi_n}}
\le
\log
\left|
A_{\Pi_n}^n
\right|.
\label{eq:IID_decoupling_converse_projected_entropy_dimension_bound}
\end{align}
Dividing
\eqref{eq:IID_decoupling_converse_projected_entropy_dimension_bound}
by \(n\), taking the lower limit, and using
\eqref{eq:IID_decoupling_converse_projected_entropy_limit}
proves
\eqref{eq:IID_decoupling_converse_projection_rank_bound_lemma}.

\medskip
\noindent
\textbf{Proof of
\eqref{eq:IID_decoupling_converse_projected_mutual_information_bound}.}
By
\eqref{eq:mutual_information_bounds_finite_marginal},
\begin{align}
I(A^n:R^n)_{\rho^{\otimes n}}
\le
2H(A^n)_{\rho^{\otimes n}}
=
2nH(A)_\rho
<
\infty.
\label{eq:IID_decoupling_converse_tensor_power_mutual_information_finite}
\end{align}

Define the pinched bipartite state
\begin{align}
\overline\rho_n^{A^nR^n}
&\coloneqq
\left(
\Pi_n^{A^n}
\otimes
I^{R^n}
\right)
\left(
\rho^{AR}
\right)^{\otimes n}
\left(
\Pi_n^{A^n}
\otimes
I^{R^n}
\right)
\nonumber\\
&\quad+
\left(
\Pi_n^{\perp A^n}
\otimes
I^{R^n}
\right)
\left(
\rho^{AR}
\right)^{\otimes n}
\left(
\Pi_n^{\perp A^n}
\otimes
I^{R^n}
\right)
\nonumber\\
&=
\frac{1}{2}
\left(
\rho^{AR}
\right)^{\otimes n}
\nonumber\\
&\quad+
\frac{1}{2}
\left(
U_n^{A^n,\mathrm{pin}}
\otimes
I^{R^n}
\right)
\left(
\rho^{AR}
\right)^{\otimes n}
\left(
U_n^{A^n,\mathrm{pin}}
\otimes
I^{R^n}
\right).
\label{eq:IID_decoupling_converse_pinched_bipartite_state}
\end{align}
Its \(A^n\)-marginal is
\(\overline\rho_n^{A^n}\)
defined in
\eqref{eq:IID_decoupling_converse_pinched_input_marginal}.
Hence,
\eqref{eq:IID_decoupling_converse_pinched_entropy_bounds}
and
\eqref{eq:mutual_information_bounds_finite_marginal}
imply
\begin{align}
I(A^n:R^n)_{\overline\rho_n}
<
\infty.
\label{eq:IID_decoupling_converse_pinched_mutual_information_finite}
\end{align}

Introduce a binary classical register \(X\), and define
\begin{align}
\overline\rho_n^{XA^nR^n}
&\coloneqq
\frac{1}{2}
\ket{0}\bra{0}^X
\otimes
\left(
\rho^{AR}
\right)^{\otimes n}
\nonumber\\
&\quad+
\frac{1}{2}
\ket{1}\bra{1}^X
\otimes
\left(
U_n^{A^n,\mathrm{pin}}
\otimes
I^{R^n}
\right)
\left(
\rho^{AR}
\right)^{\otimes n}
\left(
U_n^{A^n,\mathrm{pin}}
\otimes
I^{R^n}
\right).
\label{eq:IID_decoupling_converse_pinching_classical_extension}
\end{align}
By construction,
\begin{align}
\operatorname{Tr}_X
\left[
\overline\rho_n^{XA^nR^n}
\right]
=
\overline\rho_n^{A^nR^n}.
\label{eq:IID_decoupling_converse_pinching_classical_extension_marginal}
\end{align}

The two conditional states in
\eqref{eq:IID_decoupling_converse_pinching_classical_extension}
have the same \(R^n\)-marginal.
Consequently,
\begin{align}
I(X:R^n)_{\overline\rho_n}
=
0.
\label{eq:IID_decoupling_converse_pinching_classical_reference_independence}
\end{align}
Using the mutual-information chain rule
\eqref{eq:mutual_information_chain_rule},
the classical conditioning identity, additivity under tensor products,
and invariance under local unitaries, we obtain
\begin{align}
I(XA^n:R^n)_{\overline\rho_n}
&=
I(X:R^n)_{\overline\rho_n}
+
I(A^n:R^n|X)_{\overline\rho_n}
\nonumber\\
&=
I(A^n:R^n|X)_{\overline\rho_n}
\nonumber\\
&=
nI(A:R)_\rho.
\label{eq:IID_decoupling_converse_pinching_mutual_information_first_order}
\end{align}
Applying the chain rule
\eqref{eq:mutual_information_chain_rule}
in the opposite order and using
\eqref{eq:IID_decoupling_converse_pinching_classical_extension_marginal}
gives
\begin{align}
I(XA^n:R^n)_{\overline\rho_n}
&=
I(A^n:R^n)_{\overline\rho_n}
+
I(X:R^n|A^n)_{\overline\rho_n}.
\label{eq:IID_decoupling_converse_pinching_mutual_information_second_expansion}
\end{align}
Since \(X\) is binary and classical,
\begin{align}
0
\le
I(X:R^n|A^n)_{\overline\rho_n}
\le
H(X)_{\overline\rho_n}
=
\log 2.
\label{eq:IID_decoupling_converse_pinching_conditional_mutual_information_bound}
\end{align}
Combining
\eqref{eq:IID_decoupling_converse_pinching_mutual_information_first_order},
\eqref{eq:IID_decoupling_converse_pinching_mutual_information_second_expansion},
and
\eqref{eq:IID_decoupling_converse_pinching_conditional_mutual_information_bound}
gives
\begin{align}
I(A^n:R^n)_{\overline\rho_n}
\ge
nI(A:R)_\rho
-
\log 2.
\label{eq:IID_decoupling_converse_pinched_mutual_information_lower_bound}
\end{align}

Introduce a binary classical register \(Y\).
When
\(\delta_{\Pi_n}>0\),
define
\begin{align}
\widetilde\rho_n^{YA^nR^n}
&\coloneqq
\left(
1-\delta_{\Pi_n}
\right)
\ket{0}\bra{0}^Y
\otimes
\rho_{\Pi_n}^{A_{\Pi_n}^nR^n}
\nonumber\\
&\quad+
\delta_{\Pi_n}
\ket{1}\bra{1}^Y
\otimes
\rho_{\Pi_n^\perp}^{A_{\Pi_n^\perp}^nR^n},
\label{eq:IID_decoupling_converse_sector_state_positive_complement}
\end{align}
where the conditional states are canonically embedded into the
orthogonal subspaces
\(\mathcal H^{A_{\Pi_n}^n}\)
and
\(\mathcal H^{A_{\Pi_n^\perp}^n}\)
of
\(\mathcal H^{A^n}\).
When
\(\delta_{\Pi_n}=0\),
define
\begin{align}
\widetilde\rho_n^{YA^nR^n}
\coloneqq
\ket{0}\bra{0}^Y
\otimes
\rho_{\Pi_n}^{A_{\Pi_n}^nR^n},
\label{eq:IID_decoupling_converse_sector_state_zero_complement}
\end{align}
again using the canonical embedding into
\(\mathcal H^{A^n}\).
In either case,
\begin{align}
\operatorname{Tr}_Y
\left[
\widetilde\rho_n^{YA^nR^n}
\right]
=
\overline\rho_n^{A^nR^n}.
\label{eq:IID_decoupling_converse_sector_state_marginal}
\end{align}

The sector label \(Y\) is determined by the orthogonal support sectors
of the \(A^n\)-system.
Equivalently,
\(\widetilde\rho_n^{YA^nR^n}\)
is obtained from
\(\overline\rho_n^{A^nR^n}\)
by a local isometry on \(A^n\).
Therefore, invariance of mutual information under local isometries
gives
\begin{align}
I(A^n:R^n)_{\overline\rho_n}
=
I(YA^n:R^n)_{\widetilde\rho_n}.
\label{eq:IID_decoupling_converse_sector_isometric_mutual_information_identity}
\end{align}
Applying the chain rule
\eqref{eq:mutual_information_chain_rule}
to the right-hand side of
\eqref{eq:IID_decoupling_converse_sector_isometric_mutual_information_identity}
and using the classical conditioning identity gives
\begin{align}
I(A^n:R^n)_{\overline\rho_n}
&=
I(Y:R^n)_{\widetilde\rho_n}
+
\left(
1-\delta_{\Pi_n}
\right)
I
\left(
A_{\Pi_n}^n:R^n
\right)_{\rho_{\Pi_n}}
\nonumber\\
&\quad+
\delta_{\Pi_n}
I
\left(
A_{\Pi_n^\perp}^n:R^n
\right)_{\rho_{\Pi_n^\perp}}.
\label{eq:IID_decoupling_converse_mutual_information_branch_decomposition}
\end{align}
When
\(\delta_{\Pi_n}=0\),
the final term is understood to be zero according to
\eqref{eq:IID_decoupling_zero_complementary_convention}.

Since \(Y\) is classical,
\begin{align}
I(Y:R^n)_{\widetilde\rho_n}
\le
H(Y)_{\widetilde\rho_n}
=
h_2
\left(
\delta_{\Pi_n}
\right).
\label{eq:IID_decoupling_converse_sector_mutual_information_bound}
\end{align}
Furthermore, when
\(\delta_{\Pi_n}>0\),
the complementary marginal entropy is finite by
\eqref{eq:IID_decoupling_converse_pinched_entropy_decomposition}.
Hence,
\eqref{eq:mutual_information_bounds_finite_marginal}
gives
\begin{align}
I
\left(
A_{\Pi_n^\perp}^n:R^n
\right)_{\rho_{\Pi_n^\perp}}
\le
2
H
\left(
A_{\Pi_n^\perp}^n
\right)_{\rho_{\Pi_n^\perp}}.
\label{eq:IID_decoupling_converse_complementary_mutual_information_bound}
\end{align}

Combining
\eqref{eq:IID_decoupling_converse_pinched_mutual_information_lower_bound},
\eqref{eq:IID_decoupling_converse_mutual_information_branch_decomposition},
\eqref{eq:IID_decoupling_converse_sector_mutual_information_bound}, and
\eqref{eq:IID_decoupling_converse_complementary_mutual_information_bound}
yields
\begin{align}
&\left(
1-\delta_{\Pi_n}
\right)
I
\left(
A_{\Pi_n}^n:R^n
\right)_{\rho_{\Pi_n}}
\nonumber\\
&\ge
nI(A:R)_\rho
-
\log 2
-
h_2
\left(
\delta_{\Pi_n}
\right)
\nonumber\\
&\quad-
2\delta_{\Pi_n}
H
\left(
A_{\Pi_n^\perp}^n
\right)_{\rho_{\Pi_n^\perp}}.
\label{eq:IID_decoupling_converse_projected_mutual_information_finite_bound}
\end{align}
Dividing
\eqref{eq:IID_decoupling_converse_projected_mutual_information_finite_bound}
by \(n\), using
\eqref{eq:IID_decoupling_converse_projection_success_probability_limit},
\eqref{eq:IID_decoupling_converse_weighted_complementary_entropy_limit},
and
\begin{align}
0
\le
h_2
\left(
\delta_{\Pi_n}
\right)
\le
\log 2,
\end{align}
and taking the lower limit proves
\eqref{eq:IID_decoupling_converse_projected_mutual_information_bound}.
\end{proof}

We now prove the converse bound on the rates achievable by an
arbitrary partial-trace protocol for infinite-dimensional decoupling.

\begin{theorem}[Converse for infinite-dimensional IID decoupling]
\label{thm:IID_decoupling_converse}
Let
\(\mathcal H^A\)
and
\(\mathcal H^R\)
be separable and possibly infinite-dimensional, and let
\begin{align}
\rho^{AR}
\in
\operatorname{D}
\left(
\mathcal H^A
\otimes
\mathcal H^R
\right)
\end{align}
satisfy
\begin{align}
H(A)_\rho
<
\infty.
\end{align}
For every
\(n\in\{1,2,\ldots\}\),
let
\(\Pi_n^{A^n}\)
be an arbitrary nonzero finite-rank projection satisfying
\begin{align}
\delta_{\Pi_n}
<
1,
\end{align}
where
\(\delta_{\Pi_n}\),
\(\mathcal H^{A_{\Pi_n}^n}\),
\(\lvert A_{\Pi_n}^n\rvert\),
\(\rho_{\Pi_n}^{A_{\Pi_n}^nR^n}\), and
\(\rho_{\Pi_n}^{R^n}\)
are defined in
\eqref{eq:IID_decoupling_projection_error},
\eqref{eq:IID_decoupling_projected_input_space},
\eqref{eq:IID_decoupling_projected_input_dimension},
\eqref{eq:IID_decoupling_projected_state}, and
\eqref{eq:IID_decoupling_projected_reference},
respectively.
Suppose that
\begin{align}
\lim_{n\to\infty}
\delta_{\Pi_n}
=
0.
\label{eq:IID_decoupling_converse_projection_error_assumption}
\end{align}
For every \(n\), let
\(\mathcal H^{E_n}\)
and
\(\mathcal H^{M_n}\)
be finite-dimensional, let
\begin{align}
V_n^{A_{\Pi_n}^n\to E_nM_n}
:
\mathcal H^{A_{\Pi_n}^n}
\longrightarrow
\mathcal H^{E_n}
\otimes
\mathcal H^{M_n}
\end{align}
be a unitary isomorphism as in
\eqref{eq:IID_decoupling_partial_trace_factorization},
and consider an arbitrary unitary operator
\begin{align}
U_n
\in
\operatorname{U}
\left(
\mathcal H^{A_{\Pi_n}^n}
\right).
\end{align}
Let
\(\mathcal T_n^{A_{\Pi_n}^n\to E_n}\)
and
\(\tau_{n,U_n}^{E_nR^n}\)
be defined in
\eqref{eq:IID_decoupling_partial_trace_channel}
and
\eqref{eq:IID_decoupling_output_state},
respectively.

Suppose that the output decouples from the original reference
marginal, in the sense that
\begin{align}
\lim_{n\to\infty}
D
\left(
\tau_{n,U_n}^{E_nR^n}
\middle\|
\pi^{E_n}
\otimes
\left(
\rho^R
\right)^{\otimes n}
\right)
=
0.
\label{eq:IID_decoupling_converse_original_error_assumption}
\end{align}
Then
\begin{align}
\liminf_{n\to\infty}
\frac{1}{n}
\log
\left|
A_{\Pi_n}^n
\right|
&\ge
H(A)_\rho,
\label{eq:IID_decoupling_converse_projection_rank_rate}
\\
\liminf_{n\to\infty}
\frac{1}{n}
\log
\left|
M_n
\right|
&\ge
\frac{1}{2}
I(A:R)_\rho.
\label{eq:IID_decoupling_converse_discarded_rate}
\end{align}
\end{theorem}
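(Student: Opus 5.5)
The first bound \eqref{eq:IID_decoupling_converse_projection_rank_rate} is immediate from Proposition~\ref{prop:IID_decoupling_high_probability_projection_converse}. Its hypothesis \eqref{eq:IID_decoupling_converse_projection_success_assumption} is exactly \eqref{eq:IID_decoupling_converse_projection_error_assumption}, so \eqref{eq:IID_decoupling_converse_projection_rank_bound_lemma} gives the claim. That proposition also gives \eqref{eq:IID_decoupling_converse_projected_entropy_limit} and \eqref{eq:IID_decoupling_converse_projected_mutual_information_bound}, namely
\[
\tfrac1n H(A_{\Pi_n}^n)_{\rho_{\Pi_n}}\to H(A)_\rho,
\qquad
\liminf_n \tfrac1n I(A_{\Pi_n}^n:R^n)_{\rho_{\Pi_n}}\ge I(A:R)_\rho .
\]
These two facts are what I will use for the discarded-system bound.

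For \eqref{eq:IID_decoupling_converse_discarded_rate}, I would run the standard finite-dimensional argument inside the projected space, which is finite-dimensional. Write $\sigma_n^{E_nM_nR^n}$ for the state obtained by applying $V_nU_n$ to $\rho_{\Pi_n}^{A_{\Pi_n}^nR^n}$. Local unitaries preserve mutual information, so $I(E_nM_n:R^n)_\sigma=I(A_{\Pi_n}^n:R^n)_{\rho_{\Pi_n}}$. All entropies of $E_n$ and $M_n$ are finite, so the chain rule \eqref{eq:mutual_information_chain_rule} holds with no indeterminacy. It gives
\[
I(E_nM_n:R^n)=I(E_n:R^n)+I(M_n:R^n|E_n)\le I(E_n:R^n)+2\log|M_n| ,
\]
using the bound \eqref{eq:conditional_mutual_information_bounds_finite_A} (conditional mutual information is at most twice the entropy of the finite-dimensional system $M_n$). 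Hence
\[
2\log|M_n|\ge I(A_{\Pi_n}^n:R^n)_{\rho_{\Pi_n}}-I(E_n:R^n)_{\tau_{n,U_n}} .
\]
Dividing by $n$, taking $\liminf$, and using \eqref{eq:IID_decoupling_converse_projected_mutual_information_bound} reduces the claim to showing $\frac1n I(E_n:R^n)_{\tau_{n,U_n}}\to0$.

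That is the step I expect to be the main obstacle, because the hypothesis \eqref{eq:IID_decoupling_converse_original_error_assumption} controls relative entropy to $\pi^{E_n}\otimes(\rho^R)^{\otimes n}$, not to the product of the marginals. I would use the identity
\[
D\bigl(\tau^{E_nR^n}\,\big\|\,\pi^{E_n}\otimes\omega\bigr)
= I(E_n:R^n)_\tau + D(\tau^{E_n}\|\pi^{E_n}) + D(\tau^{R^n}\|\omega),
\]
valid for any state $\omega$ on $R^n$. Here $H(E_n)$ is finite, so $I(E_n:R^n)=D(\tau^{E_nR^n}\|\tau^{E_n}\otimes\tau^{R^n})$, and the identity follows from the relative-entropy chain rule in the form \eqref{eq:IID_decoupling_original_projected_error_decomposition}, first separating the $E_n$-marginal and then the $R^n$-marginal. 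Since all three terms on the right are nonnegative, $I(E_n:R^n)_\tau\le D(\tau^{E_nR^n}\|\pi^{E_n}\otimes(\rho^R)^{\otimes n})\to0$. This is in fact stronger than the $o(n)$ needed.

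The point needing care is the validity of the chain rule when $\mathcal H^{R^n}$ is infinite-dimensional and $H(R^n)$ may be infinite. If this is a concern, I would avoid the issue by applying data processing directly. Monotonicity under restriction shows that the marginal terms are finite, and the mutual-information term is recovered as the relative entropy with $E_n$ finite-dimensional, in the same style as the finite-$A$ chain rule \eqref{eq:relative_entropy_chain_rule_finite_A} with $E_n$ playing the role of $A$. Combining the pieces gives $\liminf_n\frac1n\log|M_n|\ge\frac12 I(A:R)_\rho$.
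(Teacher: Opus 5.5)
Your proposal is correct and follows the paper's proof essentially step by step. The rank bound comes from Proposition~\ref{prop:IID_decoupling_high_probability_projection_converse}. The discarded-rate bound combines the invariance of $I(E_nM_n:R^n)$ under $V_nU_n$, the chain rule with $I(M_n:R^n|E_n)\le 2\log|M_n|$, and the estimate $I(E_n:R^n)_{\tau_{n,U_n}}\le D(\tau_{n,U_n}^{E_nR^n}\|\pi^{E_n}\otimes(\rho^R)^{\otimes n})$. Your three-term identity is exactly what the paper obtains in two steps: it first splits off $D(\rho_{\Pi_n}^{R^n}\|(\rho^R)^{\otimes n})$ via \eqref{eq:IID_decoupling_original_projected_error_decomposition}, and then splits off $D(\tau_{n,U_n}^{E_n}\|\pi^{E_n})$. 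Your fallback justification is therefore not needed beyond what the paper already uses.
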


\begin{proof}
The projection-rank bound
\eqref{eq:IID_decoupling_converse_projection_rank_rate}
follows from
Proposition~\ref{prop:IID_decoupling_high_probability_projection_converse},
specifically from
\eqref{eq:IID_decoupling_converse_projection_rank_bound_lemma}.

It remains to prove
\eqref{eq:IID_decoupling_converse_discarded_rate}.
Define the dilated output state by
\begin{align}
&\tau_{n,U_n}^{E_nM_nR^n}
\nonumber\\
&\coloneqq
\left(
V_n
U_n
\otimes
I^{R^n}
\right)
\rho_{\Pi_n}^{A_{\Pi_n}^nR^n}
\left(
U_n^\dagger
V_n^\dagger
\otimes
I^{R^n}
\right).
\label{eq:IID_decoupling_converse_dilated_output}
\end{align}
By
\eqref{eq:IID_decoupling_partial_trace_channel}
and
\eqref{eq:IID_decoupling_output_state},
\begin{align}
\tau_{n,U_n}^{E_nR^n}
=
\operatorname{Tr}_{M_n}
\left[
\tau_{n,U_n}^{E_nM_nR^n}
\right].
\label{eq:IID_decoupling_converse_dilated_output_marginal}
\end{align}
Invariance of mutual information under local unitaries and unitary
isomorphisms gives
\begin{align}
I
\left(
E_nM_n:R^n
\right)_{\tau_{n,U_n}}
=
I
\left(
A_{\Pi_n}^n:R^n
\right)_{\rho_{\Pi_n}}.
\label{eq:IID_decoupling_converse_dilated_mutual_information_identity}
\end{align}

By
\eqref{eq:IID_decoupling_output_reference},
\begin{align}
\tau_{n,U_n}^{R^n}
=
\rho_{\Pi_n}^{R^n}.
\label{eq:IID_decoupling_converse_output_reference_identity}
\end{align}
The relative-entropy chain-rule identity
\eqref{eq:IID_decoupling_original_projected_error_decomposition}
therefore gives
\begin{align}
&D
\left(
\tau_{n,U_n}^{E_nR^n}
\middle\|
\pi^{E_n}
\otimes
\left(
\rho^R
\right)^{\otimes n}
\right)
\nonumber\\
&=
D
\left(
\tau_{n,U_n}^{E_nR^n}
\middle\|
\pi^{E_n}
\otimes
\rho_{\Pi_n}^{R^n}
\right)
+
D
\left(
\rho_{\Pi_n}^{R^n}
\middle\|
\left(
\rho^R
\right)^{\otimes n}
\right).
\label{eq:IID_decoupling_converse_original_error_decomposition}
\end{align}
Both terms on the right-hand side of
\eqref{eq:IID_decoupling_converse_original_error_decomposition}
are nonnegative.
Consequently,
\begin{align}
&D
\left(
\tau_{n,U_n}^{E_nR^n}
\middle\|
\pi^{E_n}
\otimes
\rho_{\Pi_n}^{R^n}
\right)
\nonumber\\
&\le
D
\left(
\tau_{n,U_n}^{E_nR^n}
\middle\|
\pi^{E_n}
\otimes
\left(
\rho^R
\right)^{\otimes n}
\right).
\label{eq:IID_decoupling_converse_projected_error_bounded_by_original_error}
\end{align}

Using
\eqref{eq:IID_decoupling_converse_output_reference_identity},
the product-reference decomposition of relative entropy gives
\begin{align}
&D
\left(
\tau_{n,U_n}^{E_nR^n}
\middle\|
\pi^{E_n}
\otimes
\rho_{\Pi_n}^{R^n}
\right)
\nonumber\\
&=
D
\left(
\tau_{n,U_n}^{E_nR^n}
\middle\|
\tau_{n,U_n}^{E_n}
\otimes
\rho_{\Pi_n}^{R^n}
\right)
+
D
\left(
\tau_{n,U_n}^{E_n}
\middle\|
\pi^{E_n}
\right)
\nonumber\\
&=
I
\left(
E_n:R^n
\right)_{\tau_{n,U_n}}
+
D
\left(
\tau_{n,U_n}^{E_n}
\middle\|
\pi^{E_n}
\right).
\label{eq:IID_decoupling_converse_projected_error_mutual_information_decomposition}
\end{align}
Nonnegativity of relative entropy, together with
\eqref{eq:IID_decoupling_converse_projected_error_bounded_by_original_error}
and
\eqref{eq:IID_decoupling_converse_projected_error_mutual_information_decomposition},
therefore yields
\begin{align}
I
\left(
E_n:R^n
\right)_{\tau_{n,U_n}}
\le
D
\left(
\tau_{n,U_n}^{E_nR^n}
\middle\|
\pi^{E_n}
\otimes
\left(
\rho^R
\right)^{\otimes n}
\right).
\label{eq:IID_decoupling_converse_output_mutual_information_bound}
\end{align}

The mutual-information chain rule
\eqref{eq:mutual_information_chain_rule}
gives
\begin{align}
I
\left(
E_nM_n:R^n
\right)_{\tau_{n,U_n}}
&=
I
\left(
E_n:R^n
\right)_{\tau_{n,U_n}}
\nonumber\\
&\quad+
I
\left(
M_n:R^n
\middle|
E_n
\right)_{\tau_{n,U_n}}.
\label{eq:IID_decoupling_converse_output_mutual_information_chain_rule}
\end{align}

Since \(M_n\) is finite-dimensional,
\eqref{eq:conditional_entropy_bounds_finite_marginal}
gives
\begin{align}
-\log
\left|
M_n
\right|
&\le
H
\left(
M_n
\middle|
E_n
\right)_{\tau_{n,U_n}}
\le
\log
\left|
M_n
\right|,
\label{eq:IID_decoupling_converse_first_conditional_entropy_dimension_bound}
\\
-\log
\left|
M_n
\right|
&\le
H
\left(
M_n
\middle|
E_nR^n
\right)_{\tau_{n,U_n}}
\le
\log
\left|
M_n
\right|.
\label{eq:IID_decoupling_converse_second_conditional_entropy_dimension_bound}
\end{align}
Consequently,
\begin{align}
I
\left(
M_n:R^n
\middle|
E_n
\right)_{\tau_{n,U_n}}
&=
H
\left(
M_n
\middle|
E_n
\right)_{\tau_{n,U_n}}
-
H
\left(
M_n
\middle|
E_nR^n
\right)_{\tau_{n,U_n}}
\nonumber\\
&\le
2
\log
\left|
M_n
\right|.
\label{eq:IID_decoupling_converse_conditional_mutual_information_dimension_bound}
\end{align}

Combining
\eqref{eq:IID_decoupling_converse_dilated_mutual_information_identity},
\eqref{eq:IID_decoupling_converse_output_mutual_information_bound},
\eqref{eq:IID_decoupling_converse_output_mutual_information_chain_rule},
and
\eqref{eq:IID_decoupling_converse_conditional_mutual_information_dimension_bound}
gives
\begin{align}
I
\left(
A_{\Pi_n}^n:R^n
\right)_{\rho_{\Pi_n}}
&\le
D
\left(
\tau_{n,U_n}^{E_nR^n}
\middle\|
\pi^{E_n}
\otimes
\left(
\rho^R
\right)^{\otimes n}
\right)
\nonumber\\
&\quad+
2
\log
\left|
M_n
\right|.
\label{eq:IID_decoupling_converse_finite_discarded_rate_bound}
\end{align}
Equivalently,
\begin{align}
\frac{1}{n}
\log
\left|
M_n
\right|
&\ge
\frac{1}{2n}
I
\left(
A_{\Pi_n}^n:R^n
\right)_{\rho_{\Pi_n}}
\nonumber\\
&\quad-
\frac{1}{2n}
D
\left(
\tau_{n,U_n}^{E_nR^n}
\middle\|
\pi^{E_n}
\otimes
\left(
\rho^R
\right)^{\otimes n}
\right).
\label{eq:IID_decoupling_converse_normalized_finite_discarded_rate_bound}
\end{align}

Taking the lower limit in
\eqref{eq:IID_decoupling_converse_normalized_finite_discarded_rate_bound},
and using
\eqref{eq:IID_decoupling_converse_projected_mutual_information_bound}
from
Proposition~\ref{prop:IID_decoupling_high_probability_projection_converse}
together with
\eqref{eq:IID_decoupling_converse_original_error_assumption},
gives
\begin{align}
\liminf_{n\to\infty}
\frac{1}{n}
\log
\left|
M_n
\right|
\ge
\frac{1}{2}
I(A:R)_\rho.
\end{align}
This is
\eqref{eq:IID_decoupling_converse_discarded_rate}.
\end{proof}

\section{Infinite-dimensional IID quantum state merging}
\label{sec:infinite_dimensional_IID_quantum_state_merging}

In this section, we analyze IID quantum state merging for normalized
pure states on separable and possibly infinite-dimensional Hilbert
spaces.
In Subsection~\ref{subsec:formulation_infinite_dimensional_IID_state_merging},
we define the task of infinite-dimensional IID quantum state merging.
In
Subsection~\ref{subsec:achievability_infinite_dimensional_IID_quantum_state_merging},
we apply the IID partial-trace decoupling construction from
Section~\ref{sec:infinite_dimensional_IID_decoupling}.
The finite-rank projections, finite-dimensional factorizations, and
unitaries are chosen jointly by the direct decoupling theorem.
The unitary randomization acts only on the selected common
components.

\subsection{Formulation of infinite-dimensional IID quantum state merging}
\label{subsec:formulation_infinite_dimensional_IID_state_merging}

We formulate quantum state merging for separable and possibly
infinite-dimensional systems.
Whereas quantum state merging was originally formulated as a task for
entanglement-assisted local operations and classical communication (LOCC)
\cite{horodecki2005partial,horodecki2007quantum}, we consider a fully
quantum formulation with one-way quantum communication from Alice to
Bob.
In the finite-dimensional setting, this formulation is also known as
the fully quantum Slepian--Wolf protocol
\cite{abeyesinghe2009mother}.
In this setting, no preshared entanglement and no communication other than the
transmission of a finite-dimensional quantum message system are
available.
In particular, no separate classical message may be transmitted.
Any classical label, measurement outcome, or flag that must be
communicated in a particular implementation must therefore be encoded
into mutually orthogonal states of the quantum message system and is
included in its dimension.
Alice and Bob may use arbitrary local ancillary systems initialized
in fixed pure states, local randomness, and local systems that are
discarded after their respective operations.
These local resources are incorporated into the encoding and decoding
channels below.

Let
\begin{align}
\ket{\psi}^{ABR}
\in
\mathcal H^A
\otimes
\mathcal H^B
\otimes
\mathcal H^R
\label{eq:IID_state_merging_source_state}
\end{align}
be a normalized pure state, where
\(\mathcal H^A\),
\(\mathcal H^B\), and
\(\mathcal H^R\)
are separable and possibly infinite-dimensional.
Define
\begin{align}
\psi^{ABR}
\coloneqq
\ket{\psi}\bra{\psi}^{ABR}.
\label{eq:IID_state_merging_source_density_operator}
\end{align}
Alice initially holds \(A\), Bob initially holds \(B\), and \(R\) is
an inaccessible reference system.

Let
\begin{align}
\mathcal H^{A'}
\simeq
\mathcal H^A
\label{eq:IID_state_merging_transferred_system_identification}
\end{align}
with a fixed identification.
Using this identification, let
\(\ket{\psi}^{A'BR}\)
denote the vector corresponding to
\(\ket{\psi}^{ABR}\), and define
\begin{align}
\psi^{A'BR}
\coloneqq
\ket{\psi}\bra{\psi}^{A'BR}.
\label{eq:IID_state_merging_transferred_source_state}
\end{align}
Throughout this section, we assume that
\begin{align}
H(A)_\psi
<
\infty,
\label{eq:IID_state_merging_finite_entropy_assumption}
\end{align}
which may be viewed as an energy constraint, as discussed in
\eqref{eq:finite_entropy_finite_energy_condition}.

For every
\(n\in\{1,2,\ldots\}\),
let
\(E_n\)
and
\(E_n'\)
be finite-dimensional systems held by Alice and Bob, respectively, at
the end of the protocol.
Fix an identification
\begin{align}
\mathcal H^{E_n'}
\simeq
\mathcal H^{E_n},
\end{align}
and let
\begin{align}
\Phi_n^{E_nE_n'}
\end{align}
be the normalized maximally entangled state of Schmidt rank
\begin{align}
\left|
E_n
\right|
=
\left|
E_n'
\right|.
\label{eq:IID_state_merging_output_entanglement_rank}
\end{align}
The system \(E_nE_n'\) represents entanglement generated by the
protocol.
No entangled state shared between Alice and Bob is supplied as an
input resource.
The ideal target state is
\begin{align}
\Phi_n^{E_nE_n'}
\otimes
\left(
\psi^{A'BR}
\right)^{\otimes n}.
\label{eq:IID_state_merging_target_state}
\end{align}
Thus, the protocol is required both to transfer Alice's source systems
\(A^n\) coherently to the systems \(A'^n\) held by Bob and to generate
a maximally entangled state between the residual systems \(E_n\) held
by Alice and \(E_n'\) held by Bob.

An \(n\)-copy protocol for quantum state merging consists of a
finite-dimensional quantum message system \(M_n\), an encoding channel
implemented by Alice, transmission of \(M_n\) from Alice to Bob, and a
decoding channel implemented by Bob.
The encoding channel is a completely positive and trace-preserving
linear map
\begin{align}
\mathcal E_n^{
A^n
\to
E_nM_n
}
:
\operatorname{T}
\left(
\mathcal H^{A^n}
\right)
\longrightarrow
\operatorname{T}
\left(
\mathcal H^{E_n}
\otimes
\mathcal H^{M_n}
\right),
\label{eq:IID_state_merging_encoding_channel}
\end{align}
and the decoding channel is a completely positive and trace-preserving
linear map
\begin{align}
\mathcal D_n^{
M_nB^n
\to
A'^nB^nE_n'
}
:
\operatorname{T}
\left(
\mathcal H^{M_n}
\otimes
\mathcal H^{B^n}
\right)
\longrightarrow
\operatorname{T}
\left(
\mathcal H^{A'^n}
\otimes
\mathcal H^{B^n}
\otimes
\mathcal H^{E_n'}
\right).
\label{eq:IID_state_merging_decoding_channel}
\end{align}
No communication other than the transmission of \(M_n\) from Alice
to Bob occurs between
\(\mathcal E_n\)
and
\(\mathcal D_n\).

The adjoints of the channels in
\eqref{eq:IID_state_merging_encoding_channel}
and
\eqref{eq:IID_state_merging_decoding_channel}
are normal unital completely positive maps on the corresponding
bounded-operator spaces.
By applying the infinite-dimensional Stinespring dilation theorem to
these adjoint maps
\cite[Theorem~34.7]{conway2025course},
one can choose separable local systems \(L_n\) and \(L_n'\) to be discarded by Alice and Bob, respectively, after the corresponding isometric implementations.
There exist isometries
\begin{align}
V_{\mathcal E,n}^{
A^n
\to
E_nM_nL_n
}
:
\mathcal H^{A^n}
\longrightarrow
\mathcal H^{E_n}
\otimes
\mathcal H^{M_n}
\otimes
\mathcal H^{L_n}
\label{eq:IID_state_merging_encoding_isometry}
\end{align}
and
\begin{align}
V_{\mathcal D,n}^{
M_nB^n
\to
A'^nB^nE_n'L_n'
}
:
\mathcal H^{M_n}
\otimes
\mathcal H^{B^n}
\longrightarrow
\mathcal H^{A'^n}
\otimes
\mathcal H^{B^n}
\otimes
\mathcal H^{E_n'}
\otimes
\mathcal H^{L_n'}
\label{eq:IID_state_merging_decoding_isometry}
\end{align}
satisfying
\begin{align}
V_{\mathcal E,n}^\dagger
V_{\mathcal E,n}
&=
I^{A^n},
&
V_{\mathcal D,n}^\dagger
V_{\mathcal D,n}
&=
I^{M_nB^n},
\label{eq:IID_state_merging_isometry_conditions}
\end{align}
such that the encoding and decoding channels are recovered by
discarding \(L_n\) and \(L_n'\), respectively:
\begin{align}
\mathcal E_n(X)
&=
\operatorname{Tr}_{L_n}
\left[
V_{\mathcal E,n}
X
V_{\mathcal E,n}^\dagger
\right],
\label{eq:IID_state_merging_encoding_isometric_implementation}
\\
\mathcal D_n(Y)
&=
\operatorname{Tr}_{L_n'}
\left[
V_{\mathcal D,n}
Y
V_{\mathcal D,n}^\dagger
\right].
\label{eq:IID_state_merging_decoding_isometric_implementation}
\end{align}
The systems \(L_n\) and \(L_n'\) are local environments.
They are neither communicated nor retained as output systems and are not included in the communication or entanglement rates.

The final state produced by the protocol is
\begin{align}
&\zeta_{\mathcal E_n,\mathcal D_n}^{
E_nA'^nB^nE_n'R^n
}
\nonumber\\
&\coloneqq
\left(
\id^{E_nR^n}
\otimes
\mathcal D_n^{
M_nB^n
\to
A'^nB^nE_n'
}
\right)
\Biggl[
\left(
\mathcal E_n^{
A^n
\to
E_nM_n
}
\otimes
\id^{B^nR^n}
\right)
\left[
\left(
\psi^{ABR}
\right)^{\otimes n}
\right]
\Biggr],
\label{eq:IID_state_merging_final_state}
\end{align}
where canonical reorderings of tensor factors are understood.

The error in quantum state merging is quantified by the purified distance
\begin{align}
P
\Biggl(
\zeta_{\mathcal E_n,\mathcal D_n}^{
E_nA'^nB^nE_n'R^n
},
\Phi_n^{E_nE_n'}
\otimes
\left(
\psi^{A'BR}
\right)^{\otimes n}
\Biggr).
\label{eq:IID_state_merging_error}
\end{align}
A sequence of protocols achieves IID quantum state merging if
\begin{align}
&\lim_{n\to\infty}
P
\Biggl(
\zeta_{\mathcal E_n,\mathcal D_n}^{
E_nA'^nB^nE_n'R^n
},
\Phi_n^{E_nE_n'}
\otimes
\left(
\psi^{A'BR}
\right)^{\otimes n}
\Biggr)
=
0.
\label{eq:IID_state_merging_convergence_criterion}
\end{align}

The quantum communication cost of the \(n\)-copy protocol is defined
by
\begin{align}
q_n
\coloneqq
\log
\left|
M_n
\right|.
\label{eq:IID_state_merging_quantum_communication_cost}
\end{align}
Since no entanglement is initially supplied, the amount of
entanglement generated by the protocol is defined by
\begin{align}
e_n
\coloneqq
\log
\left|
E_n
\right|
=
\log
\left|
E_n'
\right|.
\label{eq:IID_state_merging_net_entanglement_cost}
\end{align}
The total cost is defined as the quantum communication cost minus the
generated-entanglement yield:
\begin{align}
q_n-e_n
&\coloneqq
\log
\left|
M_n
\right|
-
\log
\left|
E_n
\right|.
\label{eq:IID_state_merging_total_cost}
\end{align}
Although the quantum communication cost \(q_n\) is nonnegative, the total cost \(q_n-e_n\) can be negative when the amount of entanglement generated by the protocol exceeds its quantum communication cost. Operationally, \(q_n-e_n\) represents the net entanglement consumption when the transmission of a quantum system of dimension \(\exp[q_n]\) is implemented via quantum teleportation: teleportation consumes a maximally entangled state of Schmidt rank \(\exp[q_n]\), whereas the protocol generates one of Schmidt rank \(\exp[e_n]\).

A quantum-communication and total-cost rate pair
\begin{align}
\left(
q,
q-e
\right)
\in
[0,\infty)
\times
\mathbb R
\end{align}
is achievable if there exists a sequence of protocols satisfying
\eqref{eq:IID_state_merging_convergence_criterion} and
\begin{align}
\liminf_{n\to\infty}
\max
\left\{
\frac{q_n}{n}-q,
\frac{q_n-e_n}{n}-(q-e)
\right\}
\le
0.
\label{eq:IID_state_merging_achievable_joint_rate}
\end{align}
Note that this definition implies the existence of a subsequence of protocols along which both rate constraints are satisfied asymptotically.
We define the achievable rate region for IID quantum state merging by
\begin{align}
\mathcal R(\psi)
\coloneqq
\left\{
\left(
q,
q-e
\right)
\in
[0,\infty)
\times
\mathbb R:
\left(
q,
q-e
\right)
\text{ is achievable}
\right\}.
\label{eq:IID_state_merging_achievable_region_definition}
\end{align}

\subsection{Bounds on infinite-dimensional IID quantum state merging}
\label{subsec:achievability_infinite_dimensional_IID_quantum_state_merging}

We construct an infinite-dimensional IID quantum state merging
protocol from the infinite-dimensional IID partial-trace decoupling
protocol established in
Section~\ref{sec:infinite_dimensional_IID_decoupling}.
As in the finite-dimensional state merging in
Refs.~\cite{horodecki2005partial,horodecki2007quantum,abeyesinghe2009mother},
the retained output system \(E_n\) of the decoupling protocol becomes
Alice's share of the maximally entangled state generated by the state
merging protocol, whereas the discarded output system \(M_n\) is
transmitted to Bob as the quantum message.
Once \(E_n\) is decoupled from the reference system, Uhlmann's
theorem~\cite{UHLMANN1976273}
provides a decoding isometry on Bob's side that coherently reconstructs
Alice's source systems and Bob's share of the generated entanglement.
No additional classical communication is introduced in this
conversion.
In the construction underlying
Theorem~\ref{thm:IID_decoupling_original_direct},
the common--rare pattern sector and every finite-dimensional factor
not retained in \(E_n\) are contained in the discarded system \(M_n\).
These systems are therefore transmitted as part of the quantum
message but do not affect the asymptotically achievable rate.

Define the \(AR\)-marginal of the source state by
\begin{align}
\psi^{AR}
\coloneqq
\operatorname{Tr}_B
\left[
\psi^{ABR}
\right].
\label{eq:IID_state_merging_AR_marginal}
\end{align}
Its \(A\)-marginal coincides with that of
\(\psi^{ABR}\), and hence the finite-entropy assumption
\eqref{eq:IID_state_merging_finite_entropy_assumption}
gives
\begin{align}
H(A)_{\psi^{AR}}
=
H(A)_\psi
<
\infty.
\label{eq:IID_state_merging_AR_finite_entropy}
\end{align}
Applying
Theorem~\ref{thm:IID_decoupling_original_direct}
to \(\psi^{AR}\), we obtain nonzero finite-rank projections
\begin{align}
\left\{
\Pi_n^{A^n}
\right\}_{n=1}^{\infty},
\end{align}
finite-dimensional systems
\begin{align}
\left\{
E_n
\right\}_{n=1}^{\infty},
\qquad
\left\{
M_n
\right\}_{n=1}^{\infty},
\end{align}
unitary isomorphisms
\begin{align}
V_n^{A_{\Pi_n}^n\to E_nM_n}
:
\mathcal H^{A_{\Pi_n}^n}
\longrightarrow
\mathcal H^{E_n}
\otimes
\mathcal H^{M_n},
\label{eq:IID_state_merging_decoupling_factorization}
\end{align}
and unitaries
\begin{align}
U_n
\in
\operatorname{U}
\left(
\mathcal H^{A_{\Pi_n}^n}
\right).
\label{eq:IID_state_merging_decoupling_unitary}
\end{align}

For every
\(n\in\{1,2,\ldots\}\),
define
\begin{align}
\Pi_n^{\perp A^n}
&\coloneqq
I^{A^n}
-
\Pi_n^{A^n},
\label{eq:IID_state_merging_projection_complement}
\\
\delta_{\Pi_n}
&\coloneqq
\operatorname{Tr}
\left[
\Pi_n^{\perp A^n}
\left(
\psi^A
\right)^{\otimes n}
\right],
\label{eq:IID_state_merging_projection_error}
\\
\mathcal H^{A_{\Pi_n}^n}
&\coloneqq
\Pi_n^{A^n}
\mathcal H^{A^n},
\label{eq:IID_state_merging_projected_space}
\\
\left|
A_{\Pi_n}^n
\right|
&\coloneqq
\operatorname{rank}
\Pi_n^{A^n},
\label{eq:IID_state_merging_projected_dimension}
\\
\mathcal H^{A_{\Pi_n^\perp}^n}
&\coloneqq
\Pi_n^{\perp A^n}
\mathcal H^{A^n}.
\label{eq:IID_state_merging_complementary_space}
\end{align}
The projections supplied by
Theorem~\ref{thm:IID_decoupling_original_direct}
satisfy
\begin{align}
\delta_{\Pi_n}
<
1
\label{eq:IID_state_merging_projection_error_strict_bound}
\end{align}
for every \(n\).

Define the normalized projected pure state by
\begin{align}
\ket{\psi_{\Pi_n}}^{
A_{\Pi_n}^nB^nR^n
}
\coloneqq
\frac{
\left(
\Pi_n^{A^n}
\otimes
I^{B^nR^n}
\right)
\ket{\psi}^{\otimes n}
}{
\sqrt{
1-\delta_{\Pi_n}
}
},
\label{eq:IID_state_merging_projected_pure_state}
\end{align}
and its density operator by
\begin{align}
\psi_{\Pi_n}^{
A_{\Pi_n}^nB^nR^n
}
\coloneqq
\ket{\psi_{\Pi_n}}
\bra{\psi_{\Pi_n}}^{
A_{\Pi_n}^nB^nR^n
}.
\label{eq:IID_state_merging_projected_pure_density_operator}
\end{align}
Its \(A_{\Pi_n}^nR^n\)- and \(R^n\)-marginals are
\begin{align}
\psi_{\Pi_n}^{A_{\Pi_n}^nR^n}
&\coloneqq
\operatorname{Tr}_{B^n}
\left[
\psi_{\Pi_n}^{
A_{\Pi_n}^nB^nR^n
}
\right]
\nonumber\\
&=
\frac{
\left(
\Pi_n^{A^n}
\otimes
I^{R^n}
\right)
\left(
\psi^{AR}
\right)^{\otimes n}
\left(
\Pi_n^{A^n}
\otimes
I^{R^n}
\right)
}{
1-\delta_{\Pi_n}
},
\label{eq:IID_state_merging_projected_AR_state}
\\
\psi_{\Pi_n}^{R^n}
&\coloneqq
\operatorname{Tr}_{A_{\Pi_n}^n}
\left[
\psi_{\Pi_n}^{A_{\Pi_n}^nR^n}
\right].
\label{eq:IID_state_merging_projected_reference_state}
\end{align}

The partial-trace channel associated with the unitary isomorphism in
\eqref{eq:IID_state_merging_decoupling_factorization}
is
\begin{align}
\mathcal T_n^{A_{\Pi_n}^n\to E_n}
\left(
X
\right)
\coloneqq
\operatorname{Tr}_{M_n}
\left[
V_n
X
V_n^\dagger
\right].
\label{eq:IID_state_merging_decoupling_partial_trace_channel}
\end{align}
The corresponding decoupling output is
\begin{align}
&\tau_{n,U_n}^{E_nR^n}
\nonumber\\
&\coloneqq
\left(
\mathcal T_n^{A_{\Pi_n}^n\to E_n}
\otimes
\id^{R^n}
\right)
\left[
\left(
U_n
\otimes
I^{R^n}
\right)
\psi_{\Pi_n}^{A_{\Pi_n}^nR^n}
\left(
U_n^\dagger
\otimes
I^{R^n}
\right)
\right].
\label{eq:IID_state_merging_decoupling_output_state}
\end{align}
Since
\(\mathcal T_n^{A_{\Pi_n}^n\to E_n}\)
is trace-preserving,
\begin{align}
\tau_{n,U_n}^{R^n}
=
\psi_{\Pi_n}^{R^n}.
\label{eq:IID_state_merging_decoupling_output_reference}
\end{align}

Theorem~\ref{thm:IID_decoupling_original_direct} gives
\begin{align}
\lim_{n\to\infty}
\delta_{\Pi_n}
&=
0,
\label{eq:IID_state_merging_decoupling_projection_limit}
\\
\lim_{n\to\infty}
D
\left(
\tau_{n,U_n}^{E_nR^n}
\middle\|
\pi^{E_n}
\otimes
\left(
\psi^R
\right)^{\otimes n}
\right)
&=
0,
\label{eq:IID_state_merging_decoupling_error_limit}
\\
\lim_{n\to\infty}
\frac{1}{n}
\log
\left|
A_{\Pi_n}^n
\right|
&=
H(A)_\psi,
\label{eq:IID_state_merging_decoupling_projection_rank_limit}
\\
\lim_{n\to\infty}
\frac{1}{n}
\log
\left|
M_n
\right|
&=
\frac{1}{2}
I(A:R)_\psi.
\label{eq:IID_state_merging_decoupling_message_rate}
\end{align}

For every \(n\), choose a finite-dimensional system \(E_n'\) with a
fixed identification
\begin{align}
\mathcal H^{E_n'}
\simeq
\mathcal H^{E_n},
\label{eq:IID_state_merging_achievability_output_copy}
\end{align}
and let
\begin{align}
\Phi_n^{E_nE_n'}
\label{eq:IID_state_merging_achievability_entangled_state}
\end{align}
be the normalized maximally entangled state of Schmidt rank
\begin{align}
\left|
E_n
\right|
=
\left|
E_n'
\right|,
\end{align}
consistently with
\eqref{eq:IID_state_merging_output_entanglement_rank}.

We next extend the projected decoupling operation to a
trace-preserving encoding channel on the entire, possibly
infinite-dimensional input space
\(\mathcal H^{A^n}\).
Define Alice's local environment by
\begin{align}
\mathcal H^{L_n}
\coloneqq
\mathbb C
\ket{0}^{L_n}
\oplus
\mathcal H^{A_{\Pi_n^\perp}^n},
\label{eq:IID_state_merging_Alice_discarded_system}
\end{align}
where
\(\ket{0}^{L_n}\)
is a unit vector spanning the first summand.
Since
\(\mathcal H^{A^n}\)
is separable,
\(\mathcal H^{A_{\Pi_n^\perp}^n}\)
and
\(\mathcal H^{L_n}\)
are separable.
Let
\begin{align}
V_{\Pi_n^\perp,n}^{
A_{\Pi_n^\perp}^n
\to
L_n
}
:
\mathcal H^{A_{\Pi_n^\perp}^n}
\longrightarrow
\mathcal H^{L_n}
\label{eq:IID_state_merging_failure_isometry}
\end{align}
be the canonical isometry into the second orthogonal summand.
Choose unit vectors
\begin{align}
\ket{0}^{E_n}
&\in
\mathcal H^{E_n},
&
\ket{0}^{M_n}
&\in
\mathcal H^{M_n}.
\label{eq:IID_state_merging_failure_output_vectors}
\end{align}
Identifying
\(V_nU_n\Pi_n^{A^n}\)
with its zero extension from
\(\mathcal H^{A_{\Pi_n}^n}\)
to
\(\mathcal H^{A^n}\),
define
\begin{align}
V_{\mathcal E,n}^{A^n\to E_nM_nL_n}
&\coloneqq
\left(
V_n
U_n
\Pi_n^{A^n}
\right)
\otimes
\ket{0}^{L_n}
\nonumber\\
&\quad+
\ket{0}^{E_n}
\otimes
\ket{0}^{M_n}
\otimes
V_{\Pi_n^\perp,n}
\Pi_n^{\perp A^n}.
\label{eq:IID_state_merging_direct_encoding_isometry}
\end{align}
The two terms in
\eqref{eq:IID_state_merging_direct_encoding_isometry}
have orthogonal ranges in
\(\mathcal H^{L_n}\).
Moreover,
\(V_n\),
\(U_n\), and
\(V_{\Pi_n^\perp,n}\)
are isometric on their respective domains.
Consequently,
\begin{align}
V_{\mathcal E,n}^\dagger
V_{\mathcal E,n}
&=
\Pi_n^{A^n}
+
\Pi_n^{\perp A^n}
\nonumber\\
&=
I^{A^n}.
\label{eq:IID_state_merging_direct_encoding_isometric}
\end{align}
Thus,
\eqref{eq:IID_state_merging_direct_encoding_isometry}
defines an isometry on the entire input space.
Define Alice's encoding channel by
\begin{align}
\mathcal E_n^{A^n\to E_nM_n}
\left(
X
\right)
\coloneqq
\operatorname{Tr}_{L_n}
\left[
V_{\mathcal E,n}
X
V_{\mathcal E,n}^\dagger
\right],
\label{eq:IID_state_merging_direct_encoding_channel}
\end{align}
where the complementary projection branch is stored entirely in Alice's
discarded local environment \(L_n\); therefore, it requires neither an additional quantum message nor a separate classical success or failure flag.

We next construct Bob's decoding channel.
Under the fixed identification
\(\mathcal H^{A'}\simeq\mathcal H^A\)
in
\eqref{eq:IID_state_merging_transferred_system_identification},
let
\begin{align}
\mathcal H^{A_{\Pi_n}^{n\prime}}
\simeq
\mathcal H^{A_{\Pi_n}^n}
\label{eq:IID_state_merging_projected_transferred_space}
\end{align}
be the corresponding finite-dimensional subspace of
\(\mathcal H^{A'^n}\).
Let
\begin{align}
\ket{\psi_{\Pi_n}}^{
A_{\Pi_n}^{n\prime}B^nR^n
}
\label{eq:IID_state_merging_projected_transferred_vector}
\end{align}
denote the relabeled copy of the vector in
\eqref{eq:IID_state_merging_projected_pure_state}, and define
\begin{align}
\psi_{\Pi_n}^{
A_{\Pi_n}^{n\prime}B^nR^n
}
\coloneqq
\ket{\psi_{\Pi_n}}
\bra{\psi_{\Pi_n}}^{
A_{\Pi_n}^{n\prime}B^nR^n
}.
\label{eq:IID_state_merging_projected_transferred_state}
\end{align}

The vector
\begin{align}
\left(
V_n
U_n
\otimes
I^{B^nR^n}
\right)
\ket{\psi_{\Pi_n}}^{
A_{\Pi_n}^nB^nR^n
}
\label{eq:IID_state_merging_projected_post_encoding_vector}
\end{align}
is a purification of
\(\tau_{n,U_n}^{E_nR^n}\)
with purifying system \(M_nB^n\).
On the other hand,
\begin{align}
\ket{\Phi_n}^{E_nE_n'}
\otimes
\ket{\psi_{\Pi_n}}^{
A_{\Pi_n}^{n\prime}B^nR^n
}
\label{eq:IID_state_merging_projected_target_purification}
\end{align}
is a purification of
\begin{align}
\pi^{E_n}
\otimes
\psi_{\Pi_n}^{R^n}
\label{eq:IID_state_merging_projected_target_marginal}
\end{align}
with purifying system
\(E_n'A_{\Pi_n}^{n\prime}B^n\).
Therefore, for these purifications, Uhlmann's theorem for density operators on separable Hilbert
spaces~\cite{UHLMANN1976273}
applies.
After adjoining a separable local environment \(L_n'\) of sufficiently
large dimension, the partial isometry supplied by Uhlmann's theorem
can be extended to an isometry on the entire space
\(\mathcal H^{M_n}\otimes\mathcal H^{B^n}\).
Consequently, there exist a separable Hilbert space
\(\mathcal H^{L_n'}\),
a unit vector
\(\ket{0}^{L_n'}\), and an isometry
\begin{align}
V_{\mathcal D,n}^{M_nB^n\to A'^nB^nE_n'L_n'}
:
\mathcal H^{M_n}
\otimes
\mathcal H^{B^n}
\longrightarrow
\mathcal H^{A'^n}
\otimes
\mathcal H^{B^n}
\otimes
\mathcal H^{E_n'}
\otimes
\mathcal H^{L_n'}
\label{eq:IID_state_merging_direct_decoding_isometry}
\end{align}
satisfying
\begin{align}
V_{\mathcal D,n}^\dagger
V_{\mathcal D,n}
=
I^{M_nB^n},
\label{eq:IID_state_merging_direct_decoding_isometric}
\end{align}
such that
\begin{align}
&P
\Biggl(
\left(
I^{E_nR^n}
\otimes
V_{\mathcal D,n}
\right)
\Biggl[
\left(
V_n
U_n
\otimes
I^{B^nR^n}
\right)
\psi_{\Pi_n}^{
A_{\Pi_n}^nB^nR^n
}
\left(
U_n^\dagger
V_n^\dagger
\otimes
I^{B^nR^n}
\right)
\Biggr]
\left(
I^{E_nR^n}
\otimes
V_{\mathcal D,n}^\dagger
\right),
\nonumber\\
&\hspace{12mm}
\Phi_n^{E_nE_n'}
\otimes
\psi_{\Pi_n}^{
A_{\Pi_n}^{n\prime}B^nR^n
}
\otimes
\ket{0}\bra{0}^{L_n'}
\Biggr)
\nonumber\\
&=
P
\left(
\tau_{n,U_n}^{E_nR^n},
\pi^{E_n}
\otimes
\psi_{\Pi_n}^{R^n}
\right),
\label{eq:IID_state_merging_direct_Uhlmann_identity}
\end{align}
where tensor factors are understood up to canonical reorderings.
Define Bob's decoding channel by
\begin{align}
\mathcal D_n^{M_nB^n\to A'^nB^nE_n'}
\left(
Y
\right)
\coloneqq
\operatorname{Tr}_{L_n'}
\left[
V_{\mathcal D,n}
Y
V_{\mathcal D,n}^\dagger
\right].
\label{eq:IID_state_merging_direct_decoding_channel}
\end{align}

For comparison with the original IID source, define the output of the
protocol on the normalized projected source by
\begin{align}
&\zeta_{\mathcal E_n,\mathcal D_n,\Pi_n}^{
E_nA'^nB^nE_n'R^n
}
\nonumber\\
&\coloneqq
\left(
\id^{E_nR^n}
\otimes
\mathcal D_n^{M_nB^n\to A'^nB^nE_n'}
\right)
\Biggl[
\left(
\mathcal E_n^{A^n\to E_nM_n}
\otimes
\id^{B^nR^n}
\right)
\left[
\psi_{\Pi_n}^{A_{\Pi_n}^nB^nR^n}
\right]
\Biggr],
\label{eq:IID_state_merging_direct_projected_output_state}
\end{align}
where
\(\psi_{\Pi_n}^{A_{\Pi_n}^nB^nR^n}\)
is canonically embedded into
\begin{align}
\mathcal H^{A^n}
\otimes
\mathcal H^{B^n}
\otimes
\mathcal H^{R^n}.
\end{align}

For this protocol, the resulting asymptotic achievability statement is as follows.

\begin{theorem}[Infinite-dimensional IID quantum
state merging]
\label{thm:infinite_dimensional_IID_state_merging_direct}
Let
\(\mathcal H^A\),
\(\mathcal H^B\), and
\(\mathcal H^R\)
be separable and possibly infinite-dimensional Hilbert spaces, and let
\begin{align}
\ket{\psi}^{ABR}
\in
\mathcal H^A
\otimes
\mathcal H^B
\otimes
\mathcal H^R
\end{align}
be a normalized pure state satisfying
\begin{align}
H(A)_\psi
<
\infty.
\end{align}
Let
\(\psi^{ABR}\)
be defined in
\eqref{eq:IID_state_merging_source_density_operator},
let
\(\mathcal H^{A'}\simeq\mathcal H^A\)
be the fixed identification in
\eqref{eq:IID_state_merging_transferred_system_identification},
and let
\(\psi^{A'BR}\)
be defined in
\eqref{eq:IID_state_merging_transferred_source_state}.

Then there exist sequences of finite-dimensional systems
\begin{align}
\left\{
E_n
\right\}_{n=1}^{\infty},
\qquad
\left\{
E_n'
\right\}_{n=1}^{\infty},
\qquad
\left\{
M_n
\right\}_{n=1}^{\infty},
\end{align}
with fixed identifications
\begin{align}
\mathcal H^{E_n'}
\simeq
\mathcal H^{E_n},
\qquad
n\in\{1,2,\ldots\},
\end{align}
and sequences of completely positive and trace-preserving maps
\begin{align}
\left\{
\mathcal E_n^{A^n\to E_nM_n}
\right\}_{n=1}^{\infty},
\qquad
\left\{
\mathcal D_n^{M_nB^n\to A'^nB^nE_n'}
\right\}_{n=1}^{\infty},
\end{align}
of the forms specified in
\eqref{eq:IID_state_merging_encoding_channel}
and
\eqref{eq:IID_state_merging_decoding_channel},
respectively, such that, for the normalized maximally entangled states
\(\Phi_n^{E_nE_n'}\)
specified in
\eqref{eq:IID_state_merging_output_entanglement_rank},
the final states
\(\zeta_{\mathcal E_n,\mathcal D_n}\)
defined in
\eqref{eq:IID_state_merging_final_state},
and the costs
\(q_n\),
\(e_n\), and
\(q_n-e_n\)
defined in
\eqref{eq:IID_state_merging_quantum_communication_cost},
\eqref{eq:IID_state_merging_net_entanglement_cost}, and
\eqref{eq:IID_state_merging_total_cost},
respectively, one has
\begin{align}
\lim_{n\to\infty}
P
\Biggl(
\zeta_{\mathcal E_n,\mathcal D_n}^{
E_nA'^nB^nE_n'R^n
},
\Phi_n^{E_nE_n'}
\otimes
\left(
\psi^{A'BR}
\right)^{\otimes n}
\Biggr)
&=
0,
\label{eq:IID_state_merging_direct_error_limit}
\\
\lim_{n\to\infty}
\frac{q_n}{n}
&=
\frac{1}{2}
I(A:R)_\psi,
\label{eq:IID_state_merging_direct_Q_rate}
\\
\lim_{n\to\infty}
\frac{q_n-e_n}{n}
&=
H(A|B)_\psi.
\label{eq:IID_state_merging_direct_total_rate}
\end{align}
\end{theorem}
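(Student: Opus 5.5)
We use the protocol constructed just before the statement: Alice's encoder $\mathcal E_n$ from \eqref{eq:IID_state_merging_direct_encoding_channel}, Bob's decoder $\mathcal D_n$ from \eqref{eq:IID_state_merging_direct_decoding_channel}, and the systems $E_n,E_n',M_n$ supplied by Theorem~\ref{thm:IID_decoupling_original_direct} applied to $\psi^{AR}$. The rate claims then follow directly. By \eqref{eq:IID_state_merging_decoupling_message_rate}, $q_n/n=\frac1n\log|M_n|\to\frac12 I(A:R)_\psi$. Using $|A_{\Pi_n}^n|=|E_n||M_n|$ and \eqref{eq:IID_state_merging_decoupling_projection_rank_limit}, we get $e_n/n=\frac1n\log|A^n_{\Pi_n}|-q_n/n\to H(A)_\psi-\frac12 I(A:R)_\psi$. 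Hence
\[
\frac{q_n-e_n}{n}\to I(A:R)_\psi-H(A)_\psi=-H(A|R)_\psi=H(A|B)_\psi,
\]
where the last two equalities use \eqref{eq:I_AR_definition_finite_A} and the pure-state duality \eqref{eq:conditional_entropy_duality_pure_state_finite_entropy_A}.

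For the error bound, we first handle the projected source. On $\psi_{\Pi_n}$ the encoder acts as $(V_nU_n\otimes\cdot)\otimes\ket{0}^{L_n}$, since the complementary branch is absent. Discarding $L_n$ and $L_n'$ is a channel, so by monotonicity of fidelity and \eqref{eq:IID_state_merging_direct_Uhlmann_identity},
\[
P\bigl(\zeta_{\mathcal E_n,\mathcal D_n,\Pi_n},\Phi_n\otimes\psi_{\Pi_n}^{A'^nB^nR^n}\bigr)\le P\bigl(\tau^{E_nR^n}_{n,U_n},\pi^{E_n}\otimes\psi^{R^n}_{\Pi_n}\bigr).
\]
By \eqref{eq:decoupling_relative_entropy_controls_PD}, the square of the right-hand side is at most $D(\tau_{n,U_n}^{E_nR^n}\|\pi^{E_n}\otimes\psi_{\Pi_n}^{R^n})$. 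Using the chain-rule decomposition \eqref{eq:IID_decoupling_original_projected_error_decomposition} and nonnegativity of its second term, this is at most $D(\tau_{n,U_n}^{E_nR^n}\|\pi^{E_n}\otimes(\psi^R)^{\otimes n})$, which tends to zero by \eqref{eq:IID_state_merging_decoupling_error_limit}.

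Next we pass from projected to original states using the triangle inequality for purified distance. The pure states $\psi^{\otimes n}$ and $\psi_{\Pi_n}$ have fidelity $|\langle\psi^{\otimes n}|\psi_{\Pi_n}\rangle|^2=1-\delta_{\Pi_n}$, so $P=\sqrt{\delta_{\Pi_n}}$. This bound transfers in two places:
\begin{enumerate}
\item On the input side, $\mathrm{id}\otimes\mathcal D_n\circ(\mathcal E_n\otimes\mathrm{id})$ is a channel, so by monotonicity $P(\zeta_{\mathcal E_n,\mathcal D_n},\zeta_{\mathcal E_n,\mathcal D_n,\Pi_n})\le\sqrt{\delta_{\Pi_n}}$.
\item On the target side, tensoring with the fixed state $\Phi_n$ preserves fidelity, so $P(\Phi_n\otimes(\psi^{A'BR})^{\otimes n},\Phi_n\otimes\psi_{\Pi_n}^{A'^nB^nR^n})=\sqrt{\delta_{\Pi_n}}$.
\end{enumerate}
Summing the three contributions gives an error at most $2\sqrt{\delta_{\Pi_n}}+o(1)\to0$ by \eqref{eq:IID_state_merging_decoupling_projection_limit}.

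The main point to check is the infinite-dimensional Uhlmann step that produced \eqref{eq:IID_state_merging_direct_Uhlmann_identity}: it must supply an isometry on all of $M_nB^n$, obtained by extending the partial isometry using a large separable $L_n'$, and it must achieve the fidelity exactly. This is asserted in the construction, so here we only verify that the embedding conventions and the target state $\ket{0}^{L_n'}$ are consistent with tracing out $L_n'$. The remaining steps use only standard properties of purified distance: it is a metric on states over separable spaces, it is monotone under CPTP maps, and it is stable under tensoring with a fixed state.
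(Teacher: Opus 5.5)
Your proposal is correct and follows the paper's proof essentially step for step. It uses the same encoder and decoder, combines $P^2\le D$ with the chain-rule decomposition to pass from $(\psi^R)^{\otimes n}$ to $\psi_{\Pi_n}^{R^n}$, applies the Uhlmann identity together with monotonicity under $\operatorname{Tr}_{L_n'}$, joins the two $\sqrt{\delta_{\Pi_n}}$ comparisons by the triangle inequality, and derives the total-cost rate from $|A^n_{\Pi_n}|=|E_n||M_n|$ and pure-state duality; the only difference, applying $P^2\le D$ before rather than after the decomposition, is immaterial.
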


\begin{proof}
Apply
Theorem~\ref{thm:IID_decoupling_original_direct}
to the state
\(\psi^{AR}\)
defined in
\eqref{eq:IID_state_merging_AR_marginal}.
The finite-entropy condition required by that theorem follows from
\eqref{eq:IID_state_merging_AR_finite_entropy}.
Let
\(\Pi_n^{A^n}\),
\(E_n\),
\(M_n\),
\(V_n^{A_{\Pi_n}^n\to E_nM_n}\), and
\(U_n\)
be the resulting projections, finite-dimensional systems, unitary
isomorphisms, and unitaries introduced in
\eqref{eq:IID_state_merging_decoupling_factorization} and
\eqref{eq:IID_state_merging_decoupling_unitary}.
Choose
\(E_n'\)
and
\(\Phi_n^{E_nE_n'}\)
as in
\eqref{eq:IID_state_merging_achievability_output_copy} and
\eqref{eq:IID_state_merging_achievability_entangled_state}.
Let
\(\mathcal E_n^{A^n\to E_nM_n}\)
and
\(\mathcal D_n^{M_nB^n\to A'^nB^nE_n'}\)
be the completely positive and trace-preserving maps defined in
\eqref{eq:IID_state_merging_direct_encoding_channel} and
\eqref{eq:IID_state_merging_direct_decoding_channel}, respectively.
Thus, the systems and maps appearing in the statement are well
defined for every
\(n\in\{1,2,\ldots\}\).

\medskip
\noindent
\textbf{Proof of
\eqref{eq:IID_state_merging_direct_error_limit}.}
By
\eqref{eq:IID_state_merging_decoupling_output_reference},
the \(R^n\)-marginal of
\(\tau_{n,U_n}^{E_nR^n}\)
is
\(\psi_{\Pi_n}^{R^n}\).
Hence, applying the relative-entropy decomposition
\eqref{eq:IID_decoupling_original_projected_error_decomposition}
with
\(\rho^{AR}=\psi^{AR}\)
gives
\begin{align}
&D
\left(
\tau_{n,U_n}^{E_nR^n}
\middle\|
\pi^{E_n}
\otimes
\left(
\psi^R
\right)^{\otimes n}
\right)
\nonumber\\
&=
D
\left(
\tau_{n,U_n}^{E_nR^n}
\middle\|
\pi^{E_n}
\otimes
\psi_{\Pi_n}^{R^n}
\right)
+
D
\left(
\psi_{\Pi_n}^{R^n}
\middle\|
\left(
\psi^R
\right)^{\otimes n}
\right).
\label{eq:IID_state_merging_decoupling_relative_entropy_decomposition}
\end{align}
Both terms on the right-hand side of
\eqref{eq:IID_state_merging_decoupling_relative_entropy_decomposition}
are nonnegative.
Therefore,
\eqref{eq:IID_state_merging_decoupling_error_limit}
implies
\begin{align}
\lim_{n\to\infty}
D
\left(
\tau_{n,U_n}^{E_nR^n}
\middle\|
\pi^{E_n}
\otimes
\psi_{\Pi_n}^{R^n}
\right)
=
0.
\label{eq:IID_state_merging_projected_decoupling_error_limit}
\end{align}
Using the relative-entropy bound on the purified distance in
\eqref{eq:decoupling_relative_entropy_controls_PD}, we obtain
\begin{align}
\lim_{n\to\infty}
P
\left(
\tau_{n,U_n}^{E_nR^n},
\pi^{E_n}
\otimes
\psi_{\Pi_n}^{R^n}
\right)
=
0.
\label{eq:IID_state_merging_projected_decoupling_PD_limit}
\end{align}

On the normalized projected source state
\(\psi_{\Pi_n}^{A_{\Pi_n}^nB^nR^n}\),
the complementary term in the encoding isometry
\eqref{eq:IID_state_merging_direct_encoding_isometry}
vanishes.
Consequently, Alice's encoder acts on this state through
\(V_nU_n\), with the environment \(L_n\) in the fixed state
\(\ket{0}^{L_n}\).
The result of Uhlmann's theorem in
\eqref{eq:IID_state_merging_direct_Uhlmann_identity}, followed by the
partial trace over \(L_n'\), therefore gives
\begin{align}
&P
\Biggl(
\zeta_{\mathcal E_n,\mathcal D_n,\Pi_n}^{
E_nA'^nB^nE_n'R^n
},
\Phi_n^{E_nE_n'}
\otimes
\psi_{\Pi_n}^{
A_{\Pi_n}^{n\prime}B^nR^n
}
\Biggr)
\nonumber\\
&\le
P
\left(
\tau_{n,U_n}^{E_nR^n},
\pi^{E_n}
\otimes
\psi_{\Pi_n}^{R^n}
\right),
\label{eq:IID_state_merging_projected_protocol_error}
\end{align}
where we used monotonicity of the purified distance under the partial
trace over \(L_n'\).

We next compare the normalized projected source with the original IID
source.
By
\eqref{eq:IID_state_merging_projected_pure_state},
\begin{align}
\left(
\bra{\psi}
\right)^{\otimes n}
\ket{\psi_{\Pi_n}}
=
\sqrt{
1-\delta_{\Pi_n}
}.
\end{align}
Since both states are pure, it follows that
\begin{align}
P
\left(
\left(
\psi^{ABR}
\right)^{\otimes n},
\psi_{\Pi_n}^{
A_{\Pi_n}^nB^nR^n
}
\right)
=
\sqrt{
\delta_{\Pi_n}
}.
\label{eq:IID_state_merging_original_projected_distance}
\end{align}

Let
\(\mathcal N_n\)
denote the completely positive and trace-preserving map consisting of
Alice's encoding channel, transmission of \(M_n\), and Bob's decoding
channel, with the identity acting on \(R^n\).
By
\eqref{eq:IID_state_merging_final_state} and
\eqref{eq:IID_state_merging_direct_projected_output_state},
\begin{align}
\mathcal N_n
\left[
\left(
\psi^{ABR}
\right)^{\otimes n}
\right]
&=
\zeta_{\mathcal E_n,\mathcal D_n},
\\
\mathcal N_n
\left[
\psi_{\Pi_n}^{A_{\Pi_n}^nB^nR^n}
\right]
&=
\zeta_{\mathcal E_n,\mathcal D_n,\Pi_n},
\end{align}
where the projected state is understood through its canonical
embedding into
\(\mathcal H^{A^n}\otimes\mathcal H^{B^n}\otimes\mathcal H^{R^n}\).
Monotonicity of the purified distance under
\(\mathcal N_n\)
and
\eqref{eq:IID_state_merging_original_projected_distance}
give
\begin{align}
P
\left(
\zeta_{\mathcal E_n,\mathcal D_n},
\zeta_{\mathcal E_n,\mathcal D_n,\Pi_n}
\right)
\le
\sqrt{
\delta_{\Pi_n}
}.
\label{eq:IID_state_merging_output_projection_distance}
\end{align}

Under the fixed identification
\(\mathcal H^{A'}\simeq\mathcal H^A\),
the overlap between
\(\ket{\psi_{\Pi_n}}^{
A_{\Pi_n}^{n\prime}B^nR^n
}\)
and
\((\ket{\psi}^{A'BR})^{\otimes n}\)
is also
\(\sqrt{1-\delta_{\Pi_n}}\).
Since tensoring both arguments with the same normalized state preserves
the purified distance, we have
\begin{align}
&P
\Biggl(
\Phi_n^{E_nE_n'}
\otimes
\psi_{\Pi_n}^{
A_{\Pi_n}^{n\prime}B^nR^n
},
\Phi_n^{E_nE_n'}
\otimes
\left(
\psi^{A'BR}
\right)^{\otimes n}
\Biggr)
\nonumber\\
&=
\sqrt{
\delta_{\Pi_n}
}.
\label{eq:IID_state_merging_target_projection_distance}
\end{align}

Applying the triangle inequality to
\eqref{eq:IID_state_merging_output_projection_distance},
\eqref{eq:IID_state_merging_projected_protocol_error}, and
\eqref{eq:IID_state_merging_target_projection_distance}, we obtain
\begin{align}
&P
\Biggl(
\zeta_{\mathcal E_n,\mathcal D_n}^{
E_nA'^nB^nE_n'R^n
},
\Phi_n^{E_nE_n'}
\otimes
\left(
\psi^{A'BR}
\right)^{\otimes n}
\Biggr)
\nonumber\\
&\le
P
\left(
\zeta_{\mathcal E_n,\mathcal D_n},
\zeta_{\mathcal E_n,\mathcal D_n,\Pi_n}
\right)
\nonumber\\
&\quad+
P
\Biggl(
\zeta_{\mathcal E_n,\mathcal D_n,\Pi_n}^{
E_nA'^nB^nE_n'R^n
},
\Phi_n^{E_nE_n'}
\otimes
\psi_{\Pi_n}^{
A_{\Pi_n}^{n\prime}B^nR^n
}
\Biggr)
\nonumber\\
&\quad+
P
\Biggl(
\Phi_n^{E_nE_n'}
\otimes
\psi_{\Pi_n}^{
A_{\Pi_n}^{n\prime}B^nR^n
},
\Phi_n^{E_nE_n'}
\otimes
\left(
\psi^{A'BR}
\right)^{\otimes n}
\Biggr)
\nonumber\\
&\le
P
\left(
\tau_{n,U_n}^{E_nR^n},
\pi^{E_n}
\otimes
\psi_{\Pi_n}^{R^n}
\right)
+
2
\sqrt{
\delta_{\Pi_n}
}.
\label{eq:IID_state_merging_direct_finite_error_bound}
\end{align}
The first term on the right-hand side converges to zero by
\eqref{eq:IID_state_merging_projected_decoupling_PD_limit}, and the
second term converges to zero by
\eqref{eq:IID_state_merging_decoupling_projection_limit}.
This proves
\eqref{eq:IID_state_merging_direct_error_limit}.

\medskip
\noindent
\textbf{Proof of
\eqref{eq:IID_state_merging_direct_Q_rate}.}
By
\eqref{eq:IID_state_merging_quantum_communication_cost},
\begin{align}
q_n
=
\log
\left|
M_n
\right|.
\end{align}
Therefore,
\eqref{eq:IID_state_merging_decoupling_message_rate}
gives
\begin{align}
\lim_{n\to\infty}
\frac{q_n}{n}
=
\frac{1}{2}
I(A:R)_\psi.
\end{align}
This proves
\eqref{eq:IID_state_merging_direct_Q_rate}.

\medskip
\noindent
\textbf{Proof of
\eqref{eq:IID_state_merging_direct_total_rate}.}
Since
\(V_n^{A_{\Pi_n}^n\to E_nM_n}\)
in
\eqref{eq:IID_state_merging_decoupling_factorization}
is a unitary isomorphism between finite-dimensional Hilbert spaces,
\begin{align}
\left|
A_{\Pi_n}^n
\right|
=
\left|
E_n
\right|
\left|
M_n
\right|.
\label{eq:IID_state_merging_decoupling_dimension_identity}
\end{align}
Consequently,
\begin{align}
\log
\left|
M_n
\right|
-
\log
\left|
E_n
\right|
=
2
\log
\left|
M_n
\right|
-
\log
\left|
A_{\Pi_n}^n
\right|.
\label{eq:IID_state_merging_total_cost_dimension_identity}
\end{align}
By
\eqref{eq:IID_state_merging_net_entanglement_cost} and
\eqref{eq:IID_state_merging_total_cost},
\begin{align}
q_n-e_n
=
\log
\left|
M_n
\right|
-
\log
\left|
E_n
\right|.
\end{align}
Combining this identity with
\eqref{eq:IID_state_merging_total_cost_dimension_identity},
\eqref{eq:IID_state_merging_decoupling_projection_rank_limit}, and
\eqref{eq:IID_state_merging_decoupling_message_rate}, we obtain
\begin{align}
\lim_{n\to\infty}
\frac{q_n-e_n}{n}
&=
I(A:R)_\psi
-
H(A)_\psi
\nonumber\\
&=
-
H(A|R)_\psi
\nonumber\\
&=
H(A|B)_\psi.
\label{eq:IID_state_merging_total_cost_rate_derivation}
\end{align}
The second equality follows from
\begin{align}
I(A:R)_\psi
=
H(A)_\psi
-
H(A|R)_\psi,
\end{align}
which is well defined because
\(H(A)_\psi<\infty\), and the final equality follows from the
pure-state conditional-entropy duality in
\eqref{eq:conditional_entropy_duality_pure_state_finite_entropy_A}.
This proves
\eqref{eq:IID_state_merging_direct_total_rate}.
\end{proof}

\begin{remark}[Converse for IID quantum state merging]
\label{rem:converse_state_merging}
When
\(\mathcal H^A\otimes\mathcal H^B\)
is finite-dimensional, the asymptotically achievable rates in
Theorem~\ref{thm:infinite_dimensional_IID_state_merging_direct}
are optimal by the converse bounds in
Refs.~\cite{horodecki2005partial,
horodecki2007quantum,abeyesinghe2009mother,Datta_2011}.
Roughly speaking, the converse bound on the quantum communication
rate compares the mutual information between Bob and the reference system before and after state merging, namely \(I(B:R)_\psi\) for the initial state \(\ket{\psi}^{ABR}\) and \(I(A'B:R)_\psi\) for the ideal final state \(\ket{\psi}^{A'BR}\).
For the converse bound on the total cost, one may simulate the quantum communication via quantum teleportation and compare the entanglement across the bipartition \(AR:B\) before state merging with that across \(R:A'B\) after state merging, namely,
\(E(AR:B)_\psi
=
H(B)_\psi
\)
for
\(\ket{\psi}^{ABR}\),
and
\(
E(R:A'B)_\psi
=
H(A'B)_\psi
=
H(AB)_\psi
\)
for
\(\ket{\psi}^{A'BR}\),
where
\(E(X:Y)_\phi\)
denotes the entanglement entropy of a pure state
\(\phi^{XY}\)
across the bipartition
\(X:Y\).
When \(\mathcal H^A\otimes\mathcal H^B\) is infinite-dimensional, the finiteness of the entropic quantities appearing in these arguments is not guaranteed by the present assumption
\(H(A)_\psi<\infty\).
The conventional converse arguments in
Refs.~\cite{horodecki2005partial,
horodecki2007quantum}
may therefore involve the indeterminate form
\(\infty-\infty\), and hence, extending the converse arguments would require alternative techniques or additional assumptions.

If one additionally assumes
\begin{align}
H(AB)_\psi
<
\infty,
\end{align}
then the entropic quantities appearing above are finite.
This assumption, however, constrains only the initial state and, consequently,
the ideal target state.
In a converse proof, the actual output of an arbitrary protocol is
required only to approach the ideal target state in purified or trace
distance, and such an approximation does not imply a corresponding energy
constraint on the actual output in an infinite-dimensional system.
Consequently, continuity bounds requiring energy constraints on both
the ideal and actual states are still not directly applicable.

The same asymmetry arises in the converse proof for the entanglement
cost of infinite-dimensional states in
Ref.~\cite[Section~5.1]{yamasaki2025entanglement}.
The analysis therein circumvents this issue by using the one-sided semicontinuity bound for the entanglement of formation in
Ref.~\cite[Proposition~4B]{shirokov2023close},
or its refinement in
Ref.~\cite[Proposition~10B]{shirokov2025alicki},
which requires an energy constraint only on the ideal target state.
Under the additional assumption
\(H(AB)_\psi<\infty\),
it appears possible to adapt a similar one-sided technique to control
the entropic quantities appearing in the converse for
infinite-dimensional quantum state merging, even when the protocol is
allowed to have an asymptotically vanishing error.
We leave the detailed derivation of such an extension for future work.

Nevertheless, the contribution of the present work is to construct a protocol that applies regardless of whether
\(\mathcal H^A\otimes\mathcal H^B\)
is finite- or infinite-dimensional and, in the finite-dimensional setting, provably achieves the optimal rates.
\end{remark}

\bibliography{citation}

\end{document}